\newtheorem{theorem}{Theorem}
\newtheorem{lemma}{Lemma}
\newtheorem{corollary}{Corollary}
\newtheorem{definition}{Definition}
\numberwithin{equation}{section}
\newcounter{mnote}
\newcommand{\IF}{I\!\!F}         
\newcommand{\IL}{I\!\!L}
\newcommand{\Span}{\mbox{{\rm Span}}}
\newcommand{\Ri}{{}^{\mbox{\tiny \rm 3}}\!R} 
\newcommand{\hRi}{{}^{\mbox{\tiny \rm 3}}\!\hat{R}} 
\newcommand{\Remark}{\noindent{\bf Remark.~}}
\newcommand{\LRI}{\quad\Leftrightarrow\quad}
\newcommand{\RI}{\quad\Rightarrow\quad}
\newcommand{\leqs}{\leqslant}      
\newcommand{\geqs}{\geqslant}      
\newcommand{\ngeqs}{\ngeqslant}      
\newcommand{\tiL}{\mbox{{\tiny $L$}}}
\newcommand{\tiR}{\mbox{{\tiny $R$}}}
\newcommand{\tiX}{\mbox{{\tiny $X$}}}
\newcommand{\tiY}{\mbox{{\tiny $Y$}}}
\newcommand{\tiIL}{\mbox{{\tiny $\IL$}}}
\newcommand{\tiwedge}{\mbox{{\tiny $\wedge$}}}
\newcommand{\tivee}{\mbox{{\tiny $\vee$}}}
\newcommand{\N}{{\mathbb N}}       
\newcommand{\R}{{\mathbb R}}       
\newcommand{\cD}{{\mathcal D}}
\newcommand{\cL}{{\mathcal L}}
\newcommand{\cM}{{\mathcal M}}
\newcommand{\cO}{{\mathcal O}}
\newcommand{\cY}{{\mathcal Y}}
\newcommand{\ttK}{{\tt K}}
\newcommand{\ttk}{{\tt k}}
\newcommand{\bomega}{\boldsymbol{\omega}}
\def\mathbi#1{\textbf{\em #1}}
\newcommand{\biC}{\mathbi{C}}
\newcommand{\biL}{\mathbi{L}}
\newcommand{\biW}{\mathbi{W\,}}
\newcommand{\bib}{\mathbi{b \!\!}}
\newcommand{\bif}{\mathbi{f\,}}
\newcommand{\bij}{\mathbi{j\,}}
\newcommand{\biw}{\mathbi{w}}
\newcommand{\tbL}{\textbf{L}}
\newcommand{\tbW}{\textbf{W\,}}
\newcommand{\tbb}{\textbf{b}}
\newcommand{\tbf}{\textbf{f\,}}
\newcommand{\tbj}{\textbf{j}}
\newcommand{\tbv}{\textbf{v}}
\newcommand{\tbw}{\textbf{w}}
\newcommand{\hh}{\hat h}
\newcommand{\hj}{\hat \jmath}
\newcommand{\hk}{\hat k}
\newcommand{\hl}{\hat l}
\newcommand{\hD}{\hat D}
\newcommand{\hrho}{\hat \rho}
\newcommand{\htau}{\hat \tau}
\newcommand{\un}{\underline}       
\begin{document}

\title[Rough Solutions of the Einstein Constraints on Closed Manifolds]
      {Rough Solutions of the Einstein Constraints
       on Closed Manifolds without Near-CMC Conditions}

\author[M. Holst]{Michael Holst}
\email{mholst@math.ucsd.edu}
\thanks{MH was supported in part by NSF Awards~0715146, 0411723, 
and 0511766, and DOE Awards DE-FG02-05ER25707 and DE-FG02-04ER25620.}

\author[G. Nagy]{Gabriel Nagy}
\email{gnagy@math.ucsd.edu}
\thanks{GN was supported in part by NSF Awards~0715146 and 0411723.}

\author[G. Tsogtgerel]{Gantumur Tsogtgerel}
\email{gantumur@math.ucsd.edu}
\thanks{GT was supported in part by NSF Awards~0715146 and 0411723.}

\address{Department of Mathematics\\
         University of California San Diego\\ 
         La Jolla CA 92093}

\date{\today}

\keywords{Einstein constraint equations, weak solutions, 
non-constant mean curvature, conformal method}

\begin{abstract}
We consider the conformal decomposition of Einstein's constraint
equations introduced by Lichnerowicz and York, on a closed manifold.
We establish existence of non-CMC weak solutions using a combination of
{\em a priori} estimates for the individual Hamiltonian and momentum
constraints, barrier constructions and fixed-point techniques for the 
Hamiltonian constraint, Riesz-Schauder theory for the momentum constraint,
together with a topological fixed-point argument for the coupled system.
Although we present general existence results for non-CMC weak solutions 
when the rescaled background metric is in any of the three Yamabe classes,
an important new feature of the results we present for the positive Yamabe 
class is the absence of the near-CMC assumption,
if the freely specifiable part of the data given by the
traceless-transverse part of the rescaled extrinsic curvature 
and the matter fields are sufficiently small, and if
the energy density of matter is not identically zero.
In this case, the mean extrinsic curvature can be taken to be an 
arbitrary smooth function without restrictions on the size of its 
spatial derivatives, so that it can be arbitrarily far from constant,
giving what is apparently the first existence results for non-CMC 
solutions without the near-CMC assumption.
Using a coupled topological fixed-point argument that avoids
near-CMC conditions, we establish existence 
of coupled non-CMC weak solutions with
(positive) conformal factor $\phi \in W^{s,p}$,
where $p \in (1,\infty)$ and $s(p) \in (1+3/p,\infty)$.
In the CMC case, the regularity can be reduced to
$p \in (1,\infty)$ and $s(p) \in (3/p, \infty) \cap [1,\infty)$.
In the case of $s=2$, we reproduce the CMC existence results of 
Choquet-Bruhat~\cite{yCB04}, and
in the case $p=2$, we reproduce the CMC existence results of
Maxwell~\cite{dM05}, but with a proof that
goes through the same analysis framework that we use to 
obtain the non-CMC results.
The non-CMC results on closed manifolds here extend the 
1996 non-CMC result of Isenberg and Moncrief in three ways:
(1) the near-CMC assumption is removed in the case of the
    positive Yamabe class;
(2) regularity is extended down to the maximum allowed 
    by the background metric and the matter;
and
(3) the result holds for all three Yamabe classes.
This last extension was also accomplished recently by 
Allen, Clausen and Isenberg, although their result is restricted 
to the near-CMC case and to smoother background metrics and data.
\end{abstract}

\maketitle

\clearpage

{\tiny
\tableofcontents
}

\vspace*{-1.0cm}

\section{Introduction}
   \label{sec:intro}

In this article, we give an analysis of the coupled Hamiltonian and
momentum constraints in the Einstein equations on a 3-dimensional
closed manifold. We consider the equations with matter sources
satisfying an energy condition implied by the dominant energy
condition in the 4-dimensional spacetime; the unknowns are a
Riemannian three-metric and a two-index symmetric tensor. The
equations form an under-determined system; therefore, we focus
entirely on a standard reformulation used in both mathematical and
numerical general relativity, called the conformal method,
introduced by Lichnerowicz and York~\cite{aL44,jY71,jY72}. The
conformal method assumes that the unknown metric is known up to a
scalar field called a conformal factor, and also assumes that the
trace and a term proportional to the trace-free divergence-free part
of the two-index symmetric tensor is known, leaving as unknown a
term proportional to the traceless symmetrized derivative of a
vector. Therefore, the new unknowns are a scalar and a vector field,
transforming the original under-determined system for a metric and a
symmetric tensor into a (potentially) well-posed elliptic system for
a scalar and a vector field.  See~\cite{rBjI04} for a recent review
article.

The question of existence of solutions to the Lichnerowicz-York
conformally rescaled Einstein's constraint equations, for an
arbitrarily prescribed mean extrinsic curvature, has remained an open
problem for more than thirty years.  The rescaled equations,
which are a coupled nonlinear elliptic system consisting of the scalar
Hamiltonian constraint coupled to the vector momentum constraint, have
been studied almost exclusively in the setting of constant mean
extrinsic curvature, known as the CMC case.  In the CMC case the
equations decouple, and it has long been known how to establish
existence of solutions. The case of CMC data on closed (compact
without boundary) manifolds was completely resolved by several authors
over the last twenty years, with the last remaining sub-cases resolved
and all the CMC sub-cases on closed manifolds summarized by Isenberg
in~\cite{jI95}. Over the last ten years, other CMC cases on different
types of manifolds containing various kinds of matter fields were
studied and partially or completely resolved; see the survey~\cite{rBjI04}.
We take a moment to point out just some of the quite substantial number
of works in this area, including:
the original work on the Lichnerowicz equation
\cite{aL44};
the development of the conformal method
\cite{jY71,jY72,jY73,jY74};
the initial solution theory for the Hamiltonian constraint
\cite{nOMjY73,nOMjY74,nOMjY74a};
the thin sandwich alternative to the conformal method
\cite{rBgF93,cMkTjW70};
the complete classification of CMC initial data
\cite{jI95}
and the few known non-CMC results
\cite{jIvM96,jIjP97,yCBjIjY00};
various technical results on transverse-traceless tensors
and the conformal Killing operator
\cite{rB96,rBnOM96};
the more recent development of the conformal thin sandwich formulation
\cite{jY99};
initial data for black holes
\cite{rB00,jBjY80};
initial data for Kerr-like black holes
\cite{sD99,sD00b};
initial data with trapped surface boundaries
\cite{sD04,dM05b};
rough solution theory for CMC initial data
\cite{dM05,dM06,yCB04};
and the gluing approach to generating initial data
\cite{jC00}.
A survey of many of these results appears in~\cite{rBjI04}.

On the other hand, the question of existence of solutions to the Einstein
constraint equations for non-constant mean extrinsic curvature 
(the ``non-CMC case'') has remained largely unanswered, with progress
made only in the case that the mean extrinsic curvature is nearly
constant (the ``near-CMC case''), in the sense that the
size of its spatial derivatives is sufficiently small.
The near-CMC condition leaves the constraint equations
coupled, but ensures the coupling is weak.  In~\cite{jIvM96}, Isenberg
and Moncrief established the first existence (and uniqueness) result
in the near-CMC case, for background metric having negative Ricci
scalar.  Their result was based on a fixed-point argument, together
with the use of iteration barriers (sub- and super-solutions) 
which were shown to be bounded above and below by fixed positive constants,
independent of the iteration.
We note that
both the fixed-point argument and the global barrier construction
in~\cite{jIvM96} rely critically on the near-CMC assumption.  All
subsequent non-CMC existence results are based on the
framework in~\cite{jIvM96} and are thus limited to the near-CMC case
(see the survey~\cite{rBjI04}, the non-existence results
in~\cite{jInOM04}, and also the newer existence results
in~\cite{pAaCjI07} for non-negative Yamabe classes).

This article presents (together with the brief overview in~\cite{mHgNgT07a})
the first non-CMC existence results for the
Einstein constraints that do not require the near-CMC assumption.
Two recent advances make this possible: A new topological
fixed-point argument (established here and in~\cite{mHgNgT08b})
and a new global super-solution construction for the Hamiltonian 
constraint (established here and in~\cite{mHgNgT07a})
that are both free of near-CMC conditions.
These two results allow us to establish existence of
non-CMC solutions for conformal background metrics in the positive
Yamabe class, with the freely specifiable part of the data given by
the traceless-transverse part of the rescaled extrinsic curvature and
the matter fields sufficiently small, and with the matter energy
density not identically zero.
Our results here and in~\cite{mHgNgT08b,mHgNgT07a} can be viewed
as reducing the remaining open questions of existence of 
non-CMC (weak and strong) solutions without near-CMC conditions to 
two more basic and clearly stated open problems:
(1) Existence of near-CMC-free global {\em super}-solutions for the
Hamiltonian constraint equation when the background metric 
is in the non-positive Yamabe classes and for large data;
and
(2) existence of near-CMC-free global {\em sub}-solutions for the
Hamiltonian constraint equation when the background metric 
is in the positive Yamabe class in vacuum (without matter).
We will make some further comments about this later in the paper.

Our results in this article, which can be viewed as pushing forward the
rough solutions program that was initiated by Maxwell in~\cite{dM05,dM06}
(see also~\cite{yCB04}), further extend the known solution theory for the 
Einstein constraint equations on closed manifolds in several directions:
\begin{enumerate}[{\it(i)}]
\item {\bf\em Far-from-CMC Weak Solutions:}
      We establish the first existence results (Theorem~\ref{T:main1}) 
      for the coupled Einstein constraints in the non-CMC setting without 
      the near-CMC condition.
      In particular, if the rescaled background metric is in the 
      positive Yamabe class, if the freely specifiable part of the data 
      given by the traceless-transverse part of the rescaled extrinsic 
      curvature and the matter fields are sufficiently small, and if
      the energy density of matter is not identically zero, then we show
      existence of non-CMC solutions with mean extrinsic curvature
      arbitrarily far from constant.
      Two advances in the analysis of the Einstein constraint equations 
      make this result possible: A topological
      fixed-point argument (Theorems~\ref{T:FIXPT1} and~\ref{T:FIXPT2}) 
      based on compactness arguments rather than $k$-contractions 
      that is free of near-CMC conditions, and constructions of 
      global barriers for the Hamiltonian constraint that are
      similarly free of the near-CMC condition
      (Lemmas~\ref{L:HC-Sp}, \ref{L:HC-GSb},
              \ref{L:global-super}, \ref{L:global-sub},
              and~\ref{L:global-sub-Y-}).
\item {\bf\em Near-CMC Weak Solutions:}
      We establish existence results (Theorem~\ref{T:main2})
      for non-CMC solutions to the coupled constraints under the
      near-CMC condition in the setting of weaker (rougher) solutions spaces
      and for more general physical scenarios than appeared previously
      in~\cite{jIvM96,pAaCjI07}.
      In particular, we establish existence of
      weak solutions to the coupled Hamiltonian and momentum constraints on
      closed manifolds for all three Yamabe classes,
      with (positive) conformal factor in 
      $\phi \in W^{s,p}$
      where 
      $p \in (1,\infty)$
      and
      $s(p) \in (1+3/p,\infty)$.
      These results are based on combining barriers, 
      {\em a priori} estimates, and other results for the individual
      constraints together with a new type of topological fixed-point 
      argument (Theorems~\ref{T:FIXPT1} and~\ref{T:FIXPT2}),
      and are established in the presence of a weak background metric
      and data meeting very low regularity requirements.

\item {\bf\em CMC Weak Solutions:}
      In the CMC case, we establish existence (Theorem~\ref{T:main3}) 
      of weak solutions to the un-coupled Hamiltonian and momentum constraints 
      on closed manifolds for all three Yamabe classes,
      with (positive) conformal factor $\phi \in W^{s,p}$ where
      $p \in (1,\infty)$ and $s(p) \in (3/p, \infty) \cap [1,\infty)$.
      In the case of $s=2$, we reproduce the CMC existence results of 
      Choquet-Bruhat~\cite{yCB04}, and
      in the case $p=2$, we reproduce the CMC existence results of
      Maxwell~\cite{dM05}, but with a different proof;
      our CMC proof goes through the same analysis framework that we use to 
      obtain the non-CMC results (Theorems~\ref{T:FIXPT1} and~\ref{T:FIXPT2}).
      Again, these results established in the presence of a weak background 
      metric and with data meeting very low regularity requirements.

\item {\bf\em Barrier Constructions:}
      We give constructions
      (Lemmas~\ref{L:global-super} and~\ref{L:global-sub})
      of weak global sub- and super-solutions (barriers)
      for the Hamiltonian constraint equation which are free of the 
      near-CMC condition.  The constructions require the assumption that 
      the freely specifiable part of the data given by the 
      traceless-transverse part of the rescaled extrinsic curvature 
      and the matter fields are sufficiently small 
      (required for the super-solution construction
       in Lemma~\ref{L:global-super})
      and if the energy density of matter is not identically zero
      (required for the sub-solution
       in construction Lemma~\ref{L:global-sub}, although we note this 
      can be relaxed using the technique in~\cite{pAaCjI07}).
      While near-CMC-free sub-solutions are common in the literature,
      our near-CMC-free super-solution constructions
      appear to be the first such results of this type.

\item {\bf\em Supporting Technical Tools:}
      We assemble a number of new supporting technical results in the
      body of the paper and in several appendices, including: 
      topological fixed-point arguments designed for the Einstein
      constraints;
      construction and properties of general Sobolev classes $W^{s,p}$
      and elliptic operators on closed manifolds with weak metrics;
      the development of a very weak solution theory for the momentum 
      constraint; {\em a priori} $L^{\infty}$-estimates for 
      weak $W^{1,2}$-solutions to the Hamiltonian constraint; 
      Yamabe classification of non-smooth metrics in general 
      Sobolev classes $W^{s,p}$; and an analysis of the connection
      between conformal rescaling and the near-CMC condition.

\end{enumerate}

The results in this paper imply that the weakest differentiable solutions of
the Einstein constraint equations we have found correspond to CMC
and non-CMC hypersurfaces with physical spatial metric $h_{ab}$ satisfying
\begin{equation}
\textstyle
h_{ab} \in W^{s,p}(\cM),
\quad
\quad
p \in (1,\infty),
\quad
\quad
s(p) \in (1+\frac{3}{p},\infty).
\end{equation}
The curvature of such metrics can be computed in a distributional sense,
following~\cite{rGjT87}.
In the CMC case, the regularity can be reduced to
\begin{equation}
\textstyle
h_{ab} \in W^{s,p}(\cM),
\quad
\quad
p \in (1,\infty),
\quad
\quad
s(p) \in (\frac{3}{p}, \infty) \cap [1,\infty).
\end{equation}
In the case $s=2$, 
we reproduce the CMC existence results of Choquet-Bruhat~\cite{yCB04},
and in the case $p=2$,
we reproduce the CMC existence results of Maxwell~\cite{dM05},
but with a different proof;
our CMC proof goes through the same analysis framework that we use to 
obtain the non-CMC results (Theorems~\ref{T:FIXPT1} and~\ref{T:FIXPT2}).
In this paper we do not include uniqueness statements on CMC solutions,
or necessary and sufficient conditions for the existence of CMC solutions;
however, we expect that the techniques used in the above mentioned works 
can be adapted to this setting without difficulty.

There are several related motivations for
establishing the extensions outlined above.
First, as outlined in~\cite{rBjI04}, new results for the non-CMC case,
beyond the case analyzed in~\cite{jIvM96,pAaCjI07},
are of great interest in both mathematical and numerical relativity.
Non-CMC results that are free of the near-CMC assumption are of particular
interest, since the existence of solutions in this case has been an open
question for more than thirty years.
Second, there is currently substantial research activity in rough solutions
to the Einstein evolution equations, which rest on rough/weak solution
results for the initial data~\cite{sKiR01}.
Third, the approximation theory for Petrov-Galerkin-type methods
(including finite element, wavelet, spectral, and other methods) for the
constraints and similar systems previously developed in~\cite{mH01a}
establishes convergence of numerical solutions in very general physical
situations, but rests on assumptions about the solution theory; 
the results in the present paper and in~\cite{mHgNgT08b},
help to complete this approximation theory framework.
Similarly, very recent results on convergence of adaptive methods
for the constraints in~\cite{HoTs07a,HoTs07b} rest in large part
on the collection of results here and in~\cite{mH01a,mHgNgT08b}.

An extended outline of the paper is as follows.

In~\S\ref{sec:constraints},
we summarize the conformal decomposition of Einstein's constraint
equations introduced by Lichnerowicz and York, on a closed manifold.
We describe the classical strong formulation of the resulting coupled
elliptic system, and then define weak formulations of the constraint
equations that will allow us to develop solution theories for the
constraints in the spaces with the weakest possible regularity.

After setting up the basic notation, we give an overview of our 
main results in~\S\ref{sec:main}, summarized in three existence 
theorems (Theorems~\ref{T:main1}, \ref{T:main2}, and~\ref{T:main3}) for 
weak far-from-CMC, near-CMC, and CMC solutions to the coupled constraints,
extending the known solution theory in several distinct ways as described 
above.
We outline the two recent advances in the analysis of the
Einstein constraint equations that make these results possible.
The first advance is an abstract coupled topological fixed-point result
(Theorems~\ref{T:FIXPT1} and~\ref{T:FIXPT2}), 
the proof of which is based directly on 
compactness rather than on $k$-contractions.
This gives an analysis framework for weak solutions to the constraint
equations that is fundamentally free of the near-CMC assumption;
the near-CMC assumption then only potentially arises in the
construction of global barriers as part of the overall fixed-point argument.
A result of this type also makes possible the new non-CMC results for 
the case of compact manifolds with boundary appearing in~\cite{mHgNgT08b}.
The second new advance is the construction of global super-solutions
for the Hamiltonian constraint that are also free of the near-CMC condition;
we give an overview of the main ideas in the constructions, which are
then derived rigorously in~\S\ref{sec:barriers}.

In~\S\ref{sec:individual} we then develop the necessary results
for the individual constraint equations in order to
complete an existence argument for the coupled system
based on the abstract fixed-point argument in 
Theorems~\ref{T:FIXPT1} and~\ref{T:FIXPT2}.
In particular, in~\S\ref{sec:momentum},
we first develop some basic technical results for the momentum constraint
operator under weak assumptions on the problem data, including
existence of weak solutions to the momentum constraint, given the
conformal factor as data.
In~\S\ref{sec:Hamiltonian}, we assume the existence of 
barriers (weak sub- and super-solutions) to the Hamiltonian constraint 
equation forming a nonempty positive bounded interval,
and then derive several properties of the Hamiltonian constraint that are 
needed in the analysis of the coupled system.
The results are established under weak assumptions on the problem data,
and for any Yamabe class.

Using order relations on appropriate Banach spaces, we then derive
several such compatible weak global sub- and super-solutions
in~\S\ref{sec:barriers},
based both on constants and on more complex non-constant constructions.
While the sub-solutions are similar to those found previously in
the literature, some of the super-solutions are new.
In particular, we give two super-solution constructions that do
not require the near-CMC condition.
The first is constant, and requires that the scalar curvature be
strictly globally positive.
The second is based on a scaled solution to a Yamabe-type problem,
and is valid for any background metric in the positive Yamabe class.

In~\S\ref{sec:proof}, we establish the main results by giving the
proofs of Theorems~\ref{T:main1}, \ref{T:main2}, and~\ref{T:main3}.
In particular, using the topological fixed-point argument in
Theorem~\ref{T:FIXPT2}, we combine the global barrier constructions in
\S\ref{sec:barriers} with the individual constraint results in 
\S\ref{sec:individual} to establish existence of weak non-CMC solutions.
We summarize our results in \S\ref{sec:summary}.
For ease of exposition, various supporting technical results are 
given in several appendices as follows:
Appendix~\S\ref{sec:fixpt} -- topological fixed-point arguments;
Appendix~\S\ref{sec:OBS} -- ordered Banach spaces;
Appendix~\S\ref{sec:monotone} -- monotone increasing maps;
Appendix~\S\ref{sec:Sobolev} -- construction of fractional order Sobolev spaces of sections of vector bundles over closed manifolds;
Appendix~\S\ref{sec:killing} -- {\em a priori} estimates for elliptic operators;
Appendix~\S\ref{sec:maxprinciple} -- maximum principles on closed manifolds;
Appendix~\S\ref{sec:yamabe} -- Yamabe classification of weak metrics;
Appendix~\S\ref{sec:conf-inv} -- conformal covariance of the Hamiltonian constraint;
and
Appendix~\S\ref{sec:rescaling} -- conformal rescaling and the near-CMC condition.

\section{Preliminary material}
   \label{sec:constraints}

\subsection{Notation and conventions}
   \label{subsec:notation}
Let $\cM$ be an $n$-dimensional smooth closed manifold.
We denote by $\pi : E \to \cM$ (or simply $E \to \cM$, or just $E$) 
a smooth vector bundle over $\cM$,
where the manifold $\cM$ is called the base space,
$E$ is called the total space,
and $\pi$ is the bundle projection such that
for any $x \in \cM$, $E_x = \pi^{-1}(x)$ is the fiber over $x$,
which is a vector space of (fiber) dimension $m_x$.
If all fibers $E_x$ have dimension $m_x=m$, we say the fiber
dimension of $E$ is $m$.
The manifold $\cM$ itself can be considered as the
vector bundle $E = \cM \times \{0\}$ with fiber dimension $m=0$.
A section of the trivial vector bundle $E = \cM \times \mathbb{R}$ with fiber
dimension $m=1$ is simply a scalar function on $\cM$.
Our primary interest is the case where 
$$
E= T^r_{s}\cM = \underbrace{T\cM\otimes\ldots\otimes T\cM}_{r\textrm{ times}}\otimes\underbrace{T^*\cM\otimes\ldots\otimes T^*\cM}_{s\textrm{ times}},
$$
the $(r,s)$-tensor bundle with contravariant 
order $r$ and covariant order $s$, giving fiber dimension $m=n(r+s)$,
where $T\cM$ is the tangent bundle,
and $T^*\cM$ is the co-tangent bundle of $\cM$.
A $C^k$ section of $\pi$ (or of $E$) is a $C^k$ map
$\gamma : \cM \to E$ such that for each $x \in \cM$,
$\pi(\gamma(x)) = x$.
These $C^k$ sections form real Banach spaces $C^k(E)$ which
arise naturally in the global linear analysis of
partial differential equations on manifolds.

Let $h_{ab}\in C^{\infty}(T^{0}_{2}\cM)$ be a smooth Riemannian metric on $\cM$, 
(where by convention Latin indices denote abstract indices 
as e.g. in~\cite{Wald84}), meaning that it is
a symmetric, positive definite, covariant, smooth two-index 
tensor field on $\cM$.
The combination $(\cM,h_{ab})$ is referred to as a (smooth)
Riemannian manifold; we will relax the smoothness requirement 
on $h_{ab}$ below.
For each $x \in \cM$, the metric $h_{ab}(x)$ defines a positive definite inner product 
on the tangent space $T_x\cM$ at $x$. Denote by $h^{ab}$
the inverse of $h_{ab}$, that is, $h_{ac}h^{bc}
=\delta_a{}^b$, where $\delta_a{}^b: T_x\cM\to T_x\cM$ is the
identity map. We use the convention that repeated indices, one
upper-index and one sub-index, denote contraction. 
Indices on tensors will be raised and lowered with $h^{ab}$ and
$h_{ab}$, respectively. For example, given the tensor $u^{ab}{}_c$ we
denote $u_{abc}=h_{aa_1}h_{bb_1}\,u^{a_1b_1}{}_{c}$, and $u^{abc}
=h^{cc_1}\,u^{ab}{}_{c_1}$; notice that the order of the indices is
important in the case that the tensor $u_{abc}$ or $u^{abc}$ is not
symmetric. We say that a tensor is of type $m$ iff it can be transformed 
into a tensor $u_{a_1\cdots a_m}$ by lowering appropriate indices
(its vector bundle then has fiber dimension $mn$).

We now give a brief overview of $L^p$ and Sobolev spaces of sections
of vector bundles over closed manifolds in order to introduce the 
notation used throughout the paper.
An overview of the construction of fractional order Sobolev spaces
of sections of vector bundles can be found in Appendix~\ref{sec:Sobolev}, 
based on Besov spaces and partitions of unity.
The case of the sections of the trivial bundle of scalars can 
also be found in~\cite{Hebey96}, and the case of tensors 
can also be found in~\cite{Palais65}.
Let $\nabla_a$ be
the Levi-Civita connection associated with the metric $h_{ab}$, that
is, the unique torsion-free connection satisfying
$\nabla_ah_{bc}=0$. Let $R_{abc}{}^d$ be the Riemann tensor of the
connection $\nabla_a$, where the sign convention used in this
article is $(\nabla_a\nabla_b -\nabla_b\nabla_a)v_c = R_{abc}{}^d
v_d$. Denote by $R_{ab} := R_{acb}{}^c$ the Ricci tensor and by $R
:=R_{ab}h^{ab}$ the Ricci scalar curvature of this connection.

Integration on $\cM$ can be defined with the volume form
associated with the metric $h_{ab}$.
Given an arbitrary tensor $u^{a_1\cdots
a_r}{}_{b_1\cdots b_s}$ of type $m=r+s$, we define a
real-valued function measuring its magnitude at any point $x \in \cM$ as
\begin{equation}
\label{tensor-magnitude}
|u| := (u^{a_1\cdots b_s}u_{a_1\cdots b_s})^{1/2}.
\end{equation}
A norm of an arbitrary tensor field 
$u^{a_1\cdots a_r}{}_{b_1\cdots b_s}$ on $\cM$ can then be
defined for any $1\leqs p < \infty$ and for $p=\infty$ respectively
using~(\ref{tensor-magnitude}) as follows,
\begin{equation}
\label{N-Lp-norm}
\|u\|_p := \left(\int_{\cM} |u|^p\,dx\right)^{1/p},\qquad
\|u\|_{\infty}:= \mbox{ess}\,\sup_{x\in\cM}|u|.
\end{equation}
One way to construct the {\bf Lebesgue spaces} $L^p(T^{r}_{s}\cM)$ of
sections of the $(r,s)$-tensor bundle, for $1\leqs
p\leqs\infty$, is through the completion
of $C^{\infty}(T^{r}_{s}\cM)$ with respect to the {\bf $L^p$-norm} \eqref{N-Lp-norm}.
The $L^p$ spaces are Banach spaces, and the case $p=2$
is a Hilbert space with the inner product and norm given by
\begin{equation}
\label{N-L2-inner} (u,v):= \int_{\cM} u_{a_1\cdots a_m}v^{a_1\cdots
a_m}\, dx,\qquad \|u\|:=\sqrt{(u,u)}=\|u\|_2.
\end{equation}
Denote covariant derivatives of tensor fields as
$\nabla^{k}u^{a_1\cdots a_m} 
:=\nabla_{b_1}\cdots\nabla_{b_k}u^{a_1\cdots a_m}$,
where $k$ denotes the total number of derivatives
represented by the tensor indices $(b_1,\ldots,b_k)$.
Another norm on $C^{\infty}(T^{r}_{s}\cM)$ is given for
any non-negative integer $k$ and for any $1\leqs p \leqs \infty$ as follows,
\begin{equation}
\label{N-Wkp-norm}
\|u\|_{k,p} := \sum_{l=0}^k \,\|\nabla^{l}u\|_p.
\end{equation}
The {\bf Sobolev spaces} $W^{k,p}(T^r_{s}\cM)$ of sections
of the $(r,s)$-tensor bundle can be defined as
the completion of $C^{\infty}(T^{r}_{s}\cM)$ with respect to the 
$W^{k,p}$-norm \eqref{N-Wkp-norm}.
The Sobolev spaces $W^{k,p}$ are Banach
spaces, and the case $p=2$ is a
Hilbert space.
We have $L^p=W^{0,p}$ and $\|s\|_p =\|s\|_{0,p}$.
See Appendix~\ref{sec:Sobolev} for a more careful construction
that includes real order Sobolev spaces of sections of vector bundles.

Let $C^\infty_{+}$ be the set of nonnegative smooth (scalar) functions on $\cM$.
Then we can define order cone
\begin{equation}
\label{E:wkp-cone}
W^{s,p}_{+} := \bigl\{ \phi \in W^{s,p} :
\langle\phi,\varphi\rangle\geqs 0 \quad \forall \, \varphi\in C^{\infty}_{+}
\, \bigr \},
\end{equation}
with respect to which the Sobolev spaces $W^{s,p}=W^{s,p}(\cM)$ are ordered Banach spaces.
Here $\langle\cdot,\cdot\rangle$ is the unique extension of $L^2$-inner product to a bilinear form $W^{s,p}\otimes W^{-s,p'}\to\R$, with $\frac1{p'}+\frac1p=1$.
The order relation is then $\phi\geqs\psi$ iff $\phi-\psi\in
W^{s,p}_{+}$.
We note that this order cone is normal only for $s=0$.
See Appendix~\ref{sec:OBS}, where we review the main properties of
ordered Banach spaces.

\subsection{The Einstein constraint equations}

We give a quick overview of the Einstein constraint equations in
general relativity, and then define weak formulations that
are fundamental to both solution theory and the development
of approximation theory.
Analogous material for the case of compact manifolds with
boundary can be found in~\cite{mHgNgT08b}.

Let $(M,g_{\mu\nu})$ be a 4-dimensional spacetime, that is, $M$ is a
4-dimensional, smooth manifold, and $g_{\mu\nu}$ is a smooth,
Lorentzian metric on $M$ with signature $(-,+,+,+)$. Let
$\nabla_{\mu}$ be the Levi-Civita connection associated with the
metric $g_{\mu\nu}$.  The Einstein equation is
\[
G_{\mu\nu} = \kappa T_{\mu\nu},
\]
where $G_{\mu\nu} = R_{\mu\nu} - \frac{1}{2}R\,g_{\mu\nu}$ is the
Einstein tensor, $T_{\mu\nu}$ is the stress-energy tensor, and
$\kappa = 8\pi G/c^4$, with $G$ the gravitation constant and $c$ the
speed of light. The Ricci tensor is $R_{\mu\nu} =
R_{\mu\sigma\nu}{}^{\sigma}$ and $R= R_{\mu\nu}g^{\mu\nu}$ is the
Ricci scalar, where $g^{\mu\nu}$ is the inverse of $g_{\mu\nu}$,
that is $g_{\mu\sigma} g^{\sigma\nu} =\delta_{\mu}{}^{\nu}$. The
Riemann tensor is defined by
$R_{\mu\nu\sigma}{}^{\rho} w_{\rho} =\big(\nabla_{\mu}\nabla_{\nu}
-\nabla_{\nu}\nabla_{\mu}\bigr) w_{\sigma}$, where $w_{\mu}$ is any
1-form on $M$. The stress energy tensor $T_{\mu\nu}$ is assumed to
be symmetric and to satisfy the condition
$\nabla_{\mu}T^{\mu\nu} = 0$ and the {\bf dominant energy
condition}, that is, the vector $-T^{\mu\nu}v_{\nu}$ is timelike and
future-directed, where $v^{\mu}$ is any timelike and future-directed
vector field. In this section Greek indices $\mu$, $\nu$, $\sigma$,
$\rho$ denote abstract spacetime indices, that is, tensorial
character on the 4-dimensional manifold $M$. They are raised and
lowered with $g^{\mu\nu}$ and $g_{\mu\nu}$, respectively. 
Latin indices $a$, $b$, $c$, $d$ will denote tensorial
character on a 3-dimensional manifold.

The map $t:M\to \R$ is a {\bf time function} iff the function $t$ is
differentiable and the vector field $-\nabla^{\mu}t$ is a timelike,
future-directed vector field on $M$. Introduce the hypersurface $\cM
:=\{ x\in M : t(x)=0\}$, and denote by $n_{\mu}$ the unit 1-form
orthogonal to $\cM$. By definition of $\cM$ the form $n_{\mu}$ can
be expressed as $n_{\mu} = -\alpha\,\nabla_{\mu}t$, where $\alpha$,
called the lapse function, is the positive function such that
$n_{\mu} n_{\nu}\, g^{\mu\nu} = -1$. Let $\hh_{\mu\nu}$ and
$\hk_{\mu\nu}$ be the first and second fundamental forms of $\cM$,
that is,
\[
\hh_{\mu\nu} := g_{\mu\nu}- n_{\mu} n_{\nu},\qquad
\hk_{\mu\nu} := -\hh_{\mu}{}^{\sigma} \nabla_{\sigma} n_{\nu}.
\]
The Einstein constraint equations on $\cM$ are given by
\[
\bigl( G_{\mu\nu} -\kappa T_{\mu\nu}\bigr) \, n^{\nu} =0.
\]
A well known calculation allows us to express these equations
involving tensors on $M$ as equations involving {\em intrinsic}
tensors on $\cM$. The result is the following equations,
\begin{align}
\label{CE-def-H}
\hRi + \hk^2 - \hk_{ab}\hk^{ab} - 2\kappa \hrho &=0,\\
\label{CE-def-M}
\hD^a\hk - \hD_b\hk^{ab} + \kappa \hj^a &= 0,
\end{align}
where tensors $\hh_{ab}$, $\hk_{ab}$, $\hj_a$ and $\hrho$ on a
3-dimensional manifold are the pull-backs on $\cM$ of the tensors
$\hh_{\mu\nu}$, $\hk_{\mu\nu}$, $\hj_{\mu}$ and $\hrho$ on the
4-dimensional manifold $M$. We have introduced the energy density
$\hrho := n_{\mu} n_{\mu} T^{\mu\nu}$ and the momentum current
density $\hj_{\mu} := -\hh_{\mu\nu} n_{\sigma} T^{\nu\sigma}$. We
have denoted by $\hD_{a}$ the Levi-Civita connection associated to
$\hh_{ab}$, so $(\cM,\hh_{ab})$ is a 3-dimensional Riemannian
manifold, with $\hh_{ab}$ having signature $(+,+,+)$, and we use the
notation $\hh^{ab}$ for the inverse of the metric $\hh_{ab}$.
Indices have been raised and lowered with $\hh^{ab}$ and $\hh_{ab}$,
respectively. We have also denoted by $\hRi$ the Ricci scalar curvature of the metric $\hh_{ab}$. Finally, recall that the
constraint Eqs.~(\ref{CE-def-H})-(\ref{CE-def-M}) are indeed
equations on $\hh_{ab}$ and $\hk_{ab}$ due to the matter fields
satisfying the energy condition $-\hrho^2 +\hj_a\hj^a \leqs 0$
(with strict inequality holding at points on $\cM$ where $\hrho \neq 0$;
see~\cite{Wald84}), which is
implied by the dominant energy condition on the stress-energy tensor
$T^{\mu\nu}$ in spacetime.

\subsection{Conformal transverse traceless decomposition}
Let $\phi$ denote a positive scalar field on $\cM$, and decompose the
extrinsic curvature tensor $\hk_{ab} = \hl_{ab} + \frac{1}{3}\hh_{ab} \htau$,
where $\htau := \hk_{ab}\hh^{ab}$ is the trace and then $\hl_{ab}$ is
the traceless part of the extrinsic curvature tensor. Then, introduce
the following conformal re-scaling:
\begin{equation}
\label{CE-def-mf}
\begin{aligned}
\hh_{ab} &=: \phi^4 \, h_{ab},&
\hl^{ab} &=: \phi^{-10} \,l^{ab},&
\htau &=: \tau,\\
\hj^a &=: \phi^{-10}\; j^a,&
\hrho &=: \phi^{-8}\, \rho.
\end{aligned}
\end{equation}
We have introduced the Riemannian metric $h_{ab}$ on the 3-dimensional
manifold $\cM$, which determines the Levi-Civita connection $D_a$, and
so we have that $D_a h_{bc}=0$. We have also introduced the symmetric,
traceless tensor $l_{ab}$, and the non-physical matter sources $j^a$
and $\rho$. The different powers of the conformal re-scaling above are
carefully chosen so that the constraint
Eqs.~(\ref{CE-def-H})-(\ref{CE-def-M}) transform into the following
equations
\begin{gather}
\label{CE-cr1H}\textstyle
-8 \Delta \phi + \Ri \phi + \frac{2}{3}\tau^2 \phi^5
- l_{ab}l^{ab} \phi^{-7} -2\kappa \rho\phi^{-3} =0,\\
\label{CE-cr1M}\textstyle
-D_bl^{ab} + \frac{2}{3} \phi^6 D^a \tau +\kappa j^a =0,
\end{gather}
where in equation above, and from now on, indices of unhatted fields
are raised and lowered with $h^{ab}$ and $h_{ab}$ respectively. We
have also introduced the {\bf Laplace-Beltrami operator}
with respect to the metric $h_{ab}$, acting on smooth scalar fields;
it is defined as follows
\begin{equation}
\label{E:laplace-beltrami}
\Delta \phi:= h^{ab}D_aD_b\phi.
\end{equation}
Eqs.~(\ref{CE-cr1H})-(\ref{CE-cr1M}) can be obtained by a
straightforward albeit long computation. In order to perform this
calculation it is useful to recall that both $\hD_a$ and $D_a$ are
connections on the manifold $\cM$, and so they differ on a tensor
field $C_{ab}{}^c$, which can be computed explicitly in terms of
$\phi$, and has the form
\[
C_{ab}{}^c = 4 \delta_{(a}{}^cD_{b)} \ln(\phi)
- 2 h_{ab}h^{cd}D_d \ln(\phi).
\]
We remark that the power four on the re-scaling of the metric
$\hh_{ab}$ and $\cM$ being 3-dimensional imply that $\hRi =\phi^{-5}
(\Ri\phi - 8\Delta\phi)$, or in other words, that $\phi$ satisfies
the {\bf Yamabe-type problem}:
\begin{equation}
  \label{E:yamabe}
-8\Delta \phi+\Ri \phi - \hRi \phi^5 = 0, \quad \phi > 0,
\end{equation}
where $\hRi$ represents the scalar curvature corresponding to the physical metric $\hh_{ab} = \phi^4 h_{ab}$.
Note that for any other power in the re-scaling, terms proportional 
to $h^{ab}(D_a\phi)(D_b\phi)/\phi^2$ appear in the transformation.
The set of all metrics on a closed manifold can be classified into the
three disjoint Yamabe classes $\cY^{+}(\cM)$, $\cY^{0}(\cM)$, and
$\cY^{-}(\cM)$, corresponding to whether one can conformally transform the metric into a metric with strictly positive,
zero, or strictly negative scalar curvature, respectively, cf.~\cite{jLtP87} (See also Appendix \ref{sec:yamabe}).
We note that the {\bf Yamabe problem} is to determine,
for a given metric $h_{ab}$, whether there exists a conformal
transformation $\phi$ solving~(\ref{E:yamabe}) such that $\hRi = \mathrm{const}$.
Arguments similar to those above for $\phi$ force the power negative ten 
on the re-scaling of the tensor $\hl^{ab}$ and $\hj^a$, so terms proportional
to $(D_a\phi)/\phi$ cancel out in \eqref{CE-cr1M}. Finally, the
ratio between the conformal re-scaling powers of $\hrho$ and $\hj^a$
is chosen such that the inequality 
$-\rho^2 + h_{ab} j^aj^b \leqs 0$
implies the inequality 
$-\hrho^2 + \hh_{ab}\hj^a\hj^b \leqs 0$.
For a complete discussion of all possible choices of re-scaling
powers, see Appendix~\ref{sec:rescaling}.

There is one more step to convert the original constraint equation
\eqref{CE-def-H}-\eqref{CE-def-M} into a determined elliptic
system of equations. This step is the following: Decompose the
symmetric, traceless tensor $l_{ab}$ into a divergence-free part
$\sigma_{ab}$, and the symmetrized and traceless gradient of a vector,
that is, $l^{ab} =: \sigma^{ab} + (\cL w)^{ab}$, where
$D_a\sigma^{ab}=0$ and we have introduced the {\bf conformal Killing
operator} $\cL$ acting on smooth vector fields and defined as follows
\begin{equation}
\label{CF-def-CK}\textstyle
(\cL w)^{ab} := D^a w^b + D^b w^a - \frac{2}{3}(D_c w^c) h^{ab}.
\end{equation}
Therefore, the constraint Eqs.~(\ref{CE-def-H})-(\ref{CE-def-M}) are
transformed by the conformal re-scaling into the following equations
\begin{gather}
\label{CE-cr2H}\textstyle
\hspace*{-0.15cm}
- 8 \Delta \phi + \Ri \phi 
+ \frac{2}{3}\tau^2 \phi^5
- [\sigma_{ab}+(\cL w)_{ab}] [\sigma^{ab}+(\cL w)^{ab}]\phi^{-7} 
- 2\kappa \rho \phi^{-3} =0,\\
\label{CE-cr2M}\textstyle
-D_b(\cL w)^{ab} + \frac{2}{3} \phi^6 D^a \tau +\kappa j^a =0.
\end{gather}
In the next section we interpret these equations above as partial
differential equations for the scalar field $\phi$ and the vector
field $w^a$, while the rest of the fields are considered as given
fields. Given a solution $\phi$ and $w^a$ of
Eqs.~(\ref{CE-cr2H})-(\ref{CE-cr2M}), the physical metric $\hh_{ab}$
and extrinsic curvature $\hk^{ab}$ of the hypersurface $\cM$ are given
by
\[\textstyle
\hh_{ab} = \phi^4 h_{ab},\qquad
\hk^{ab} = \phi^{-10}
[\sigma^{ab} + (\cL w)^{ab}] + \frac{1}{3}\, 
\phi^{-4} \tau h^{ab},
\]
while the matter fields are given by Eq~(\ref{CE-def-mf}).

From this point forward, for simplicity we will denote the Levi-Civita connection
of the metric $h_{ab}$ on the 3-dimensional manifold $\cM$
as $\nabla_a$ rather than $D_a$, and the Ricci scalar of
$h_{ab}$ will be denoted by $R$ instead of $\Ri$.
Let $(\cM, h)$ be a 3-dimensional Riemannian manifold, where $\cM$ is
a smooth, compact manifold without boundary, and
$h\in C^{\infty}(T^0_2\cM)$ is a positive definite metric. 
With the shorthands $C^{\infty}=C^\infty(\cM\times\R)$ and $\biC^{\infty}=C^\infty(T\cM)$,
let
$L: C^{\infty}\to C^{\infty}$ and $\IL :\biC^{\infty}\to\biC^{\infty}$
be the operators with
actions on $\phi\in C^{\infty}$ and $\biw\in\biC^{\infty}$ given by
\begin{align}
\label{CF-def-L}
L\phi &:= -\Delta \phi,\\
\label{CF-def-IL}
(\IL \biw)^a &:= -\nabla_b (\cL \biw)^{ab},
\end{align}
where $\Delta$ denotes the Laplace-Beltrami operator 
defined in~\eqref{E:laplace-beltrami}, 
and where $\cL$ denotes the conformal Killing operator 
defined in~\eqref{CF-def-CK}.
We will also use the index-free notation $\IL\biw$ and $\cL\biw$.

The freely specifiable functions of the problem are a scalar function
$\tau$, interpreted as the trace of the physical extrinsic curvature;
a symmetric, traceless, and divergence-free, contravariant, two index 
tensor $\sigma$; the non-physical energy density $\rho$ and the non-physical
momentum current density vector $\bij$ subject to the requirement
$-\rho^2 +\bij\cdot\bij \leqs 0$.
The term non-physical refers here to
a conformal rescaled field, while physical refers to a conformally
non-rescaled term. The requirement on $\rho$ and $\bij$ mentioned
above and the particular conformal rescaling used in the
semi-decoupled decomposition imply that the same inequality is
satisfied by the physical energy and momentum current densities. This
is a necessary condition (although not sufficient) in order that the
matter sources in spacetime satisfy the dominant energy
condition. The definition of various energy conditions can be found
in~\cite[page 219]{Wald84}. Introduce the non-linear operators $F:
C^{\infty}\times\biC^{\infty}\to C^{\infty}$ and $\IF
:C^{\infty}\to\biC^{\infty}$ given by
\begin{equation*}
F(\phi,\biw) = a_{\tau} \phi^5 + a_{\tiR} \phi - a_{\rho} \phi^{-3}
- a_{w}\phi^{-7},
\quad\textrm{and}\quad
\IF (\phi) = \bib_{\tau} \, \phi^6 + \bib_{j},
\end{equation*}
where the coefficient functions are defined as follows
\begin{equation}
\label{CF-def-coeff2}
\begin{aligned}
a_{\tau} &:= \textstyle\frac{1}{12}\tau^2,&
a_{\tiR} &:= \textstyle\frac{1}{8}R,&
a_{\rho} &:= \textstyle\frac{\kappa}{4} \rho,\\
a_{\biw} &:= \textstyle\textstyle\frac{1}{8}(\sigma +\cL\biw)_{ab}(\sigma + \cL \biw)^{ab},&
b_{\tau}^a &:= \textstyle\frac{2}{3}\nabla^a \tau,&
b_{j}^a &:= \kappa j^a.
\end{aligned}
\end{equation}
Notice that the scalar coefficients $a_{\tau}$, $a_{w}$, and
$a_{\rho}$ are non-negative, while there is no sign restriction on
$a_{\tiR}$.

With these notations, the {\bf classical formulation} (or the strong formulation) of the
coupled Einstein constraint equations reads as: Given
the freely specifiable smooth functions $\tau$, $\sigma$, $\rho$, and
$\bij$ in $\cM$, find a scalar field $\phi$ and a vector field
$\biw$ in $\cM$ solution of the system
\begin{equation}
\label{CF-LY}
L \phi + F(\phi,\biw) = 0
\qquad\textrm{and}\qquad
\IL \biw  + \IF(\phi) =0
\qquad\textrm{in }\cM.
\end{equation}

\subsection{Formulation in Sobolev spaces}
\label{sub:weak}

We now outline a formulation of the
Einstein constraint equations that involves the weakest
regularity of the equation coefficients such that the equation
itself is well-defined. 
So in particular, the operators $L$ and $\IL$ are no longer differential operators sending smooth sections to smooth sections.
We shall employ Sobolev spaces to quantify smoothness, cf. Appendix \ref{sec:Sobolev}.

Let $(\cM, h)$ be a 3-dimensional Riemannian
manifold, where $\cM$ is a smooth, compact manifold without
boundary, and with $p\in(\frac32,\infty)$ and $s\in(\frac3p,\infty)\cap[1,2]$, $h\in W^{s,p}(T^0_2\cM)$ is a positive definite metric.
Note that the restriction $s\leqs2$ is only apparent, since $W^{t,p}\hookrightarrow W^{2,p}$ for any $t>2$.
In the formulation of the constraint equations we need to distinguish the cases $s>2$ and $s\leqs2$ at least notation-wise, and we choose to present in this subsection the case $s\leqs2$ because this is the case that is considered in the core existence theory; the higher regularity is obtained by a standard bootstrapping technique.
The general case is discussed in Sections \ref{sec:individual} and \ref{sec:proof}.
Let us define $r=r(s,p)=\frac{3p}{3+(2-s)p}$, so that the continuous embedding
$L^{r}\hookrightarrow W^{s-2,p}$ holds.
Introduce the operators
\begin{align*}
A_{\tiL}:W^{s,p}\to W^{s-2,p},\qquad\textrm{and}\qquad
A_{\tiIL}:\biW^{1,2r}\to \biW^{-1,2r},
\end{align*}
as the unique extensions of the operators $L$ and $\IL$
in Eqs.~(\ref{CF-def-L}) and~(\ref{CF-def-IL}), respectively, 
cf. Lemma \ref{l:bdd-operator} in Appendix \ref{sec:killing}.
The boldface letters denote spaces of sections of the tangent bundle $T\cM$, e.g., $\biW^{1,2r}=W^{1,2r}(T\cM)$.

Fix the source functions
\begin{equation}
\label{WF-coeff}
\tau \in L^{2r},\quad
\rho \in W^{s-2,p}_{+},\quad
\sigma \in L^{2r},\quad
\bij \in \biW^{-1,2r},
\end{equation}
where $\sigma$ is symmetric, traceless and divergence-free in weak
sense, the latter meaning that $\langle\sigma,\cL\bomega\rangle=0$ for all
$\bomega\in\biW^{1,(2r)'}$.
Here $\frac1{(2r)'}+\frac1{2r}=1$, 
and $\langle\cdot,\cdot\rangle$ denotes the extension of the $L^2$-inner product to $\biW^{-1,2r}\otimes\biW^{1,(2r)'}$.
We say that the matter fields $\rho$ and $\bij$ satisfy the {\bf energy condition}
iff there exist sequences $\{\rho_n\}\subset C^\infty$ and
$\{\bij_n\}\subset\biC^\infty$, respectively converging to $\rho$ and $\bij$ in the appropriate topology, such that
\[
\rho_n^2 - \bij_n\cdot\bij_n\geqs0.
\]
Given any function $\tau\in L^{2r}$ we have $\bib_{\tau}\equiv\frac23\nabla\tau\in\biW^{-1,2r}$.
The assumptions $\tau\in L^{2r}$ and $\sigma\in L^{2r}$ imply that for every
$\biw\in\biW^{1,2r}$ the functions $a_{\tau}$ and $a_{\biw}$ belong
to $L^{r}$. For example, to see that $a_{\biw}\in L^{r}$, we proceed
as
$$
\|a_{\biw}\|_{r}
=\|\sigma+\cL\biw\|_{2r}
\leqs2\left(\|\sigma\|_{2r}^2+\|\cL\biw\|_{2r}^2\right)
\leqs2\left(\|\sigma\|_{2r}^2+c_{\cL}^2\|\biw\|_{1,2r}^2\right),
$$
where we used the boundedness $\|\cL\biw\|_{2r}\leqs c_{\cL}\|\biw\|_{1,2r}$.
The assumption on the background metric implies that $a_{\tiR}\in W^{s-2,p}$. 

Given any two functions $u,v \in L^{\infty}$, and $t\geqs0$ and $q\in[1,\infty]$, define the interval
\[
[u,v]_{t,q} := \{ \phi \in W^{t,q} : u \leqs \phi \leqs v \} \subset W^{t,q},
\]
see Lemma~\ref{L:wsp-interval} on page \pageref{L:wsp-interval}.
We equip $[u,v]_{t,q}$ with the subspace topology of $W^{t,q}$.
We will write $[u,v]_q$ for $[u,v]_{0,q}$, and $[u,v]$ for $[u,v]_{\infty}$.
Now, assuming that $\phi_{-},\phi_{+}\in W^{{s},p}$ and $0<\phi_{-}\leqs\phi_{+}<\infty$, we introduce the non-linear operators
\begin{gather*}
f : [\phi_{-},\phi_{+}]_{s,p} \times \biW^{1,2r}
\to W^{s-2,p},
\qquad\textrm{and}\qquad
\bif : [\phi_{-},\phi_{+}]_{s,p} \to \biW^{-1,2r},
\end{gather*}
by
\begin{equation*}
f(\phi,\biw)
=  a_{\tau}\phi^5 + a_{\tiR}\phi
- a_{\rho}\phi^{-3} - a_{\biw}\phi^{-7},
\quad\textrm{and}\quad
\bif(\phi)
=  \bib_{\tau}\phi^6 + \bib_{j},
\end{equation*}
where the pointwise multiplication by an element of $W^{{s},p}$ defines a bounded linear map in
$W^{s-2,p}$ and in $\biW^{-1,2r}$, cf. Corollary \ref{c:alg}(a) in Appendix \ref{sec:Sobolev}.

Now, we can formulate the Einstein constraint equations in terms of the above defined operators: Find elements $\phi\in [\phi_{-},\phi_{+}]_{s,p}$ and $\biw\in\biW^{1,2r}$ solutions of
\begin{align}
\label{WF-LYs1}
A_{\tiL}\phi + f(\phi,\biw) &= 0,\\
\label{WF-LYm1}
A_{\tiIL}\biw + \bif(\phi) &= 0.
\end{align}

In the following, often we treat the two equations separately.
The {\bf Hamiltonian constraint equation} is the following: Given a function $\biw\in\biW^{1,2r}$, find an element $\phi\in [\phi_{-},\phi_{+}]_{s,p}$ solution of
\begin{equation}
\label{WF-LYs2}
A_{\tiL}\phi + f(\phi,\biw) = 0.
\end{equation}
When the Hamiltonian constraint equation is under consideration, the
function $\biw$ is referred to as the {\em source}. To indicate the
dependence of the solution $\phi$ on the source $\biw$, sometimes we
write $\phi=\phi_{\biw}$. Let us define the {\bf momentum constraint
equation}: Given $\phi\in W^{s,p}$ with $\phi>0$, find an element
$\biw\in\biW^{1,2r}$ solution of
\begin{equation}
\label{WF-LYm2}
A_{\tiIL}\biw + \bif(\phi) = 0.
\end{equation}
When the momentum constraint equation is under consideration, the
function $\phi$ is referred to as the {\em source}. To indicate the
dependence of the solution $\biw$ on the source $\phi$, sometimes we
write $\biw=\biw_\phi$.

\section{Overview of the main results}
   \label{sec:main}

In this section, we state our three main theorems 
(Theorems~\ref{T:main1}, \ref{T:main2}, and~\ref{T:main3} below) 
on the existence of far-from-CMC, near-CMC, and CMC
solutions to the Einstein constraint equations, and give an outline
of the overall structure of the argument that we build in the paper.
The proofs of the main results appear in~\S\ref{sec:proof} toward
the end of the paper, after we develop a number of supporting
results in the body of the paper. After we give an overview of the
basic abstract structure of the coupled nonlinear constraint problem,
we prove two abstract topological
fixed-point theorems (Theorems~\ref{T:FIXPT1} and~\ref{T:FIXPT2})
that are the basis for our analysis of the coupled system; 
these arguments are also the basis for our results in~\cite{mHgNgT08b} 
on existence of non-CMC solutions to the Einstein constraints on
compact manifolds with boundary.
After proving these abstract results, we give an overview
of the technical results that must be established in the remainder
of the paper in order to use the abstract results.

Before stating the main theorems, let us make precise what we mean by
near-CMC condition in this article.
We say that the extrinsic mean curvature $\tau$ satisfies the
{\bf near-CMC condition} when the following inequality is satisfied
\begin{equation}\label{nearcmc}
\|\nabla\tau\|_{z}<\Gamma\inf_{\cM}|\tau|,
\end{equation}
where the constant $\Gamma=\frac{\sqrt3}{2C}$ if
$\rho,\sigma^2\in L^\infty$, and
$\Gamma=\frac{\sqrt3}{2C}(\frac{\min uv}{\max uv})^6$ otherwise, with
the constant $C>0$ as in Corollary \ref{T:MC-E-aw} and the
continuous functions $u,v>0$ are as defined in \eqref{e:rhosigmacurva} or in \eqref{e:rhosigmacurvb} on page \pageref{e:rhosigmacurva}.  Here
$C$ depends only on the Riemannian manifold $(\cM,h_{ab})$, and not
mentioning  $(\cM,h_{ab})$, $u$ and $v$ depend only on $\rho$,
$\sigma^2$, and $\tau$. 
It is important to note that we always have $0<\frac{\min uv}{\max uv}\leqs1$, so that in any case the condition \eqref{nearcmc} is at least as strong as 
the same condition with $\Gamma$ taken to be equal to $\frac{\sqrt3}{2C}$.
The condition depends on the value of $z$, and that will be inserted through the context.

Recall that the three Yamabe classes $\cY^+(\cM)$,  $\cY^-(\cM)$ and $\cY^0(\cM)$ are defined after Eq. \eqref{E:yamabe}. 
See Appendix \ref{sec:yamabe} for more details.

\subsection{Theorem~\ref{T:main1}: Far-CMC weak solutions}
Here is the first of our three main results.
This result does not involve the near-CMC condition, which is one of
the main contributions of this paper.
The result is developed in the presence of a weak 
background metric $h_{ab} \in W^{s,p}$, for
$p \in (1,\infty)$ and $s \in (1+\frac{3}{p},\infty)$,
with the weakest possible assumptions on the data that
allows for avoiding the near-CMC condition.

\begin{theorem}
{\bf (Far-CMC $W^{s,p}$ solutions, 
      $p \in (1,\infty)$, $s \in (1+\frac{3}{p},\infty)$)}
\label{T:main1}
Let $(\cM,h_{ab})$ be a 3-dimensional closed Riemannian manifold.
Let $h_{ab}\in W^{s,p}$ admit no conformal Killing field
and be in $\cY^{+}(\cM)$,
where
$p \in (1,\infty)$ and $s \in (1+\frac{3}{p},\infty)$ are given.
Select $q$ and $e$ to satisfy:
\begin{itemize}
\item $\frac1q \in (0,1)\cap(0,\frac{s-1}{3})\cap[\frac{3-p}{3p},\frac{3+p}{3p}]$,
\item $e \in (1 + \frac{3}{q}, \infty)\cap[s-1,s]\cap[\frac3q+s-\frac3p-1,\frac{3}{q}+s-\frac{3}{p}]$.
\end{itemize}
Assume that the data satisfies:
\begin{itemize}
\item $\tau\in W^{e-1,q}$ if $e\geqs2$, and $\tau \in W^{1,z}$ otherwise, with $z = \frac{3q}{3 + \max\{0,2-e\}q}$,
\item $\sigma \in W^{e-1,q}$,
      ~with $\|\sigma^2\|_{\infty}$ sufficiently small,
\item $\rho \in W^{s-2,p}_+\cap L^{\infty} \setminus \{0\}$,
      ~with $\|\rho\|_{\infty}$ sufficiently small,
\item $\tbj \in \tbW^{e-2,q}$, ~with $\|\tbj\|_{e-2,q}$ sufficiently small.
\end{itemize}
Then there exist $\phi\in W^{s,p}$ with $\phi>0$ and
$\tbw\in\tbW^{e,q}$ solving the Einstein constraint
equations.
\end{theorem}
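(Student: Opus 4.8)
The plan is to establish Theorem~\ref{T:main1} via a topological fixed-point argument, using the abstract results Theorems~\ref{T:FIXPT1} and~\ref{T:FIXPT2}, with the global barrier constructions of~\S\ref{sec:barriers} supplying the order interval and the individual-constraint results of~\S\ref{sec:individual} supplying the solution maps and their continuity/compactness properties. The strategy decouples into: (i) solve the momentum constraint $A_{\tiIL}\biw + \bif(\phi)=0$ for $\biw = \biw_\phi$ given a conformal factor $\phi$ in a suitable order interval, using the Riesz-Schauder / very weak solution theory from~\S\ref{sec:momentum}; (ii) solve the Hamiltonian constraint $A_{\tiL}\phi + f(\phi,\biw)=0$ for $\phi = \phi_{\biw}$ given the vector field $\biw$, using the sub/super-solution method from~\S\ref{sec:Hamiltonian}; (iii) compose these to form the fixed-point map $\phi \mapsto \phi_{\biw_\phi}$ (or the reversed composition on $\biw$) and apply the topological fixed-point theorem.

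First I would fix the exponents: set $r=r(s,p)$ as defined in~\S\ref{sub:weak} (adjusted for the general $s$ case), so that the data in~\eqref{WF-coeff} sits in the right spaces, and verify that the chosen $q,e$ satisfy the embedding relations needed so that $\tbW^{e,q}$ for $\biw$ and $W^{s,p}$ for $\phi$ are compatible with the operator mapping properties and with the compact embeddings that drive the fixed-point argument. Next, since $h_{ab} \in \cY^+(\cM)$, I would invoke Lemma~\ref{L:global-super} to produce a near-CMC-free global super-solution $\phi_+$ — this is where the smallness of $\|\sigma^2\|_\infty$, $\|\rho\|_\infty$, and $\|\tbj\|_{e-2,q}$ is consumed — either the constant super-solution (if $R>0$ can be arranged by conformal rescaling) or the Yamabe-type scaled super-solution valid for the whole positive class. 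Then Lemma~\ref{L:global-sub} gives a global sub-solution $\phi_- > 0$, using $\rho \not\equiv 0$; together these form the nonempty interval $[\phi_-,\phi_+]_{s,p}$. I would check that the super-solution is constructed \emph{uniformly} over $\biw$ ranging in the relevant set — the key point that distinguishes this from the near-CMC arguments, where the super-solution otherwise depends on $\|\nabla\tau\|$.

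The core of the argument is then the fixed-point step. On the interval $U = [\phi_-,\phi_+]_{s,p}$, the momentum constraint is uniquely solvable (no conformal Killing fields, so $A_{\tiIL}$ is invertible on the relevant space by the Fredholm / Riesz-Schauder theory, cf.~\S\ref{sec:momentum} and Appendix~\ref{sec:killing}), giving a continuous map $S:\phi \mapsto \biw_\phi$, and the image $S(U)$ is bounded in $\tbW^{e,q}$ because $\bif(\phi) = \bib_\tau \phi^6 + \bib_j$ is bounded uniformly for $\phi \in U$. For each such $\biw$, the Hamiltonian constraint with sub/super-solutions $\phi_\pm$ has a solution $\phi_{\biw}\in[\phi_-,\phi_+]_{s,p}$ (monotone iteration / Schauder as in~\S\ref{sec:Hamiltonian}), and one verifies the composed map is continuous and, crucially, maps a closed bounded convex set into a compact subset of itself by exploiting a compact Sobolev embedding $W^{s,p}\hookrightarrow W^{s',p}$ with $s'<s$ (this is precisely why the abstract theorem is phrased via compactness rather than contraction, so no near-CMC smallness on $\|\nabla\tau\|$ is needed). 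Applying Theorem~\ref{T:FIXPT2} yields a fixed point $\phi$, and then $\biw = \biw_\phi$, solving both constraints; positivity of $\phi$ is inherited from $\phi \geqslant \phi_- > 0$; finally a bootstrap recovers the full regularity $\phi \in W^{s,p}$, $\biw \in \tbW^{e,q}$ in the general-$s$ case.

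The main obstacle I expect is not the fixed-point machinery itself but verifying the hypotheses of the abstract theorem in the weak setting without near-CMC — specifically, showing that the global super-solution from Lemma~\ref{L:global-super} can be chosen \emph{independently of $\biw$} over the bounded set $S(U)$, so that $\phi_+$ genuinely bounds the Hamiltonian solution map; this is the step where the positivity of the Yamabe invariant and the smallness of $(\sigma,\rho,\tbj)$ must be balanced against the range of $\biw$, and circularity (the super-solution depending on $\biw$ which depends on $\phi$ which should lie below the super-solution) must be broken. A secondary technical nuisance is tracking the precise Sobolev exponents $q,e$ through the momentum solution theory so that the needed compact embedding is available while all nonlinear terms ($\phi^6$, $(\sigma+\cL\biw)^2\phi^{-7}$, etc.) remain well-defined as maps between the stated spaces, which is where Corollary~\ref{c:alg} and the algebra/multiplication properties of $W^{s,p}$ are essential.
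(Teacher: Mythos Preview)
Your proposal is essentially correct and follows the same overall architecture as the paper: the near-CMC-free barriers come from Lemmas~\ref{L:global-super} and~\ref{L:global-sub}, the momentum solution map $S$ from~\S\ref{sec:momentum}, and Theorem~\ref{T:FIXPT2} closes the argument via the compact embedding $W^{s,p}\hookrightarrow W^{\tilde s,p}$, followed by a bootstrap to full regularity.

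The one substantive difference is your choice of $T$ as the \emph{solution map} for the Hamiltonian constraint ($\biw\mapsto\phi_{\biw}$), whereas the paper uses the shifted \emph{Picard map} $T^s(\phi,a_{\biw})=-(A_{\tiL}^s)^{-1}f_{\biw}^s(\phi)$ of Lemma~\ref{l:shift}. The paper explicitly avoids the solution map because the Hamiltonian constraint may admit multiple solutions, in which case continuity of the solution map is not automatic (see the discussion after Theorem~\ref{T:FIXPT1}). The Picard map is manifestly continuous, and invariance on $[\phi_-,\phi_+]$ follows from the monotone shift (Lemma~\ref{l:shift1}) together with the barrier property (Lemma~\ref{l:shiftsubsup}); when the scalar curvature $R$ is not continuous, the paper first conformally transforms to a metric with continuous curvature (Theorem~\ref{t:yclass}) and uses conformal covariance (Appendix~\ref{sec:conf-inv}) to transport the construction back. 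Your route could be salvaged by selecting, say, the minimal solution via monotone iteration and proving its continuity directly, but the Picard-map route is cleaner. One further point: the paper takes $U=[\phi_-,\phi_+]_{\tilde s,p}\cap\overline{B}_M\subset W^{\tilde s,p}$ rather than $[\phi_-,\phi_+]_{s,p}$, since the order cone in $W^{\tilde s,p}$ is not normal and the interval alone is not bounded; the ball $\overline{B}_M$ is supplied by the {\em a priori} bound of Lemma~\ref{T:HC-ball-gen}.
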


\begin{proof}
The proof will be given in~\S\ref{sec:proof}.
\end{proof}

\begin{figure}[htb]
\begin{center}
\includegraphics[width=0.9\textwidth]{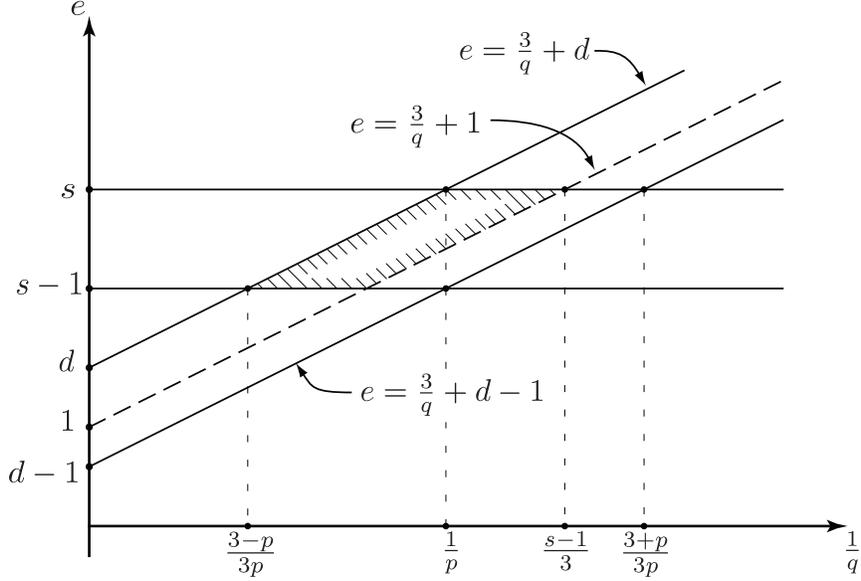}
\caption{Range of $e$ and $q$ in Theorems \ref{T:main1} and \ref{T:main2}, with $d=s-\frac3p>1$.}
\label{f:ncmc}
\end{center}
\end{figure}

\Remark
The above result avoids the near-CMC condition \eqref{nearcmc}; however, one should be aware of the various smallness conditions involved in the above theorem.
More precisely, the mean curvature $\tau$ can be chosen to be an arbitrary function from a suitable function space,
and {\em afterwards}, one has to choose $\sigma$, $\rho$, and $\bij$ satisfying
smallness conditions that depend on the chosen $\tau$.
Nevertheless, the novelty of this result is that $\tau$ can be specified freely, whereas the condition \eqref{nearcmc} is not satisfied for arbitrary $\tau$.

\subsection{Theorem~\ref{T:main2}: Near-CMC weak solutions}
Here is the second of our three main results; this result requires the 
near-CMC condition, but still extends the known near-CMC results to 
situations with weaker assumptions on metric and on the data.
In particular, the result is developed in the presence of a weak 
background metric $h_{ab} \in W^{s,p}$, for
$p \in (1,\infty)$ and $s \in (1+\frac{3}{p},\infty)$,
and with the weakest possible assumptions on the data.

\begin{theorem}
{\bf (Near-CMC $W^{s,p}$ solutions, 
      $p \in (1,\infty)$, $s \in (1+\frac{3}{p},\infty)$)}
\label{T:main2}
Let $(\cM,h_{ab})$ be a 3-dimensional closed Riemannian manifold.
Let $h_{ab}\in W^{s,p}$ admit no conformal Killing field,
where $p \in (1,\infty)$ and $s \in (1+\frac{3}{p},\infty)$
are given.
Select $q$, $e$ and $z$ to satisfy:
\begin{itemize}
\item[$\bullet$] $\frac1q \in (0,1)\cap(0,\frac{s-1}{3})\cap[\frac{3-p}{3p},\frac{3+p}{3p}]$,
\item[$\bullet$] $e \in (1 + \frac{3}{q}, \infty)\cap[s-1,s]\cap[\frac3q+s-\frac3p-1,\frac{3}{q}+s-\frac{3}{p}]$.
\item[$\bullet$] $z = \frac{3q}{3 + \max\{0,2-e\}q}$.
\end{itemize}
Assume that $\tau$ satisfies the near-CMC condition \eqref{nearcmc} with $z$ as above, and
that the data satisfies:
\begin{itemize}
\item[$\bullet$] $\tau\in W^{e-1,q}$ if $e>2$, and $\tau \in W^{1,z}$ if $e\leqs2$,
\item[$\bullet$] $\sigma \in W^{e-1,q}$,
\item[$\bullet$] $\rho \in W^{s-2,p}_+$,
\item[$\bullet$] $\tbj \in \tbW^{e-2,q}$.
\end{itemize}
In addition, let one of the following sets of conditions hold:
\begin{itemize}
\item[(a)]
 $h_{ab}$ is in $\cY^{-}(\cM)$;
    the metric $h_{ab}$ is conformally equivalent to a metric with scalar curvature $(-\tau^2)$;
\item[(b)]
 $h_{ab}$ is in $\cY^{0}(\cM)$ or in $\cY^{+}(\cM)$;
    either $\rho\not\equiv0$ and $\tau\not\equiv0$
    or $\tau\in L^{\infty}$ and $\inf_{\cM}\sigma^2$ is sufficiently large.
\end{itemize}
Then there exist $\phi\in W^{s,p}$ with $\phi>0$ and
$\tbw\in\tbW^{e,q}$ solving the Einstein constraint
equations.
\end{theorem}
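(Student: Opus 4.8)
The plan is to deduce Theorem~\ref{T:main2} from the abstract topological fixed-point theorem (Theorem~\ref{T:FIXPT2}) in exactly the way Theorem~\ref{T:main1} is deduced from it; the only structural differences are the choice of global barriers for the Hamiltonian constraint and the way the near-CMC condition~\eqref{nearcmc} enters the construction of the global {\em super}-solution. First I would reduce to the core low-regularity case $s\in(\tfrac{3}{p},\infty)\cap[1,2]$: when $s>2$ one specifies the data and solves the problem at a reduced regularity, then recovers the stated regularity by the bootstrap argument of \S\ref{sec:individual} and~\S\ref{sec:proof}; so from now on assume $s\leqs2$ and work with the Sobolev formulation~\eqref{WF-LYs1}--\eqref{WF-LYm1}. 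Fix the auxiliary exponents $r=r(s,p)$ and $q,e,z$ as in the statement; the constraints imposed on $(q,e,z)$ are precisely what is needed so that the pointwise-multiplication and embedding mapping properties of \S\ref{sub:weak} and Appendix~\ref{sec:Sobolev} hold, so that the momentum operator $A_{\tiIL}$ is an isomorphism (here the no-conformal-Killing-field hypothesis and Appendix~\ref{sec:killing} enter), and so that the fixed-point iteration closes in $W^{s,p}\times\biW^{1,2r}$ with a compact embedding to spare.

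The decisive step is to produce compatible global barriers $0<\phi_-\leqs\phi_+<\infty$ in $W^{s,p}$ for the Hamiltonian constraint~\eqref{WF-LYs2} that are valid {\em uniformly} over the range of sources $\biw$ that can arise from the momentum constraint. This is where the two alternatives split. In case (a), $h_{ab}\in\cY^{-}(\cM)$, one invokes the conformal covariance of the Hamiltonian constraint (Appendix~\ref{sec:conf-inv}) together with the hypothesis that $h_{ab}$ is conformally equivalent to a metric with scalar curvature $-\tau^2$ to normalize the background so that $a_{\tiR}=-\tfrac{1}{8}\tau^2$; then a sufficiently small constant is a global sub-solution and a sufficiently large constant is a global super-solution, which is the content of Lemma~\ref{L:global-sub-Y-}. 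In case (b), $h_{ab}\in\cY^{0}(\cM)\cup\cY^{+}(\cM)$, the global super-solution is supplied by Lemma~\ref{L:global-super} and the global sub-solution by Lemma~\ref{L:global-sub}, the latter using either $\rho\not\equiv0,\ \tau\not\equiv0$ or $\tau\in L^\infty$ with $\inf_{\cM}\sigma^2$ large so that $a_{\biw}\phi^{-7}$ dominates near $\phi=0$. In every case, admissibility of the super-solution uniformly in $\biw$ is forced by~\eqref{nearcmc}: through the momentum {\em a priori} estimate $\|\biw_\phi\|_{1,2r}\lesssim\|\phi^6\nabla\tau\|_{-1,2r}+\|\bij\|_{-1,2r}$ of \S\ref{sec:momentum} together with Corollary~\ref{T:MC-E-aw}, the term $a_{\biw}\phi^{-7}$ in $f$ scales like $(\sup_{\cM}\phi)^{12}\|\nabla\tau\|_z^2\,\phi^{-7}$, and this can be absorbed by $a_\tau\phi^5\gtrsim\inf_{\cM}\tau^2\,\phi^5$ on the order interval exactly when $\|\nabla\tau\|_z<\Gamma\inf_{\cM}|\tau|$, with $\Gamma$ (including the conformal correction factor $(\min uv/\max uv)^6$ when $\rho,\sigma^2\notin L^\infty$) as in~\eqref{nearcmc}.

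With the barriers fixed, assembling the fixed-point argument is essentially identical to the proof of Theorem~\ref{T:main1}. One takes as source set for the Hamiltonian constraint the image of $[\phi_-,\phi_+]_{s,p}$ under the momentum solution map $\phi\mapsto\biw_\phi:=-A_{\tiIL}^{-1}\bif(\phi)$, which by the momentum estimate lies in a fixed ball $U\subset\biW^{1,2r}$; over $U$ the Hamiltonian solution map $\biw\mapsto\phi_\biw$ is well defined, takes values in $[\phi_-,\phi_+]_{s,p}$, and is order-preserving, by the sub/super-solution and monotone-iteration results of \S\ref{sec:Hamiltonian} and the maximum principles of Appendix~\ref{sec:maxprinciple} (this step needs no near-CMC hypothesis). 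Continuity of both solution maps follows from continuity of $f$ and $\bif$ and boundedness of $A_{\tiIL}^{-1}$, and the compactness required by Theorem~\ref{T:FIXPT2} comes from the compact embeddings $W^{e,q}\hookrightarrow\hookrightarrow W^{s,p}$ and $\biW^{1,2r}\hookrightarrow\hookrightarrow\biW^{-1,2r}$ together with the regularity gained in solving the momentum constraint. Theorem~\ref{T:FIXPT2} then produces a fixed point $(\phi,\biw)\in[\phi_-,\phi_+]_{s,p}\times\biW^{1,2r}$ solving~\eqref{WF-LYs1}--\eqref{WF-LYm1}, whence $\phi>0$; a final bootstrap in \S\ref{sec:proof} upgrades $\phi$ to $W^{s,p}$ and $\biw$ to $\biW^{e,q}$ with the stated exponents.

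The hard part is the second step: exhibiting a single global super-solution that works for {\em all} admissible $\biw$ simultaneously. Because $a_{\biw}$ is controlled only through the momentum constraint and hence grows like a fixed power of $\sup_{\cM}\phi$, a naive constant super-solution is destroyed unless $\inf_{\cM}|\tau|$ dominates $\|\nabla\tau\|_z$ --- which is exactly the role of~\eqref{nearcmc} --- and in the non-$L^\infty$ matter case one must additionally track how conformal rescaling interacts with this bound, which is the origin of the factor $(\min uv/\max uv)^6$ in $\Gamma$; keeping all of this consistent with the low regularity of the weak metric $h_{ab}\in W^{s,p}$ throughout the barrier estimates and the momentum {\em a priori} bound is the main technical burden.
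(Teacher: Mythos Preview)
Your overall architecture---reduce to $s\leqs2$, build global barriers, feed them into Theorem~\ref{T:FIXPT2}, bootstrap---matches the paper, but several concrete pieces are misattributed or missing, and one is structurally different.

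\textbf{Wrong barrier lemmas.} In case~(b) you invoke Lemma~\ref{L:global-super} for the global super-solution. That lemma is the \emph{far-from-CMC} construction: it requires $h_{ab}\in\cY^{+}(\cM)$ and smallness of $a_{\rho}^{\tiwedge}$, $a_{\sigma}^{\tiwedge}$, $\ttk_2$, none of which are hypotheses of Theorem~\ref{T:main2}, and it does not cover $\cY^{0}$. The near-CMC super-solutions actually used are Lemma~\ref{L:HC-Sp}(a) (constant, when $\rho,\sigma^2\in L^{\infty}$) or Lemma~\ref{L:global-super-near} (non-constant, otherwise); the condition~\eqref{nearcmc} enters precisely to make these work, via the absorption mechanism you correctly describe. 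Similarly, in case~(a) you say Lemma~\ref{L:global-sub-Y-} furnishes both barriers; it gives only the sub-solution. The super-solution again comes from Lemma~\ref{L:HC-Sp}(a) or Lemma~\ref{L:global-super-near}. The sub-solution in case~(b) comes from Lemma~\ref{L:global-sub} or Lemma~\ref{L:HC-GSb}(c), matching the two alternatives in the hypothesis.

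\textbf{Picard map, not solution map.} You build the fixed-point around the Hamiltonian \emph{solution} map $\biw\mapsto\phi_{\biw}$. The paper instead uses the shifted \emph{Picard} map $T^s(\phi,a_{\biw})=-(A_{\tiL}^s)^{-1}f_{\biw}^s(\phi)$ of Lemma~\ref{l:shift}; this sidesteps the continuity problem the solution map can have under non-uniqueness, and invariance on $[\phi_{-},\phi_{+}]$ follows from monotonicity of $T^s$ (Lemma~\ref{l:shift1}) together with Lemma~\ref{l:shiftsubsup}. This forces a technical step you omit entirely: the monotone shift in Lemma~\ref{l:shift1} requires $a_{\tiR}$ to be continuous, which for a $W^{s,p}$ metric it need not be. The paper therefore conformally transforms $h$ to a metric $\tilde h=\theta^4h$ with continuous scalar curvature (via Theorem~\ref{t:yclass}), builds $\tilde T^s$ there, and pulls back by $T(\phi,\biw)=\theta\,\tilde T^s(\theta^{-1}\phi,\theta^{-12}a_{\biw})$ using the covariance in Appendix~\ref{sec:conf-inv}. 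Your use of conformal covariance in case~(a) to set $a_{\tiR}=-\tfrac18\tau^2$ is a different device (it is baked into Lemmas~\ref{L:global-sub-Y-} and~\ref{L:global-super-near}), not the one the proof needs here.

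\textbf{Function-space bookkeeping.} The compact embedding driving Theorem~\ref{T:FIXPT2} is $X=W^{s,p}\hookrightarrow Z=W^{\tilde s,p}$ for $\tilde s\in(\tfrac3p,s)$, not ``$W^{e,q}\hookrightarrow\hookrightarrow W^{s,p}$''. Because the order cone on $W^{\tilde s,p}$ is not normal for $\tilde s>0$, the interval $[\phi_{-},\phi_{+}]_{\tilde s,p}$ is not automatically bounded; one must intersect with a closed ball $\overline B_M$ and check $T$-invariance on $\overline B_M$ separately (Lemma~\ref{T:HC-ball-gen}), using the uniform bound on $\|a_{\biw}\|_{s-2,p}$ over $S(U)$ coming from Theorem~\ref{T:w-MC}.
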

\begin{proof}
The proof will be given in~\S\ref{sec:proof}.
\end{proof}

\subsection{Theorem~\ref{T:main3}: CMC weak solutions}

Here is the last of our three main results;
it covers specifically the CMC case, and allows for lower regularity
of the background metric than the non-CMC case.
In particular, the result is developed with a weak 
background metric $h_{ab} \in W^{s,p}$, for
$p \in (1,\infty)$ and $s \in (\frac{3}{p},\infty) \cap [1,\infty)$.
In the case of $s=2$, we reproduce the CMC existence results of 
Choquet-Bruhat~\cite{yCB04}, and
in the case $p=2$, we reproduce the CMC existence results of
Maxwell~\cite{dM05}, but with a different proof;
our CMC proof goes through the same analysis framework that we use to 
obtain the non-CMC results (Theorems~\ref{T:FIXPT1} and~\ref{T:FIXPT2}).
In the following theorem we do not include the trivial case $h_{ab}\in\cY^{0}$ and $\tau=\sigma=\rho=0$.

\begin{theorem}
{\bf (CMC $W^{s,p}$ solutions, 
      $p \in (1,\infty)$, $s \in (\frac{3}{p},\infty) \cap [1,\infty)$)}
\label{T:main3}
Let $(\cM,h_{ab})$ be a 3-dimensional closed Riemannian manifold.
Let $h_{ab}\in W^{s,p}$ admit no conformal Killing field,
where $p \in (1,\infty)$ and $s \in (\frac{3}{p},\infty) \cap [1,\infty)$
are given.
With $d:=s-\frac3p$, select $q$ and $e$ to satisfy:
\begin{itemize}
\item[$\bullet$] $\frac1q \in (0,1)\cap[\frac{3-p}{3p},\frac{3+p}{3p}]\cap[\frac{1-d}{3},\frac{3+sp}{6p})$,
\item[$\bullet$] $e \in [1,\infty)\cap[s-1,s]\cap[\frac3q+d-1, \frac{3}{q}+d]\cap(\frac3q+\frac{d}2,\infty)$.
\end{itemize}
Assume $\tau = \mathrm{const}$ (CMC) and that the data satisfies:
\begin{itemize}
\item[$\bullet$] $\sigma \in W^{e-1,q}$,
\item[$\bullet$] $\rho \in W^{s-2,p}_+$,
\item[$\bullet$] $\tbj \in \tbW^{e-2,q}$.
\end{itemize}
In addition, let one of the following sets of conditions hold:
\begin{itemize}
\item[(a)]
 $h_{ab}$ is in $\cY^{-}(\cM)$; $\tau\neq0$;
\item[(b)]
 $h_{ab}$ is in $\cY^{+}(\cM)$; $\rho\neq0$ or $\sigma\neq0$;
\item[(c)]
 $h_{ab}$ is in $\cY^{0}(\cM)$; $\tau\neq0$; $\rho\neq0$ or $\sigma\neq0$;
\item[(d)]
 $h_{ab}$ is in $\cY^{0}(\cM)$; $\tau=\rho=\sigma=0$; $\tbj=0$.
\end{itemize}
Then there exist $\phi\in W^{s,p}$ with $\phi>0$ and
$\tbw\in\tbW^{e,q}$ solving the Einstein constraint
equations.
\end{theorem}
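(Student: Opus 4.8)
The plan is to exploit the decoupling that occurs in the CMC case. Since $\tau=\mathrm{const}$ we have $\bib_{\tau}=\frac23\nabla\tau=0$, so $\bif(\phi)=\bib_{j}$ is independent of $\phi$ and the momentum constraint \eqref{WF-LYm2} becomes the \emph{linear} problem $A_{\tiIL}\biw=-\bib_{j}$. Because $h_{ab}$ admits no conformal Killing field, the momentum-constraint theory of \S\ref{sec:momentum} (Riesz--Schauder applied to $A_{\tiIL}$, together with the elliptic estimates of Appendix~\ref{sec:killing}) yields a unique solution $\biw$, which by a bootstrap lies in $\biW^{e,q}$ for the stated exponents. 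In the language of the abstract fixed-point result, the solution map $\phi\mapsto\biw_{\phi}$ is constant, so the coupled fixed-point framework of Theorems~\ref{T:FIXPT1}--\ref{T:FIXPT2} degenerates to: first solve the momentum constraint, then solve the Hamiltonian constraint with the resulting $\biw$ frozen. With $\biw$ fixed, $a_{\biw}=\frac18(\sigma+\cL\biw)_{ab}(\sigma+\cL\biw)^{ab}\geq0$ is a fixed source and \eqref{WF-LYs2} reduces to the uncoupled Lichnerowicz equation
\[
A_{\tiL}\phi + a_{\tau}\phi^5 + a_{\tiR}\phi - a_{\rho}\phi^{-3} - a_{\biw}\phi^{-7} = 0,
\qquad \phi>0 .
\]

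The second step is to produce a compatible pair of global barriers $0<\phi_{-}\leq\phi_{+}$, i.e.\ a weak sub- and super-solution, so that the Hamiltonian-constraint solvability result of \S\ref{sec:Hamiltonian} produces $\phi\in[\phi_{-},\phi_{+}]_{s,p}$. The construction splits according to the four cases, using the Yamabe classification of Appendix~\ref{sec:yamabe} together with the conformal covariance of the Hamiltonian constraint (Appendix~\ref{sec:conf-inv}) to first replace $h_{ab}$ by a conformally equivalent metric with scalar curvature of a convenient sign. In case (a) ($h_{ab}\in\cY^{-}$, $\tau\neq0$) we normalize to $R<0$ and use constant barriers as in Lemma~\ref{L:global-sub-Y-}. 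In case (b) ($h_{ab}\in\cY^{+}$) we normalize to $R>0$, take the constant super-solution of Lemma~\ref{L:HC-Sp} (or, alternatively, the scaled Yamabe construction of Lemma~\ref{L:global-super}), and build the sub-solution via Lemma~\ref{L:global-sub}; the hypothesis $\rho\neq0$ or $\sigma\neq0$ enters precisely here to keep $\phi_{-}$ bounded away from $0$. Case (c) ($h_{ab}\in\cY^{0}$, $\tau\neq0$, $\rho\neq0$ or $\sigma\neq0$) is handled similarly after normalizing to $R=0$: a sufficiently large constant is a super-solution because $a_{\tau}\phi^5$ dominates, and Lemma~\ref{L:global-sub} again supplies the sub-solution. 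Case (d) is degenerate: all of $\tau,\rho,\sigma,\bij$ vanish, so $\biw=0$, $a_{\biw}=0$, and the equation is the linear $-8\Delta\phi+R\phi=0$, whose positive solution is exactly the Yamabe conformal factor carrying $h_{ab}$ to zero scalar curvature; we take it directly as $\phi$.

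The third step is regularity improvement. The solution $(\phi,\biw)$ obtained above lies a priori only in a low-regularity space (e.g.\ $\phi\in W^{s,p}$ with $s$ near the bottom of the allowed range and $\biw\in\biW^{1,2r}$). One bootstraps using the mapping properties of $A_{\tiL}$ and $A_{\tiIL}$ (Lemma~\ref{l:bdd-operator} and Appendix~\ref{sec:killing}) together with the Sobolev multiplication and embedding statements of Appendix~\ref{sec:Sobolev} (Corollary~\ref{c:alg}); the constraints imposed on $(q,e)$ in the statement of the theorem are exactly what guarantee that every product and embedding invoked in this bootstrap is legitimate and that the iteration terminates at $\phi\in W^{s,p}$, $\biw\in\biW^{e,q}$.

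I expect the principal obstacle to be the barrier construction at low regularity in the positive and zero Yamabe classes: one must check that the near-CMC-free sub-solution of Lemma~\ref{L:global-sub} can be arranged to lie below the available super-solution while remaining strictly positive, and that the conformal change of background metric used to normalize the sign of $R$ is compatible with $h_{ab}\in W^{s,p}$ and does not disturb the exponent bookkeeping needed in the bootstrap. By contrast, the decoupling of the momentum constraint and the monotone scalar iteration for the Lichnerowicz equation are essentially routine once the barriers are in hand.
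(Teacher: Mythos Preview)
Your overall architecture matches the paper's: decouple via $\bib_{\tau}=0$, solve the momentum constraint once to obtain a fixed $\biw\in\biW^{e,q}$, then apply the Hamiltonian-constraint machinery of \S\ref{sec:Hamiltonian} (shifted Picard map $T^s$, with conformal covariance to arrange continuous scalar curvature) inside the abstract fixed-point Theorem~\ref{T:FIXPT2}, followed by bootstrap. That part is fine.

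The gap is in your super-solution choices for cases~(b) and~(c). The lemmas you cite do not apply under the theorem's hypotheses. Lemma~\ref{L:HC-Sp} requires $a_{\rho},a_{\sigma}\in L^{\infty}$, which is \emph{not} assumed here: with $s$ near $\max\{1,\tfrac3p\}$ one only has $a_{\rho}\in W^{s-2,p}$ and $a_{\biw}\in W^{s-2,p}$, and these need not be bounded functions. For the same reason your ``sufficiently large constant'' super-solution in case~(c) fails: for a constant $\chi$ one needs $a_{\tau}\chi^{12}\geqs a_{\rho}\chi^{4}+a_{\biw}$ in the $W^{s-2,p}$ order, and no finite $\chi$ achieves this when $a_{\rho}$ or $a_{\biw}$ is an unbounded distribution. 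Lemma~\ref{L:global-super} also does not help, since it imposes smallness on $a_{\rho}^{\tiwedge}$, $a_{\sigma}^{\tiwedge}$, and $\ttk_2$ that the theorem does not assume; and in case~(b) with $\tau=0$ (which the theorem allows) neither branch of Lemma~\ref{L:HC-Sp} is available. In case~(a) you cite only the sub-solution Lemma~\ref{L:global-sub-Y-} and no super-solution at all.

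What the paper actually uses is Lemma~\ref{L:local-super-near}, which constructs \emph{non-constant local} super-solutions $\phi_{+}=\beta uv$ by solving two auxiliary linear elliptic problems; this works with coefficients merely in $W^{s-2,p}$ and imposes no smallness or $L^{\infty}$ conditions. The CMC decoupling is precisely what permits local (rather than global) barriers, so Lemma~\ref{L:local-super-near}(a),(b),(c) cover cases~(c),(b),(a) respectively. For case~(a) one must additionally verify that $h_{ab}$ is conformally equivalent to a metric with scalar curvature $-a_{\tau}$; the paper does this by a separate short barrier argument for the reduced equation~\eqref{e:taucurv} after normalizing to continuous $R<0$. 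Your sub-solution choices (Lemmas~\ref{L:global-sub} and~\ref{L:global-sub-Y-}) and the treatment of case~(d) are essentially what the paper does.
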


\begin{proof}
The proof will be given in~\S\ref{sec:proof}.
\end{proof}

\begin{figure}[htb]
\begin{center}
\includegraphics[width=0.9\textwidth]{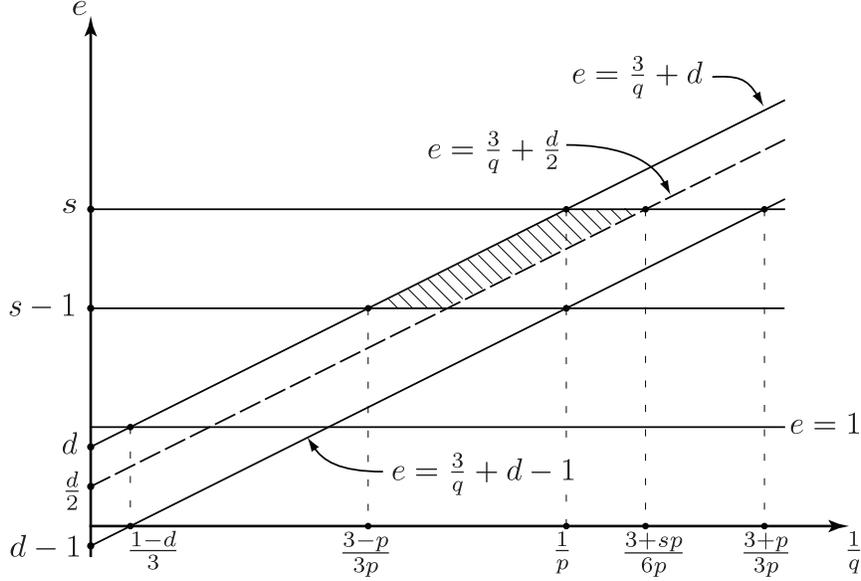}
\caption{Range of $e$ and $q$ in Theorem \ref{T:main3}. Recall that $d=s-\frac3p>0$.}
\label{f:cmc}
\end{center}
\end{figure}

\subsection{A coupled topological fixed-point argument}
In Theorems~\ref{T:FIXPT1} and~\ref{T:FIXPT2} below 
(see also~\cite{mHgNgT08b})
we give some abstract fixed-point results which form the basic
framework for our analysis of the coupled constraints.
These topological fixed-point theorems will be the main tool by which
we shall establish Theorems~\ref{T:main1}, \ref{T:main2}, 
and~\ref{T:main3} above.
They have the important feature that the required properties of the
abstract fixed-point operators $S$ and $T$ appearing in 
Theorems~\ref{T:FIXPT1} and~\ref{T:FIXPT2} 
below can be established in the case of the Einstein constraints without
using the near-CMC condition; this is not the case for fixed-point
arguments for the constraints based on $k$-contractions
(cf.~\cite{jIvM96,pAaCjI07}) which require near-CMC conditions.
The bulk of the paper then involves establishing the
required properties of $S$ and $T$ without using the 
near-CMC condition, and finding suitable global barriers 
$\phi_-$ and $\phi_+$ for defining the required set $U$ 
that are similarly free of the near-CMC condition (when possible).

We now set up the basic abstract framework.
Let $X$ and $Y$ be Banach spaces,
let $f:X \times Y \to X^*$ and $\bif:X \to Y^*$
be (generally nonlinear) operators,
let $A_{\tiIL}:Y \to Y^*$ be a linear invertible operator,
and
let $A_{\tiL}:X \to X^*$ be a linear invertible operator satisfying
the maximum principle, meaning that 
$A_{\tiL}u \leqs A_{\tiL}v \Rightarrow u \leqs v$.
The order structure on $X$ for interpreting the maximum principle
will be inherited from an ordered Banach space $Z$ 
(see Appendices~\ref{sec:OBS}, \ref{sec:monotone}, and~\ref{sec:maxprinciple}, 
and also cf.~\cite{Zeidler-I})
through the compact embedding $X \hookrightarrow Z$,
which will also make available compactness arguments.

The coupled Hamiltonian and momentum constraints can be viewed 
abstractly as coupled operator equations of the form:
\begin{eqnarray}
A_{\tiL} \phi + f(\phi,w) & = & 0, 
\label{E:ham-abstract} \\
A_{\tiIL} w + \bif(\phi) & = & 0,
\label{E:mom-abstract}
\end{eqnarray}
or equivalently as the coupled fixed-point equations
\begin{eqnarray}
\phi & = & T(\phi,w),
\label{E:ham-abstract-fixpt} \\
w & = & S(\phi),
\label{E:mom-abstract-fixpt}
\end{eqnarray}
for appropriately defined fixed-point
maps $T : X \times Y \to X$ and $S : X \to Y$.
The obvious choice for $S$ is the
{\em Picard map} for~(\ref{E:mom-abstract})
\begin{equation}
  \label{E:mom-abstract-picard}
S(\phi) = -A_{\tiIL} ^{-1} \bif(\phi),
\end{equation}
which also happens to be the solution map for~(\ref{E:mom-abstract}).
On the other hand, there are a number of distinct possibilities
for $T$, ranging from the solution map for~(\ref{E:ham-abstract}),
to the {\em Picard map} for~(\ref{E:ham-abstract}),
which inverts only the linear part of the operator in~(\ref{E:ham-abstract}):
\begin{equation}
  \label{E:ham-abstract-picard}
T(\phi,w) = -A_{\tiL}^{-1}f(\phi,w).
\end{equation}

Assume now that $T$ is as in~(\ref{E:ham-abstract-picard}),
and (for fixed $w \in Y$) that
$\phi_-$ and $\phi_+$ are sub- and super-solutions
of the semi-linear operator equation~(\ref{E:ham-abstract})
in the sense that
$$
A_{\tiL}\phi_- + f(\phi_-,w) \leqs 0,
\quad \quad
A_{\tiL}\phi_+ + f(\phi_+,w) \geqs 0.
$$
The assumptions on $A_{\tiL}$
imply (see Lemma~\ref{L:T-sub-super} in Appendix~\ref{sec:monotone})
that for fixed $w \in Y$, 
$\phi_-$ and $\phi_+$ are also sub- and super-solutions of 
the equivalent fixed-point equation:
$$
\phi_- \leqs T(\phi_-,w),
\quad \quad
\phi_+ \geqs T(\phi_+,w).
$$
For developing results on fixed-point iterations in ordered
Banach spaces, it is convenient to work with maps which are
monotone increasing in $\phi$, for fixed $w \in Y$:
$$
\phi_1 \leqs \phi_2 \quad \Longrightarrow \quad T(\phi_1,w) \leqs T(\phi_2,w).
$$
The map $T$ that arises as the Picard map for a semi-linear 
problem will generally not be monotone increasing; 
however, if there exists a continuous linear monotone increasing
map $J : X \rightarrow X^{*}$,
then one can always introduce a positive shift $s$ into the 
operator equation
$$
A_{\tiL}^s \phi + f^s(\phi,w) = 0,
$$
with $A_{\tiL}^s = A_{\tiL} + sJ$ and $f^s(\phi,w) = f(\phi,w) - sJ \phi$.
(Throughout this paper, the spaces we encounter for $X$ typically
fit into a Gelfand triple $X\hookrightarrow H \hookrightarrow X^{*}$,
where the ``pivot'' space $H$ is Hilbert space, and the continuous
map between $X$ and $X^*$ is a composition of the two inclusion maps.)
Since $s > 0$ the shifted operator $A_{\tiL}^s$ 
retains the maximum principle property of $A_{\tiL}$, and if $s$ is 
chosen sufficiently large then $f^s$ is monotone decreasing in 
$\phi$. 
Under the additional condition on $J$ and $s$ that $A_{\tiL}^s$ is 
invertible (see also \cite{mHgNgT08b}), the shifted Picard map 
$$
T^s(\phi,w) = -(A_{\tiL}^s)^{-1} f^s(\phi,w)
$$
is now monotone increasing in $\phi$.

We now give two abstract existence results for systems of the 
form~(\ref{E:ham-abstract-fixpt})--(\ref{E:mom-abstract-fixpt}).
\begin{theorem}
{\bf (Coupled Fixed-Point Principle A)}
\label{T:FIXPT1}
Let $X$ and $Y$ be Banach spaces, 
and let $Z$ be a Banach space 
with compact embedding $X \hookrightarrow Z$.
Let $U \subset Z$ be a non-empty, convex, closed, bounded subset,
and let
\[
S:U \to \mathcal{R}(S) \subset Y,
\quad
\quad
T:U \times \mathcal{R}(S) \to U \cap X,
\]
be continuous maps.
Then there exist 
$\phi \in U \cap X$
and 
$w\in\mathcal{R}(S)$
such that
\[
\phi=T(\phi,w) \quad \textrm{and} \quad w=S(\phi).
\]
\end{theorem}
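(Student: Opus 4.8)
The plan is to deduce Theorem~\ref{T:FIXPT1} from the Schauder fixed-point theorem applied to a single composed map on the set $U$. First I would define $G : U \to Z$ by $G(\phi) = T(\phi, S(\phi))$; this makes sense because $S$ maps $U$ into $\mathcal{R}(S)$ and then $T$ maps $U \times \mathcal{R}(S)$ into $U \cap X \subset U$. A fixed point $\phi = G(\phi)$ of this map, together with $w := S(\phi)$, immediately gives the pair $(\phi,w)$ with $\phi = T(\phi,w)$ and $w = S(\phi)$ asserted in the theorem, and the inclusion $T(U\times\mathcal{R}(S))\subset U\cap X$ guarantees $\phi\in U\cap X$ and $w\in\mathcal{R}(S)$. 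So everything reduces to producing a fixed point of $G$.

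To invoke Schauder, I need $G : U \to U$ to be continuous with relatively compact image, where $U$ is a non-empty, convex, closed, bounded subset of the Banach space $Z$. Continuity of $G$ follows from continuity of $S : U \to Y$ and of $T : U \times \mathcal{R}(S) \to Z$ (the latter viewed with target $Z$ via the continuous inclusion $X \hookrightarrow Z$), since $G$ is their composition $\phi \mapsto (\phi, S(\phi)) \mapsto T(\phi,S(\phi))$. The key point is compactness: the image $G(U)$ lies in $T(U \times \mathcal{R}(S)) \subset U \cap X$, hence is a bounded subset of $X$ — bounded because it sits inside the bounded set $U$ measured in $Z$, but I actually want boundedness in $X$; here one uses that $T$ takes values in $U\cap X$ and, reading $T$ as a map into $X$, its image of the set $U\times\mathcal{R}(S)$ is a subset of $U\cap X$, which I will take to be bounded in $X$ (this is implicitly part of the hypotheses, or else one restricts attention to the bounded-in-$X$ portion). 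Then the compact embedding $X \hookrightarrow Z$ sends this $X$-bounded set to a relatively compact subset of $Z$, so $\overline{G(U)}^{\,Z}$ is compact.

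With these ingredients — $U$ non-empty, closed, bounded, convex in $Z$; $G : U \to U$ continuous; $G(U)$ relatively compact in $Z$ — the Schauder fixed-point theorem yields $\phi \in U$ with $G(\phi) = \phi$. Since $G(\phi) = T(\phi, S(\phi)) \in U \cap X$, in fact $\phi \in U \cap X$, and setting $w = S(\phi) \in \mathcal{R}(S)$ completes the proof.

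The main obstacle, and the only genuinely delicate point, is the compactness/boundedness bookkeeping: one must be careful that $T$ is understood as a map with values in $X$ (not merely in $Z$) so that its range is bounded in the $X$-norm, which is what activates the compact embedding $X \hookrightarrow Z$. If instead one only knew $G(U)$ bounded in $Z$, no compactness would follow and Schauder would not apply. Everything else — continuity of the composition, convexity and closedness of $U$, the self-map property via $T(U\times\mathcal{R}(S))\subset U$ — is routine.
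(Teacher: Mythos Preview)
Your argument is essentially identical to the paper's: define $G(\phi)=T(\phi,S(\phi))$, observe that $G:U\to U\cap X$ is continuous as a composition, set $F=i\circ G:U\to U$ where $i:X\to Z$ is the compact inclusion, and invoke Schauder's theorem (Theorem~\ref{T:SCHAUDER-B}) on the non-empty convex closed bounded set $U\subset Z$. The one point you flag --- whether $G(U)$ is bounded in $X$ so that compactness of $i$ actually yields relative compactness of $F(U)$ in $Z$ --- is handled in the paper simply by asserting that ``the composition of compact and continuous operators is compact''; you are right that this step implicitly requires $G(U)$ bounded in $X$, and in the paper's applications this is supplied by the explicit a~priori estimate~\eqref{e:T-cpt} on $T$.
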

\begin{proof}
The proof will be through a standard variation of
the Schauder Fixed-Point Theorem, reviewed as
Theorem~\ref{T:SCHAUDER-B} in Appendix~\ref{sec:fixpt}.
The proof is divided into several steps.

{\em Step 1: Construction of a non-empty, 
             convex, closed, bounded subset $U \subset Z$.}
By assumption we have that $U \subset Z$ is
non-empty,
convex (involving the vector space structure of $Z$),
closed (involving the topology on $Z$),
and 
bounded (involving the metric given by the norm on $Z$).

{\em Step 2: Continuity of a mapping 
             $G:U \subset Z \to U \cap X \subset X$.}
Define the composite operator
\[
G := T \circ S : U \subset Z \to U \cap X \subset X.
\]
The mapping $G$ is continuous, since it is a composition 
of the continuous operators
$S:U \subset Z \to \mathcal{R}(S) \subset Y$
and
$T:U \subset Z \times \mathcal{R}(S) \to U \cap X \subset X$.

{\em Step 3: Compactness of a mapping $F:U \subset Z \to U \subset Z$.}
The compact embedding assumption $X \hookrightarrow Z$ implies that 
the canonical injection operator $i:X \to Z$ is compact.
Since the composition of compact and continuous operators is compact, 
we have the composition
$F := i\circ G : U \subset Z \to U \subset Z$
is compact.

{\em Step 4: Invoking the Schauder Theorem.}
Therefore, by a standard variant of the Schauder Theorem
(see Theorem~\ref{T:SCHAUDER-B} in Appendix~\ref{sec:fixpt}),
there exists a fixed-point $\phi \in U$ 
such that $\phi=F(\phi)=T(\phi,S(\phi))$.
Since $\mathcal{R}(T) = U \cap X$, in fact $\phi \in U \cap X$.
We now take $w=S(\phi) \subset \mathcal{R}(S)$ and we have the result.
\end{proof}

The assumption in Theorem~\ref{T:FIXPT1} that the mapping
$T$ is invariant on the non-empty, closed, convex, bounded
subset $U$ can be established
using {\em a priori} estimates if $T$ is the solution 
mapping, but if there are multiple fixed-points then continuity of
$T$ will not hold.  Fixed-point theory for set-valued maps could
still potentially be used (cf.~\cite{Zeidler-I}).
On the other hand, if $T$ is chosen to be the
Picard map, then it is typically easier to establish continuity
of $T$ even with multiple fixed-points, but more difficult to establish 
the invariance property without additional conditions on $T$.
In our setting, we wish to allow for non-uniqueness in the
Hamiltonian constraint (for example see~\cite{mHgNgT08b} for possible
non-uniqueness in the case of compact manifolds with boundary), so will 
generally focus on the Picard map for the Hamiltonian constraint in our 
fixed-point framework for the coupled constraints.
The following special case of Theorem~\ref{T:FIXPT1} gives some
simple sufficient conditions on $T$ to establish the invariance 
using barriers in an ordered Banach space (for a review of ordered 
Banach spaces, see Appendix~\ref{sec:OBS} or~\cite{Zeidler-I}).

\begin{theorem}
{\bf (Coupled Fixed-Point Principle B)}
\label{T:FIXPT2}
Let $X$ and $Y$ be Banach spaces, 
and let $Z$ be a real ordered Banach space
having the compact embedding $X \hookrightarrow Z$.
Let $[\phi_-,\phi_+] \subset Z$ be a nonempty interval which
is closed in the topology of $Z$,
and set $U = [\phi_-,\phi_+] \cap \overline{B}_M \subset Z$
where $\overline{B}_M$ is the closed ball
of finite radius $M>0$ in $Z$ about the origin.
Assume $U$ is nonempty, and let the maps
\[
S:U \to \mathcal{R}(S) \subset Y,
\quad
\quad
T:U \times \mathcal{R}(S) \to U \cap X,
\]
be continuous maps.
Then there exist 
$\phi\in U \cap X$
and 
$w\in\mathcal{R}(S)$
such that
\[
\phi=T(\phi,w)\quad\textrm{and}\quad w=S(\phi).
\]
\end{theorem}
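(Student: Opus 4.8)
The plan is to deduce Theorem~\ref{T:FIXPT2} as a direct corollary of Theorem~\ref{T:FIXPT1}, by verifying that the set $U = [\phi_-,\phi_+] \cap \overline{B}_M$ meets the hypotheses required there. First I would observe that $Z$, being an ordered Banach space, carries both its vector space structure and its norm topology, and that the order interval $[\phi_-,\phi_+]$ is exactly the set $\{\phi \in Z : \phi_- \leqs \phi \leqs \phi_+\}$, which is convex because the order cone is convex, and is assumed closed in the topology of $Z$. Similarly, the closed ball $\overline{B}_M$ is convex, closed, and bounded. Hence the intersection $U = [\phi_-,\phi_+] \cap \overline{B}_M$ is convex (intersection of convex sets), closed (intersection of closed sets), and bounded (being contained in $\overline{B}_M$). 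By hypothesis $U$ is nonempty. Thus $U$ is a non-empty, convex, closed, bounded subset of $Z$, which is precisely what the set $U$ in Theorem~\ref{T:FIXPT1} is required to be.

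Next I would note that the remaining hypotheses transfer verbatim: the compact embedding $X \hookrightarrow Z$ is assumed in both statements, and the maps $S : U \to \mathcal{R}(S) \subset Y$ and $T : U \times \mathcal{R}(S) \to U \cap X$ are assumed continuous in both. Therefore all the hypotheses of Theorem~\ref{T:FIXPT1} are satisfied with this choice of $U$, and I would simply invoke that theorem to conclude the existence of $\phi \in U \cap X$ and $w \in \mathcal{R}(S)$ with $\phi = T(\phi,w)$ and $w = S(\phi)$, which is exactly the assertion of Theorem~\ref{T:FIXPT2}.

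Since the argument is a straightforward specialization, there is no genuine obstacle here; the only point requiring a word of care is confirming that the order interval $[\phi_-,\phi_+]$ is convex, which follows from convexity of the positive cone defining the order relation on $Z$ (as reviewed in Appendix~\ref{sec:OBS}). One might add a brief remark that in the application to the Einstein constraints, the interval $[\phi_-,\phi_+]$ will be a genuine order interval in a Sobolev space furnished by the global sub- and super-solution (barrier) constructions, and that intersecting with $\overline{B}_M$ ensures boundedness in $Z$ even when the ambient order interval is not a priori norm-bounded (recall the order cone on $W^{s,p}$ is normal only for $s=0$). I expect the write-up of the proof to be only a few lines: restate the reduction, check the four properties of $U$, and cite Theorem~\ref{T:FIXPT1}.

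\begin{proof}
This is an immediate consequence of Theorem~\ref{T:FIXPT1}.
Indeed, it suffices to verify that the set
$U = [\phi_-,\phi_+] \cap \overline{B}_M$
satisfies the requirements imposed on the set $U$ in
Theorem~\ref{T:FIXPT1}.
The order interval $[\phi_-,\phi_+] = \{\phi\in Z : \phi_-\leqs\phi\leqs\phi_+\}$
is convex, since the positive cone defining the order relation on the
ordered Banach space $Z$ is convex (see Appendix~\ref{sec:OBS}),
and it is closed in the topology of $Z$ by assumption.
The closed ball $\overline{B}_M$ is convex, closed, and bounded in $Z$.
Hence $U$, being the intersection of these two sets, is convex and
closed, and it is bounded because $U \subset \overline{B}_M$; by
hypothesis $U$ is nonempty.
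Thus $U \subset Z$ is a non-empty, convex, closed, bounded subset.
The compact embedding $X \hookrightarrow Z$ holds by assumption, and
the maps
$S:U \to \mathcal{R}(S) \subset Y$
and
$T:U \times \mathcal{R}(S) \to U \cap X$
are continuous by assumption.
Therefore all the hypotheses of Theorem~\ref{T:FIXPT1} are met, and
that theorem yields $\phi\in U\cap X$ and $w\in\mathcal{R}(S)$ with
$\phi = T(\phi,w)$ and $w = S(\phi)$, as claimed.
\end{proof}
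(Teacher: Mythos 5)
Your proof is correct and follows exactly the same route as the paper's: verify that $U=[\phi_-,\phi_+]\cap\overline{B}_M$ is nonempty, convex (intersection of convex sets), closed (intersection of closed sets), and bounded (contained in $\overline{B}_M$), and then invoke Theorem~\ref{T:FIXPT1}. Nothing to add.
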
 
\begin{proof}
By choosing the set $U$ to be the non-empty intersection of the interval
$[\phi_-,\phi_+]$ with a bounded set in $Z$, we have $U$ bounded in $Z$.
We also have that $U$ is convex with respect to the vector space structure
of $Z$, since it is the intersection of two convex sets $[\phi_-, \phi_+]$
and $\overline{B}_M$.
Since $U$ is the intersection of the interval $[\phi_-,\phi_+]$, which 
by assumption is closed in the topology of $Z$, with the 
closed ball $\overline{B}_M$ in $Z$, $U$ is also closed.
In summary, we have that $U$ is non-empty as a subset of $Z$, 
closed in the topology of $Z$, 
convex with respect to the vector space structure of $Z$, 
and bounded with respect to the metric (via normed) space structure of $Z$.
Therefore, the assumptions of Theorem~\ref{T:FIXPT1} hold and the result 
then follows.
\end{proof}

We make some final remarks about Theorems~\ref{T:FIXPT1} and~\ref{T:FIXPT2}.
If the ordered Banach space $Z$ in Theorem~\ref{T:FIXPT2}
had a {\em normal} order cone, then the closed interval
$[\phi_-,\phi_+]$ would automatically be bounded in the norm
of $Z$ (see Lemma~\ref{L:normal} in Appendix~\ref{sec:OBS} or~\cite{Zeidler-I}
for this result).
The interval by itself is also non-empty and closed by assumption,
and trivially convex (see Appendix~\ref{sec:OBS}), so that 
Theorem~\ref{T:FIXPT2} would follow immediately from Theorem~\ref{T:FIXPT1}
by simply taking $U = [\phi_-,\phi_+]$.
Second, the closed ball $\overline{B}_M$ in Theorem~\ref{T:FIXPT2}
can be replaced with any non-empty, convex, closed, and bounded subset 
of $Z$ having non-trivial intersection with the interval $[\phi_-,\phi_+]$.
Third, in the case that $T$ in Theorem~\ref{T:FIXPT2} arises as the Picard
map~(\ref{E:ham-abstract-picard}) of the 
semi-linear problem~(\ref{E:ham-abstract}), we can always
ensure that $T$ is invariant on $U$ in Theorem~\ref{T:FIXPT2}
by: (1) obtaining sub- and super-solutions to the semi-linear operator
equation and using these for $\phi_-$ and $\phi_+$, since these will
also be sub- and super-solutions for the fixed-point equation involving
the Picard map;
(2) introducing a shift into the nonlinearity to ensure $T$ is monotone
increasing;
and
(3) obtaining {\em a priori} norm bounds on Picard iterates.
As noted earlier, (1) and (2) will ensure
\begin{equation}
\label{E:cone-bounds}
\phi_- \leqs T(\phi_-,w) \leqs T(\phi,w) \leqs T(\phi_+,w) \leqs \phi_+,
\end{equation}
for all $\phi \in [\phi_-,\phi_+]$, and $w \in \mathcal{R}(S)$,
whereas (3) ensures that
\begin{equation}
\label{E:norm-bounds}
\| T(\phi,w) \|_X \leqs M,
\quad
\forall \phi \in [\phi_-,\phi_+],
\quad
\forall w \in \mathcal{R}(S),
\end{equation}
which together ensure $T:U \times \mathcal{R}(S) \to U \cap X$,
where $U = [\phi_-,\phi_+] \cap \overline{B}_M \subset Z$.
Again, if $Z$ has a normal order cone structure, then 
ensuring~(\ref{E:cone-bounds}) holds will
automatically guarantee that~(\ref{E:norm-bounds}) also holds,
so it is not necessary to establish~(\ref{E:norm-bounds}) separately
in the case of a normal order cone.

Finally, note that Theorem~\ref{T:FIXPT2} also allows one to choose the
solution map (or any other fixed-point map) for $T$ together with 
{\em a priori} order cone and norm estimates to ensure the 
conditions~(\ref{E:cone-bounds}) and~(\ref{E:norm-bounds}) hold
(as long as continuity for $T$ can be shown).
Even if {\em a priori} order-cone estimates cannot be shown to hold 
directly for this choice of $T$,
as long as the map can be ``bracketed'' in the interval $[\phi_-,\phi_+]$
by two auxiliary monotone increasing maps, then it can be shown
that~(\ref{E:cone-bounds}) holds.
This allows one to use the Picard map even if it is not monotone increasing, 
without having to introduce the shift into the Picard map.

The overall argument we use to prove the non-CMC results in
Theorems~\ref{T:main1}, \ref{T:main2}, and~\ref{T:main3}
using Theorems~\ref{T:FIXPT1} and~\ref{T:FIXPT2}
involves the following steps:
\begin{itemize}
\item[{\em Step 1:}]{\em The choice of function spaces}.
      We will choose the spaces for use of Theorem~\ref{T:FIXPT2} as follows:
\begin{itemize}
\item $X=W^{s,p}$, with $p\in(1,\infty)$, and $s(p)\in(1+\frac{3}{p},\infty)$.
      In the CMC case in Theorem~\ref{T:main3}, we can lower $s$ to
       $s(p)\in(\frac{3}{p},\infty) \cap [1,\infty)$.
\item $Y=\biW^{e,q}$, 
      with
      $e$ and $q$ as given in the theorem statements.
\item $Z=W^{\tilde{s},p}$, $\tilde{s} \in (\frac{3}{p},s)$,
      so that $X=W^{s,p}\hookrightarrow W^{\tilde{s},p} = Z$ is compact.
\item $U=[\phi_-,\phi_+]_{\tilde{s},p} \cap \overline{B}_M
      \subset W^{\tilde{s},p} = Z$,
      with $\phi_-$ and $\phi_+$ global barriers
      (sub- and super-solutions, respectively) for
      the Hamiltonian constraint equation which satisfy the compatibility
      condition: $0 < \phi_- \leqs \phi_+ < \infty$.
\end{itemize}
\item[{\em Step 2:}]{\em Construction of the mapping $S$}.
      Assuming the existence of ``global'' weak sub- and super-solutions
      $\phi_-$ and $\phi_+$, and assuming the fixed function
      $\phi \in U = [\phi_-,\phi_+]_{\tilde{s},p} \cap \overline{B}_M
      \subset W^{\tilde{s},p} = Z$ is taken as data in the
      momentum constraint, we establish continuity and related properties
      of the momentum constraint solution map
      $S : U \to \mathcal{R}(S) \subset \biW^{e,q} = Y$.
      (\S\ref{sec:momentum})
\item[{\em Step 3:}]{\em Construction of the mapping $T$}.
      Again existence of ``global'' weak sub- and super-solutions
      $\phi_-$ and $\phi_+$, 
      with fixed $w \in \mathcal{R}(S) \subset \biW^{e,q} = Y$
      taken as data in the Hamiltonian constraint, we establish
      continuity and related properties of the Picard map
      $T: U \times \mathcal{R}(S) \to U \cap W^{s,p}$.
      Invariance of $T$ on $U=[\phi_-,\phi_+]_{\tilde{s},p} 
          \cap \overline{B}_M \subset W^{\tilde{s},p}$
      is established using a combination of {\em a priori} order cone bounds
      and norm bounds.
      (\S\ref{sec:Hamiltonian})
\item[{\em Step 4:}]{\em Barrier construction}.
      Global weak sub- and super-solutions $\phi_-$ and $\phi_+$
      for the Hamiltonian constraint are explicitly constructed to
      build a nonempty, convex, closed, and bounded
      subset $U=[\phi_-,\phi_+]_{\tilde{s},p} \cap \overline{B}_M \subset W^{\tilde{s},p}$, 
      which is a strictly positive interval.
      These include variations of known barrier constructions which 
      require the near-CMC condition, and also some new barrier
      constructions which are free of the near-CMC condition.
      (\S\ref{sec:barriers})
      {\bf\em Note: This is the only place in the argument where
      near-CMC conditions may potentially arise.}
\item[{\em Step 5:}]{\em Application of fixed-point theorem}.
      The global barriers and continuity properties
      are used together with the abstract topological fixed-point
      result (Theorem~\ref{T:FIXPT2}) to 
      establish existence
      of solutions $\phi \in U \cap W^{s,p}$ and $w \in \biW^{e,q}$
      to the coupled system: $w=S(\phi), \phi=T(\phi,w).$ (\S\ref{sec:proof})
\item[{\em Step 6:}]{\em Bootstrap}.
      The above application of a fixed-point theorem is actually performed for some low regularity spaces,
      i.e., for $s\leqs2$ and $e\leqs2$
,      and a bootstrap argument is then given to extend the results to
      the range of $s$ and $p$ given in the statement of the Theorem.
      (\S\ref{sec:proof})
\end{itemize}

The ordered Banach space $Z$ plays a central role in Theorem~\ref{T:FIXPT2}.
We will use $Z=W^{t,q}$, ~$t \geqs 0$, ~$1 \leqs q \leqs \infty$, 
with order cone defined as in~(\ref{E:wkp-cone}).
Given such an order cone, one can define the closed interval
$$
[\phi_-,\phi_+]_{t,q}
  = \{ \phi \in W^{t,q} : \phi_- \leqs \phi \leqs \phi_+ \} \subset W^{t,q},
$$
which as noted earlier is denoted more simply as
$[\phi_-,\phi_+]_q$ when $t=0$,
and as simply
$[\phi_-,\phi_+]$ when $t=0$, $q=\infty$.
When $t=0$, the $W^{t,q}$ order cone is {\em normal} for 
$1 \leqs q \leqs \infty$, meaning that
closed intervals $[\phi_-,\phi_+]_q \subset L^q = W^{0,q}$
are automatically bounded in the metric given by the norm on $L^q$.

If we consider the interval $U=[\phi_-,\phi_+]_{t,q} \subset W^{t,q} = Z$
defined using this order structure, it will be critically important
to establish that $U$ is 
convex (with respect to the vector space structure of $Z$),
closed (in the topology of $Z$),
and (when possible) bounded (in the metric given by the norm on $Z$).
It will also be important that $U$ be nonempty as a subset of $Z$;
this will involve choosing compatible $\phi_-$ and $\phi_+$.
Regarding convexity, closure, and boundedness, we have the
following lemma.

\begin{lemma}
{\bf (Order cone intervals in $W^{t,q}$)}
\label{L:wsp-interval}
For $t \geqs 0$, $1 \leqs q \leqs \infty$, the set
$$
U = [\phi_-,\phi_+]_{t,q}
  = \{ \phi \in W^{t,q} : \phi_- \leqs \phi \leqs \phi_+ \} \subset W^{t,q}
$$
is convex with respect to the vector space structure of \ $W^{t,q}$
and closed in the topology of \ $W^{t,q}$.
For $t=0$, $1 \leqs q \leqs \infty$, the set $U$ is also bounded with 
respect to the metric space structure of $L^q=W^{0,q}$.
\end{lemma}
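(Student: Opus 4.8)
The plan is to verify the three properties --- convexity, closure, and (for $t=0$) boundedness --- directly from the definition of the order cone $W^{t,q}_+$ given in~\eqref{E:wkp-cone}, using the fact that $W^{t,q}_+$ is a closed convex cone. First I would record the trivial but essential observation that the order relation $\phi \geqs \psi$ means $\phi - \psi \in W^{t,q}_+$, and that $W^{t,q}_+$ is itself convex (it is defined as an intersection over $\varphi \in C^\infty_+$ of half-spaces $\{\phi : \langle\phi,\varphi\rangle \geqs 0\}$, each of which is convex, so the intersection is convex) and is a cone (closed under multiplication by nonnegative scalars, again immediate from the defining inequalities being homogeneous).

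For convexity of $U = [\phi_-,\phi_+]_{t,q}$: given $\phi_1,\phi_2 \in U$ and $\lambda \in [0,1]$, I would write $\lambda\phi_1 + (1-\lambda)\phi_2 - \phi_- = \lambda(\phi_1-\phi_-) + (1-\lambda)(\phi_2-\phi_-)$, which lies in $W^{t,q}_+$ because it is a convex combination (indeed a nonnegative combination) of the two elements $\phi_1-\phi_-, \phi_2-\phi_- \in W^{t,q}_+$ and $W^{t,q}_+$ is a convex cone; similarly $\phi_+ - (\lambda\phi_1 + (1-\lambda)\phi_2) \in W^{t,q}_+$. Hence the convex combination lies in $U$.

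For closedness: I would show $W^{t,q}_+$ is closed in $W^{t,q}$. Each functional $\phi \mapsto \langle\phi,\varphi\rangle$ for fixed $\varphi \in C^\infty_+ \subset W^{-t,q'}$ is a bounded linear functional on $W^{t,q}$ (this uses that $\langle\cdot,\cdot\rangle$ is the continuous extension of the $L^2$ pairing to $W^{t,q}\otimes W^{-t,q'}$, as stated in the excerpt), hence continuous, so its sublevel set $\{\phi : \langle\phi,\varphi\rangle \geqs 0\}$ is closed; $W^{t,q}_+$ is the intersection of these and thus closed. Then $U = (\phi_- + W^{t,q}_+) \cap (\phi_+ - W^{t,q}_+)$ is an intersection of two translates of a closed set, hence closed. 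For the boundedness claim when $t=0$: here the cone $W^{0,q}_+ = L^q_+$ consists precisely of the a.e.-nonnegative functions, and $\phi_- \leqs \phi \leqs \phi_+$ forces $|\phi| \leqs \max\{|\phi_-|,|\phi_+|\}$ pointwise a.e., so $\|\phi\|_q \leqs \|\max\{|\phi_-|,|\phi_+|\}\|_q < \infty$ uniformly over $\phi \in U$ (using $\phi_-,\phi_+ \in L^\infty \subset L^q$ on the compact manifold $\cM$); this is exactly the statement that the $L^q$ order cone is normal. The one point deserving a sentence of care is that the pointwise characterization of $L^q_+$ agrees with the functional-analytic definition~\eqref{E:wkp-cone} at $t=0$ --- i.e. $\langle\phi,\varphi\rangle\geqs0$ for all $\varphi\in C^\infty_+$ iff $\phi\geqs0$ a.e. --- which follows from the density of $C^\infty_+$ in the relevant sense and a standard Lebesgue-point argument.

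I do not expect a serious obstacle here; the only mildly delicate step is confirming that the abstract order cone~\eqref{E:wkp-cone} reduces to the usual a.e.-nonnegativity at $t=0$ so that the normality/boundedness conclusion is legitimate, and keeping straight that for $t>0$ no boundedness is claimed (consistent with the remark after~\eqref{E:wkp-cone} that the cone is normal only for $s=0$).
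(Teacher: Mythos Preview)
Your argument is correct and, for the closedness claim, takes a somewhat different route from the paper. The paper first establishes closedness at $t=0$ by invoking the fact that $L^q$-convergence implies pointwise a.e.\ convergence along a subsequence, and then reduces the case $t>0$ to $t=0$ via the continuous embedding $W^{t,q}\hookrightarrow L^q$: a Cauchy sequence in $[\phi_-,\phi_+]_{t,q}$ is also Cauchy in $L^q$ with the same limit, and since $[\phi_-,\phi_+]_{0,q}$ is closed in $L^q$ the limit stays in the interval. Your approach instead shows directly that $W^{t,q}_+$ is closed as an intersection of closed half-spaces $\{\phi:\langle\phi,\varphi\rangle\geqs 0\}$, which handles all $t\geqs0$ uniformly without splitting into cases; this is cleaner and avoids any appeal to measure-theoretic subsequential convergence. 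One small point: since the paper only assumes $\phi_\pm\in L^\infty$ (not necessarily $\phi_\pm\in W^{t,q}$), your expression $U=(\phi_-+W^{t,q}_+)\cap(\phi_+-W^{t,q}_+)$ should be read as the intersection of the closed half-spaces $\{\phi\in W^{t,q}:\langle\phi,\varphi\rangle\geqs\langle\phi_-,\varphi\rangle\}$ and $\{\phi\in W^{t,q}:\langle\phi,\varphi\rangle\leqs\langle\phi_+,\varphi\rangle\}$ over $\varphi\in C^\infty_+$, which is what your functional argument actually proves. For convexity and for boundedness at $t=0$ your reasoning matches the paper's (the paper simply cites normality of the $L^q$ cone for the latter).
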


\begin{proof}
That $U$ is convex for $t\geqs 0$, $1 \leqs q \leqs \infty$,
follows from the fact that any interval
built using order cones is convex.
That $U$ is closed in the case of $t=0$, $1 \leqs q \leqs \infty$
follows from the fact that norm convergence in 
$L^q$ for $1 \leqs q \leqs \infty$ implies pointwise subsequential 
convergence almost everywhere (see Theorem~3.12 in~\cite{Rudi87}).
That $U$ is bounded when $t=0$, $1 \leqs q \leqs \infty$ follows 
from the fact that the order cone $L^q_+$ is 
normal (see Appendix~\ref{sec:OBS}).

What remains is to show that
$U$ is closed in the case of $t>0$, $1 \leqs q \leqs \infty$.
The argument is as follows.
Let $\{ u_k \}_{k=1}^{\infty}$
be a Cauchy sequence in $U \subset W^{t,q} \subset L^q$,
with $t>0$, $1 \leqs q \leqs \infty$.
From completeness of $W^{t,q}$
there exists $\lim_{k\rightarrow \infty} u_k = u \in W^{t,q}$.
From the continuous embedding $W^{t,q} \hookrightarrow L^q$ for $t>0$,
we have that
$$
\| u_k - u_l \|_q \leqs C \| u_k - u_l \|_{t,q}
$$
so that $u_k$ is also Cauchy in $L^q$.
Moreover, the continuous embedding also implies that
$u$ is also the limit of $u_k$ as a sequence in $L^q$.
Since $[\phi_-,\phi_+]_{0,q}$ is closed in $L^q$, 
we have $u \in [\phi_-,\phi_+]_{0,q}$,
and so $u \in U = [\phi_-,\phi_+]_{t,q} 
             = [\phi_-,\phi_+]_{0,q} \cap W^{t,q}$.
\end{proof}

\Remark
We indicate now how the far-CMC result outlined in~\cite{mHgNgT07a}
can be recovered using Theorem~\ref{T:FIXPT1} above.
The framework is constructed by taking
$X=W^{2,p}$, $Y=W^{2,p}$, and $Z=L^{\infty}$, with $p>3$,
giving the compact embedding $W^{2,p} \hookrightarrow L^{\infty}$.
The coefficients are assumed to satisfy
$\tau \in W^{1,p}$ and $\sigma^2, j^a, \rho \in L^{p}$ as well as the 
assumptions for the construction of a near-CMC-free global 
super-solution (presented in~\cite{mHgNgT07a} as Theorem~1, analogous
to Lemma~\ref{L:global-super} in this paper), and for the construction
of a near-CMC-free global sub-solution 
(presented in~\cite{mHgNgT07a} as Theorem~2, analogous 
to Lemma~\ref{L:global-sub} in this paper).
One then takes $U=[\phi_-,\phi_+] \subset Z=L^{\infty}$, where
the compatible $0 < \phi_- \leqs \phi_+$ are these near-CMC-free barriers.
Since $Z=L^{\infty}$ is an ordered Banach space with normal
order cone, we have (by Lemma~\ref{L:wsp-interval} in this paper) that $U$ is 
non-empty, convex, closed and bounded as a subset of $Z$.
The invariance of the Picard mapping on the interval $[\phi_-,\phi_+]$ is proven using a monotone shift (cf. Lemma \ref{l:shift1} in this paper) and a barrier argument (cf. Lemma \ref{l:shiftsubsup} in this paper).
The main result in~\cite{mHgNgT07a} (stated in~\cite{mHgNgT07a} as Theorem~4),
now follows from Theorem~\ref{T:FIXPT1} in this paper
(stated in~\cite{mHgNgT07a} as Lemma~1).

\section{Weak solution results for the individual constraints}
   \label{sec:individual}

\subsection{The momentum constraint and the solution map $S$}
   \label{sec:momentum}
In this section we fix a particular scalar function $\phi\in
W^{{s},p}$ with ${s}p>3$, and consider separately the momentum constraint
equation~(\ref{WF-LYm2}) to be solved for the vector valued function
$\biw$. The result is a linear elliptic system of
equations for this variable $\biw=\biw_{\phi}$.
For convenience, we reformulate the problem here in a self-contained manner.
Note that the problem \eqref{MC-LYm1} below is identical to \eqref{WF-LYm2}
provided the functions $\bib_\tau$ and $\bib_j$ are defined accordingly.
Our goal is not only to develop some existence results for the momentum
constraint, but also to derive the estimates for the momentum constraint
solution map $S$ that we will need later in our analysis of the coupled
system.
We note that a complete weak solution theory for the momentum constraint
on compact manifolds with boundary, using both variational methods and
Riesz-Schauder Theory, is developed in~\cite{mHgNgT08b}.

Let $(\cM, h)$ be a 3-dimensional Riemannian manifold, where $\cM$
is a smooth, compact manifold without boundary, and with $p\in(1,\infty)$ and $s\in(\frac3p,\infty)$, $h\in W^{s,p}$ is a
positive definite metric. With 
$$\textstyle
q\in(1,\infty),
\qquad\textrm{and}\qquad
e\in(2-s,s]\cap(-s+\frac3p-1+\frac3q,s-\frac3p+\frac3q], 
$$
introduce the bounded linear operator
\[
A_{\tiIL} : \biW^{e,q}\to \biW^{e-2,q},
\]
as the unique extension of the operator $\IL$ in~\eqref{CF-def-IL},
cf. Lemmata \ref{l:bdd-operator} and \ref{l:ell-est-loc} 
in Appendix \ref{sec:killing}.
Fix the source terms $\bib_{\tau}, \bib_{j} \in \biW^{e-2,q}$.
Fix a function $\phi\in W^{{s},p}$, and define
\begin{equation}
\bif_{\phi} \in \biW^{s-2,q},\qquad
\label{MC-x0}
\bif_{\phi} :=  \bib_{\tau} \phi^6 + \bib_{j}.
\end{equation}
We used the subscript $\phi$ in $\bif_{\phi}$ to emphasize that
$\phi$ is not a variable (but the ``source'') of the problem.
Note that the above conditions on $q$ and $e$ are sufficient for
the pointwise multiplication by an element of $W^{{s},p}$ to be a bounded map in $\biW^{e-2,q}$, cf. Corollary \ref{c:alg}(a) in Appendix \ref{sec:Sobolev}.

The momentum constraint equation is the
following: find an element $\biw\in\biW^{e,q}$ solution of
\begin{equation}
\label{MC-LYm1}
A_{\tiIL}\biw + \bif_{\phi} = 0.
\end{equation}

We sketch here a proof of existence of weak solutions of the momentum constraint equation \eqref{MC-LYm1}.

\begin{theorem}{\bf (Momentum constraint)}
\label{T:w-MC}
Let $e$ and $q$ be as above.
Then there exists a solution $\tbw\in\tbW^{e,q}$
to the momentum constraint equation \eqref{MC-LYm1} if and only if
$\tbf_\phi(\tbv)=0$ for all
$\tbv\in \tbW^{2-e,q'}$ satisfying $A_{\tiIL}^{*}\tbv=0$.
The solution is unique if and only if
the kernel of $A_{\tiIL}^{*}$ is trivial.
Moreover, if a solution exists at all in $\tbW^{e,q}$, for any given closed linear space $K\subseteq\tbW^{e,q}$ such that $\tbW^{e,q}=\mathrm{ker}\,A_{\tiIL}\oplus K$,
there is a unique solution satisfying $\tbw\in K$, and for this solution, we have
\begin{equation}
\label{w-MC-est}
\|\tbw\|_{e,q}
\leqs
C\, \|\tbf_{\phi}\|_{e-2,q},
\end{equation}
with some constant $C>0$ not depending on $\tbw$.
\end{theorem}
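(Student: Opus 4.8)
The plan is to read \eqref{MC-LYm1} as a linear equation for the bounded operator $A_{\tiIL}:\tbW^{e,q}\to\tbW^{e-2,q}$ and to apply the Fredholm alternative; every assertion of the theorem is then a formal consequence of the single fact that, on the admissible index range, $A_{\tiIL}$ is a Fredholm operator of index zero. Indeed, once that is known: the closed-range theorem identifies $\mathcal{R}(A_{\tiIL})$ with the pre-annihilator of $\mathrm{ker}\,A_{\tiIL}^{*}\subset\tbW^{2-e,q'}$, which (using reflexivity of the Sobolev spaces for $1<q<\infty$) is exactly the solvability condition ``$\tbf_\phi(\tbv)=0$ for all $\tbv$ with $A_{\tiIL}^{*}\tbv=0$'' applied to the right-hand side $-\tbf_\phi$; uniqueness means $\mathrm{ker}\,A_{\tiIL}=\{0\}$, and since $\dim\mathrm{ker}\,A_{\tiIL}=\dim\mathrm{ker}\,A_{\tiIL}^{*}$ this is equivalent to triviality of $\mathrm{ker}\,A_{\tiIL}^{*}$; and for any closed complement $K$ with $\tbW^{e,q}=\mathrm{ker}\,A_{\tiIL}\oplus K$ the restriction $A_{\tiIL}|_{K}:K\to\mathcal{R}(A_{\tiIL})$ is a continuous bijection of Banach spaces, hence boundedly invertible by the open mapping theorem, which yields both the unique solution $\tbw\in K$ and the estimate \eqref{w-MC-est} with $C=\|(A_{\tiIL}|_{K})^{-1}\|$ independent of $\tbw$. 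So the entire proof reduces to establishing the index-zero Fredholm property.

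For Fredholmness, I would first invoke boundedness of $A_{\tiIL}$ (Lemma~\ref{l:bdd-operator}) and then establish the elliptic \emph{a priori} estimate
\[
\|\tbw\|_{e,q}\ \leqs\ C\bigl(\|A_{\tiIL}\tbw\|_{e-2,q}+\|\tbw\|_{e-1,q}\bigr),\qquad\tbw\in\tbW^{e,q},
\]
by patching the local elliptic estimates of Lemma~\ref{l:ell-est-loc} with a finite partition of unity subordinate to a coordinate atlas; here the (determined) ellipticity of $\IL$, whose principal symbol $\xi\mapsto|\xi|^2\,\mathrm{Id}+\tfrac13\,\xi\otimes\xi$ is positive definite for $\xi\neq0$, together with the stated restrictions on $(e,q)$ relative to $(s,p)$, is precisely what makes those local estimates available (along with the pointwise-multiplication mapping properties of Corollary~\ref{c:alg}). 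Since the embedding $\tbW^{e,q}\hookrightarrow\tbW^{e-1,q}$ is compact, this is a semi-Fredholm estimate, so $\mathrm{ker}\,A_{\tiIL}$ is finite-dimensional and $\mathcal{R}(A_{\tiIL})$ is closed in $\tbW^{e-2,q}$. Running the identical argument for the Banach adjoint $A_{\tiIL}^{*}:\tbW^{2-e,q'}\to\tbW^{-e,q'}$ — which, because $\IL$ is formally $L^2$-self-adjoint (one checks $(\IL\bu,\bv)=\tfrac12\int_{\cM}(\cL\bu)_{ab}(\cL\bv)^{ab}$ for smooth $\bu,\bv$, so $\mathrm{ker}\,\IL$ in $L^2$ is the space of conformal Killing fields), is again the extension of $\IL$ to the dual scale — shows $\mathrm{ker}\,A_{\tiIL}^{*}$ is finite-dimensional as well; hence $A_{\tiIL}$ is Fredholm. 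That the index is $0$ then follows from the formal self-adjointness: elliptic regularity identifies both $\mathrm{ker}\,A_{\tiIL}$ and $\mathrm{ker}\,A_{\tiIL}^{*}$ with one and the same finite-dimensional space of weak conformal Killing fields of $h$, independent of the particular admissible $(e,q)$, so $\dim\mathrm{ker}\,A_{\tiIL}=\dim\mathrm{coker}\,A_{\tiIL}$; alternatively one may homotope $h$ through metrics (ellipticity being an open condition, and smooth metrics being dense in $W^{s,p}$) to a smooth metric, for which index zero is classical.

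The main obstacle is precisely this index-zero Fredholm step under the weak hypotheses on $h$: one needs the local elliptic estimates for $\IL$ with $W^{s,p}$-coefficients to hold on exactly the admissible $(e,q)$-window, and one needs to verify carefully that forming the $L^2$-formal adjoint and then dualizing genuinely reproduces the operator $A_{\tiIL}^{*}$ in the pairing $\tbW^{e,q}\otimes\tbW^{2-e,q'}$ used in the statement — both rest on the low-regularity elliptic theory of Appendix~\ref{sec:killing} and the multiplication/mapping properties recorded there. Everything after that is the standard Riesz--Schauder / Fredholm-alternative bookkeeping, carried out in full (including an alternative variational derivation) in~\cite{mHgNgT08b}; I would cite it for the routine parts and obtain \eqref{w-MC-est} directly from the bounded inverse theorem as indicated above.
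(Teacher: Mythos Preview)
Your proposal is correct and follows essentially the same strategy as the paper: reduce everything to showing $A_{\tiIL}$ is Fredholm of index zero, obtain semi-Fredholmness from the elliptic estimate (the paper packages this as Lemma~\ref{l:ell-semi-fred}, which you effectively re-derive), upgrade to Fredholm via formal self-adjointness, and get index zero by approximating the rough metric by smooth ones and using stability of the index. The paper uses only the smooth-approximation argument for the index, whereas you also sketch the direct kernel-identification route; both are valid, and the remaining Fredholm-alternative bookkeeping you spell out is exactly what the paper means by ``the theorem follows.''
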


\begin{proof}
By Lemma \ref{l:ell-semi-fred} in Appendix \ref{sec:killing}, the operator $A_{\tiIL}$ is semi-Fredholm,
and moreover since $A_{\tiIL}$ is formally self-adjoint, it is Fredholm.
The formal self-adjointness also implies that when the metric is smooth, index of $A_{\tiIL}$ is zero independent of $e$ and $q$.
Now we can approximate the metric $h$ by smooth metrics so that $A_{\tiIL}$ is sufficiently close to a Fredholm operator with index zero. 
Since the set of Fredholm operators with constant index is open, we conclude that the index of $A_{\tiIL}$ is zero, and the theorem follows.
\end{proof}

In the later sections we need to bound the coefficient $a_{\biw}$ in the Hamiltonian constraint equation, which can be obtained by using the following observation.

\begin{corollary}
\label{T:MC-E-aw}
Let $p\in(1,\infty)$ and $s\in(1+\frac3p,\infty)$.
In addition, let $q\in(3,\infty)$ and $e\in(1,s]\cap(1+\frac3q,s-\frac3p+\frac3q]\cap(1,2]$,
and with $z=\frac{3q}{3+(2-e)q}$, let $\tbb_{\tau}\in\tbL^{z}$.
Assume that the momentum constraint equation has a solution $\tbw\in\tbW^{e,q}$.
Then, we have
\begin{equation}
\label{w-MC-awest}
\|\cL\tbw\|_{\infty}
\leqs
C\, \|\phi\|_{\infty}^6\|\tbb_{\tau}\|_{z}+C\,\|\tbb_{j}\|_{e-2,q},
\end{equation}
with $C>0$ not depending on $\tbw$.
Moreover, if the solution is unique, the norm $\|\tbw\|_{e,q}$ can be bounded by the same expression.
\end{corollary}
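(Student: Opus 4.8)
The plan is to derive everything from the \emph{a~priori} estimate \eqref{w-MC-est} of Theorem~\ref{T:w-MC}, the only subtlety being that $\cL\tbw$ must be controlled even when the momentum constraint solution is not unique. To this end I would first record that, by the ellipticity and regularity results of Appendix~\ref{sec:killing} together with the formal self-adjointness of $A_{\tiIL}$, the kernel of $A_{\tiIL}$ in $\tbW^{e,q}$ consists exactly of the conformal Killing fields of $h$, i.e. $\ker A_{\tiIL} = \ker\cL$. Concretely, this comes from the integration-by-parts identity $\langle A_{\tiIL}\tbw,\tbv\rangle = \tfrac{1}{2}\langle\cL\tbw,\cL\tbv\rangle$, valid because $h\in W^{s,p}$ with $s>1+\tfrac3p$ makes the Christoffel symbols lie in $W^{s-1,p}\hookrightarrow L^{\infty}$ so that all products and pairings are well defined; it gives $\langle A_{\tiIL}\tbw,\tbw\rangle = \tfrac12\|\cL\tbw\|_2^2$, hence the nontrivial inclusion $\ker A_{\tiIL}\subseteq\ker\cL$. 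Consequently any two solutions of \eqref{MC-LYm1} differ by an element of $\ker\cL$, so $\cL\tbw$ is the same for all of them; fixing a closed complement $K$ with $\tbW^{e,q}=\ker A_{\tiIL}\oplus K$ and letting $\tbw_{0}\in K$ be the distinguished solution from Theorem~\ref{T:w-MC}, we have $\cL\tbw=\cL\tbw_{0}$ and $\|\tbw_{0}\|_{e,q}\leqs C\|\tbf_{\phi}\|_{e-2,q}$ by \eqref{w-MC-est}.

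Next I would estimate $\tbf_{\phi}=\tbb_{\tau}\phi^{6}+\tbb_{j}$ in $\tbW^{e-2,q}$. Since $sp>3$ we have $\phi\in W^{s,p}\hookrightarrow L^{\infty}$, hence $\tbb_{\tau}\phi^{6}\in\tbL^{z}$ with $\|\tbb_{\tau}\phi^{6}\|_{z}\leqs\|\phi\|_{\infty}^{6}\|\tbb_{\tau}\|_{z}$; and the choice $z=\frac{3q}{3+(2-e)q}$ is precisely the one for which the continuous embedding $L^{z}\hookrightarrow W^{e-2,q}$ holds (recall $e\leqs2$, so $e-2\leqs0$), whence $\|\tbb_{\tau}\phi^{6}\|_{e-2,q}\leqs C\|\phi\|_{\infty}^{6}\|\tbb_{\tau}\|_{z}$. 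Adding the $\tbb_{j}$ term gives $\|\tbf_{\phi}\|_{e-2,q}\leqs C\|\phi\|_{\infty}^{6}\|\tbb_{\tau}\|_{z}+C\|\tbb_{j}\|_{e-2,q}$.

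Finally I would chain the bounds. The operator $\cL:\tbW^{e,q}\to\tbW^{e-1,q}$ is bounded (this is built into the definition $A_{\tiIL}=-\nabla_{b}\,\cL$ and the multiplication estimates used there), and the hypothesis $e>1+\tfrac3q$ forces $(e-1)q>3$, so $W^{e-1,q}\hookrightarrow L^{\infty}$. Therefore
\[
\|\cL\tbw\|_{\infty}=\|\cL\tbw_{0}\|_{\infty}\leqs C\,\|\cL\tbw_{0}\|_{e-1,q}\leqs C\,\|\tbw_{0}\|_{e,q}\leqs C\,\|\tbf_{\phi}\|_{e-2,q},
\]
which together with the previous paragraph is exactly \eqref{w-MC-awest}. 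For the last assertion, if the solution is unique then $\ker A_{\tiIL}$ is trivial, so $\tbw=\tbw_{0}\in K$ and the displayed chain bounds $\|\tbw\|_{e,q}$ by the same right-hand side.

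The one place needing genuine care — the main (modest) obstacle — is establishing $\ker A_{\tiIL}=\ker\cL$ at the low regularity $h\in W^{s,p}$, $e\in(1,2]$, so that the estimate is insensitive to non-uniqueness; this is where one invokes approximation of $h$ by smooth metrics, elliptic regularity for $A_{\tiIL}$, and the multiplication lemmas of Appendices~\ref{sec:Sobolev} and~\ref{sec:killing}. Everything else is a routine composition of \eqref{w-MC-est} with the algebra and embedding properties of the Sobolev scale, the key point being only that $z$ is tuned to be the endpoint exponent of $L^{z}\hookrightarrow W^{e-2,q}$, which is what produces the factor $\|\phi\|_{\infty}^{6}$ rather than a full $W^{s,p}$ norm of $\phi^{6}$.
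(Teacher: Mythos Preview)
Your proof is correct and follows essentially the same route as the paper: split off the conformal Killing part via $\ker A_{\tiIL}=\ker\cL$, bound the complement component by Theorem~\ref{T:w-MC}, and pass to $L^{\infty}$ through the Sobolev embedding afforded by $e>1+\tfrac3q$, using $L^{z}\hookrightarrow W^{e-2,q}$ to produce the $\|\phi\|_{\infty}^{6}\|\tbb_{\tau}\|_{z}$ term. The only cosmetic differences are that the paper uses the embedding $\tbW^{e,q}\hookrightarrow\tbW^{1,\infty}$ directly rather than your two-step $\cL:\tbW^{e,q}\to\tbW^{e-1,q}\hookrightarrow L^{\infty}$, and that it simply asserts $\ker A_{\tiIL}=\ker\cL$ without the justification you supply.
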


\begin{proof}
Since the kernel of $A_{\tiIL}$ is finite dimensional, we can write $\biW^{e,q}=\mathrm{ker}\,A_{\tiIL}\oplus K$
with a closed linear space $K\subseteq\biW^{e,q}$.
We have the splitting $\biw=\biw_0+\biw_1$ such that
$\biw_0\in\mathrm{ker}\,A_{\tiIL}=\mathrm{ker}\,\cL$ and
$\biw_1\in K$, implying that
$$
\|\cL\biw\|_{\infty}=\|\cL\biw_1\|_{\infty}\leqs c\,\|\biw_1\|_{1,\infty}\leqs c'\,\|\biw_1\|_{e,q},
$$
the latter inequality by $\biW^{e,q}\hookrightarrow\biW^{1,\infty}$.
We note that demanding $\biW^{e,q}\hookrightarrow\biW^{1,\infty}$ gives us the lower bound $e>1+\frac3q$,
and this in turn implies $s>1+\frac3p$ if the range of $e$ is to be nonempty.
To complete the proof, we note that $\biw_1$ is also a solution of the momentum constraint, and 
taking into account $\biL^z\hookrightarrow\biW^{e-2,q}$, we apply Theorem \ref{T:w-MC} to bound the norm $\|\biw_1\|_{e,q}$.
Note that the latter embedding requires $e\leqs2$, and combining this with $e>1+\frac3q$, we need $q>3$.
\end{proof}

We now establish some properties of the momentum constraint solution
map $S$ that we will need later for our analysis of the coupled system.
Suppose that the conditions for Theorem~\ref{T:w-MC} hold, so that
the momentum constraint is uniquely solvable.
Then for any fixed $\phi_+\in W^{s,p}$ with $\phi_{+}>0$, there exists a
mapping
\begin{equation}
S:[0,\phi_{+}]\cap W^{s,p} \to\biW^{e,q}
\end{equation}
that sends the source $\phi$ to the corresponding solution $\biw$
of the momentum constraint equation.
Since the momentum constraint is linear, it follows easily that $S$
is Lipschitz continuous as stated in the following lemma.

\begin{lemma}
{\bf (Properties of the map $S$)}
\label{T:MC-E-Lip1}
In addition to the conditions imposed in the beginning of this section, let $s\geqs1$.
Let $e\in[1,3]$ and $\frac1q\in(\frac{e-1}2\delta,1-\frac{3-e}2\delta)$, where $\delta=\max\{0,\frac1p-\frac{s-1}3\}$.
Assume that the momentum constraint \eqref{MC-LYm1} is uniquely
solvable in $\tbW^{e,q}$.
With some $\phi_+\in W^{s,p}$ satisfying $\phi_{+}>0$, let $\tbw_1$ and $\tbw_2$ be the solutions
to the momentum constraint with the source
functions $\phi_1$ and $\phi_2$ from the set $[0,\phi_+]\cap W^{s,p}$, respectively.
Then,
\begin{equation}
\|\tbw_1-\tbw_2\|_{e,q}
\leqs
C\, \|\phi_+\|_{\infty}^5
\|\tbb_{\tau}\|_{e-2,q}\,
\|\phi_1-\phi_2\|_{s,p}.
\end{equation}
\end{lemma}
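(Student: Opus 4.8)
The plan is to leverage the linearity of the momentum constraint operator together with the a priori estimate of Theorem~\ref{T:w-MC}. Writing $\tbf_{\phi_i} := \tbb_{\tau}\phi_i^6 + \tbb_j$, the hypotheses give $A_{\tiIL}\tbw_i + \tbf_{\phi_i} = 0$ for $i=1,2$, so subtracting the two equations and cancelling the common term $\tbb_j$ yields
\[
A_{\tiIL}(\tbw_1 - \tbw_2) + \tbb_{\tau}\bigl(\phi_1^6 - \phi_2^6\bigr) = 0 .
\]
Since by assumption the momentum constraint is uniquely solvable, $\ker A_{\tiIL}$ is trivial, so Theorem~\ref{T:w-MC} applies with $K=\tbW^{e,q}$ and the estimate \eqref{w-MC-est} gives
\[
\|\tbw_1 - \tbw_2\|_{e,q} \leqs C\,\bigl\|\tbb_{\tau}(\phi_1^6 - \phi_2^6)\bigr\|_{e-2,q},
\]
with $C$ independent of the solutions. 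It then suffices to bound the right-hand side by $C\,\|\phi_+\|_\infty^5\,\|\tbb_\tau\|_{e-2,q}\,\|\phi_1-\phi_2\|_{s,p}$.

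For that I would use the algebraic factorization
\[
\phi_1^6 - \phi_2^6 = (\phi_1 - \phi_2)\,P, \qquad P := \sum_{k=0}^{5}\phi_1^{k}\phi_2^{5-k},
\]
equivalently $P = 6\int_0^1\bigl(\phi_2 + t(\phi_1-\phi_2)\bigr)^5\,dt$, the latter form being convenient because the interpolant stays in $[0,\phi_+]$. Since $\phi_1,\phi_2\in[0,\phi_+]$, the pointwise bounds $0\leqs\phi_i\leqs\phi_+$ give $0\leqs P\leqs 6\phi_+^5$, hence $\|P\|_\infty\leqs 6\|\phi_+\|_\infty^5$; one also records that $P\in W^{s,p}$, since $W^{s,p}$ is a multiplication algebra for $sp>3$, so multiplication by $(\phi_1-\phi_2)P$ is a bounded operation on $\tbW^{e-2,q}$ under the stated index hypotheses (Corollary~\ref{c:alg}(a)). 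The remaining factor $\phi_1-\phi_2$ is controlled in $L^\infty$ by $\|\phi_1-\phi_2\|_{s,p}$ through the embedding $W^{s,p}\hookrightarrow L^\infty$. One then feeds these into a product estimate in the Sobolev space $W^{e-2,q}$ — in its tame/Moser-type form, arranged so that the nonlinear coefficient $P$ enters only through $\|P\|_\infty$ while the difference $\phi_1-\phi_2$ carries the $W^{s,p}$-norm; it is precisely to have such an estimate available that the lemma adds the restrictions $s\geqs1$, $e\in[1,3]$, and $\tfrac1q\in\bigl(\tfrac{e-1}{2}\delta,\,1-\tfrac{3-e}{2}\delta\bigr)$ with $\delta=\max\{0,\tfrac1p-\tfrac{s-1}{3}\}$ (see Corollary~\ref{c:alg}(a) and the product lemmas of Appendix~\ref{sec:Sobolev}). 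Collecting the pieces then produces the asserted inequality.

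The step I expect to be the main obstacle is the last one: obtaining a product estimate in $W^{e-2,q}$ — a negative-order space when $e<2$ — in which the factor $\phi_1^6-\phi_2^6$ contributes the clean $\|\phi_+\|_\infty^5$ rather than full higher-order norms of $\phi_1$ and $\phi_2$. This forces the use of a sharp multiplication inequality (not the crude Banach-algebra bound) and a check that the chosen $(e,q)$ lie in the admissible range relative to $(s,p)$ encoded by $\delta$; once that product estimate is fixed, together with the auxiliary observation $\phi_1^6-\phi_2^6\in W^{s,p}$, the remainder is routine bookkeeping.
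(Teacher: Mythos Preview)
Your proposal is correct and mirrors the paper's proof: subtract the two momentum equations, apply Theorem~\ref{T:w-MC} to bound $\|\tbw_1-\tbw_2\|_{e,q}$ by $\|\tbb_\tau(\phi_1^6-\phi_2^6)\|_{e-2,q}$, factor $\phi_1^6-\phi_2^6=(\phi_1-\phi_2)P$ with $|P|\leqs 6\phi_+^5$, and then invoke a tame product estimate. The specific ``tame/Moser-type'' inequality you anticipate needing is exactly Lemma~\ref{l:nem} (applied with $\sigma=e-2$, $n=3$), whose hypotheses are precisely the restrictions $e\in[1,3]$ and $\tfrac1q\in(\tfrac{e-1}{2}\delta,\,1-\tfrac{3-e}{2}\delta)$ you flagged; this is the lemma the paper cites for that step.
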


\begin{proof}
The functions $\phi_1$ and $\phi_2$ pointwise satisfy the following inequalities
\begin{equation}\label{HC-phin}
\begin{split}
\phi_2^n-\phi_1^n &=
\Bigl(\sum_{j=0}^{n-1} \phi_2^j \phi_1^{n-1-j} \Bigr)
(\phi_2- \phi_1)
\leqs n\, (\phi_{+})^{n-1} \, |\phi_2- \phi_1|,\\
-\bigl[ \phi_2^{-n} -\phi_1^{-n} \bigr] &=
\frac{\phi_2^n-\phi_1^n}{(\phi_2\phi_1)^n}
\leqs n\, \frac{(\phi_{+})^{n-1}}{(\phi_{-})^{2n}} \, |\phi_2- \phi_1|,
\end{split}
\end{equation}
for any integer $n>0$.
Since the equation \eqref{MC-LYm1} is linear, applying Theorem \ref{T:w-MC} with the right hand side $\bif:=\bif_{\phi_1}-\bif_{\phi_2}$, and by using Lemma \ref{l:nem} in Appendix, we obtain
\begin{align*}
\|\biw_1-\biw_2\|_{e,q}
&\leqs
\|\bib_{\tau}\|_{e-2,q}\,
\|\phi_1^6-\phi_2^6\|_{s,p}
\leqs
6\|\phi_+\|_\infty^5
\|\bib_{\tau}\|_{e-2,q}\,
\|\phi_1-\phi_2\|_{s,p}.
\end{align*}
\end{proof}

\subsection{The Hamiltonian constraint and the Picard map $T$}
   \label{sec:Hamiltonian}
In this section we fix a particular function $a_{\biw}$ in an
appropriate space and we then separately look for weak solutions of the
Hamiltonian constraint equation \eqref{WF-LYs2}.
For convenience, we reformulate the problem here in a self-contained manner.
Note that the problem \eqref{HC-LYs1} below is identical to \eqref{WF-LYs2},
provided the functionals $a_\tau$ and $a_\rho$ are defined accordingly.
Our goal here is primarily to establish some properties and derive some
estimates for a Hamiltonian constraint fixed-point map $T$ that we will 
need later in our analysis of the coupled system, and also
for the analysis of the Hamiltonian constraint alone in the CMC setting.
We remark that a complete weak solution theory for the Hamiltonian constraint
on compact manifolds with boundary, using both variational methods and
fixed-point arguments based on monotone increasing maps, combined
with sub- and super-solutions, is developed in~\cite{mHgNgT08b}.

Let $(\cM, h)$ be a 3-dimensional Riemannian
manifold, where $\cM$ is a smooth, compact manifold without
boundary, and with $p\in(1,\infty)$ and $s\in(\frac3p,\infty)\cap[1,\infty)$, $h\in W^{s,p}$ is a positive definite metric. 
Introduce the operator
\[
A_{\tiL} : W^{s,p}\to W^{s-2,p},
\]
as the unique extension of the Laplace-Beltrami operator $L=-\Delta$, cf. Lemma \ref{l:bdd-operator} in Appendix \ref{sec:killing}.
Fix the source functions
$$\textstyle
a_\tau, a_\rho, a_{\biw} \in W^{s-2,p}_{+}, 
\qquad\textrm{and}\quad
a_{\tiR}=\frac18R \in W^{s-2,p},
$$
where $R$ is the scalar curvature of the metric $h$.
(By Corollary \ref{c:alg}(b) in Appendix \ref{sec:Sobolev},
we know $h_{ab} \in W^{s,p}$ implies $R \in W^{s-2,p}$.)
Given any two functions $\phi_{-},\phi_{+}
\in W^{s,p}$ with $0<\phi_{-}\leqs\phi_{+}$, introduce the nonlinear operator
\begin{gather}
f_{\biw}: [\phi_{-},\phi_{+}]_{s,p} \to W^{s-2,p},\qquad
\label{HC-x0}
f_{\biw}(\phi) =
  a_{\tau} \phi^5 + a_{\tiR} \phi
- a_{\rho} \phi^{-3} - a_{\biw} \phi^{-7},
\end{gather}
where the pointwise multiplication by an element of $W^{{s},p}$ defines a bounded linear map in
$W^{s-2,p}$ since $s-2\geqs-s$ and $2(s-\frac3p)>0>2-3$, cf. Corollary \ref{c:alg}(a) in Appendix \ref{sec:Sobolev}.
In case the coupled system is under consideration, the dependence of $f_{\biw}$ on $\biw$ is hidden in the fact that the coefficient $a_{\biw}$ depends on $\biw$, cf. \eqref{CF-def-coeff2}.
For generality, in the following we will view that the operator $f_{\biw}$ depends on $a_{\biw}$.

We now formulate the Hamiltonian constraint equation as follows: find an element $\phi\in[\phi_{-},\phi_{+}]_{s,p}$ solution of
\begin{equation}
\label{HC-LYs1}
A_{\tiL}\phi + f_{\biw}(\phi) = 0.
\end{equation}
To establish existence results for weak solutions to the 
Hamiltonian constraint equation using fixed-point arguments, we will
rely on the existence of generalized (weak) sub- and super-solutions 
(sometimes called barriers) which will be derived later 
in~\S\ref{sec:barriers}.
Let us recall the definition of sub- and super-solutions in the following, 
in a slightly generalized form that will be necessary in our study of the 
coupled system.

A function $\phi_{-}\in (0,\infty)\cap W^{s,p}$ is called a {\bf sub-solution} of \eqref{WF-LYs2} iff the function
$\phi_{-}$ satisfies the inequality
\begin{equation}
\label{WF-Sb} 
A_{\tiL}\phi_{-} + f_{\biw}(\phi_{-})\leqs0,
\end{equation}
for some $a_{\biw}\in W^{s-2,p}$.
A function $\phi_{+} \in (0,\infty)\cap W^{s,p}$ is called a {\bf super-solution} of \eqref{WF-LYs2} iff the function $\phi_{+}$ satisfies the inequality
\begin{equation}
\label{WF-Sp}
A_{\tiL}\phi_{+} + f_{\biw}(\phi_{+})\geqs0,
\end{equation}
for some $a_{\biw}\in W^{s-2,p}$.
We say a pair of sub- and super-solutions is {\em compatible} if they satisfy
\begin{equation}
   \label{eqn:compat}
0 < \phi_- \leqs \phi_+ < \infty,
\end{equation}
so that the interval $[\phi_-,\phi_+] \cap W^{s,p}$ 
is both nonempty and bounded. 

We now turn to the construction of the fixed-point mapping 
$T : U \times \mathcal{R}(S) \to X$ for the Hamiltonian constraint 
and its properties.
There are a number of possibilities for defining $T$; the requirements
are 
(1) that every fixed-point of $T$ must be a solution to the Hamiltonian
constraint;
(2) $T$ must be a continuous map from its domain to its range;
and
(3) $T$ must be invariant on a non-empty, convex, closed, bounded
    subset $U$ of an ordered Banach space $Z$, with $X \hookrightarrow Z$
    compact.
It will be sufficient 
to define $T$ using a variation of the Picard iteration as follows.
Due to the presence of the non-trivial kernel of the operator $A_L$, which
is a consequence of working with a closed manifold, we must introduce
a shift into the Hamiltonian constraint equation in order to construct
$T$ with the required properties.

\begin{lemma}
{\bf (Properties of the map $T$)}
\label{l:shift}
In the above described setting, assume that $p\in(\frac32,\infty)$ and $s\in(\frac3p,\infty)\cap[1,3]$.
With $a_0\in W^{s-2,p}_{+}$ satisfying $a_0\neq0$, and $\psi\in W^{s,p}_{+}$, let $a_s=a_0+a_{\tbw}\psi\in W^{s-2,p}$.
Fix the functions $\phi_{-},\phi_{+}\in W^{s,p}$ such that $0<\phi_{-}\leqs\phi_{+}$, and define the shifted operators
\begin{align}
\label{HC-def-As}
A_{\tiL}^s & :W^{s,p} \to W^{s-2,p},&
A^s_{\tiL}\phi
&:= A_{\tiL}\phi +  a_s\phi,\\
\label{HC-def-fs}
f_{\tbw}^s & :[\phi_{-},\phi_{+}]_{s,p} \to W^{s-2,p},&
f^s_{\tbw}(\phi)
&:= f_{\tbw}(\phi) - a_s\phi.
\end{align}
Let, for $\phi\in[\phi_{-},\phi_{+}]_{s,p}$ and $a_{\tbw}\in W^{s-2,p}$,
\begin{equation}
\label{T:HC-E-def-Tw}
T^s(\phi,a_{\tbw}) := -(A_{\tiL}^s)^{-1} f_{\tbw}^s(\phi).
\end{equation}
Then, the map $T^s : [\phi_{-},\phi_{+}]_{s,p}\times W^{s-2,p}\to W^{s,p}$ is continuous in both arguments.
Moreover, there exist $\tilde{s}\in(\frac3p,s)$ and a constant $C$ such that
\begin{equation}\label{e:T-cpt}
\|T(\phi,a_{\tbw})\|_{s,p}\leqs C \left(1+\|a_{\tbw}\|_{s-2,p}\right)\|\phi\|_{\tilde{s},p},
\end{equation}
for all $\phi\in[\phi_{-},\phi_{+}]_{s,p}$ and $a_{\tbw}\in W^{s-2,p}$.
\end{lemma}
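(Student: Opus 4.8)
The plan is to exploit the product structure $T^{s}(\phi,a_{\tbw})=-(A_{\tiL}^{s})^{-1}f_{\tbw}^{s}(\phi)$, handling the linear factor $(A_{\tiL}^{s})^{-1}$ and the nonlinear factor $f_{\tbw}^{s}$ separately and then composing. For the linear ingredient, I would first note that pointwise multiplication by $\psi\in W^{s,p}_{+}$ is bounded on $W^{s-2,p}$ (Corollary~\ref{c:alg}(a), valid since $s\geqs1$ and $sp>3$), so $a_{s}=a_{0}+a_{\tbw}\psi\in W^{s-2,p}$ with $\|a_{s}\|_{s-2,p}\leqs\|a_{0}\|_{s-2,p}+C\|\psi\|_{s,p}\|a_{\tbw}\|_{s-2,p}$; in the situation of interest $a_{\tbw}\geqs0$ (cf.~\eqref{CF-def-coeff2}), so $a_{s}\geqs a_{0}\geqs0$ and $a_{s}\neq0$. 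Then $A_{\tiL}^{s}=A_{\tiL}+a_{s}\colon W^{s,p}\to W^{s-2,p}$ is a bounded linear isomorphism — boundedness of the extension from Lemma~\ref{l:bdd-operator} and Appendix~\ref{sec:killing}, invertibility from the maximum principle of Appendix~\ref{sec:maxprinciple} (cf.\ also Lemma~\ref{l:shift1}) — with $\|(A_{\tiL}^{s})^{-1}\|_{W^{s-2,p}\to W^{s,p}}$ bounded in terms of $\|a_{s}\|_{s-2,p}$, hence of $\|a_{\tbw}\|_{s-2,p}$. Moreover $a_{\tbw}\mapsto A_{\tiL}^{s}$ is affine and norm-continuous into $\mathcal{B}(W^{s,p},W^{s-2,p})$, and inversion is continuous on the open set of isomorphisms, so $a_{\tbw}\mapsto(A_{\tiL}^{s})^{-1}$ is continuous.

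For the nonlinear factor, I would use that every $\phi\in[\phi_{-},\phi_{+}]_{s,p}$ is continuous (since $sp>3$) and takes values in the fixed compact interval $I:=[\inf_{\cM}\phi_{-},\sup_{\cM}\phi_{+}]\subset(0,\infty)$. Expanding
\[
f_{\tbw}^{s}(\phi)=a_{\tau}\phi^{5}+(a_{\tiR}-a_{0})\phi-a_{\rho}\phi^{-3}-a_{\tbw}\bigl(\phi^{-7}+\psi\phi\bigr),
\]
with $a_{\tau},\,a_{\tiR}-a_{0},\,a_{\rho}\in W^{s-2,p}$ fixed (recall $a_{\tiR}=\tfrac18R\in W^{s-2,p}$ by Corollary~\ref{c:alg}(b)), I would pick $\tilde{s}\in(\tfrac3p,s)$ close enough to $s$ that $W^{\tilde{s},p}$ is still a pointwise multiplier space for $W^{s-2,p}$, and then apply the Nemytskii estimate of Lemma~\ref{l:nem}: since $t\mapsto t^{n}$ ($n\in\{5,1,-3,-7\}$) is $C^{\infty}$ on a neighbourhood of $I$, one gets $\|\phi^{n}\|_{\tilde{s},p}\leqs C(I)(1+\|\phi\|_{\tilde{s},p})$, and similarly for $\psi\phi$; as $\phi\geqs\phi_{-}>0$ forces $\|\phi\|_{\tilde{s},p}\geqs\|\phi_{-}\|_{p}>0$, the additive constant is absorbed into $\|\phi\|_{\tilde{s},p}$. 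Multiplying by the fixed coefficients (and by $a_{\tbw}$ in the last term) and invoking the multiplier bound yields $\|f_{\tbw}^{s}(\phi)\|_{s-2,p}\leqs C(1+\|a_{\tbw}\|_{s-2,p})\|\phi\|_{\tilde{s},p}$. Joint continuity of $(\phi,a_{\tbw})\mapsto f_{\tbw}^{s}(\phi)$ into $W^{s-2,p}$ is then clear: each $\phi\mapsto\phi^{n}$ is continuous from $[\phi_{-},\phi_{+}]_{s,p}$ into $W^{s,p}$ (values confined to $I$, Lemma~\ref{l:nem}), multiplication $W^{s,p}\times W^{s-2,p}\to W^{s-2,p}$ is continuous bilinear, and $f_{\tbw}^{s}$ is a finite sum of such jointly continuous pieces, the $a_{\tbw}$-dependence being only the linear term $-a_{\tbw}(\phi^{-7}+\psi\phi)$.

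Assembling the pieces, $T^{s}$ is the composition of the jointly continuous map $(\phi,a_{\tbw})\mapsto\bigl((A_{\tiL}^{s})^{-1},f_{\tbw}^{s}(\phi)\bigr)$ with the continuous bilinear evaluation $\mathcal{B}(W^{s-2,p},W^{s,p})\times W^{s-2,p}\to W^{s,p}$, giving continuity in both arguments; and the two bounds combine to
\[
\|T^{s}(\phi,a_{\tbw})\|_{s,p}\leqs\|(A_{\tiL}^{s})^{-1}\|\,\|f_{\tbw}^{s}(\phi)\|_{s-2,p}\leqs C\bigl(1+\|a_{\tbw}\|_{s-2,p}\bigr)\|\phi\|_{\tilde{s},p},
\]
which is~\eqref{e:T-cpt}; rerunning the Nemytskii steps directly in $W^{\tilde{s},p}$ also extends $T^{s}$ continuously to the larger interval $[\phi_{-},\phi_{+}]_{\tilde{s},p}$, which is what the fixed-point argument actually needs. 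The main obstacle I anticipate is precisely this estimate: securing the \emph{weaker} norm $\|\phi\|_{\tilde{s},p}$ with $\tilde{s}<s$ on the right (so that $T^{s}$ carries bounded subsets of $Z=W^{\tilde{s},p}$ into bounded subsets of $X=W^{s,p}$ — the compactness input for Theorem~\ref{T:FIXPT2}) forces a careful choice of $\tilde{s}$ reconciling the Nemytskii control of $\phi^{n}$ in $W^{\tilde{s},p}$ with the multiplier condition on $W^{s-2,p}$, and one must also track how $\|a_{\tbw}\|_{s-2,p}$ enters both $(A_{\tiL}^{s})^{-1}$ and the $a_{\tbw}$-term of $f_{\tbw}^{s}$, using the sign $a_{s}\geqs a_{0}\geqs0$ to control the inverse and, if necessary, the a priori bound on $\|a_{\tbw}\|_{s-2,p}$ from Corollary~\ref{T:MC-E-aw} to absorb any surplus into $C$.
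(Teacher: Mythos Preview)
Your overall architecture matches the paper's: bound $\|f_{\tbw}^{s}(\phi)\|_{s-2,p}$ via the Nemytskii lemma, then apply the bounded inverse $(A_{\tiL}^{s})^{-1}$ coming from Lemma~\ref{l:lapinv} and Corollary~\ref{C:ell-est}. But your intermediate step ``$\|\phi^{n}\|_{\tilde{s},p}\leqs C(I)(1+\|\phi\|_{\tilde{s},p})$ via Lemma~\ref{l:nem}'' does not go through as stated. Lemma~\ref{l:nem} only controls $\|v(f\circ u)\|_{\sigma,q}$ for $\sigma\in[-1,1]$; it says nothing about $\|\phi^{n}\|_{\tilde{s},p}$ when $\tilde{s}>1$, which is forced whenever $p\leqs3$ (since $\tilde{s}>3/p\geqs1$). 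And the linear bound you want is in fact false in that range: for $\tilde{s}>1$ the chain rule produces terms like $|\nabla\phi|^{2}$, so one only gets $\|\phi^{n}\|_{\tilde{s},p}\lesssim\|\phi\|_{\tilde{s},p}^{m}$ for some $m>1$, even with the $L^{\infty}$ constraint. Thus the two-step route ``first $\|\phi^{n}\|_{\tilde{s},p}$, then multiply by the coefficient'' breaks down precisely where the lemma's hypothesis $p\in(\tfrac32,\infty)$ bites.

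The paper avoids this by applying Lemma~\ref{l:nem} \emph{directly to each product} $a\,\phi^{n}$ with target $W^{s-2,p}$: take $v=a\in W^{s-2,p}$, $u=\phi$, $\sigma=s-2$, $q=p$, and use $\tilde{s}$ in the role of the lemma's ``$s$''. The hypothesis $s\in[1,3]$ gives $\sigma=s-2\in[-1,1]$, so the lemma applies and yields
\[
\|a\,\phi^{n}\|_{s-2,p}\leqs C\,\|a\|_{s-2,p}\bigl(\|\phi^{n}\|_{\infty}+|n|\,\|\phi^{n-1}\|_{\infty}\,\|\phi\|_{\tilde{s},p}\bigr),
\]
which is genuinely linear in $\|\phi\|_{\tilde{s},p}$ once $\|\phi^{n}\|_{\infty},\|\phi^{n-1}\|_{\infty}$ are bounded via $\phi\in[\phi_{-},\phi_{+}]$ and the additive constant is absorbed using $\|\phi\|_{\tilde{s},p}\geqs\|\phi_{-}\|_{p}>0$. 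The delicate part, which the paper carries out explicitly, is checking the range condition $\tfrac1p\in(\tfrac{s-1}{2}\delta,\,1-\tfrac{3-s}{2}\delta)$ with $\delta=\tfrac1p-\tfrac{\tilde{s}-1}{3}$; this is where the restriction $s\leqs3$ and the choice of $\tilde{s}$ close enough to $s$ actually enter.

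On the other hand, you are more careful than the paper about one point: you observe that $a_{s}=a_{0}+a_{\tbw}\psi$ depends on $a_{\tbw}$, so both the invertibility and the operator norm of $(A_{\tiL}^{s})^{-1}$ must be tracked as $a_{\tbw}$ varies, and you invoke continuity of inversion on the open set of isomorphisms. The paper's proof essentially treats $(A_{\tiL}^{s})^{-1}$ as fixed when arguing continuity in $a_{\tbw}$, and does not discuss whether the constant in~\eqref{e:T-cpt} is uniform in $a_{\tbw}$. Your suggested remedies (the sign $a_{s}\geqs a_{0}$, or the a~priori bound on $\|a_{\tbw}\|_{s-2,p}$ from Corollary~\ref{T:MC-E-aw}) are exactly what is used downstream to make Lemma~\ref{T:HC-ball-gen} work.
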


\begin{proof}
In this proof, we denote by $C$ a generic constant that may have different values at its different occurrences.
By applying Lemma \ref{l:nem} from Appendix, for any $\tilde{s}\in(\frac3p,s]$, $s-2\in[-1,1]$ and $\frac1p\in(\frac{s-1}2\delta,1-\frac{3-s}2\delta)$ with $\delta=\frac1p-\frac{\tilde{s}-1}3$, we have
\begin{align*}
\|f_{\biw}^s(\phi)\|_{s-2,p}
&\leqs  C \left(\|a_{\tau}\|_{s-2,p}\,\|\phi_{+}^4\|_{\infty}
        + \|a_{\rho}\|_{s-2,p} \, \|\phi_{-}^{-4}\|_{\infty}\right.
\\
&\quad \left. + \|a_{\biw}\|_{s-2,p} \, (\|\phi_{-}^{-8}\|_{\infty}+\|\psi\|_{\tilde{s},p})
  + \|a_{\tiR}+a_0\|_{s-2,p}\right)\|\phi\|_{\tilde{s},p}.
\end{align*}
Let us verify if $\frac1p$ is indeed in the prescribed range.
First, we have $\delta=\frac13+\frac1p-\frac{\tilde{s}}3<\frac13$ since $\frac{\tilde{s}}3-\frac1p>0$, and taking into account $s\geqs1$,
we infer $1-\frac{3-s}2\delta\geqs1-\frac{3-1}2\frac13=\frac23$.
This shows $\frac1p<1-\frac{3-s}2\delta$ for $p>\frac32$, which is not sharp, but will be sufficient for our analysis.
For the other bound, we need $\frac1p<\frac{s-1}2\delta=\frac{s-1}{2p}-\frac{(s-1)(\tilde{s}-1)}{6}$,
or in other words, $\frac{(s-1)(\tilde{s}-1)}{6}>\frac{s-3}{2p}$.
Since $s\in[1,3]$, it is possible to choose $\tilde{s}\in(\frac3p,s]$ satisfying this inequality.

To finalize the proof of \eqref{e:T-cpt}, we note that by Lemma \ref{l:lapinv} in Appendix \ref{sec:maxprinciple}, the operator $A_{\tiL}^s$ is invertible, 
since the function $a_s$ is positive, and that by Corollary \ref{C:ell-est} also in that appendix, the inverse $(A_{\tiL}^s)^{-1}:W^{s-2,p}\to W^{s,p}$ is bounded.

The continuity of the mapping $f_{\biw}^s:[\phi_{-},\phi_{+}]_{s,p}\to W^{s-2,p}$ for any $a_{\biw}\in W^{s-2,p}$ is obtained similarly, and the continuity of
$a_{\biw}\mapsto f_{\biw}(\phi)$ for fixed $\phi\in[\phi_{-},\phi_{+}]_{s,p}$ is obvious.
Being the composition of continuous maps, $(\phi,a_{\biw})\mapsto T_{\biw}^s(\phi)$ is also continuous.
\end{proof}


The following lemma shows that by choosing the shift sufficiently large, 
we can make the map $T^s$ monotone increasing.
This result is important for ensuring that the
Picard map $T$ for the Hamiltonian constraint is invariant
on the interval $[\phi_-,\phi_+]$ defined by sub- and super-solutions.
There is an obstruction that the scalar curvature should be continuous,
which will be handled in general case by conformally transforming the metric to a metric with continuous scalar curvature and using the conformal covariance 
of the Hamiltonian constraint, cf. Section \ref{sec:proof1}.

\begin{lemma}
{\bf (Monotone increasing property of $T$)}
\label{l:shift1}
In addition to the conditions of Lemma \ref{l:shift},
let $a_{\tiR}$ be continuous and define the shift function $a_s$ by
\begin{equation*}
\label{T:HC-E-def-alpha}
a_s = \max\{1, a_{\tiR} \}
+ 3\,\frac{\phi_{+}^2}{\phi_{-}^6}\,a_{\rho}
+ 5 \,\phi_{+}^4 a_{\tau} 
+ 7\,\frac{\phi_{+}^6}{\phi_{-}^{14}}\,a_{\tbw}.
\end{equation*}
Then, for any fixed $a_{\tbw}\in W^{s-2,p}$, the map $\phi\mapsto T^s(\phi,a_{\tbw}) : [\phi_{-},\phi_{+}]_{s,p}\to W^{s,p}$ is monotone increasing.
\end{lemma}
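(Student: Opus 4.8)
The plan is to combine two facts: the resolvent $(A_{\tiL}^s)^{-1}$ is order preserving, and, after the shift, the nonlinearity $f_{\tbw}^s$ is monotone \emph{decreasing} in $\phi$; the specific choice of $a_s$ in the statement is exactly what forces the latter. Since $sp>3$ we have $W^{s,p}\hookrightarrow C^0(\cM)$, so every $\phi\in[\phi_{-},\phi_{+}]_{s,p}$ is represented by a continuous function with $0<\inf_{\cM}\phi_{-}\leqs\phi\leqs\sup_{\cM}\phi_{+}<\infty$ pointwise, and $\phi_1\leqs\phi_2$ in the $W^{s,p}$-order cone is equivalent to the pointwise inequality of these representatives. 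I would first record that the shift $a_s$ defined in the statement is strictly positive and of the form $a_0+a_{\tbw}\psi$ required by Lemma~\ref{l:shift} (take $\psi=7\phi_{+}^6\phi_{-}^{-14}\in W^{s,p}_{+}$ and $a_0=\max\{1,a_{\tiR}\}+5\phi_{+}^4a_{\tau}+3\phi_{+}^2\phi_{-}^{-6}a_{\rho}\in W^{s-2,p}_{+}$, with $a_0\neq0$ since $a_0\geqs1$; the summands involving $a_\rho,a_\tau,a_{\tbw}$ are nonnegative as products of nonnegative $W^{s,p}$-functions built from $\phi_{\pm}$ with the nonnegative coefficients $a_{\rho},a_{\tau},a_{\tbw}\in W^{s-2,p}_{+}$). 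Thus Lemma~\ref{l:shift} applies with this $a_s$, so $T^s$ is well-defined and continuous, $A_{\tiL}^s$ is invertible by Lemma~\ref{l:lapinv}, and it inherits the maximum principle of $A_{\tiL}$; consequently $g_1\leqs g_2$ in $W^{s-2,p}$ implies $(A_{\tiL}^s)^{-1}g_1\leqs(A_{\tiL}^s)^{-1}g_2$.

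The core step is to show $f_{\tbw}^s$ is monotone decreasing: if $\phi_1\leqs\phi_2$ with $\phi_1,\phi_2\in[\phi_{-},\phi_{+}]_{s,p}$ then $f_{\tbw}^s(\phi_1)\geqs f_{\tbw}^s(\phi_2)$ in $W^{s-2,p}$. Setting $\psi:=\phi_2-\phi_1\in W^{s,p}_{+}$ and using the elementary pointwise factorizations
\begin{equation*}
\phi_2^5-\phi_1^5=\psi\,p_5,\qquad
\phi_1^{-3}-\phi_2^{-3}=\psi\,q_3,\qquad
\phi_1^{-7}-\phi_2^{-7}=\psi\,q_7,
\end{equation*}
where $p_5=\sum_{j=0}^{4}\phi_1^{j}\phi_2^{4-j}$, $q_3=(\phi_1^2+\phi_1\phi_2+\phi_2^2)/(\phi_1\phi_2)^3$ and $q_7=(\sum_{j=0}^{6}\phi_1^{j}\phi_2^{6-j})/(\phi_1\phi_2)^7$ are nonnegative functions in $W^{s,p}$ (by the algebra and composition properties, Corollary~\ref{c:alg}) satisfying $p_5\leqs5\phi_{+}^4$, $q_3\leqs3\phi_{+}^2\phi_{-}^{-6}$ and $q_7\leqs7\phi_{+}^6\phi_{-}^{-14}$ pointwise, substitution into the definitions \eqref{HC-x0} and \eqref{HC-def-fs} gives
\begin{equation*}
f_{\tbw}^s(\phi_1)-f_{\tbw}^s(\phi_2)
=\psi\Bigl[\bigl(\max\{1,a_{\tiR}\}-a_{\tiR}\bigr)+a_{\tau}\bigl(5\phi_{+}^4-p_5\bigr)+a_{\rho}\bigl(3\phi_{+}^2\phi_{-}^{-6}-q_3\bigr)+a_{\tbw}\bigl(7\phi_{+}^6\phi_{-}^{-14}-q_7\bigr)\Bigr].
\end{equation*}
In the bracket, the first term is a nonnegative continuous function, and each of the other three is the product of a nonnegative $W^{s,p}$-function with a nonnegative element of $W^{s-2,p}$. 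I would then invoke the fact that pointwise multiplication by a nonnegative $W^{s,p}$-function maps $W^{s-2,p}_{+}$ into itself (a standard consequence of the density of $C^{\infty}_{+}$ in the nonnegative cone of $W^{s,p}$ together with the definition \eqref{E:wkp-cone} of the order cones); applying it termwise shows the bracket lies in $W^{s-2,p}_{+}$, and applying it once more to the outer factor $\psi$ gives $f_{\tbw}^s(\phi_1)-f_{\tbw}^s(\phi_2)\in W^{s-2,p}_{+}$.

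Finally I would combine the two ingredients: from $f_{\tbw}^s(\phi_1)\geqs f_{\tbw}^s(\phi_2)$ we obtain $-f_{\tbw}^s(\phi_1)\leqs-f_{\tbw}^s(\phi_2)$, and applying the order-preserving map $(A_{\tiL}^s)^{-1}$ yields, by \eqref{T:HC-E-def-Tw}, $T^s(\phi_1,a_{\tbw})\leqs T^s(\phi_2,a_{\tbw})$, which is exactly the claimed monotonicity. I expect the only genuinely delicate point to be the positivity-preservation of the pointwise multiplication $W^{s,p}\times W^{s-2,p}_{+}\to W^{s-2,p}_{+}$ and, more generally, keeping all of the nonlinear algebra consistent within the Besov/Sobolev framework of Appendix~\ref{sec:Sobolev} in the range $s<2$, where $a_{\tau},a_{\rho},a_{\tbw}$ are genuine distributions; the algebraic inequality that makes the bracket nonnegative — bounding the difference quotients $p_5,q_3,q_7$ by the explicit constants built from $\phi_{\pm}$ — is elementary and is precisely what motivated the definition of $a_s$.
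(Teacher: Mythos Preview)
Your proof is correct and follows essentially the same approach as the paper: show that $(A_{\tiL}^s)^{-1}$ is monotone increasing via the maximum principle, then show that $f_{\tbw}^s$ is monotone decreasing by using the elementary factorizations of $\phi_2^5-\phi_1^5$, $\phi_1^{-3}-\phi_2^{-3}$, $\phi_1^{-7}-\phi_2^{-7}$ together with the pointwise bounds $p_5\leqs5\phi_+^4$, $q_3\leqs3\phi_+^2\phi_-^{-6}$, $q_7\leqs7\phi_+^6\phi_-^{-14}$ (these are exactly the inequalities~\eqref{HC-phin} the paper invokes). Your version is somewhat more detailed than the paper's in two respects: you explicitly verify that $a_s$ has the structural form $a_0+a_{\tbw}\psi$ demanded by Lemma~\ref{l:shift}, and you flag the one genuinely nontrivial technical point---that multiplication by a nonnegative $W^{s,p}$ function preserves the cone $W^{s-2,p}_{+}$---which the paper passes over silently.
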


\begin{proof}
The shifted operator $A_{\tiL}^s$ satisfies the maximum principle,
hence the inverse $(A_{\tiL}^s)^{-1}:W^{s-2,p} \to W^{s,p}$ is monotone increasing.

Now we will show that the operator $f_{\biw}^s$ is monotone
decreasing. Given any two functions $\phi_2$, $\phi_1\in
[\phi_{-},\phi_{+}]_{s,p}$ with $\phi_2\geqs\phi_1$, we have
\begin{multline*}
f_{\biw}^s(\phi_2) - f_{\biw}^s(\phi_1)
= f_{\biw}(\phi_2) -f_{\biw}(\phi_1)
- a_s [\phi_2-\phi_1]\\
= a_{\tau} 
\bigl[ \phi_2^5 - \phi_1^5 \bigr]
+ a_{\tiR} [\phi_2 - \phi_1]
- a_s [\phi_2-\phi_1]
 - a_{\rho} \bigl[ \phi_2^{-3} - \phi_1^{-3}\bigr]
- a_{\biw} \bigl[ \phi_2^{-7} - \phi_1^{-7}\bigr].
\end{multline*}
The inequalities \eqref{HC-phin}, the condition $0<\phi_1\leqs\phi_2$, and the choice of $a_s$ imply
\[
f_{\biw}^s(\phi_2) - f_{\biw}^s(\phi_1) \leqs 0,
\]
which establishes that $f_{\biw}^s$ is monotone decreasing.

Both the operator $(A_{\tiL}^s)^{-1}$ and the
map $-f_{\biw}^s$ are monotone increasing, therefore the operator
$T^s(\cdot,a_{\biw})$ is also monotone increasing.
\end{proof}

\begin{lemma}
{\bf (Barriers for $T$ and the Hamiltonian constraint)}
\label{l:shiftsubsup}
Let the conditions of Lemma \ref{l:shift1} hold, with $\phi_{-}$ and $\phi_{+}$ sub- and super-solutions of the Hamiltonian constraint equation \eqref{HC-LYs1}, respectively.
Then, we have
$T^s(\phi_{+},a_{\tbw})\leqs\phi_{+}$ and $T^s(\phi_{-},a_{\tbw})\geqs\phi_{-}$.
\end{lemma}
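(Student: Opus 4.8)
The plan is to exploit the monotone increasing property of $T^s(\cdot,a_{\biw})$ established in Lemma~\ref{l:shift1} together with the sub-/super-solution inequalities~\eqref{WF-Sb} and~\eqref{WF-Sp}, following the standard pattern that the Picard map for a semi-linear problem reproduces the sub-/super-solution structure once the shift has been added (as recorded abstractly around Lemma~\ref{L:T-sub-super}). First I would treat the super-solution: by definition~\eqref{WF-Sp}, $A_{\tiL}\phi_{+} + f_{\biw}(\phi_{+}) \geqs 0$, so adding $a_s\phi_{+}$ to both sides and using the definitions~\eqref{HC-def-As} and~\eqref{HC-def-fs} gives $A_{\tiL}^s\phi_{+} \geqs -f_{\biw}^s(\phi_{+})$. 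Since $a_s>0$, the shifted operator $A_{\tiL}^s$ satisfies the maximum principle and hence $(A_{\tiL}^s)^{-1}$ is monotone increasing (this is exactly the fact invoked at the start of the proof of Lemma~\ref{l:shift1}); applying it to both sides of the last inequality yields $\phi_{+} \geqs -(A_{\tiL}^s)^{-1}f_{\biw}^s(\phi_{+}) = T^s(\phi_{+},a_{\biw})$, which is the first claim.

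The sub-solution case is entirely symmetric: from~\eqref{WF-Sb} we get $A_{\tiL}^s\phi_{-} \leqs -f_{\biw}^s(\phi_{-})$, and applying the monotone operator $(A_{\tiL}^s)^{-1}$ gives $\phi_{-} \leqs T^s(\phi_{-},a_{\biw})$. One technical point to be careful about is that the definitions~\eqref{WF-Sb}--\eqref{WF-Sp} of sub- and super-solutions quantify over "some $a_{\biw}\in W^{s-2,p}$", so I would note that the statement of this lemma implicitly uses the same coefficient $a_{\biw}$ throughout (the one appearing in $T^s$ and in $f_{\biw}^s$), and that the shift $a_s$ is the specific one from Lemma~\ref{l:shift1} built from $\phi_-,\phi_+,a_{\biw}$ and the other data; with these choices fixed, everything above is consistent. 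I would also observe that $\phi_-,\phi_+ \in [\phi_-,\phi_+]_{s,p}$ trivially, so $f_{\biw}^s$ and $T^s$ are defined at these arguments.

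There is essentially no hard part here — the lemma is a formal consequence of (i) the maximum principle / monotonicity of $(A_{\tiL}^s)^{-1}$, which was already proved in Lemma~\ref{l:shift1} via Lemma~\ref{l:lapinv} and Corollary~\ref{C:ell-est}, and (ii) algebraic rearrangement of the defining inequalities using the shift decomposition $A_{\tiL} + f_{\biw} = A_{\tiL}^s + f_{\biw}^s$. The only thing worth double-checking in writing is sign bookkeeping: the super-solution inequality has the "$\geqs$" and must produce the upper bound $T^s(\phi_+,a_{\biw})\leqs\phi_+$, while the sub-solution inequality produces the lower bound; since $(A_{\tiL}^s)^{-1}$ is order-preserving (not order-reversing), the direction of the inequality is preserved at each step, so the two conclusions come out with the correct orientation.
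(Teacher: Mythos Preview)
Your proposal is correct and follows essentially the same approach as the paper: rewrite the super-solution inequality as $A_{\tiL}^s\phi_{+}+f_{\biw}^s(\phi_{+})\geqs0$, apply the monotone increasing operator $(A_{\tiL}^s)^{-1}$, and read off $\phi_{+}-T^s(\phi_{+},a_{\biw})\geqs0$; the sub-solution case is symmetric. The paper's version is just a one-line computation of $\phi_{+}-T^s(\phi_{+},a_{\biw})=(A_{\tiL}^s)^{-1}[A_{\tiL}^s\phi_{+}+f_{\biw}^s(\phi_{+})]$, but the content is identical to what you wrote.
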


\begin{proof}
We have
\begin{equation*}
\phi_{+}-T^s(\phi_{+},a_{\biw})
=(A_{\tiL}^s)^{-1}\bigl[A_{\tiL}^s\phi_{+}+f_{\biw}^s(\phi_{+})\bigr],
\end{equation*}
which is nonnegative since $\phi_{+}$ is a super-solution and $(A_{\tiL}^s)^{-1}$ is linear and monotone increasing.
The proof of the other inequality is completely analogous.
\end{proof}

Since we are no longer using normal order cones, 
our non-empty, convex, closed interval $[\phi_-,\phi_+]_{s,p}$ 
is not necessarily bounded as a subset of $W^{s,p}$.
Therefore, we also need {\em a priori} bounds in the norm on $W^{s,p}$
to ensure the Picard iterates stay inside the intersection of the interval 
with the closed ball $\overline{B}_M$ in $W^{s,p}$ of radius $M$, centered at the origin.
We first establish a lemma to this effect that will be useful for 
both the non-CMC and CMC cases.

\begin{lemma}
{\bf (Invariance of $T$ on the ball $\overline{B}_M$)}
\label{T:HC-ball-gen}
Let the conditions of Lemma \ref{l:shift} hold, and let $a_{\tbw}\in W^{s-2,p}$.
Then, for any $\tilde{s}\in(\frac3p,s]$ and for some $t\in(\frac3p,\tilde{s})$ there exists a closed ball $\overline{B}_M\subset W^{\tilde{s},p}$ 
of radius $M=\cO\left([1+\|a_{\tbw}\|_{s-2,p}]^{\tilde{s}/(\tilde{s}-t)}\right)$,
such that
\begin{equation*}
\phi \in [\phi_-,\phi_+]_{\tilde{s},p}\cap \overline{B}_M
\quad\Rightarrow\quad
T^s(\phi,a_{\tbw})\in\overline{B}_M.
\end{equation*}
\end{lemma}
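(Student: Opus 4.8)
The plan is to combine the norm estimate \eqref{e:T-cpt} from Lemma~\ref{l:shift} with an interpolation inequality in order to absorb the $\|\phi\|_{\tilde s,p}$ factor and close a bound on $\|T^s(\phi,a_{\tbw})\|_{\tilde s,p}$. First I would recall from \eqref{e:T-cpt} that there is some $\tilde s_0 \in (\frac3p,s)$ and a constant $C$ with
\[
\|T^s(\phi,a_{\tbw})\|_{s,p} \leqs C\,(1+\|a_{\tbw}\|_{s-2,p})\,\|\phi\|_{\tilde s_0,p}
\]
for all $\phi\in[\phi_-,\phi_+]_{s,p}$. Given the target $\tilde s\in(\frac3p,s]$ of the lemma, I pick an intermediate exponent $t\in(\frac3p,\min\{\tilde s,\tilde s_0\})$, so that $W^{s,p}\hookrightarrow W^{\tilde s,p}\hookrightarrow W^{t,p}$ with compact embeddings, and so that $W^{t,p}$ still embeds in $L^\infty$ (or at least controls the relevant quantities); in particular one may also take $t\geqs\tilde s_0$ is not required — what matters is that $t<\tilde s\leqs s$ and $t>\frac3p$. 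The point of choosing $t$ below $\tilde s_0$ is only needed if $\tilde s_0<\tilde s$; otherwise one simply replaces $\tilde s_0$ by any $t\in(\frac3p,\tilde s)$ in the estimate above, which is legitimate since $\|\phi\|_{t,p}$ is dominated by a uniform constant times $\|\phi_\pm\|_\infty$-type quantities on the order interval only up to the $L^\infty$ bound — more precisely, for $\phi\in[\phi_-,\phi_+]_{\tilde s,p}$ one has $\|\phi\|_\infty\leqs\|\phi_+\|_\infty$, but the $W^{t,p}$-norm is \emph{not} bounded on the interval, which is exactly why the interpolation step is needed.

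The key step is then the Gagliardo--Nirenberg-type interpolation inequality
\[
\|\phi\|_{t,p} \leqs C\,\|\phi\|_{\tilde s,p}^{\theta}\,\|\phi\|_{q_0}^{1-\theta},
\]
valid on the closed manifold $\cM$ for an appropriate $\theta\in(0,1)$ and some Lebesgue exponent $q_0$ (for instance $q_0=\infty$, using $t<\tilde s$), with $\theta = \theta(t,\tilde s)<1$. Since $\phi\in[\phi_-,\phi_+]_{\tilde s,p}$ forces $\|\phi\|_{q_0}\leqs\|\phi_+\|_{q_0}=:c_+<\infty$, we get $\|\phi\|_{t,p}\leqs C\,c_+^{1-\theta}\,\|\phi\|_{\tilde s,p}^{\theta}$. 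Plugging this into the displayed estimate yields
\[
\|T^s(\phi,a_{\tbw})\|_{\tilde s,p} \leqs \|T^s(\phi,a_{\tbw})\|_{s,p} \leqs C'\,(1+\|a_{\tbw}\|_{s-2,p})\,\|\phi\|_{\tilde s,p}^{\theta},
\]
with $C'$ depending on $\phi_-,\phi_+$ but not on $\phi$ or $a_{\tbw}$. If we now assume $\|\phi\|_{\tilde s,p}\leqs M$, then
\[
\|T^s(\phi,a_{\tbw})\|_{\tilde s,p} \leqs C'\,(1+\|a_{\tbw}\|_{s-2,p})\,M^{\theta},
\]
and since $\theta<1$ we can choose $M$ large enough that $C'\,(1+\|a_{\tbw}\|_{s-2,p})\,M^{\theta}\leqs M$; solving the inequality $C'(1+\|a_{\tbw}\|_{s-2,p})M^\theta\leqs M$ gives $M\geqs \bigl(C'(1+\|a_{\tbw}\|_{s-2,p})\bigr)^{1/(1-\theta)}$. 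Writing $t$ for the intermediate exponent and noting $\theta$ is a fixed function of $t$ and $\tilde s$, one identifies $1/(1-\theta) = \tilde s/(\tilde s-t)$ with the correct choice $\theta = t/\tilde s$ (the standard exponent for interpolating $W^{t,p}$ between $W^{0,p}\cap L^\infty$-control and $W^{\tilde s,p}$ when one interpolates only the smoothness index). This produces exactly $M=\cO\bigl([1+\|a_{\tbw}\|_{s-2,p}]^{\tilde s/(\tilde s-t)}\bigr)$, and for any $\phi\in[\phi_-,\phi_+]_{\tilde s,p}\cap\overline B_M$ we then have $T^s(\phi,a_{\tbw})\in\overline B_M$, which is the claim.

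The main obstacle I anticipate is bookkeeping the admissible range of the intermediate exponent $t$: one must verify that $t$ can be chosen strictly between $\frac3p$ and $\tilde s$ while simultaneously (i) keeping the interpolation inequality valid on $\cM$ with the clean exponent $\theta=t/\tilde s$, and (ii) being compatible with the exponent $\tilde s_0$ coming out of \eqref{e:T-cpt} — i.e. one needs $\|\phi\|_{\tilde s_0,p}$ itself to be interpolatable, which is automatic if $\tilde s_0<\tilde s$ and otherwise requires first passing through \eqref{e:T-cpt} with a smaller exponent. Since \eqref{e:T-cpt} already gives the freedom to take any $\tilde s_0$ slightly above $\frac3p$, this is not a genuine difficulty, but it must be stated carefully so the final $t$ appearing in the statement of the lemma is the \emph{same} $t$ used in the interpolation. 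A secondary (routine) point is to confirm that the constant $C'$ absorbing $\|\phi_\pm\|_\infty^{1-\theta}$ and the interpolation constant is genuinely independent of $a_{\tbw}$, which is clear from the structure of \eqref{e:T-cpt} where $a_{\tbw}$ enters only through the explicit linear factor $(1+\|a_{\tbw}\|_{s-2,p})$.
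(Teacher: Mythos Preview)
Your proposal is correct and takes essentially the same approach as the paper: both combine the estimate \eqref{e:T-cpt} from Lemma~\ref{l:shift} (after embedding $W^{s,p}\hookrightarrow W^{\tilde s,p}$ and renaming the intermediate exponent as $t$) with an interpolation inequality between $W^{\tilde s,p}$ and $L^p$, using the uniform $L^p$-bound on the order interval to close the estimate. The only cosmetic difference is that the paper uses the additive (Peetre-type) form $\|\phi\|_{t,p}\leqs\varepsilon\|\phi\|_{\tilde s,p}+C\varepsilon^{-t/(\tilde s-t)}\|\phi\|_p$ and then chooses $\varepsilon\sim(1+\|a_{\tbw}\|_{s-2,p})^{-1}$, whereas you use the equivalent multiplicative form and solve $C'(1+\|a_{\tbw}\|_{s-2,p})M^\theta\leqs M$ directly; both lead to the same $M=\cO\bigl([1+\|a_{\tbw}\|_{s-2,p}]^{\tilde s/(\tilde s-t)}\bigr)$.
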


\begin{proof}
From Lemma \ref{l:shift}, there exist $t\in(\frac3p,\tilde{s})$ and $K>0$ such that
$$
\|T^s(\phi,a_{\biw})\|_{\tilde{s},p}\leqs K(1+\|a_{\biw}\|_{s-2,p})\|\phi\|_{t,p}, \qquad
\forall\phi \in [\phi_-,\phi_+]_{\tilde{s},p}.
$$
For any $\varepsilon>0$, the norm $\|\phi\|_{t,p}$ can be bounded by the interpolation estimate
$$
\|\phi\|_{t,p}\leqs\varepsilon\|\phi\|_{\tilde{s},p}+C\varepsilon^{-t/(\tilde{s}-t)}\|\phi\|_p,
$$
where $C$ is a constant independent of $\varepsilon$.
Since $\phi$ is bounded from above by $\phi_+$, $\|\phi\|_p$ is bounded uniformly,
and now demanding that $\phi\in\overline{B}_M$, we get
\begin{equation}\label{e:Tbound}
\|T^s(\phi,a_{\biw})\|_{\tilde{s},p}\leqs K[1+\|a_{\biw}\|_{s-2,p}]\left(M\varepsilon+C\varepsilon^{-t/(\tilde{s}-t)}\right),
\end{equation}
with possibly different constant $C$.
Choosing $\varepsilon$ such that $2\varepsilon K[1+\|a_{\biw}\|_{s-2,p}]=1$ and setting $M=2KC[1+\|a_{\biw}\|_{s-2,p}]\varepsilon^{-t/(\tilde{s}-t)}$, we can ensure that
the right hand side of \eqref{e:Tbound} is bounded by $M$.
\end{proof}

\section{Barriers for the Hamiltonian constraint}
   \label{sec:barriers}

The results developed in~\S\ref{sec:Hamiltonian} for a particular
fixed-point map $T$ for analyzing the Hamiltonian constraint equation 
and the coupled system rely on the existence
of generalized (weak) sub- and super-solutions, or barriers.
There, the Hamiltonian constraint was studied in isolation from the 
momentum constraint, and these generalized barriers only needed to
satisfy the conditions given at the beginning of~\S\ref{sec:Hamiltonian}
for a given fixed function $\biw$ appearing as a source term in the
nonlinearity of the Hamiltonian constraint.
Therefore, these types of barriers are sometimes referred to as 
{\em local barriers}, in that the coupling to the momentum constraint
is ignored. In order to establish existence results for the
coupled system in the non-CMC case, it will be critical that the
sub- and super-solutions satisfy one additional property that now
reflects the coupling, giving rise to the term {\em global barriers}. 
It will be useful now to define this global property precisely.

\begin{definition}
  \label{D:barriers}
A sub-solution $\phi_{-}$ is called {\bf global} iff it is a
sub-solution of \eqref{WF-LYs2} for all vector fields $\tbw_\phi$
solution of \eqref{WF-LYm2} with source function
$\phi\in[\phi_{-},\infty)\cap W^{s,p}$. A super-solution
$\phi_{+}$ is called {\bf global} iff it is a super-solution of
\eqref{WF-LYs2} for all vector fields $\tbw_\phi$ solution of
\eqref{WF-LYm2} with source function $\phi\in (0,\phi_{+}]\cap W^{s,p}$. A pair $\phi_-\leqs\phi_+$ of sub- and super-solutions is
called an {\bf admissible pair} if $\phi_-$ and $\phi_+$ are sub-
and super-solutions of \eqref{WF-LYs2} for all vector fields
$\tbw_\phi$ of \eqref{WF-LYm2} with source function $\phi\in
[\phi_{-},\phi_{+}]\cap W^{s,p}$.
\end{definition}

It is obvious that if $\phi_-$ and $\phi_+$ are respectively global
sub- and super-solutions, then the pair $\phi_-,\phi_+$ is
admissible in the sense above, provided they satisfy the compatibility
condition~\eqref{eqn:compat}.

Below we give a number of (local and global) sub- and super-solution 
constructions for closed manifolds; analogous constructions for compact 
manifolds with boundary are given in~\cite{mHgNgT08b}.
These constructions are based on generalizing known constant 
sub- and super-solution constructions given
previously in the literature for closed manifolds.
On one hand, the generalized global sub-solution constructions appearing
here and in~\cite{mHgNgT08b} do not require the near-CMC condition, 
inheriting this property from the known sub-solutions from literature on 
which they are based.
However, on the other hand, all previously known global super-solutions for 
the Hamiltonian constraint equation have required the near-CMC condition.

Here and in~\cite{mHgNgT07a,mHgNgT08b}, one of our primary interests is in 
developing existence results for weak (and strong) non-CMC solutions to the 
coupled system which are
free of the near-CMC assumption. This assumption had appeared in two
distinct places in all prior literature on this
problem~\cite{jIvM96,pAaCjI07}; the first assumption appears in the
construction of a fixed-point argument based on strict
$k$-contractions, and the second assumption appears in the
construction of global super-solutions. 
Here and in~\cite{mHgNgT07a,mHgNgT08b}, an alternative fixed-point framework 
based on compactness arguments rather than $k$-contractions is used to remove 
the first of these near-CMC assumptions. In this section, we give some new
constructions of global super-solutions that are free of the
near-CMC assumption, along with some compatible sub-solutions. These
sub- and super-solution constructions are needed (without their
global property) for the existence result for the Hamiltonian
constraint (Theorem~\ref{T:main3}), and they are also needed (now
with their global property) for the general fixed-point result for
the coupled system (Theorem~\ref{T:FIXPT2}), leading to our two
main non-CMC results (Theorems~\ref{T:main1} and Theorem~\ref{T:main2}).
The super-solutions in
Lemmata \ref{L:HC-Sp}(b) and \ref{L:global-super} appear to be the
first such near-CMC-free constructions, and provide the second key
piece of the puzzle we need in order to establish non-CMC results
through Theorem~\ref{T:FIXPT2} without the near-CMC condition.

Throughout this section, we will assume that the background metric $h$ belongs to $W^{s,p}$ with $p\in(1,\infty)$ and $s\in(\frac3p,\infty)\cap(1,2]$.
Recall that $r=\frac{3p}{3+(2-s)p}$, so that the
continuous embedding $L^{r}\hookrightarrow W^{s-2,p}$ holds.
Given a symmetric two-index tensor $\sigma\in L^{2r}$ and a vector
field $\biw\in\biW^{1,2r}$, introduce the functions
$a_{\sigma}=\frac18\sigma^2\in L^r$ and $a_{\cL\biw}=\frac18(\cL\biw)^2\in L^{r}$.
Note that under these conditions $a_{\biw}$ belongs to $L^r\hookrightarrow W^{s-2,2}$, and that if $a_{\sigma},a_{\cL\biw}\in L^\infty$ we have the pointwise estimate
\begin{equation*}
a_{\biw}^{\tiwedge}\leqs 2a_{\sigma}^{\tiwedge}+2a_{\cL\biw}^{\tiwedge}.
\end{equation*}
Here and in what follows, given any scalar function $u\in L^{\infty}$,
we use the notation
\[
u^{\tiwedge}:= \mbox{ess~sup}\, u,\qquad
u^{\tivee}:= \mbox{ess~inf}\, u.
\]
In some places we will assume that when the vector field $\biw\in\biW^{1,2r}$ is given by the solution of the momentum constraint equation \eqref{WF-LYm2} (or \eqref{MC-LYm1}) with the source term $\phi\in W^{s,p}$,
\begin{equation}
\label{CS-aLw-bound}
a_{\cL\biw}^{\tiwedge} \leqs \ttk(\phi):=
\ttk_1 \, \|\phi\|_{\infty}^{12} + \ttk_2,
\end{equation}
with some positive constants $\ttk_1$ and $\ttk_2$. 
We can verify this assumption e.g. when the conditions of
Corollary \ref{T:MC-E-aw} are satisfied, since from Corollary \ref{T:MC-E-aw} we would get
\[
a_{\cL\biw}^{\tiwedge} =
\|\cL\biw\|_{\infty}^2 \leqs
C^2\left(\|\phi\|_{\infty}^6\|\bib_{\tau}\|_{z}+
\|\bib_{j}\|_{e-2,q}\right)^2,
\]
giving the bound \eqref{CS-aLw-bound} with the constants
\begin{align}\label{e:k1k2}
\ttk_1 = 2 C^2 \|\bib_{\tau}\|_{z}^2,
\qquad\textrm{and}\qquad
\ttk_2 = 2C^2\|\bib_{j}\|_{e-2,q}^2.
\end{align}

\subsection{Constant barriers}
   \label{sec:constant}
Now we will present some global sub- and super-solutions for the
Hamiltonian constraint equation \eqref{WF-LYs2} which are constant functions.
The proofs essentially follow the arguments in~\cite{mHgNgT08b} for
the case of compact manifolds with boundary.

\begin{lemma}{\bf (Global super-solution)}
\label{L:HC-Sp}
Let $(\cM,h)$ be a 3-dimensional, smooth, closed
Riemannian manifold with metric $h \in W^{s,p}$.
Assume that the estimate \eqref{CS-aLw-bound} holds for the 
solution of the momentum constraint equation, 
and assume that $a_\rho,a_\sigma\in L^\infty$ and that $a_{\tiR}$ 
is uniformly bounded from below.
With the parameter $\varepsilon>0$ to be chosen later,
define the rational polynomial
$$
q_{\varepsilon}(\chi)=
(a_{\tau}^{\tivee}-\ttK_{1\varepsilon}) \, \chi^5
+ a_{\tiR}^{\tivee} \, \chi
- a_{\rho}^{\tiwedge}\,\chi^{-3}
- \ttK_{2\varepsilon} \chi^{-7},
$$
where $\ttK_{1\varepsilon}:=(1+\frac{1}\varepsilon)\ttk_1$ and $\ttK_{2\varepsilon}:=(1+\varepsilon)a_{\sigma}^{\tiwedge}+(1+\frac{1}{\varepsilon})\ttk_2$.
We distinguish the following two cases:

(a) In case $\ttk_1<a_{\tau}^{\tivee}$, choose $\displaystyle\varepsilon>\frac{\ttk_1}{a_{\tau}^{\tivee}-\ttk_1}$.
If $q_{\varepsilon}$ has a root, let $\phi_+=\phi_1(a_{\tau}^{\tivee}-\ttK_{1\varepsilon},a_{\tiR}^{\tivee},a_{\rho}^{\tiwedge},\ttK_{2\varepsilon})$ be the largest positive root of $q$, and if $q$ has no positive roots, let $\phi_+=1$.
Now, the constant $\phi_+$ is a global super-solution of the Hamiltonian constraint equation (\ref{WF-LYs2}).

(b) In case $\ttk_1\geqs a_{\tau}^{\tivee}$, choose $\varepsilon>0$.
In addition, assume that $a_{\tiR}^{\tivee}>0$ and that both
$a_{\rho}^{\tiwedge}$ and $\ttK_{2\varepsilon}$ are sufficiently
small, such that $q$ has two positive roots. Then, the largest root
$\phi_+=\phi_2(a_{\tau}^{\tivee}-\ttK_{1\varepsilon},a_{\tiR}^{\tivee},a_{\rho}^{\tiwedge},\ttK_{2\varepsilon})$
of $q$ is a super-solution of the Hamiltonian constraint equation
(\ref{WF-LYs2}).
\end{lemma}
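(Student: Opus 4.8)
The plan is to verify directly that the constant $\phi_+$ defined in each case satisfies the super-solution inequality $A_{\tiL}\phi_+ + f_{\biw}(\phi_+) \geqs 0$ for every $\biw$ arising as a solution of the momentum constraint with source $\phi \in (0,\phi_+]\cap W^{s,p}$, i.e. that $\phi_+$ is a \emph{global} super-solution in the sense of Definition~\ref{D:barriers}. Since $\phi_+$ is constant, $A_{\tiL}\phi_+ = -\Delta\phi_+ = 0$ (interpreted weakly via the extension of $L$), so the inequality reduces to the pointwise-almost-everywhere statement $f_{\biw}(\phi_+) \geqs 0$, which after testing against nonnegative $\varphi \in C^\infty_+$ becomes $\langle a_\tau \phi_+^5 + a_{\tiR}\phi_+ - a_\rho \phi_+^{-3} - a_{\biw}\phi_+^{-7},\varphi\rangle \geqs 0$. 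The first reduction step is therefore to replace each coefficient by a constant bound in the favorable direction: use $a_\tau \geqs a_\tau^{\tivee}$, $a_{\tiR} \geqs a_{\tiR}^{\tivee}$ (this is where we use that $a_{\tiR}$ is uniformly bounded below), $a_\rho \leqs a_\rho^{\tiwedge}$, and — the crucial one — an $L^\infty$ bound on $a_{\biw}$.

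The second step is to control $a_{\biw}^{\tiwedge}$ using the coupling. Since $a_{\biw} = \tfrac18(\sigma + \cL\biw)^2$, Young's inequality gives, for any $\varepsilon>0$, the pointwise bound $a_{\biw}^{\tiwedge} \leqs (1+\varepsilon)a_\sigma^{\tiwedge} + (1+\tfrac1\varepsilon)a_{\cL\biw}^{\tiwedge}$. Feeding in the assumed momentum-constraint estimate \eqref{CS-aLw-bound}, namely $a_{\cL\biw}^{\tiwedge} \leqs \ttk_1\|\phi\|_\infty^{12} + \ttk_2$, together with $\|\phi\|_\infty \leqs \phi_+$ (valid because $\phi \in (0,\phi_+]$), we obtain $a_{\biw}^{\tiwedge} \leqs \ttK_{1\varepsilon}\phi_+^{12} + \ttK_{2\varepsilon}$ with $\ttK_{1\varepsilon}, \ttK_{2\varepsilon}$ exactly as defined in the statement. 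Substituting this into the super-solution inequality and dividing by $\phi_+^{-7} > 0$ (legitimate since $\phi_+$ is a fixed positive constant), the condition $f_{\biw}(\phi_+)\geqs 0$ follows from
\[
(a_\tau^{\tivee} - \ttK_{1\varepsilon})\phi_+^5 + a_{\tiR}^{\tivee}\phi_+ - a_\rho^{\tiwedge}\phi_+^{-3} - \ttK_{2\varepsilon}\phi_+^{-7} = q_\varepsilon(\phi_+) \geqs 0.
\]
So it remains to show that the prescribed choice of $\phi_+$ makes $q_\varepsilon(\phi_+)\geqs0$.

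The third step is the elementary root analysis of $q_\varepsilon$, split into the two cases. In case (a), $\ttk_1 < a_\tau^{\tivee}$ and choosing $\varepsilon > \ttk_1/(a_\tau^{\tivee}-\ttk_1)$ guarantees the leading coefficient $a_\tau^{\tivee}-\ttK_{1\varepsilon} > 0$; multiplying $q_\varepsilon$ by $\phi_+^7$ turns it into a polynomial whose value at $0$ is $-\ttK_{2\varepsilon} < 0$ and which tends to $+\infty$, so it has a largest positive root $\phi_+$, and $q_\varepsilon$ is nonnegative for all $\chi \geqs \phi_+$ — in particular $q_\varepsilon(\phi_+) = 0 \geqs 0$; if there is no positive root then $q_\varepsilon > 0$ everywhere on $(0,\infty)$, so $\phi_+ = 1$ works. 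In case (b), $a_\tau^{\tivee} - \ttK_{1\varepsilon} \leqs 0$, so the leading behavior at $+\infty$ is governed by the $a_{\tiR}^{\tivee}\chi$ term (or worse), and one needs the smallness hypotheses: with $a_{\tiR}^{\tivee} > 0$ and $a_\rho^{\tiwedge}$, $\ttK_{2\varepsilon}$ small, $q_\varepsilon\cdot\phi_+^7$ is negative near $0$, becomes positive in an intermediate range, and eventually negative again, yielding two positive roots; the largest root $\phi_+ = \phi_2(\cdots)$ then satisfies $q_\varepsilon(\phi_+) = 0$, and although for $\chi > \phi_+$ the polynomial is negative, we only need nonnegativity \emph{at} $\phi_+$, which holds. (One should note here that in case (b) this only gives a super-solution, not automatically compatible with a matching sub-solution; the compatibility is addressed separately.)

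The main obstacle is bookkeeping the monotonicity/sign structure carefully enough in case (b) to be sure that a largest positive root exists under the stated smallness conditions — i.e. making precise "sufficiently small" so that $q_\varepsilon\cdot\phi_+^7$ genuinely crosses zero twice; this is a routine but slightly delicate real-variable argument about the shape of $-\ttK_{2\varepsilon} - a_\rho^{\tiwedge}\chi^4 + a_{\tiR}^{\tivee}\chi^8 + (a_\tau^{\tivee}-\ttK_{1\varepsilon})\chi^{12}$ on $(0,\infty)$. Everything else — the reduction to the pointwise inequality, the constant-coefficient replacement, and the Young-inequality bound on $a_{\biw}$ feeding \eqref{CS-aLw-bound} — is essentially forced and follows the pattern of the corresponding result in~\cite{mHgNgT08b}.
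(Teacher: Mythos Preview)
Your proposal is correct and follows essentially the same route as the paper: reduce to a constant test function so that $A_{\tiL}\phi_+=0$, replace the coefficients by their pointwise infima/suprema, bound $a_{\biw}^{\tiwedge}$ via Young's inequality combined with \eqref{CS-aLw-bound} and $\|\phi\|_\infty\leqs\phi_+$, and then reduce to $q_\varepsilon(\phi_+)\geqs0$ by elementary real-variable root analysis in each case. The paper's case (b) analysis is organized slightly differently---it shows directly that $q_\varepsilon'$ is strictly decreasing (since $B_\varepsilon<0$ forces $q_\varepsilon''<0$), hence $q_\varepsilon$ has a unique maximum, which is positive under the smallness assumptions---but this is just a cleaner packaging of the same ``two-root'' picture you describe.
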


\begin{proof}
We look for a super-solution among the constant functions.  Let $\chi$ be any positive constant.
Then we have
\begin{align*}
 f(\chi,\biw)
&= a_{\tau} \chi^{5}
+ a_{\tiR} \chi
- a_{\rho}\chi^{-3}
- a_{\biw} \chi^{-7}
\geqs a_{\tau}^{\tivee} \chi^5
+ a_{\tiR}^{\tivee} \chi
- a_{\rho}^{\tiwedge} \chi^{-3}
- a_{\biw}^{\tiwedge} \chi^{-7}.
\end{align*}
Given any $\varepsilon>0$, the
inequality $2|\sigma_{ab} (\cL\biw)^{ab}|\leqs\varepsilon \sigma^2 +
\frac1\varepsilon(\cL\biw)^2$ implies that
\[\textstyle
8a_{\biw} = \sigma^2+(\cL \biw)^2 + 2 \sigma_{ab}(\cL\biw)^{ab}
\leqs (1+\varepsilon)\,\sigma^2
+ ( 1+\frac{1}{\varepsilon} ) \,(\cL \biw)^2,
\]
hence, taking into account \eqref{CS-aLw-bound}, for any
$\biw\in\biW^{1,2r}$ that is a solution of the momentum constraint
equation \eqref{WF-LYm2} with any source term $\phi\in(0,\chi]$, the
constant $a_{\biw}^{\tiwedge}$ must fulfill the inequality
\begin{equation}
\label{CS-aw-bound}\textstyle
a_{\biw}^{\tiwedge}
\leqs (1+\varepsilon)a_{\sigma}^{\tiwedge}+(1+\frac1\varepsilon)a_{\cL\biw}^{\tiwedge}
\leqs \ttK_{1\varepsilon}\|\phi\|_{\infty}^{12}+\ttK_{2\varepsilon}.
\end{equation}
Thus, for any constant $\chi>0$ and all $\phi\in(0,\chi]$, it holds that
\begin{equation*}
\begin{split}
f(\chi,\biw_{\phi})
&\geqs
a_{\tau}^{\tivee} \chi^5
+ a_{\tiR}^{\tivee} \chi
- a_{\rho}^{\tiwedge} \chi^{-3}
- \left( \ttK_{1\varepsilon} \, \|\phi\|_{\infty}^{12} + \ttK_{2\varepsilon}\right)
 \chi^{-7}\\
&\geqs
B_{\varepsilon} \chi^5
+ a_{\tiR}^{\tivee} \chi
- a_{\rho}^{\tiwedge} \chi^{-3}
- \ttK_{2\varepsilon} \chi^{-7},
\end{split}
\end{equation*}
where $B_{\varepsilon}:=a_{\tau}^{\tivee}-\ttK_{1\varepsilon}$.
Introduce the rational polynomial on $\chi$ given by
\begin{equation}
\label{HC-def-q}
q_{\varepsilon}(\chi) :=
B_{\varepsilon} \chi^5
+ a_{\tiR}^{\tivee} \chi
- a_{\rho}^{\tiwedge} \chi^{-3}
- \ttK_{2\varepsilon} \chi^{-7}.
\end{equation}
We calculate the first and second derivative of $q_{\varepsilon}$ as
\begin{equation}
\begin{split}
q_{\varepsilon}'(\chi)
&=
5B_{\varepsilon} \chi^4
+ a_{\tiR}^{\tivee}
+3 a_{\rho}^{\tiwedge} \chi^{-4}
+7 \ttK_{2\varepsilon} \chi^{-8},\\
q_{\varepsilon}''(\chi)
&=
20B_{\varepsilon} \chi^3
-12 a_{\rho}^{\tiwedge} \chi^{-5}
-56 \ttK_{2\varepsilon} \chi^{-9}.
\end{split}
\end{equation}

Consider the case (a). In this case, because of the choice
$\varepsilon>\frac{\ttk_1}{a_{\tau}^{\tivee}-\ttk_1}$,
we have $B_{\varepsilon}>0$, and so $q_{\varepsilon}(\chi)>0$ for
sufficiently large $\chi$, and $q_{\varepsilon}$ is increasing. The
function $q_{\varepsilon}$ has no positive root only if
$a_{\rho}^{\tiwedge}=\ttK_{2\varepsilon}=0$. So if $q_{\varepsilon}$
has no positive root, $q_{\varepsilon}(\chi)\geqs0$ for all
$\chi\geqs0$. If $q_{\varepsilon}$ has at least one positive root,
denoting by $\phi_1$ the largest positive root, $q(\chi)\geqs0$ for
all $\chi\geqs\phi_1$. Recalling now that any constant $\chi$
satisfies $A_{\tiL}\chi=0$, we conclude that
\[
A_{\tiL}\chi + f(\chi,\biw_\phi) \geqs 0 \qquad
\qquad \forall \,\chi\geqs\phi_1,
\,\forall \,\phi\in(0,\chi],
\]
implying that $\phi_{+}$ is a global super-solution of the
Hamiltonian constraint (\ref{WF-LYs1}).

For the case (b), since $B_{\varepsilon}<0$ and
$a_{\rho}^{\tiwedge}$ and $\ttK_{2\varepsilon}$ are nonnegative, the
first derivative $q_{\varepsilon}'(\chi)$ is strictly decreasing for
$\chi>0$, and since $q_{\varepsilon}'(\phi)>0$ for sufficiently
small $\chi>0$ and $q_{\varepsilon}'(\chi)<0$ for sufficiently large
$\chi>0$, the derivative $q_{\varepsilon}'$ has a unique positive
root, at which the polynomial $q_{\varepsilon}$ attains its maximum
over $(0,\infty)$. This maximum is positive if both
$a_{\rho}^{\tiwedge}$ and $\ttK_{2\varepsilon}$ are sufficiently
small, and hence the polynomial $q_{\varepsilon}$ has two positive
roots $\phi_1\leqs\phi_2$. Similarly to the above we conclude that
\[
A_{\tiL}\chi + f(\chi,\biw_\phi) \geqs 0
\qquad \forall \,\chi\in[\phi_1,\phi_2], \,\forall \,\phi\in(0,\chi],
\]
implying that $\phi_{+}$ is a global super-solution of
the Hamiltonian constraint (\ref{WF-LYs1}).
\end{proof}

Case (a) of the above lemma has the condition $\ttk_1<a_{\tau}^{\tivee}$, 
which is the near-CMC condition.
This condition seems to be present in all non-CMC results to date.
The above condition also requires 
that the extrinsic mean curvature $\tau$ is nowhere zero.
Noting that there are solutions even for $\tau\equiv0$ in some 
cases (cf.~\cite{jI95}), the condition $\inf\tau>0$ appears 
as a rather strong restriction.
We see that case (b) of the above lemma removes this restriction, 
in exchange for the smallness conditions on $\rho$, $j$, and $\sigma$.
We also need the scalar curvature to be strictly positive, 
which condition is relaxed in the next subsection to allow 
any metric in the positive Yamabe class.

In the following lemma, we list some constant sub-solutions.
They impose considerable restrictions on the allowable data,
which is the main reason to consider non-constant sub-solutions 
in the next subsection.

\begin{lemma}{\bf(Global sub-solution)}
\label{L:HC-GSb}
Let $(\cM,h)$ be a 3-dimensional, smooth, closed
Riemannian manifold with metric $h \in W^{s,p}$.
Assume that $a_\tau\in L^\infty$ and that $a_{\tiR}$ is uniformly bounded from above.
We distinguish the following three cases.

(a) If $a_{\tiR}^{\tiwedge}<0$, then the unique positive root of the polynomial
$$
q(\chi)=
a_{\tau}^{\tiwedge} \,\chi^4
+ a_{\tiR}^{\tiwedge},
$$
is a global sub-solution of \eqref{WF-LYs2}.

(b) If $a_{\rho}^{\tivee}>0$, then the unique positive root of the polynomial
$$
q_\rho(\chi)=
a_{\tau}^{\tiwedge} \,\chi^8
+ \max\{1,a_{\tiR}^{\tiwedge}\}\, \chi^4
- a_{\rho}^{\tivee},
$$
is a global sub-solution of \eqref{WF-LYs2}.

(c) Let $\phi_{+}>0$ be a global super-solution of the Hamiltonian
constraint. Let $a_{\sigma}^{\tivee}>\ttk(\phi_{+})$, where $\ttk$
is as in \eqref{CS-aLw-bound}. Then, with some
$\varepsilon\in(\ttk(\phi_{+})/a_{\sigma}^{\tivee},1)$, the unique
positive root $\phi_+$ of the polynomial
$$
q_\sigma(\chi)=
a_{\tau}^{\tiwedge} \,\chi^{12}
+ \max\{1,a_{\tiR}^{\tiwedge}\}\, \chi^8
- \ttK_{\varepsilon},
$$
where $\ttK_{\varepsilon}:=(1-\varepsilon)a_{\sigma}^{\tivee}-\bigl(\frac1\varepsilon-1\bigr)\ttk(\phi_{+})$,
is a global sub-solution of \eqref{WF-LYs2}.
\end{lemma}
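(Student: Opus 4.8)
The plan is to look for sub-solutions among the positive constants, paralleling the proof of Lemma~\ref{L:HC-Sp}. If $\chi>0$ is a constant then $A_{\tiL}\chi=-\Delta\chi=0$, so by the definition of a (global) sub-solution it suffices to show $f(\chi,\biw_\phi)\leqs0$ in $W^{s-2,p}$ for every $\biw_\phi$ solving the momentum constraint \eqref{WF-LYm2} with an admissible source $\phi$. Writing $f(\chi,\biw)=a_{\tau}\chi^5+a_{\tiR}\chi-a_{\rho}\chi^{-3}-a_{\biw}\chi^{-7}$ and using that $a_{\tau},a_{\rho},a_{\biw}$ lie in the positive cone while $a_{\tiR}$ is bounded above, I reduce each coefficient to the appropriate order-cone bound ($a_{\tau}\leqs a_{\tau}^{\tiwedge}$, $a_{\tiR}\leqs a_{\tiR}^{\tiwedge}$ or $a_{\tiR}\leqs\max\{1,a_{\tiR}^{\tiwedge}\}$, $a_{\rho}\geqs a_{\rho}^{\tivee}$, and a lower bound on $a_{\biw}$), all understood through the pairing with $C^\infty_+$ since $a_{\tiR},a_{\rho}$ need only be distributions in $W^{s-2,p}$. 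After multiplying by a suitable nonnegative power $\chi^k$, the product $\chi^k f(\chi,\biw)$ is bounded above by one of the polynomials $q,q_\rho,q_\sigma$ in the statement; each is strictly increasing on $(0,\infty)$, negative near $0$, with a unique positive root $\chi_\star$ at which it vanishes, so $\chi^k f(\chi,\biw)\leqs0$ for $\chi\in(0,\chi_\star]$. Taking $\phi_{-}=\chi_\star$ then gives $A_{\tiL}\phi_{-}+f(\phi_{-},\biw_\phi)\leqs0$, i.e. a sub-solution; the range of sources $\phi$ for which the bound is valid is what makes it global.

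For case (a), both $-a_{\rho}\chi^{-3}\leqs0$ and $-a_{\biw}\chi^{-7}\leqs0$ hold in $W^{s-2,p}$ for \emph{every} $\biw$, so $f(\chi,\biw)\leqs\chi\,(a_{\tau}^{\tiwedge}\chi^4+a_{\tiR}^{\tiwedge})=\chi\,q(\chi)$ with no dependence on $\biw$ or $\phi$ at all; the root of $q$ exists because $a_{\tiR}^{\tiwedge}<0\leqs a_{\tau}^{\tiwedge}$ (the degenerate case $a_{\tau}^{\tiwedge}=0$ is trivial, as then $f(\chi,\biw)\leqs a_{\tiR}^{\tiwedge}\chi<0$ for all $\chi>0$), and this root is genuinely a global sub-solution. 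Case (b) is identical, except one \emph{keeps} the term $-a_{\rho}\chi^{-3}\leqs-a_{\rho}^{\tivee}\chi^{-3}$ and only discards $-a_{\biw}\chi^{-7}\leqs0$; multiplying by $\chi^3$ gives $\chi^3 f(\chi,\biw)\leqs a_{\tau}^{\tiwedge}\chi^8+\max\{1,a_{\tiR}^{\tiwedge}\}\chi^4-a_{\rho}^{\tivee}=q_\rho(\chi)$, which has a unique positive root since $q_\rho$ is strictly increasing with $q_\rho(0^+)=-a_{\rho}^{\tivee}<0$, and again the estimate is independent of $\biw$ and $\phi$, so the root is global.

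Case (c) is the only genuinely new part and the step requiring care. Here the term $-a_{\biw}\chi^{-7}$ is retained and one needs a \emph{lower} bound on $a_{\biw}=a_{\sigma}+\tfrac14\sigma_{ab}(\cL\biw)^{ab}+a_{\cL\biw}$. Young's inequality $2|\sigma_{ab}(\cL\biw)^{ab}|\leqs\varepsilon\sigma^2+\tfrac1\varepsilon(\cL\biw)^2$ yields $a_{\biw}\geqs(1-\varepsilon)a_{\sigma}+(1-\tfrac1\varepsilon)a_{\cL\biw}$, and since $\varepsilon<1$ the coefficient $1-\tfrac1\varepsilon$ is negative, so I then invoke the momentum-constraint bound \eqref{CS-aLw-bound}, $a_{\cL\biw}^{\tiwedge}\leqs\ttk(\phi)\leqs\ttk(\phi_{+})$ for sources $0<\phi\leqs\phi_{+}$, to obtain $a_{\biw}\geqs(1-\varepsilon)a_{\sigma}^{\tivee}+(1-\tfrac1\varepsilon)\ttk(\phi_{+})=\ttK_{\varepsilon}$. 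The hypothesis $a_{\sigma}^{\tivee}>\ttk(\phi_{+})$ is exactly what makes $(\ttk(\phi_{+})/a_{\sigma}^{\tivee},1)$ nonempty, and for $\varepsilon$ in that interval one checks $\ttK_{\varepsilon}>0$. Then $f(\chi,\biw_\phi)\leqs a_{\tau}^{\tiwedge}\chi^5+\max\{1,a_{\tiR}^{\tiwedge}\}\chi-\ttK_{\varepsilon}\chi^{-7}$, and multiplying by $\chi^7$ produces $q_\sigma(\chi)$, whose unique positive root is the claimed sub-solution. Because $\ttk(\phi)$ grows with $\|\phi\|_{\infty}$, this bound only covers sources $\phi\leqs\phi_{+}$, so "global" here is to be understood relative to $\phi_{+}$: together with the global super-solution $\phi_{+}$ the root forms an admissible pair in the sense of Definition~\ref{D:barriers}, provided the compatibility condition \eqref{eqn:compat} holds.

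The main obstacle is precisely the $\varepsilon$-bookkeeping in case (c) — coordinating Young's inequality with the momentum-constraint estimate \eqref{CS-aLw-bound} so that the resulting constant $\ttK_{\varepsilon}$ is positive, which both forces the smallness-type hypothesis $a_{\sigma}^{\tivee}>\ttk(\phi_{+})$ and explains why the sub-solution is only global against sources bounded by $\phi_{+}$. A secondary, routine point is tracking the order-cone (rather than pointwise) meaning of $a_{\tiR}^{\tiwedge}$ and $a_{\rho}^{\tivee}$ when these coefficients are merely distributions; once that is settled, cases (a) and (b) are immediate from the monotonicity and root structure of $q$ and $q_\rho$.
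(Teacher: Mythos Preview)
Your proposal is correct and follows essentially the same approach as the paper's proof, which only treats case (c) explicitly (referring elsewhere for (a) and (b)) via exactly the constant-function ansatz, the Young-inequality lower bound $8a_{\biw}\geqs(1-\varepsilon)\sigma^2-(\tfrac1\varepsilon-1)(\cL\biw)^2$, and the momentum estimate \eqref{CS-aLw-bound} to produce $\ttK_\varepsilon>0$. Your observation that in case (c) the construction really yields an admissible pair with $\phi_+$ rather than a global sub-solution in the strict sense of Definition~\ref{D:barriers} is a valid sharpening of the statement; the paper's proof has precisely the same limitation.
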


\begin{proof}
For the proof of (a,b), see e.g.~\cite{mHgNgT08b}. We give a
proof of (c) here.

Let $\chi>0$ be any constant function, and let $\biw\in\biW^{1,2r}$.
Then we have
\begin{equation}
\label{prf-sigma-1}
\begin{split}
f(\chi,\biw)
&= a_{\tau} \chi^{5}
+ a_{\tiR} \chi
- a_{\rho} \chi^{-3}
- a_{\biw} \chi^{-7}
\leqs
a_{\tau}^{\tiwedge} \chi^5
+ a_{\tiR}^{\tiwedge} \chi
- a_{\biw}^{\tivee} \chi^{-7}\\
&\leqs
a_{\tau}^{\tiwedge} \chi^5
+ C \chi
- a_{\biw}^{\tivee} \chi^{-7},
\end{split}
\end{equation}
where we have used that $a_{\rho}$ is nonnegative, and
introduced the constant $C=\max\{1,a_{\tiR}^{\tiwedge}\}$. Given any
$\varepsilon>0$, the inequality $2|\sigma_{ab}
(\cL\biw)^{ab}|\leqs\varepsilon \sigma^2 + \frac1\varepsilon(\cL\biw)^2$
implies that
\[\textstyle
8a_{\biw} = \sigma^2+(\cL\biw)^2 + 2 \sigma_{ab}(\cL\biw)^{ab}
\geqs (1-\varepsilon)\,\sigma^2
- ( \frac{1}{\varepsilon}-1 ) \,(\cL\biw)^2,
\]
hence, taking into account \eqref{CS-aLw-bound},
for any $\biw\in\biW^{1,2r}$ that is a solution of the momentum constraint equation \eqref{WF-LYm2} with any source term $\phi\in(0,\phi_{+}]$,
the constant $a_{\biw}^{\tivee}$ must fulfill the inequality
\begin{equation*}\textstyle
a_{\biw}^{\tivee}
\geqs (1-\varepsilon)a_{\sigma}^{\tivee}-(\frac1\varepsilon-1)a_{\cL\biw}^{\tiwedge}
\geqs (1-\varepsilon)a_{\sigma}^{\tivee}-(\frac1\varepsilon-1)\ttk(\phi_{+})=:\ttK_{\varepsilon}.
\end{equation*}
We use the above estimate in \eqref{prf-sigma-1} to get, for any $\biw\in\biW^{1,2r}$ that
is a solution of the momentum constraint equation \eqref{WF-LYm2}
with any source term $\phi\in(0,\phi_{+}]$
\begin{equation*}
\begin{split}
f(\chi,\biw)
&\leqs
a_{\tau}^{\tiwedge} \chi^5
+ C \chi
-\ttK_{\varepsilon} \chi^{-7}.
\end{split}
\end{equation*}
Because of the choice
$\ttk(\phi_{+})/a_{\sigma}^{\tivee}<\varepsilon<1$, we have
$\ttK_{\varepsilon}>0$. So with the unique positive root $\chi_{*}$
of
\[
q_{\sigma}(\chi) :=
a_{\tau}^{\tiwedge} \,\chi^5
+ C\, \chi
- \ttK_{\varepsilon}\, \chi^{-7},
\]
we have $q_{\sigma}(\chi)\leqs0$ for any constant
$\chi\in(0,\chi_{*}]$, establishing the proof.
\end{proof}

\subsection{Non-constant barriers}
   \label{sec:nonconstant}

All global super-solutions found to date appear to require the near-CMC
condition; Lemma \ref{L:HC-Sp}(b) avoids the near-CMC condition,
but it requires the scalar curvature to be strictly positive.
The following lemma extends this result to arbitrary metrics in the positive 
Yamabe class $\cY^+(\cM)$.

\begin{lemma}
{\bf (Global super-solution $h\in\cY^{+}$)}
\label{L:global-super}
Let $(\cM,h)$ be a 3-dimensional, smooth, closed Riemannian
manifold with metric $h \in W^{s,p}$ in $\cY^+(\cM)$.
Assume there exist continuous positive functions $u,\Lambda \in W^{s,p}$
that together satisfy:
\begin{equation}
  \label{eqn:yamabe}
-\Delta u+\textstyle\frac18Ru = \Lambda>0,
\quad u > 0.
\end{equation}
Let $0<\ttk_3:=u^{\tiwedge}/u^{\tivee} < \infty$, which is a trivially satisfied Harnack-type inequality.
Assume that the estimate \eqref{CS-aLw-bound} 
is satisfied for the solution of the momentum constraint equation
for two positive constants $\ttk_1$ and $\ttk_2$,
and assume that $a_\rho,a_\sigma\in L^\infty$.
If the constants 
$a_{\rho}^{\tiwedge}$, $a_{\sigma}^{\tiwedge}$, and $\ttk_2$ 
are sufficiently small, then
\begin{equation}
  \label{eqn:alpha}
\phi_+ = \beta u, \quad \beta = \left[
\frac{\Lambda^{\tivee}}
     {2\ttk_1\ttk_3^{12} (u^{\tiwedge})^5} \right]^{1/4} > 0,
\end{equation}
is a positive global super-solution to the Hamiltonian constraint equation.
\end{lemma}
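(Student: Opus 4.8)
The plan is to look for a super-solution of the form $\phi_+ = \beta u$ where $u$ solves the auxiliary Yamabe-type problem \eqref{eqn:yamabe}, and to show that for $\beta$ chosen as in \eqref{eqn:alpha} the required inequality $A_{\tiL}\phi_+ + f(\phi_+,\biw_\phi) \geqs 0$ holds weakly for all $\phi \in (0,\phi_+]\cap W^{s,p}$ and all corresponding momentum-constraint solutions $\biw_\phi$. First I would compute, using the defining equation $-\Delta u + \frac18 R u = \Lambda$, that
\[
A_{\tiL}\phi_+ + a_{\tiR}\phi_+ = \beta(-\Delta u + \textstyle\frac18 Ru) = \beta\Lambda \geqs \beta\Lambda^{\tivee} > 0,
\]
so the second-order term plus the linear term already supply a strictly positive quantity (this is where the positive Yamabe class is used — it guarantees existence of $u,\Lambda$ with $\Lambda > 0$). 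The remaining pieces of $f(\phi_+,\biw_\phi)$ are $a_\tau\phi_+^5 \geqs 0$ (helps), and the two negative terms $-a_\rho\phi_+^{-3}$ and $-a_{\biw}\phi_+^{-7}$, which must be dominated by $\beta\Lambda^{\tivee}$.

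The key estimates are pointwise lower bounds on $\phi_+$ and upper bounds on the negative terms. Since $u^{\tivee} \leqs u \leqs u^{\tiwedge}$ and $\ttk_3 = u^{\tiwedge}/u^{\tivee}$, we have $\phi_+ = \beta u \geqs \beta u^{\tivee} = \beta u^{\tiwedge}/\ttk_3$, hence $\phi_+^{-7} \leqs \ttk_3^7 \beta^{-7}(u^{\tiwedge})^{-7}$ and similarly $\phi_+^{-3} \leqs \ttk_3^3\beta^{-3}(u^{\tiwedge})^{-3}$. For the $a_{\biw}$ term I would invoke \eqref{CS-aLw-bound} together with the Cauchy–Schwarz splitting $8a_{\biw} \leqs (1+\varepsilon)\sigma^2 + (1+\frac1\varepsilon)(\cL\biw)^2$ exactly as in the proof of Lemma~\ref{L:HC-Sp}, applied with source term $\phi \in (0,\phi_+]$, so that $a_{\biw}^{\tiwedge} \leqs \ttK_{1\varepsilon}\|\phi\|_\infty^{12} + \ttK_{2\varepsilon} \leqs \ttK_{1\varepsilon}(\phi_+^{\tiwedge})^{12} + \ttK_{2\varepsilon} \leqs \ttK_{1\varepsilon}\beta^{12}(u^{\tiwedge})^{12} + \ttK_{2\varepsilon}$. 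Combining, the $-a_{\biw}\phi_+^{-7}$ term is bounded below by $-[\ttK_{1\varepsilon}\beta^{12}(u^{\tiwedge})^{12} + \ttK_{2\varepsilon}]\ttk_3^7\beta^{-7}(u^{\tiwedge})^{-7} = -\ttK_{1\varepsilon}\ttk_3^7\beta^5(u^{\tiwedge})^5 - \ttK_{2\varepsilon}\ttk_3^7\beta^{-7}(u^{\tiwedge})^{-7}$. The first of these is the dominant term as $\beta$ grows, and the definition $\beta = [\Lambda^{\tivee}/(2\ttk_1\ttk_3^{12}(u^{\tiwedge})^5)]^{1/4}$ is precisely engineered (after absorbing the factor $\ttK_{1\varepsilon}\approx \ttk_1$ for $\varepsilon$ large, and noting $\ttk_3^7 \leqs \ttk_3^{12}$) so that $\ttK_{1\varepsilon}\ttk_3^7\beta^5(u^{\tiwedge})^5 \leqs \frac12\beta\Lambda^{\tivee}$, i.e. this term eats at most half of the positive reserve $\beta\Lambda^{\tivee}$.

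With that, it remains to show the leftover negative contributions $-a_\rho^{\tiwedge}\ttk_3^3\beta^{-3}(u^{\tiwedge})^{-3} - \ttK_{2\varepsilon}\ttk_3^7\beta^{-7}(u^{\tiwedge})^{-7}$ are bounded below by $-\frac12\beta\Lambda^{\tivee}$; since $\beta$ and $u^{\tiwedge}$ are now fixed positive constants depending only on the metric and on $\ttk_1$, this is exactly a smallness requirement on $a_\rho^{\tiwedge}$, on $a_\sigma^{\tiwedge}$ (which enters $\ttK_{2\varepsilon}$ via the $(1+\varepsilon)a_\sigma^{\tiwedge}$ piece), and on $\ttk_2$ (which enters $\ttK_{2\varepsilon}$ via $(1+\frac1\varepsilon)\ttk_2$) — matching the hypotheses of the lemma. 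Assembling: $A_{\tiL}\phi_+ + f(\phi_+,\biw_\phi) \geqs \beta\Lambda^{\tivee} + 0 - \frac12\beta\Lambda^{\tivee} - \frac12\beta\Lambda^{\tivee} \geqs 0$ pointwise, hence weakly against nonnegative test functions, for every admissible source $\phi$, so $\phi_+$ is a global super-solution per Definition~\ref{D:barriers}. The main obstacle I anticipate is bookkeeping the two $\varepsilon$-dependencies correctly: $\varepsilon$ must be taken large enough that $\ttK_{1\varepsilon} = (1+\frac1\varepsilon)\ttk_1$ is as close to $\ttk_1$ as the factor-of-two budget in $\beta$ requires, while simultaneously the $(1+\varepsilon)a_\sigma^{\tiwedge}$ term in $\ttK_{2\varepsilon}$ blows up in $\varepsilon$ — so one fixes $\varepsilon$ first (to control $\ttK_{1\varepsilon}$), which fixes $\beta$, and only then imposes the smallness of $a_\sigma^{\tiwedge}$, $a_\rho^{\tiwedge}$, $\ttk_2$ relative to the now-determined constant $\beta\Lambda^{\tivee}$. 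A secondary technical point is justifying that the pointwise inequality can be read as a weak ($W^{s-2,p}$) inequality; this follows because all terms lie in $L^\infty \hookrightarrow W^{s-2,p}_+$ under the stated regularity, as already used throughout \S\ref{sec:barriers}, and because $-\Delta u + \frac18 Ru = \Lambda$ is assumed to hold in the appropriate weak sense with $u,\Lambda$ continuous.
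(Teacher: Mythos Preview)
Your approach is essentially the same as the paper's: set $\phi_+=\beta u$, use \eqref{eqn:yamabe} to convert the linear part into $\beta\Lambda$, drop the nonnegative $a_\tau$ term, control $a_{\biw}$ via \eqref{CS-aLw-bound} and the Cauchy--Schwarz splitting, balance the $\ttk_1$-term against $\Lambda^{\tivee}$ by the choice of $\beta$, and absorb the remaining negative terms into the smallness hypotheses. The only notable difference is that the paper avoids your $\varepsilon$-bookkeeping entirely by simply fixing $\varepsilon=1$ from the start, so that $\ttK_1=2\ttk_1$ and $\ttK_2=2a_\sigma^{\tiwedge}+2\ttk_2$; the factor $2\ttk_1$ in the denominator of \eqref{eqn:alpha} is exactly this $\ttK_1$, and the paper also uses the one-step estimate $(\phi^{\tiwedge})^{12}\phi^{-7}\leqs\ttk_3^{12}\phi^5$ (slightly looser than your $\ttk_3^7$ bound) so that the displayed $\beta$ matches directly without needing $\varepsilon$ large.
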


\begin{proof}
Taking $\phi= \beta u$ with a constant $\beta > 0$
in~\eqref{eqn:yamabe}, gives
\begin{equation}\textstyle
-\Delta\phi+a_{\tiR}\phi=\beta(-\Delta u + \frac18R u) = \beta \Lambda.
\end{equation}
Then for any $\varphi \in C^{\infty}_+$, by using \eqref{CS-aw-bound}
with $\ttK_{1}:=2\ttk_1$ and $\ttK_{2}:=2a_{\sigma}^{\tiwedge}+2\ttk_2$,
we infer
\begin{align*}
\langle A_{\tiL}\phi + f(\phi,\biw), \varphi \rangle
&= \langle \nabla \phi, \nabla \varphi \rangle
    + \langle a_{\tiR}\phi + a_{\tau} \phi^{5} 
        - a_{\rho}\phi^{-3} - a_{\biw} \phi^{-7}, \varphi \rangle \\
&\geqs
    \langle \beta \Lambda
         + a_{\tau}^{\tivee} \phi^5
         - [\ttK_1 (\phi^{\tiwedge})^{12} + \ttK_2] \phi^{-7}
         - a_{\rho}^{\tiwedge} \phi^{-3} , \varphi \rangle \\
&\geqs
    \langle \beta \Lambda
         + [a_{\tau}^{\tivee} - \ttK_1 \ttk_3^{12}] \phi^5
         - \ttK_2 \phi^{-7}
         - a_{\rho}^{\tiwedge} \phi^{-3} , \varphi \rangle \\
&\geqs
    \langle \beta 
        G(\beta,\ttK_2,a_{\rho}), \varphi \rangle
\end{align*}
where
\begin{equation*}
G(\beta,\ttK_2,a_{\rho}) := 
\Lambda^{\tivee}
- \ttK_1 \ttk_3^{12} \beta^4 (u^{\tiwedge})^5 -
\ttK_2 \beta^{-8} (u^{\tiwedge})^{-7} - a_{\rho}^{\tiwedge}
\beta^{-4} (u^{\tiwedge})^{-3},
\end{equation*}
and where we have used the fact that 
$\phi^{\tiwedge}/\phi^{\tivee} = u^{\tiwedge}/u^{\tivee} = \ttk_3$.
Therefore, to ensure $\phi$ is a super-solution we must
now pick arguments ensuring
$G(\beta,\ttK_2,a_{\rho}) \geqs 0$.
We first pick $\beta$ as in~\eqref{eqn:alpha} giving
\begin{equation*}\textstyle
\frac{1}{2}\Lambda^{\tivee} 
   = \Lambda^{\tivee}
    - \ttK_1 \ttk_3^{12} (u^{\tiwedge})^5 \beta^4 > 0.
\end{equation*}
For this fixed $\beta$, we then pick $\ttK_2$ and $a_{\rho}^{\tivee}$,
each sufficiently small, so that
\begin{equation*}\textstyle
\frac{1}{2}\Lambda^{\tivee} - \ttK_2 \beta^{-8} (u^{\tiwedge})^{-7} -
a_{\rho}^{\tiwedge} \beta^{-4} (u^{\tiwedge})^{-3} \geqs 0.
\end{equation*}
The result then follows.
\end{proof}

\Remark
We now make some remarks about the existence of a pair of positive functions
$(u,\Lambda)$ which satisfy the hypotheses of Lemma~\ref{L:global-super}.
Let the background metric $h_{ab} \in W^{s,p}$ be in the positive Yamabe class.
Then in Theorem \ref{t:subcrit} in Appendix \ref{sec:yamabe},
for the sub-critical range $1 \leqs q < 5$
we establish the existence of a positive $u \in W^{s,p}$ and a constant $\mu_q > 0$ satisfying 
\begin{equation*}\textstyle
-8\Delta u+Ru=\mu_qu^q.
\end{equation*}
So the pair $(u,\frac18\mu_qu^q)$ readily satisfies \eqref{eqn:yamabe}.
In a sense the simplest construction of the near-CMC-free global super-solution
in Lemma~\ref{L:global-super} arises by taking $q=1$;
one is then simply using the first eigenfunction of the
conformal Laplacian to build the global super-solution.

Alternatively, one
can consider a solution to the Yamabe problem
\begin{equation*}
-8\Delta u+Ru= u^5,\quad u>0,
\end{equation*}
which exists for sufficiently smooth metrics in the positive Yamabe class, cf. \cite{jLtP87}.
This approach is taken for simplicity in~\cite{mHgNgT07a}.

In any case, note that the function $u>0$ that satisfies~\eqref{eqn:yamabe} 
is the conformal factor which transforms $h_{ab}$ into a metric with
scalar curvature $R_u=8\Lambda u^{-5}> 0$.

We remark that without the near-CMC condition, the only potentially strictly 
positive term appearing in the nonlinearity of the Hamiltonian constraint 
is the term involving the scalar curvature $R$.
Therefore, global super-solution
constructions based on the approach in Lemma~\ref{L:global-super}
are restricted to data in $\cY^{+}(\cM)$.
We extend this observation in the next lemma, which essentially says that in a nonpositive Yamabe class, there is no way to build a positive global super-solution without the near-CMC condition as long as we use a global estimate of type \eqref{CS-aLw-bound}.

\begin{lemma}
{\bf (Near-CMC condition and $a_{\biw}$ bounds)}
\label{L:global-super-limit}
Let $(\cM,h)$ be a 3-dimensional, smooth, closed Riemannian
manifold with metric $h \in W^{s,p}$ in a nonpositive Yamabe class, 
and let $a_{\tau}$ be continuous.
Let $\phi_{+}\in W^{s,p}$ with $\phi_{+}>0$ be a global super-solution to the Hamiltonian constraint
equation.
We assume that any vector field $\tbw\in\tbW^{1,2r}$ that is a solution of the momentum constraint equation with a source $\phi\leqs\phi_{+}$ satisfies the estimate
\begin{equation}
   \label{awest2}
a_{\tbw}\leqs\ttK_1 \|\phi_{+}\|_{\infty}^{12} + \ttK_2,
\end{equation}
with some positive constants $\ttK_1$ and $\ttK_2$.
Moreover, we assume that this estimate is sharp in the sense that for any $x\in\cM$ there exist an open neighborhood $U\ni x$ and a vector field $\tbw\in\tbW^{1,2r}$ a solution of the momentum constraint equation with a source $\phi\leqs\phi_{+}$, such that
\begin{equation}\label{awest-sharp}
a_{\tbw}=\ttK_1 \|\phi_{+}\|_{\infty}^{12} + \ttK_2
\qquad\textrm{in }U.
\end{equation}
Then, we have $\ttK_1\leqs\sup_{\cM} a_{\tau}$.
\end{lemma}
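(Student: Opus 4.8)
The plan is to argue by contradiction: assuming $\ttK_1 > \sup_{\cM} a_{\tau}$, I will show that $h$ must lie in the \emph{positive} Yamabe class, contradicting the hypothesis. Write $M_+ := \|\phi_+\|_{\infty}$; since $\phi_+ \in W^{s,p}$ with $sp > 3$ it is continuous, and with $\phi_+ > 0$ on the compact $\cM$ we have $0 < M_+ < \infty$. Put $\bar a := \ttK_1 M_+^{12} + \ttK_2$, the constant appearing in \eqref{awest-sharp}.

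The first step is to promote the local sharpness hypothesis \eqref{awest-sharp} to a global weak differential inequality for $\phi_+$. For each $x\in\cM$, \eqref{awest-sharp} gives an open set $U_x\ni x$ and a solution $\tbw_x$ of the momentum constraint with source $\phi \leqs \phi_+$ such that $a_{\tbw_x} \equiv \bar a$ on $U_x$. Since $\phi_+$ is a \emph{global} super-solution (Definition~\ref{D:barriers}), it is a super-solution of \eqref{WF-LYs2} for this $\tbw_x$; testing that weak inequality against nonnegative $\varphi \in C^{\infty}$ with $\mathrm{supp}\,\varphi \subset U_x$ — where $a_{\tbw_x}$ is simply the constant $\bar a$ — yields
\[
\langle\nabla\phi_+,\nabla\varphi\rangle + \langle a_{\tau}\phi_+^5 + a_{\tiR}\phi_+ - a_{\rho}\phi_+^{-3} - \bar a\,\phi_+^{-7},\,\varphi\rangle \;\geqs\; 0 .
\]
Because $\bar a$ is one fixed constant, independent of $x$, covering $\cM$ by finitely many $U_{x_i}$ and summing the above over a subordinate smooth partition of unity $\{\chi_i\}$ (applied with test functions $\varphi\chi_i \in C^{\infty}_{+}$) shows that the inequality holds for \emph{all} $\varphi \in C^{\infty}_{+}$; that is, $\phi_+$ is a global weak super-solution of $-\Delta\phi_+ + a_{\tau}\phi_+^5 + a_{\tiR}\phi_+ - a_{\rho}\phi_+^{-3} - \bar a\,\phi_+^{-7} \geqs 0$ on $\cM$.

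The second step reduces this to a conformal-Laplacian inequality. Since $a_{\rho} \geqs 0$, the super-solution inequality gives $-\Delta\phi_+ + a_{\tiR}\phi_+ \geqs \bar a\,\phi_+^{-7} - a_{\tau}\phi_+^5$ weakly; recalling $a_{\tiR} = \frac18 R$ and using $\phi_+ \leqs M_+$ together with $\bar a M_+^{-12} = \ttK_1 + \ttK_2 M_+^{-12} \geqs \ttK_1$ (hence $\bar a\,\phi_+^{-7} \geqs \ttK_1\phi_+^5$ pointwise), we obtain, weakly on $\cM$,
\[
-8\Delta\phi_+ + R\phi_+ \;\geqs\; 8\bigl(\ttK_1 - \sup\nolimits_{\cM} a_{\tau}\bigr)\,\phi_+^5 \;>\; 0 ,
\]
the strict positivity coming from the contradiction hypothesis and $\phi_+ > 0$. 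Pairing this with the positive first eigenfunction $v_1 > 0$ of the conformal Laplacian $-8\Delta + R$, whose eigenvalue $\lambda_1$ satisfies $\lambda_1 \leqs 0$ precisely because $h$ lies in a nonpositive Yamabe class (Appendix~\ref{sec:yamabe}), self-adjointness gives
\[
0 \;<\; 8\bigl(\ttK_1 - \sup\nolimits_{\cM} a_{\tau}\bigr)\!\int_{\cM}\phi_+^5 v_1
\;\leqs\; \int_{\cM}\bigl(-8\Delta\phi_+ + R\phi_+\bigr)v_1
\;=\; \lambda_1\!\int_{\cM}\phi_+ v_1 \;\leqs\; 0 ,
\]
which is impossible. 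Hence $\ttK_1 \leqs \sup_{\cM} a_{\tau}$.

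Most of the ingredients are routine given the earlier development: the embedding $W^{s,p}\hookrightarrow C^0$, the fact that positive powers of $\phi_+$ lie in $W^{s,p}$ so that all the pairings above are meaningful in the $W^{s-2,p}$ duality, and the basic spectral theory of the conformal Laplacian for a $W^{s,p}$ metric (Appendix~\ref{sec:yamabe}). The step requiring genuine care is the globalization in the second paragraph: one must verify that the weak super-solution inequality for $\tbw_x$, once localized to $U_x$, depends on $a_{\tbw_x}$ only through its value $\bar a$ on $U_x$, and that summing the localized inequalities against $\varphi\chi_i$ is legitimate at the level of the $W^{s-2,p}$ duality pairings. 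This localization-and-patching is where I expect the main (though mild) technical obstacle.
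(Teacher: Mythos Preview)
Your proof is correct and complete; the globalization step works exactly as you describe, and the spectral pairing with the principal eigenfunction of the conformal Laplacian is fully justified by Theorem~\ref{t:subcrit} (case $q=2$) together with the sign analysis in the proof of Theorem~\ref{t:yclass}.

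The paper's argument is essentially dual to yours. Rather than globalizing the super-solution inequality and then testing against a single global function (the principal eigenfunction), the paper starts from a single global test function $\tilde\varphi\in W^{2-s,p'}_{+}$ with $\langle\nabla\phi_+,\nabla\tilde\varphi\rangle+\langle a_{\tiR}\phi_+,\tilde\varphi\rangle\leqs 0$ (such $\tilde\varphi$ exists precisely because the Yamabe class is nonpositive), then localizes \emph{this test function} via a partition of unity subordinate to the sharp cover $\{U_i\}$, selects one piece $\varphi=\mu_i\tilde\varphi$ on which the conformal energy remains $\leqs 0$, and applies the super-solution property for the single $\tbw$ associated to $U_i$. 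After dropping the $a_\rho$ term and using $(\phi_+^{\tiwedge})^{12}\phi_+^{-7}\geqs\phi_+^5$ exactly as you do, the paper obtains $\langle(a_\tau-\ttK_1)\phi_+^5,\varphi\rangle\geqs 0$ and then shrinks the support of $\varphi$ to conclude that $a_\tau\geqs\ttK_1$ at some point. Both arguments use the same three ingredients (partition of unity on the sharp cover, the inequality $\bar a\,\phi_+^{-7}\geqs\ttK_1\phi_+^5$, and the nonpositive Yamabe hypothesis) but compose them in opposite orders. Your route is arguably cleaner and avoids the final support-shrinking step; the paper's route avoids invoking the principal eigenfunction and yields the marginally sharper pointwise statement that $a_\tau(x)\geqs\ttK_1$ at some $x$, though this extra information is not needed for the lemma as stated.
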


\begin{proof}
Since the metric is in a nonpositive Yamabe class, there exists $\tilde\varphi\in W^{2-s,p'}_{+}$ such that
$\langle\nabla\phi_{+},\nabla\tilde\varphi\rangle + \langle a_{\tiR}\phi_{+},\tilde\varphi\rangle\leqs0$.
The collection of all neighborhoods in \eqref{awest-sharp} will form an open cover of $\cM$, and let $\{U_i\}$ be one of its finite subcovers.
Let $\{\mu_i\}$ be a partition of unity subordinate to $\{U_i\}$.
Then, by writing $\tilde\varphi=\sum_i\mu_i\tilde\varphi$, we can expand the expression $\langle\nabla\phi_{+},\nabla\tilde\varphi\rangle + \langle a_{\tiR}\phi_{+},\tilde\varphi\rangle$ into 
a finite sum, which has at least one non-positive term.
Without loss of generality, let us assume $\langle\nabla\phi_{+},\nabla\varphi\rangle + \langle a_{\tiR}\phi_{+},\varphi\rangle\leqs0$ with $\varphi=\mu_i\tilde\varphi$.
With $\biw\in\biW^{1,2r}$ being a vector field that satisfies \eqref{awest-sharp} with respect to $U:=U_i$, we have
\begin{align*}
0&\leqs
	\langle\nabla\phi_{+},\nabla\varphi\rangle + \langle a_{\tiR}\phi_{+}
         + a_{\tau} \phi_{+}^5
         - a_{\biw} \phi_{+}^{-7}
         - a_{\rho} \phi_{+}^{-3},\varphi\rangle\\
&\leqs
         \langle a_{\tau} \phi_{+}^5
         - a_{\biw} \phi_{+}^{-7}
         - a_{\rho} \phi_{+}^{-3},\varphi\rangle\\
&=
         \langle a_{\tau} \phi_{+}^5
         - [\ttK_1 (\phi_{+}^{\tiwedge})^{12} + \ttK_2] \phi_{+}^{-7}
         - a_{\rho} \phi_{+}^{-3},\varphi\rangle\\
&\leqs
         ([ a_{\tau} - \ttK_1 (\phi_{+}^{\tiwedge}/\phi_{+})^{12} ] \phi_{+}^{5},\varphi).
\end{align*}
Using partitions of unity we can make the support of $\varphi$ arbitrarily small, from which we conclude that $a_{\tau} \geqs \ttK_1 (\phi_{+}^{\tiwedge}/\phi_{+})^{12}\geqs\ttK_1$ at some $x\in\cM$.
\end{proof}

All of the subsequent barrier constructions below are more or less known.
A number of the more technically sophisticated construction techniques 
we employ below were pioneered by Maxwell in~\cite{dM05}.
For completeness, we first construct local super-solutions and
then global super-solutions for the near-CMC case.

\begin{lemma}{\bf (Local super-solution)}
\label{L:local-super-near}
Let $(\cM,h)$ be a 3-dimensional, smooth, closed Riemannian manifold with metric $h \in W^{s,p}$.
Let $a_{\tau},a_{\rho},a_{\tbw}\in W^{s-2,p}_{+}$, and let one of the following conditions hold:
\begin{itemize}
\item[(a)] The metric $h$ is in a non-negative Yamabe class, $a_{\tau}\neq0$, and $a_{\rho}+a_{\tbw}\neq0$.
\item[(b)] The metric $h$ is in the positive Yamabe class, and $a_{\rho}+a_{\tbw}\neq0$.
\item[(c)] The metric $h$ is conformally equivalent to a metric with scalar curvature $-a_{\tau}\neq0$, thus in particular the metric is in the negative Yamabe class.
\end{itemize}
Then, there is a positive (local) super-solution $\phi_{+}\in W^{s,p}$ of the Hamiltonian constraint equation (\ref{WF-LYs2}).
\end{lemma}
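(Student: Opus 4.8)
The plan is to reduce all three cases, via the conformal covariance of the Hamiltonian constraint (Appendix~\ref{sec:conf-inv}), to the construction of a super-solution for a conformally normalized metric, and then to produce that super-solution either explicitly or as the solution of an auxiliary semilinear equation. The basic mechanism is the elementary observation that $a_\tau\in W^{s-2,p}_{+}$: \emph{deleting} the (nonnegative) term $a_\tau\phi^5$ from $f_{\biw}$ can only decrease the left-hand side of \eqref{WF-LYs2}, so any positive solution $\phi_{+}\in W^{s,p}$ of the ``$\tau$-free'' Lichnerowicz equation $A_{\tiL}\phi_{+}+a_{\tiR}\phi_{+}-a_{\rho}\phi_{+}^{-3}-a_{\biw}\phi_{+}^{-7}=0$ satisfies $A_{\tiL}\phi_{+}+f_{\biw}(\phi_{+})=a_\tau\phi_{+}^5\geqs0$ and is therefore a super-solution; likewise a function solving \eqref{WF-LYs2} exactly is trivially a super-solution. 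So the first step is to conformally replace $h$ by a metric with controlled scalar curvature: in cases (a)--(b), a metric with $R>0$ when $h\in\cY^{+}$ (using the positive sub-critical eigenfunction from Theorem~\ref{t:subcrit} with $q=1$, which makes the new $a_{\tiR}$ continuous and uniformly positive) or with $R\equiv0$ when $h\in\cY^{0}$; in case (c), the metric with scalar curvature a nonpositive multiple of $a_\tau$ supplied by hypothesis. A super-solution for the transformed problem pulls back to one for the original.

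\textbf{The three cases.} For case (b) ($h\in\cY^{+}$, $a_\rho+a_{\biw}\neq0$): after normalization $a_{\tiR}$ is continuous and bounded below by a positive constant, and the $\tau$-free Lichnerowicz equation has a positive solution by the monotone-iteration/variational solution theory for the Hamiltonian constraint of~\S\ref{sec:Hamiltonian} and~\cite{mHgNgT08b} --- one exhibits a compatible barrier pair for that reduced equation, and the configuration $a_\rho+a_{\biw}\neq0$ in the positive Yamabe class is exactly the classical solvability criterion of~\cite{jI95,dM05}. By the remark above this is a super-solution of \eqref{WF-LYs2}. Case (a) splits: if $h\in\cY^{+}$ it reduces to (b); if $h\in\cY^{0}$, after normalizing to $R\equiv0$ the hypotheses $a_\tau\neq0$ and $a_\rho+a_{\biw}\neq0$ are precisely those under which the (variable-coefficient) Lichnerowicz equation $A_{\tiL}\phi+a_\tau\phi^5-a_\rho\phi^{-3}-a_{\biw}\phi^{-7}=0$ is solvable, and any solution is simultaneously sub- and super-solution. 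For case (c), after passing to the metric with $a_{\tiR}$ a nonpositive multiple of $a_\tau$, I would seek $\phi_{+}$ of the form $\beta u$ with $u$ the conformal factor just used and $\beta$ large (or, equivalently, solve the correspondingly shifted Lichnerowicz equation), exploiting that the $\phi^5$ term and the normalized scalar-curvature term now combine with a favorable sign for $\beta$ large, together with the control on $a_\rho,a_{\biw}$. Since all of these constructions are essentially known, parts of the argument follow~\cite{mHgNgT08b,dM05} with only cosmetic changes.

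\textbf{Main obstacle.} The real difficulty is the low regularity of the data: $a_\rho$ and $a_{\biw}$ lie only in $W^{s-2,p}_{+}$ and may be genuine distributions, so one cannot bound the negative-power terms pointwise and then simply take a large constant super-solution; the super-solutions must be produced as honest solutions of semilinear equations with distributional right-hand sides, which is exactly where the machinery of~\S\ref{sec:Hamiltonian} and the techniques of Maxwell~\cite{dM05} are required and where the argument departs from the classical smooth-data constructions. A secondary delicate point is the borderline subcases in the $\cY^{0}$ and $\cY^{-}$ settings in which $a_\tau$ (equivalently $\tau$) vanishes on a set of positive measure while matter is present: there the ``delete the $\phi^5$ term'' reduction no longer lands in a solvable problem, and one must treat the full equation with the variable coefficient $a_\tau$, invoking the rough-metric extension of the Isenberg-type classification.
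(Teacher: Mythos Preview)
Your approach has a circularity problem. This lemma is precisely the source of local super-solutions that the paper's existence theory for the Hamiltonian constraint relies on (see the proof of Theorem~\ref{T:main3}, which invokes Lemma~\ref{L:local-super-near} in every Yamabe case). You propose to construct $\phi_+$ by first \emph{solving} a reduced semilinear Lichnerowicz-type equation (the ``$\tau$-free'' equation in cases (a)--(b), or the full equation in the $\cY^0$ subcase), invoking the solution theory of \S\ref{sec:Hamiltonian} and~\cite{mHgNgT08b}. But that solution theory needs a super-solution as input, which is exactly what this lemma is meant to supply. Your diagnosis of the low-regularity obstacle is correct; your resolution---``produce the super-solutions as honest solutions of semilinear equations''---does not break the circle, because solvability of those semilinear equations in the $W^{s-2,p}$-coefficient setting is not an independent fact available at this point in the paper.

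The paper sidesteps this by never solving a semilinear auxiliary problem. It takes $\phi_+=\beta uv$ where $u>0$ is an eigenfunction of the conformal Laplacian (Theorem~\ref{t:subcrit} with $q=1$, or in case (c) the conformal factor to scalar curvature $-8a_\tau$), and $v>0$ solves the \emph{linear} equation
\[
\langle u^2\nabla v,\nabla\varphi\rangle+\langle(\lambda u^2+a_\tau)v,\varphi\rangle=\langle a_\rho+a_{\biw},\varphi\rangle,
\]
whose well-posedness and strict positivity come directly from Lemma~\ref{l:lapinv} and the maximum principle (Lemma~\ref{l:max-princ}). Testing $A_{\tiL}\phi+f_{\biw}(\phi)$ against $u\varphi$ rather than $\varphi$ makes the cross terms collapse via the eigenfunction equation for $u$, and the linear equation for $v$ absorbs the distributional $a_\rho+a_{\biw}$; what remains are three terms, each nonnegative once $\beta$ is large. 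This also shows where your case~(c) ansatz $\phi_+=\beta u$ fails: without the factor $v$, the contribution $-a_\rho(\beta u)^{-3}-a_{\biw}(\beta u)^{-7}$ cannot be dominated by $a_\tau u^5(\beta^5-\beta)$ when $a_\rho,a_{\biw}\notin L^\infty$ or when $a_\tau$ vanishes on a set of positive measure.
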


\begin{proof}
First we prove (a) and (b).
Let  $u\in W^{s,p}$ be a (weak) solution to
\begin{equation*}\textstyle
-\Delta u+\frac18Ru = \lambda u,
\quad u > 0,
\end{equation*}
with a constant $\lambda\geqs0$, which exists by Theorem \ref{t:subcrit} in Appendix \ref{sec:yamabe},
and let $v\in W^{s,p}$ be the solution to
\begin{equation}\label{e:rhowcurv}
\langle u^2\nabla v, \nabla\varphi\rangle+\langle\lambda u^2v+a_{\tau}v,\varphi\rangle=\langle a_{\rho}+a_{\biw},\varphi\rangle,
\qquad\forall\varphi\in C^{\infty}.
\end{equation}
Since $a_{\tau},a_{\rho},a_{\biw}\in W^{s-2,p}_{+}$ with $sp>3$, we have $v\in W^{s,p}\hookrightarrow L^\infty$, and since $\lambda u^2+a_{\tau}\neq0$ and $a_{\rho}+a_{\biw}\neq0$, Lemma \ref{l:max-princ} (maximum principle) in Appendix \ref{sec:maxprinciple} implies that $v>0$.
Let us define $\phi= \beta uv\in W^{s,p}$ for a constant $\beta > 0$. Then for any $\varphi\in C^{\infty}_{+}$ we have
\begin{multline*}
\langle A_{\tiL}\phi +  f(\phi,\biw),u\varphi\rangle
=
\langle \nabla\phi,\nabla(u\varphi)\rangle + \langle a_{\tau}\phi^5 + a_{\tiR}\phi - a_{\rho}\phi^{-3}-a_{\biw}\phi^{-7},u\varphi\rangle\\
=
\beta\langle u^2\nabla v,\nabla\varphi\rangle+\langle\beta\lambda u^2v+a_{\tau}u\phi^5 - a_{\rho}u\phi^{-3}-a_{\biw}u\phi^{-7},\varphi\rangle\\
=
\langle a_{\tau}[\beta^5u^6v^5-\beta v],\varphi\rangle
+\langle a_{\rho}[\beta-\beta^{-3}u^{-2}v^{-3}],\varphi\rangle
+\langle a_{\biw}[\beta-\beta^{-7}u^{-6}v^{-7}],\varphi\rangle,
\end{multline*}
where the second line is obtained by 
\begin{equation}\label{e:auv}
\begin{split}
\langle A_{\tiL}\phi + a_{\tiR}\phi,u\varphi\rangle
&=\textstyle
\beta\langle\nabla(uv),\nabla(u\varphi)\rangle+\frac\beta8\langle Ruv,u\varphi\rangle\\
&=\textstyle
\beta\langle\nabla u,\nabla(uv\varphi)\rangle+\frac\beta8\langle Ru,uv\varphi\rangle+\beta\langle u\nabla v,u\nabla\varphi\rangle\\
&=
\beta\langle \lambda u,uv\varphi\rangle + \beta\langle u^2\nabla v,\nabla\varphi\rangle,
\end{split}
\end{equation}
and the third line is from \eqref{e:rhowcurv}.
Now, choosing $\beta>0$ sufficiently large, so that $\beta^4u^6v^5-v\geqs0$, $1-\beta^{-4}u^{-2}v^{-3}\geqs0$ and $1-\beta^{-8}u^{-6}v^{-7}\geqs0$, we ensure that $\phi$ is a super-solution.

Now, let us consider (c).
Let $u>0$ be the conformal factor which transforms $h$ into a metric with scalar curvature $\lambda=-8a_{\tau}$,
i.e., let $u\in W^{s,p}$ be a weak solution to
\begin{equation*}\textstyle
-\Delta u+\frac18Ru + a_{\tau}u^5 = 0,
\quad u > 0.
\end{equation*}
If $a_{\rho}=a_{\biw}=0$, the Hamiltonian constraint equation reduces to the above equation and we can take $u$ as a super-solution (it is even a solution).
So we can assume in the following that $a_{\rho}+a_{\biw}\neq0$.
Let $v\in W^{s,p}$ be the solution to
\begin{equation*}
\langle u^2\nabla v, \nabla\varphi\rangle+\langle a_{\tau}v,\varphi\rangle=\langle a_{\rho}+a_{\biw},\varphi\rangle,
\qquad\forall\varphi\in C^{\infty}.
\end{equation*}
Defining $\phi= \beta uv\in W^{s,p}$ for a constant $\beta > 0$, the rest of the proof proceeds superficially in the same way as the above case.
\end{proof}

\begin{lemma}{\bf (Near-CMC global super-solution)}
\label{L:global-super-near} 
Let $(\cM,h)$ be a 3-dimensional, smooth, closed Riemannian manifold with metric $h \in W^{s,p}$.
Let $a_{\tau},a_{\rho}\in W^{s-2,p}_{+}$ and $a_{\sigma}\in L^\infty_{+}$, and let one of the following conditions hold:
\begin{itemize}
\item[(a)] 
The metric $h$ is in a non-negative Yamabe class, $a_{\tau}\neq0$, and $a_{\rho}+a_{\sigma}\neq0$.
Let  $u\in W^{s,p}$ and $v\in W^{s,p}$ be the solutions to
\begin{equation}\label{e:rhosigmacurva}
\begin{split}\textstyle
-\Delta u+\frac18Ru &= \lambda u,\\
-\nabla(u^2\nabla v)+(\lambda u^2+a_{\tau})v &= a_{\rho}+a_{\sigma}.
\end{split}
\end{equation}
with a constant $\lambda\geqs0$.
\item[(b)] 
The metric $h$ is conformally equivalent to a metric with scalar curvature $-a_{\tau}\neq0$, thus in particular the metric is in the negative Yamabe class.
Let $u\in W^{s,p}$ and $v\in W^{s,p}$ be the solutions to
\begin{equation}\label{e:rhosigmacurvb}
\begin{split}\textstyle
-\Delta u+\frac18Ru + a_{\tau}u^5 &= 0,\\
-\nabla (u^2\nabla v) +  a_{\tau}v &=  a_{\rho}+a_{\sigma}.
\end{split}
\end{equation}
\end{itemize}
Assume that the estimate \eqref{CS-aLw-bound} holds for the momentum constraint equation,
and let $\ttk_1<a_{\tau}^{\tivee}(\frac{\min uv}{\max uv})^{12}$.
Then, for any sufficiently large constant $\beta>0$, $\phi_{+}=\beta uv$ is a global super-solution of the Hamiltonian constraint equation (\ref{WF-LYs2}).
\end{lemma}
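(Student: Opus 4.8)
The plan is to mimic the local construction in Lemma~\ref{L:local-super-near}: set $\phi=\beta uv$ with $\beta>0$ to be taken large, test $A_{\tiL}\phi+f(\phi,\biw)$ against $u\varphi$ for $\varphi\in C^{\infty}_{+}$, use the equations defining $u$ and $v$ to eliminate the $a_{\tiR}\phi$ and $a_{\rho}+a_{\sigma}$ contributions, and reduce everything to checking that a certain expression $g_{\beta}$ is nonnegative in the weak sense. The one genuinely new point, which is what forces the near-CMC hypothesis, is a careful accounting of the momentum-constraint coefficient $a_{\biw}$. First I would note that $u>0$ is continuous --- the first eigenfunction of the conformal Laplacian in case (a) (with $\lambda\geqs0$ the corresponding eigenvalue), the conformal factor to scalar curvature $-a_{\tau}$ in case (b) --- and that $v>0$: in case (a) because $\lambda u^{2}+a_{\tau}\neq0$ and $a_{\rho}+a_{\sigma}\neq0$, so Lemma~\ref{l:max-princ} applies to the second equation of~\eqref{e:rhosigmacurva}, and similarly in case (b) via~\eqref{e:rhosigmacurvb}; the degenerate subcases (e.g.\ $a_{\rho}+a_{\sigma}\equiv0$) are disposed of exactly as in Lemma~\ref{L:local-super-near}. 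Since $u,v\in W^{s,p}\hookrightarrow L^{\infty}$ are positive and continuous, $0<(uv)^{\tivee}\leqs(uv)^{\tiwedge}<\infty$, so $(\min uv/\max uv)^{12}\in(0,1]$ and the near-CMC hypothesis is meaningful.

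For $\varphi\in C^{\infty}_{+}$, I would expand $\langle A_{\tiL}\phi+f(\phi,\biw),u\varphi\rangle$. The term $\langle A_{\tiL}\phi+a_{\tiR}\phi,u\varphi\rangle$ is rewritten with the equation for $u$ exactly as in~\eqref{e:auv} in case (a) (giving $\beta\lambda\langle u^{2}v,\varphi\rangle+\beta\langle u^{2}\nabla v,\nabla\varphi\rangle$), and by its evident analogue in case (b) (giving $-\beta\langle a_{\tau}u^{6}v,\varphi\rangle+\beta\langle u^{2}\nabla v,\nabla\varphi\rangle$). Substituting the weak form of the $v$-equation for $\beta\langle u^{2}\nabla v,\nabla\varphi\rangle$ and using $\phi=\beta uv$, one obtains
\[
\langle A_{\tiL}\phi+f(\phi,\biw),u\varphi\rangle=\langle g_{\beta},\varphi\rangle ,
\]
where in case (a)
\[
g_{\beta}=\beta a_{\tau}\bigl(\beta^{4}u^{6}v^{5}-v\bigr)+\beta a_{\rho}\bigl(1-\beta^{-4}u^{-2}v^{-3}\bigr)+\beta a_{\sigma}-\beta^{-7}a_{\biw}u^{-6}v^{-7},
\]
and in case (b) the same with $\beta^{4}u^{6}v^{5}-v$ replaced by $\beta^{4}u^{6}v^{5}-u^{6}v-v$. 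All steps are in the weak sense and use only that multiplication by the positive functions $u^{\pm1},v^{\pm1}\in W^{s,p}$ is bounded on $W^{s-2,p}$ (Corollary~\ref{c:alg}).

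The crux is to show $g_{\beta}\geqs0$ for $\beta$ large, uniformly over all $\biw=\biw_{\phi}$ solving the momentum constraint with source $\phi\in(0,\phi_{+}]\cap W^{s,p}$. For any such $\biw$ and any $\varepsilon>0$, the inequality $2|\sigma_{ab}(\cL\biw)^{ab}|\leqs\varepsilon\sigma^{2}+\tfrac1\varepsilon(\cL\biw)^{2}$ gives $a_{\biw}\leqs(1+\varepsilon)a_{\sigma}^{\tiwedge}+(1+\tfrac1\varepsilon)a_{\cL\biw}^{\tiwedge}$, and~\eqref{CS-aLw-bound} together with $\|\phi\|_{\infty}\leqs\|\phi_{+}\|_{\infty}=\beta(uv)^{\tiwedge}$ yields $a_{\cL\biw}^{\tiwedge}\leqs\ttk_{1}\beta^{12}\bigl((uv)^{\tiwedge}\bigr)^{12}+\ttk_{2}$. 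Hence the $a_{\biw}$-term of $g_{\beta}$ splits into a contribution of order $\beta^{5}$ and contributions of order $\beta^{-7}$; collecting the $\beta^{5}$ terms, the leading coefficient of $g_{\beta}$ is bounded below by
\[
u^{6}v^{5}\Bigl[a_{\tau}^{\tivee}-(1+\tfrac1\varepsilon)\ttk_{1}\bigl(\tfrac{\max uv}{\min uv}\bigr)^{12}\Bigr].
\]
Since $\ttk_{1}<a_{\tau}^{\tivee}(\min uv/\max uv)^{12}$, I would now fix $\varepsilon$ large enough that this bracket is a strictly positive constant $\delta_{0}$. The remaining terms of $g_{\beta}$ are $O(\beta)$, $O(\beta^{-3})$ and $O(\beta^{-7})$ in $L^{\infty}$, so $g_{\beta}\geqs\delta_{0}\beta^{5}(u^{\tivee})^{6}(v^{\tivee})^{5}-C\beta\geqs0$ once $\beta$ is large, with $C$ and the threshold for $\beta$ independent of $\phi$. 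Finally $\langle A_{\tiL}\phi_{+}+f(\phi_{+},\biw),u\varphi\rangle\geqs0$ for all $\varphi\in C^{\infty}_{+}$, and dividing out the positive continuous factor $u$ gives $A_{\tiL}\phi_{+}+f(\phi_{+},\biw)\geqs0$; since this holds for every $\biw_{\phi}$ with $\phi\in(0,\phi_{+}]\cap W^{s,p}$, the constant $\phi_{+}=\beta uv$ is a global super-solution of~\eqref{WF-LYs2}.

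The main obstacle is exactly this bookkeeping of $a_{\biw}$: unlike in Lemma~\ref{L:local-super-near}, where $a_{\biw}$ is a fixed datum, here it is controlled only through~\eqref{CS-aLw-bound} and therefore enters $g_{\beta}$ at the same order $\beta^{12}\cdot\beta^{-7}=\beta^{5}$ as the $a_{\tau}\phi^{5}$ term; positivity of the leading coefficient is not automatic and is salvaged only by combining the near-CMC inequality $\ttk_{1}<a_{\tau}^{\tivee}(\min uv/\max uv)^{12}$ with a sufficiently large choice of the splitting parameter $\varepsilon$, so as to preserve a strict inequality after the loss factor $1+\tfrac1\varepsilon$. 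Everything else --- absorbing $a_{\tiR}\phi$ through the defining equations of $u$ and $v$, and dominating the $O(\beta)$ remainder by the $O(\beta^{5})$ leading term --- is routine and parallels the proof of Lemma~\ref{L:local-super-near}.
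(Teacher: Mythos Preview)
Your proof is correct and follows the same approach as the paper: test $A_{\tiL}\phi+f(\phi,\biw)$ against $u\varphi$, use the equations for $u$ and $v$ via the identity~\eqref{e:auv} to eliminate the curvature and source terms, and then balance the $\beta^{5}$ contributions coming from $a_{\tau}\phi^{5}$ and from $a_{\cL\biw}^{\tiwedge}\leqs\ttk_{1}\|\phi_{+}\|_{\infty}^{12}+\ttk_{2}$. The only difference is cosmetic: the paper fixes $\varepsilon=1$ (i.e.\ uses $a_{\biw}\leqs 2a_{\sigma}+2a_{\cL\biw}$), obtains the condition $\ttk_{1}<\tfrac12 a_{\tau}^{\tivee}(\theta^{\tivee}/\theta^{\tiwedge})^{12}$, and then remarks that the factor $\tfrac12$ can be removed by the $\varepsilon$-trick as in Lemma~\ref{L:HC-Sp}; you carry out that $\varepsilon$-argument explicitly and thereby match the sharp hypothesis of the lemma directly.
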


\begin{proof}
We give a proof of (a). The proof of (b) is similar.
Proceeding as in the proof of the preceding lemma, for any $\varphi\in C^{\infty}_{+}$ we have
\begin{align*}
\langle A_{\tiL}\phi &+  f(\phi,\biw),u\varphi\rangle
=
\langle \nabla\phi,\nabla(u\varphi)\rangle + \langle a_{\tau}\phi^5 + a_{\tiR}\phi - a_{\rho}\phi^{-3}-a_{\biw}\phi^{-7},u\varphi\rangle\\
&=
\beta\langle u^2\nabla v,\nabla\varphi\rangle+\langle\beta\lambda u^2v+a_{\tau}u\phi^5 - a_{\rho}u\phi^{-3}-a_{\biw}u\phi^{-7},\varphi\rangle\\
&\geqs
\beta\langle u^2\nabla v,\nabla\varphi\rangle+\langle\beta\lambda u^2v+a_{\tau}u\phi^5 - a_{\rho}u\phi^{-3}-2[a_{\sigma}+a_{\cL\biw}]u\phi^{-7},\varphi\rangle\\
&=
\langle a_{\rho}[\beta-\beta^{-3}u^{-2}v^{-3}],\varphi\rangle
+\langle a_{\sigma}[\beta-2\beta^{-7}u^{-6}v^{-7}],\varphi\rangle\\
&\quad
+ \langle a_{\tau}[\beta^5u^6v^5-\beta v] - 2a_{\cL\biw}u\phi^{-7},\varphi\rangle.
\end{align*}
Then, choosing $\beta$ sufficiently large, and by using \eqref{CS-aLw-bound}, with $\theta=uv$ we infer
\begin{align*}
A_{\tiL}\phi+f(\phi,\biw)
&\geqs
    [a_{\tau}^{\tivee} (\theta^{\tivee})^5 - 2\ttk_1 (\theta^{\tiwedge})^{12}(\theta^{\tivee})^{-7}]\beta^5
    - p(\beta),
\end{align*}
where $p(\beta)=a_{\tau}(v^{\tiwedge}/u^{\tivee})\beta + 2\ttk_2(\theta^{\tivee})^{-7} \beta^{-7}$.
Now, if we have $\ttk_1<\frac12a_{\tau}^{\tivee}(\theta^{\tivee}/\theta^{\tiwedge})^{12}$,
then choosing $\beta$ large enough, we ensure that $\phi$ is a super-solution.
If we proceeded as in the proof of Lemma \ref{L:HC-Sp}, we could remove the factor $\frac12$ from the condition $\ttk_1<\frac12a_{\tau}^{\tivee}(\theta^{\tivee}/\theta^{\tiwedge})^{12}$; however, we omit it for clarity.
\end{proof}

We now also give some examples of non-constant global
sub-solutions $\phi_-$ which are compatible with $\phi_+$ above
in the sense that $0 < \phi_- \leqs \phi_+$.
Such a pair of compatible sub- and super-solutions are
needed to establish existence of solutions to the 
individual Hamiltonian constraint (Theorem~\ref{T:main3}), 
and are also needed again to establish existence of solutions to the 
coupled system (Theorems~\ref{T:main1} and~\ref{T:main2}).

\begin{lemma}
{\bf (Global sub-solution $h\not\in\cY^{-},\,\rho\not\equiv0$)}
\label{L:global-sub}
Let $(\cM,h)$ be a 3-dimensional, smooth, closed
Riemannian manifold with metric $h \in W^{s,p}$ 
in a non-negative Yamabe class.
Let $a_{\rho},a_{\tau}\in W^{s-2,p}_{+}\setminus\{0\}$.
Then, there exists a positive scalar $\phi_{-}\in W^{s,p}$ 
such that for any constant $\beta\in(0,1]$,
$\beta\phi_{-}$ is a global sub-solution of the
Hamiltonian constraint equation.
\end{lemma}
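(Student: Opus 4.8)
The key observation is that the coefficient $a_{\biw}$ enters the Hamiltonian constraint only through the term $-a_{\biw}\phi^{-7}$, which pairs nonpositively with any $\varphi\in C^\infty_+$ whenever $a_{\biw}\geqs0$, as is automatically the case for $\biw$ solving the momentum constraint. Hence, if a positive $\phi_{-}\in W^{s,p}$ satisfies
\begin{equation*}
A_{\tiL}\phi_{-}+a_{\tau}\phi_{-}^5+a_{\tiR}\phi_{-}-a_{\rho}\phi_{-}^{-3}\leqs0
\end{equation*}
in $W^{s-2,p}$, then $A_{\tiL}\phi_{-}+f(\phi_{-},\biw_\phi)\leqs0$ for \emph{every} solution $\biw_\phi$ of \eqref{WF-LYm2}, so $\phi_{-}$ is automatically global in the sense of Definition~\ref{D:barriers}. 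Thus the plan is to construct one positive $\phi_{-}\in W^{s,p}$ such that $\beta\phi_{-}$ has the displayed property for every $\beta\in(0,1]$; the ``global'' quantifier and the monotonicity in $\beta$ then come for free.

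The construction mirrors that of Lemma~\ref{L:local-super-near}(a), but with a small rather than a large rescaling. First, invoke Theorem~\ref{t:subcrit} with $q=1$ to obtain a continuous $u\in W^{s,p}$, $u>0$, solving $-\Delta u+\frac18Ru=\lambda u$, where the constant $\lambda\geqs0$ precisely because $h$ lies in a non-negative Yamabe class; on the compact manifold this gives $0<u^{\tivee}\leqs u^{\tiwedge}<\infty$. Next, for $t>0$ let $v_t\in W^{s,p}$ solve the linear problem
\begin{equation*}
\langle u^2\nabla v_t,\nabla\varphi\rangle+\langle\lambda u^2v_t+a_{\tau}v_t,\varphi\rangle=t\langle a_{\rho},\varphi\rangle,\qquad\forall\varphi\in C^{\infty}.
\end{equation*}
Since $\lambda u^2+a_{\tau}\neq0$ and $a_{\rho}\neq0$, the existence, $W^{s,p}$-regularity, and strict positivity of $v_t$ follow from the elliptic theory and the maximum principle (Lemma~\ref{l:max-princ}) exactly as in the proof of Lemma~\ref{L:local-super-near}, and by linearity $v_t=tv_1$ with $v_1>0$ fixed. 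As $v_1\in W^{s,p}\hookrightarrow C^0(\cM)$ is bounded, fix $t\in(0,1]$ small enough that the pointwise bounds $u^6v_t^4\leqs1$ and $u^2v_t^3\leqs1$ hold on $\cM$, and set $\phi_{-}:=uv_t\in W^{s,p}$, which is positive.

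Finally, to verify the claim, fix $\beta\in(0,1]$ and an arbitrary solution $\biw$ of the momentum constraint. Proceeding exactly as in the proof of Lemma~\ref{L:local-super-near}, using the identity~\eqref{e:auv} and the equation for $v_t$, one computes, for all $\varphi\in C^{\infty}_{+}$,
\begin{equation*}
\langle A_{\tiL}(\beta\phi_{-})+f(\beta\phi_{-},\biw),u\varphi\rangle
=\langle a_{\tau}v_t[\beta^5u^6v_t^4-\beta]+a_{\rho}[\beta t-\beta^{-3}u^{-2}v_t^{-3}]-\beta^{-7}a_{\biw}u^{-6}v_t^{-7},\varphi\rangle.
\end{equation*}
Each term in the bracket pairs nonpositively with $\varphi\geqs0$: the first because $a_{\tau},v_t\geqs0$ and $\beta^4u^6v_t^4\leqs u^6v_t^4\leqs1$; the second because $a_{\rho}\geqs0$ and $\beta^4tu^2v_t^3\leqs u^2v_t^3\leqs1$; the third because $a_{\biw}\geqs0$. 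Since $u>0$ is continuous, this gives $A_{\tiL}(\beta\phi_{-})+f(\beta\phi_{-},\biw)\leqs0$ in $W^{s-2,p}$, and as $\biw$ was arbitrary, $\beta\phi_{-}$ is a global sub-solution of \eqref{WF-LYs2}. The only genuinely delicate step is the solvability and strict positivity of $v_t$ in this low-regularity framework, where $a_{\tau},a_{\rho}$ lie merely in $W^{s-2,p}_{+}$ and $u^2$ is a rough coefficient; but this is exactly the situation already handled in the proof of Lemma~\ref{L:local-super-near}, so no new obstacle arises.
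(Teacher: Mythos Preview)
Your proof is correct and follows essentially the same route as the paper's: build $u>0$ from the sub-critical Yamabe problem (Theorem~\ref{t:subcrit}), solve the auxiliary linear equation~\eqref{e:rhocurv} for a positive $v$, set $\phi_-$ proportional to $uv$, and verify the sub-solution inequality by testing against $u\varphi$ via the identity~\eqref{e:auv}. Your extra parameter $t$ (scaling the right-hand side $a_\rho$) is just a relabeling of the paper's small $\beta$, and you are slightly more explicit than the paper in verifying the ``for all $\beta\in(0,1]$'' clause of the statement.
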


\begin{proof}
Let  $u\in W^{s,p}$ be a (weak) solution to
\begin{equation*}\textstyle
-\Delta u+\frac18Ru = \lambda u,
\quad u > 0,
\end{equation*}
with a constant $\lambda\geqs0$, which exists by Theorem \ref{t:subcrit} in Appendix \ref{sec:yamabe},
and let $v\in W^{s,p}$ be the solution to
\begin{equation}\label{e:rhocurv}
\langle u^2\nabla v, \nabla\varphi\rangle+\langle\lambda u^2v+a_{\tau}v,\varphi\rangle=\langle a_{\rho},\varphi\rangle,
\qquad\forall\varphi\in C^{\infty}.
\end{equation}
Since $a_{\rho},a_{\tau}\in W^{s-2,p}_{+}$ with $sp>3$, we have $v\in W^{s,p}\hookrightarrow L^\infty$, and Lemma \ref{l:max-princ} (maximum principle) in Appendix \ref{sec:maxprinciple} implies that $v>0$.
Let us define $\phi= \beta uv\in W^{s,p}$ for a constant $\beta > 0$. Then for any $\varphi\in C^{\infty}_{+}$ we have
\begin{align*}
\langle A_{\tiL}\phi +  f(\phi,\biw),u\varphi\rangle
&\leqs
\langle A_{\tiL}\phi,u\varphi\rangle + \langle a_{\tau}\phi^5 + a_{\tiR}\phi - a_{\rho}\phi^{-3},u\varphi\rangle\\
&=
\beta\langle u^2\nabla v,\nabla\varphi\rangle+\langle\beta\lambda u^2v+a_{\tau}u^6(\beta v)^5 - a_{\rho}u^{-2}(\beta v)^{-3},\varphi\rangle\\
&=
\beta\langle a_{\rho}[1-u^{-2}v^{-3}\beta^{-4}],\varphi\rangle
+\beta\langle a_{\tau}[u^6v^5\beta^4-1],\varphi\rangle,
\end{align*}
where the second line is obtained by \eqref{e:auv}, and the third line is from \eqref{e:rhocurv}.
Now, choosing $\beta>0$ sufficiently small, so that $1-u^{-2}v^{-3}\beta^{-4}\leqs0$ and $(\beta v)^4-1\leqs0$, we ensure that $\phi$ is a sub-solution.
\end{proof}

The following lemma extends Lemma \ref{L:HC-GSb}(a) 
to all reasonable metrics in the negative Yamabe class.

\begin{lemma}
{\bf (Global sub-solution $h\in\cY^{-}$)}
\label{L:global-sub-Y-}
Let $(\cM,h)$ be a 3-dimensional, smooth, closed
Riemannian manifold with metric $h \in W^{s,p}$ in $\cY^{-}(\cM)$.
In addition, let $a_{\tau}\in W^{s-2,p}$, and let the metric $h$ be conformally equivalent to a metric with scalar curvature $(-a_{\tau})$.
Then, there exists a positive scalar function $\phi_{-}\in W^{s,p}$ such that for any $\beta\in(0,1]$,
$\beta\phi_{-}$ is a global sub-solution of the Hamiltonian constraint equation.
\end{lemma}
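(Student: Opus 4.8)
The plan is to take $\phi_{-}$ to be the conformal factor that realizes the hypothesis on $h$, and then to verify the global sub-solution inequality by a short direct computation, in the spirit of Lemma~\ref{L:global-sub} and the proof of Lemma~\ref{L:local-super-near}(c). The key point is that, unlike in the super-solution situation, a sub-solution only has to dominate the $a_{\tau}$- and $a_{\tiR}$-contributions from above, since the terms $-a_{\rho}\phi^{-3}$ and $-a_{\biw}\phi^{-7}$ appearing in $f(\cdot,\biw)$ are automatically $\leqs0$; hence no smallness of the data is needed and the construction works for every $\biw$ at once.

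First I would use the assumption that $h$ is conformally equivalent to a metric with scalar curvature $(-a_{\tau})$: together with the regularity theory for the Yamabe-type problem in Appendix~\ref{sec:yamabe}, this yields a positive $u\in W^{s,p}$ solving, weakly,
\begin{equation*}
-\Delta u + a_{\tiR}u + a_{\tau}u^{5} = 0, \qquad u>0,
\end{equation*}
that is, $\langle\nabla u,\nabla\varphi\rangle = -\langle a_{\tiR}u + a_{\tau}u^{5},\varphi\rangle$ for all $\varphi\in C^{\infty}$; this is the same function constructed in the proof of Lemma~\ref{L:local-super-near}(c) and used in Lemma~\ref{L:global-super-near}(b). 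Since $sp>3$ we have $W^{s,p}\hookrightarrow C^{0}$, so $u$ is bounded above and below by positive constants, and hence $u^{5},u^{-3},u^{-7}\in W^{s,p}$, with the products $a_{\tau}u^{5}$, $a_{\rho}u^{-3}$, $a_{\biw}u^{-7}$ lying in $W^{s-2,p}$ by Corollary~\ref{c:alg}(a). I would then set $\phi_{-}:=u$.

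It remains to show that $\beta\phi_{-}$ is a sub-solution for \emph{every} $\biw\in\biW^{1,2r}$; since this covers in particular all $\biw_{\phi}$ solving the momentum constraint \eqref{WF-LYm2} with source $\phi\in[\beta\phi_{-},\infty)\cap W^{s,p}$, the global property of Definition~\ref{D:barriers} then follows. Fix $\beta\in(0,1]$ and $\varphi\in C^{\infty}_{+}$. Using the weak equation for $u$ to rewrite $\beta\langle\nabla u,\nabla\varphi\rangle$, the $a_{\tiR}$-terms cancel and we obtain
\begin{align*}
\langle A_{\tiL}(\beta u) + f(\beta u,\biw),\varphi\rangle
&= \beta\langle\nabla u,\nabla\varphi\rangle + \langle a_{\tiR}\beta u + a_{\tau}\beta^{5}u^{5} - a_{\rho}\beta^{-3}u^{-3} - a_{\biw}\beta^{-7}u^{-7},\varphi\rangle\\
&= \langle (\beta^{5}-\beta)\,a_{\tau}u^{5} - a_{\rho}\beta^{-3}u^{-3} - a_{\biw}\beta^{-7}u^{-7},\varphi\rangle\\
&\leqs 0,
\end{align*}
because $\varphi\geqs0$, $u>0$, $a_{\tau},a_{\rho},a_{\biw}\geqs0$, and $\beta^{5}\leqs\beta$ for $\beta\in(0,1]$. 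Hence $A_{\tiL}(\beta\phi_{-}) + f(\beta\phi_{-},\biw)\leqs0$, which is precisely \eqref{WF-Sb}, so $\beta\phi_{-}$ is a global sub-solution of the Hamiltonian constraint equation.

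I expect essentially no obstacle in the estimate itself; the only delicate point is the first step, namely producing the conformal factor $u$ in the correct class $W^{s,p}$ with $u>0$, which is where the negative-Yamabe hypothesis and the weak Yamabe theory of Appendix~\ref{sec:yamabe} enter. If one wants the normalization in the statement literally --- scalar curvature $(-a_{\tau})$ rather than the $(-8a_{\tau})$ that naturally appears after dividing the Yamabe equation by $8$ --- one simply replaces $u$ by $cu$ with $c=8^{-1/4}$, which leaves every sign above unchanged. This lemma is the natural extension of the constant sub-solution of Lemma~\ref{L:HC-GSb}(a) to arbitrary metrics in $\cY^{-}(\cM)$.
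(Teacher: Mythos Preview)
Your proof is correct and follows essentially the same approach as the paper: both take $\phi_{-}=u$ to be the conformal factor solving $-\Delta u + a_{\tiR}u + a_{\tau}u^{5}=0$, drop the nonpositive $-a_{\rho}\phi^{-3}-a_{\biw}\phi^{-7}$ terms, and use the equation for $u$ to reduce the sub-solution inequality to $(\beta^{5}-\beta)a_{\tau}u^{5}\leqs0$ for $\beta\in(0,1]$. Your weak-form presentation with a test function is just a notational variant of the paper's pointwise computation.
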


\begin{proof}
Let $u>0$ be the conformal factor which transforms $h$ into a metric with scalar curvature $\lambda=-8a_{\tau}$,
i.e., let $u\in W^{s-2,p}$ be a weak solution to
\begin{equation*}\textstyle
-\Delta u+\frac18Ru + a_{\tau}u^5 = 0,
\quad u > 0.
\end{equation*}
Taking $\phi= \beta u$ with a constant $\beta > 0$, we have
\begin{align*}
A_{\tiL}\phi+f(\phi,\biw)
&\leqs
A_{\tiL}\phi + a_{\tau}\phi^5
+ a_{\tiR}\phi
=\textstyle
-\beta\Delta u + a_{\tau}(\beta u)^5
+ \frac\beta8Ru\\
&=\beta a_{\tau}u^5(\beta^4-1).
\end{align*}
By choosing $\beta\in(0,1]$, we get the sub-solution.
\end{proof}

The following lemma shows that the additional condition on the metric 
appearing in Lemma~\ref{L:global-sub-Y-} is indeed not restrictive.
It is worth noting that this next result can be viewed as an apparently 
new non-existence result in the context of the non-CMC constraints, 
which is interesting in its own right.
This result was first proved in~\cite{dM05} for the case of $p=2$;
we just need to reinterpret it here in our setting.
It states that for there to be a (CMC or non-CMC) solution to the 
Hamiltonian constraint, the background metric $h_{ab}$ must be conformally
equivalent to a metric with scalar curvature equal to $(-a_{\tau})$.

\begin{lemma}
{\bf (Non-existence $h\in\cY^{-}$)}
   \label{L:nec}
Let $(\cM,h)$ be a 3-dimensional, smooth, closed
Riemannian manifold with metric $h \in W^{s,p}$ in $\cY^{-}(\cM)$.
Let $a_{\tau}\in W^{s-2,p}$,
and let there exist a solution to the Hamiltonian constraint equation.
Then, the metric $h$ is conformally equivalent to a metric with scalar curvature $(-a_{\tau})$.
\end{lemma}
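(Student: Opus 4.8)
The plan is to reinterpret Maxwell's argument from \cite{dM05} in the present weak setting, by constructing barriers for the Yamabe-type equation
\begin{equation}\label{e:ynec}
-\Delta u+\tfrac18 Ru+a_{\tau}u^5=0,\qquad u>0 .
\end{equation}

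First I would observe that the hypothesized solution is a super-solution of \eqref{e:ynec}. Let $\phi\in W^{s,p}$ with $\phi>0$ solve the Hamiltonian constraint $A_{\tiL}\phi+f(\phi,\biw)=0$ for the relevant data $a_{\tiR}=\tfrac18 R$ and $a_{\rho},a_{\biw}\in W^{s-2,p}_{+}$. Rearranging,
\[
-\Delta\phi+\tfrac18 R\phi+a_{\tau}\phi^5=a_{\rho}\phi^{-3}+a_{\biw}\phi^{-7}\geqs0,
\]
so, testing against $\varphi\in C^{\infty}_{+}$, $\phi$ is a weak super-solution of \eqref{e:ynec}.

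Next I would build a positive sub-solution of \eqref{e:ynec} from the assumption $h\in\cY^{-}(\cM)$. In the negative Yamabe class the conformal Laplacian $-\Delta+\tfrac18 R$ has strictly negative principal eigenvalue; let $u_0\in W^{s,p}$, $u_0>0$, be a principal eigenfunction (existence and positivity coming from the Yamabe-type theory of Appendix~\ref{sec:yamabe}, cf.\ Theorem~\ref{t:subcrit} with $q=1$), so $-\Delta u_0+\tfrac18 R u_0=\lambda u_0$ with $\lambda<0$. For $\varepsilon>0$ small, $\varepsilon u_0$ satisfies weakly
\[
-\Delta(\varepsilon u_0)+\tfrac18 R(\varepsilon u_0)+a_{\tau}(\varepsilon u_0)^5=\varepsilon u_0\bigl(\lambda+\varepsilon^4 a_{\tau}u_0^4\bigr)\leqs0 ,
\]
so $\varepsilon u_0$ is a sub-solution, and shrinking $\varepsilon$ further we also arrange $\varepsilon u_0\leqs\phi$, making $(\varepsilon u_0,\phi)$ a compatible pair of barriers for \eqref{e:ynec}. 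Applying the sub-/super-solution existence theory for the Hamiltonian constraint from \S\ref{sec:Hamiltonian} (specialized to $a_{\rho}=a_{\biw}=0$, using the shifted monotone Picard map of Lemmas~\ref{l:shift}, \ref{l:shift1}, \ref{l:shiftsubsup}; if $R$ is not continuous, first conformally pass to a metric with continuous scalar curvature as in \S\ref{sec:proof1}), one obtains a solution $u\in W^{s,p}$ of \eqref{e:ynec} with $\varepsilon u_0\leqs u\leqs\phi$, in particular $u>0$. By the Yamabe-type identity \eqref{E:yamabe}, the conformal metric $u^4 h$ has scalar curvature $u^{-5}(Ru-8\Delta u)=-8a_{\tau}$; replacing $u$ by the constant multiple $8^{1/4}u$ exhibits $h$ as conformally equivalent to a metric with scalar curvature $(-a_{\tau})$, which is the assertion.

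I expect the main obstacle to be the sub-solution step in the rough regime: making sense of $\varepsilon u_0$ as a genuine \emph{weak} sub-solution when $a_{\tau}$ lies only in the negative-order space $W^{s-2,p}$, so that ``$\lambda+\varepsilon^4 a_{\tau}u_0^4\leqs0$'' must be read as an inequality of elements of $W^{s-2,p}$, justified by the multiplication $W^{s,p}\cdot W^{s-2,p}\to W^{s-2,p}$ of Corollary~\ref{c:alg}(a) together with the continuity and positivity of $u_0$ (so that $\lambda u_0$ is a continuous function bounded away from $0$ with the correct sign, which then absorbs the nonnegative distribution $\varepsilon^4 a_{\tau}u_0^5$ for $\varepsilon$ small); and, relatedly, confirming that the monotone-iteration machinery of \S\ref{sec:Hamiltonian} applies to \eqref{e:ynec} despite the ``wrong-sign'' term $a_{\tau}u^5$, which is precisely what the large positive shift in Lemma~\ref{l:shift1} is designed to handle.
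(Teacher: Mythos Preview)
Your overall strategy matches the paper's: the Hamiltonian-constraint solution $\phi$ is a super-solution of the target equation $-\Delta u+\tfrac18 Ru+a_\tau u^5=0$, one builds a compatible sub-solution, and then invokes the barrier machinery (the paper appeals to Theorem~\ref{T:main3} with $a_\rho=a_{\biw}=0$). The super-solution step is correct. The gap is the sub-solution: your candidate $\varepsilon u_0$ requires $\lambda u_0 + \varepsilon^4 a_\tau u_0^5 \leqs 0$ in $W^{s-2,p}$, but $a_\tau$ lies only in $W^{s-2,p}_+$ and need not be bounded. A continuous strictly negative function cannot dominate an arbitrary nonnegative element of $W^{s-2,p}$: if $a_\tau \in L^r \setminus L^\infty$ (which the hypotheses permit), testing against bump functions $\varphi$ concentrated where $a_\tau$ is large makes $\varepsilon^4\langle a_\tau u_0^5,\varphi\rangle$ exceed $|\lambda|\langle u_0,\varphi\rangle$ for every fixed $\varepsilon>0$. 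So the resolution you propose for the obstacle you yourself flag does not work in the rough regime the lemma is meant to cover.

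The paper's fix is a more delicate sub-solution. Writing the eigenfunction as $-\Delta u+\tfrac18 Ru=-\lambda u$ with $\lambda>0$, it solves the auxiliary linear problem $\langle u^2\nabla v_\varepsilon,\nabla\varphi\rangle+\langle\lambda u^2 v_\varepsilon,\varphi\rangle=\langle\lambda u^2-\varepsilon a_\tau,\varphi\rangle$, for which $v_0\equiv1$ and $v_\varepsilon\to1$ in $W^{s,p}\hookrightarrow L^\infty$; one fixes $\varepsilon>0$ with $v_\varepsilon\geqs\tfrac12$ and sets $\psi=\beta u v_\varepsilon$. Using~\eqref{e:auv} together with the auxiliary equation, the weak form rearranges so that $a_\tau$ appears only through $\beta\langle a_\tau(u^6 v_\varepsilon^5\beta^4-\varepsilon),\varphi\rangle$, where the multiplier of $a_\tau$ is a \emph{continuous} function made nonpositive by taking $\beta$ small; the remaining term carries the factor $(1-2v_\varepsilon)\leqs0$. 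Thus the sign is controlled without any pointwise bound on $a_\tau$ --- this is the device your argument is missing.
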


\begin{proof}
It suffices to show that the equation
\begin{equation}\label{e:taucurv}\textstyle
-\Delta\psi+\frac18R\psi+a_{\tau}\psi^5=0,
\end{equation}
has a solution $\psi>0$.
Since the above equation is just a Hamiltonian constraint equation with $a_{\rho}=a_{\biw}=0$,
Theorem \ref{T:main3} establishes the proof upon 
constructing sub- and super-solutions to \eqref{e:taucurv}.

Let $\phi>0$ be a solution to the (general) Hamiltonian constraint equation.
Then, since both $a_{\rho}$ and $a_{\biw}$ are non-negative, we have
\begin{align*}\textstyle
-\Delta\phi+\frac18R\phi+a_{\tau}\phi^5\geqs0,
\end{align*}
which means that $\phi$ is a super-solution to \eqref{e:taucurv}.

Let  $u\in W^{s,p}$ be a (weak) solution to
\begin{equation*}\textstyle
-\Delta u+\frac18Ru = -\lambda u,
\quad u > 0,
\end{equation*}
with a constant $\lambda>0$, which exists by Theorem \ref{t:subcrit} in Appendix \ref{sec:yamabe},
and with a real parameter $\varepsilon$, let $v_{\varepsilon}\in W^{s,p}$ be the solution to
\begin{equation*}
\langle u^2\nabla v_{\varepsilon}, \nabla\varphi\rangle+\langle\lambda u^2v_{\varepsilon},\varphi\rangle=\langle\lambda u^2-a_{\tau}\varepsilon,\varphi\rangle,
\qquad\forall\varphi\in C^{\infty}.
\end{equation*}
We have $v_{\varepsilon}\equiv1$ for $\varepsilon=0$, and we have $v_{\varepsilon}\in W^{s,p}\hookrightarrow L^{\infty}$,
so as ${\varepsilon}$ goes to $0$, $v_{\varepsilon}$ tends to $1$ uniformly.
Let us fix ${\varepsilon}>0$ such that $v_{\varepsilon}\geqs\frac12$.
By taking $\psi= \beta uv_{\varepsilon}$ with a constant $\beta > 0$, and using \eqref{e:auv}, it holds for any $\varphi\in C^{\infty}_{+}$ that
\begin{align*}
\langle\nabla\psi,\nabla(u\varphi)\rangle+&\langle\textstyle\frac18R\psi+a_{\tau}\psi^5,u\varphi\rangle
=
\beta\langle u^2\nabla v_{\varepsilon},\nabla\varphi\rangle+\langle a_{\tau}u^6(\beta v_{\varepsilon})^5-\beta\lambda u^2v_{\varepsilon},\varphi\rangle\\
&=
\beta\langle a_{\tau}(u^6v_{\varepsilon}^5\beta^4-\varepsilon),\varphi\rangle
+\beta\lambda\langle u^6(1-2v_{\varepsilon}),\varphi\rangle.
\end{align*}
Now, by choosing $\beta>0$ small enough, we can ensure that $\psi$ is a sub-solution of \eqref{e:taucurv}.
\end{proof}

\subsection{A priori $L^{\infty}$ bounds on $W^{1,2}$ solutions}
   \label{sec:apriori}
We now establish some related {\em a~priori} $L^{\infty}$-bounds on
any $W^{1,2}$-solution to the Hamiltonian constraint equation.
Although such results are standard for semi-linear scalar problems
with monotone nonlinearities (for example, see~\cite{jJ85}), the
nonlinearity appearing in the Hamiltonian constraint becomes
non-monotone when $R$ becomes negative.
Nonetheless, we are able to obtain {\em a~priori} $L^{\infty}$-bounds
on solutions to the Hamiltonian constraint in all cases including
the non-monotone case.
See~\cite{mHgNgT08b} for an analogue of this result in the case of
compact manifolds with boundary; 
in that case a more general result is possible.

\begin{lemma}{\bf (Pointwise {\em a priori} bounds)}
\label{L:HC-ape}
Let $\phi\in W^{1,2}$ be any non-constant positive solution of the
Hamiltonian constraint equation \eqref{WF-LYs2}.
\begin{itemize}
\item[(a)] Let $a_{\tau\tiR}^{\tivee}:=\mathrm{ess~inf}\,(a_{\tau}+a_{\tiR})>0$,
and let $a_{\rho}^{\tiwedge}$ and $a_{\tbw}^{\tiwedge}$ be finite.
Then, $\phi$ satisfies the {\em a priori} bound
$$
\phi^4\leqs\max\left\{1,\frac{a_{\rho}^{\tiwedge}+a_{\tbw}^{\tiwedge}}{a_{\tau\tiR}^{\tivee}}\right\}.
$$

\item[(b)] Let $a_{\tau}^{\tivee}>0$ and let $a_{\rho}^{\tiwedge}$ and $a_{\tbw}^{\tiwedge}$ be finite.
Then, $\phi$ satisfies the {\em a priori} bound
$$
\phi^4\leqs\max\left\{1,
\frac{\sqrt{(a_{\tiR}^{\tivee})^2+a_{\tau}^{\tivee}(a_{\rho}^{\tiwedge}+a_{\tbw}^{\tiwedge})}-a_{\tiR}^{\tivee}}{a_{\tau}^{\tivee}}\right\}.
$$

\item[(c)] Let $a_{\rho\tbw}^{\tivee}:=\mathrm{ess~inf}\,(a_{\rho}+a_{\tbw})>0$,
and let $a_{\tau}^{\tiwedge}$ be finite.
Then, $\phi$ satisfies the {\em a priori} bound
$$
\phi^4\geqs\frac{a_{\rho\tbw}^{\tivee}}{\max\{a_{\rho\tbw}^{\tivee},a_{\tau}^{\tiwedge}+a_{\tiR}^{\tiwedge}\}}.
$$
\end{itemize}
\end{lemma}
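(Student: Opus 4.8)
The plan is to test the weak formulation of the Hamiltonian constraint against the constant (or appropriately chosen) test function on which the extremum of $\phi$ is attained, exploiting the fact that at an interior maximum the Laplacian term has a favorable sign. More precisely, since $\phi\in W^{1,2}$ is a positive solution of \eqref{WF-LYs2}, for every $\varphi\in C^\infty_+$ we have
\begin{equation*}
\langle\nabla\phi,\nabla\varphi\rangle + \langle a_\tau\phi^5 + a_{\tiR}\phi - a_\rho\phi^{-3} - a_{\tbw}\phi^{-7},\varphi\rangle = 0.
\end{equation*}
The idea is to choose $\varphi$ to be (a smooth approximation of) $(\phi - k)_+$ for a suitable constant $k$, or more simply to invoke a maximum-principle-type argument: one wants to show that the set where $\phi$ exceeds the claimed bound must be empty. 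First I would establish case (a). Let $M^4 = \max\{1, (a_\rho^{\tiwedge}+a_{\tbw}^{\tiwedge})/a_{\tau\tiR}^{\tivee}\}$ and suppose for contradiction that $\phi > M$ on a set of positive measure; test against $\varphi = (\phi-M)_+\in W^{1,2}_+$ (which is a valid test function after the usual density argument, since $\phi-M\in W^{1,2}$ and the truncation preserves the space). Then $\langle\nabla\phi,\nabla(\phi-M)_+\rangle = \|\nabla(\phi-M)_+\|_2^2\geqs 0$, so
\begin{equation*}
\int_{\{\phi>M\}} \left(a_\rho\phi^{-3} + a_{\tbw}\phi^{-7} - a_\tau\phi^5 - a_{\tiR}\phi\right)(\phi-M)\,dx \geqs 0.
\end{equation*}
On $\{\phi>M\}$ one estimates $a_\rho\phi^{-3}+a_{\tbw}\phi^{-7}\leqs (a_\rho^{\tiwedge}+a_{\tbw}^{\tiwedge})\phi^{-3}$ and $a_\tau\phi^5 + a_{\tiR}\phi\geqs (a_\tau+a_{\tiR})\phi\geqs a_{\tau\tiR}^{\tivee}\phi$ (here using $\phi^5\geqs\phi$ on that set when $M\geqs1$, and more carefully grouping to keep $a_\tau+a_{\tiR}$ together since only the sum is controlled); thus the integrand is bounded above by $(\phi-M)\phi^{-3}\left(a_\rho^{\tiwedge}+a_{\tbw}^{\tiwedge} - a_{\tau\tiR}^{\tivee}\phi^4\right)$, which is strictly negative on $\{\phi>M\}$ by the choice of $M$. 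This contradicts nonnegativity unless $\{\phi>M\}$ has measure zero, giving the bound $\phi^4\leqs M^4$ a.e.

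For case (b) the same scheme applies but the bookkeeping is sharper: one does not discard the $a_{\tiR}\phi$ term but instead solves the quadratic $a_\tau^{\tivee} t^2 + 2a_{\tiR}^{\tivee} t - (a_\rho^{\tiwedge}+a_{\tbw}^{\tiwedge}) = 0$ in $t=\phi^4$ (after writing $a_\tau\phi^5+a_{\tiR}\phi \geqs a_\tau^{\tivee}\phi^5 + a_{\tiR}^{\tivee}\phi$ and comparing with $(a_\rho^{\tiwedge}+a_{\tbw}^{\tiwedge})\phi^{-3}$, i.e.\ testing the sign of $a_\tau^{\tivee}\phi^8 + a_{\tiR}^{\tivee}\phi^4 - (a_\rho^{\tiwedge}+a_{\tbw}^{\tiwedge})$); the positive root is $(\sqrt{(a_{\tiR}^{\tivee})^2 + a_\tau^{\tivee}(a_\rho^{\tiwedge}+a_{\tbw}^{\tiwedge})} - a_{\tiR}^{\tivee})/a_\tau^{\tivee}$, matching the statement, and the $\max\{1,\cdot\}$ again handles the comparison of $\phi^5$ with $\phi$ when one has to convert powers. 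Case (c) is the mirror image: test against $(m-\phi)_+$ where $m^4 = a_{\rho\tbw}^{\tivee}/\max\{a_{\rho\tbw}^{\tivee}, a_\tau^{\tiwedge}+a_{\tiR}^{\tiwedge}\}$, use $\langle\nabla\phi,\nabla(m-\phi)_+\rangle = -\|\nabla(m-\phi)_+\|_2^2\leqs0$, and on $\{\phi<m\}$ bound $a_\rho\phi^{-3}+a_{\tbw}\phi^{-7}\geqs a_{\rho\tbw}^{\tivee}\phi^{-7}$ from below and $a_\tau\phi^5 + a_{\tiR}\phi\leqs (a_\tau^{\tiwedge}+a_{\tiR}^{\tiwedge})\phi$ from above (using $\phi<m\leqs1$ to absorb powers), concluding that $\{\phi<m\}$ is null.

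The main obstacle I anticipate is purely technical rather than conceptual: justifying that the truncations $(\phi-k)_+$ and $(k-\phi)_+$ are legitimate test functions in the weak formulation, given that the equation lives in the duality $W^{s-2,p}\times W^{2-s,p'}$ rather than in a clean $H^1$ pairing, and given that $\phi^{-3},\phi^{-7}$ must be controlled below — this is where positivity of $\phi$ (and the standing assumption $\phi\in W^{1,2}$, hence $\phi\in L^6$, but one needs $\phi$ bounded below a priori, which follows from $\phi>0$ continuous or from a preliminary Harnack/De Giorgi argument) enters. One clean route is to first regularize: replace $\phi^{-n}$ by $(\phi+\delta)^{-n}$-type approximations or, better, note that the a priori $L^\infty$ bound from above (cases (a),(b)) combined with a symmetric argument for the lower bound gives $0 < \phi^{\tivee}\leqs\phi\leqs\phi^{\tiwedge}<\infty$, after which all the nonlinear terms are bounded functions and the test-function manipulation is standard $H^1$ theory. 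I would organize the proof to extract the uniform positive lower bound first (or cite that $W^{1,2}$ positive solutions of such semilinear equations are bounded away from zero, e.g.\ via the sub-solution comparison already available from \S\ref{sec:barriers}), then run the truncation argument above for each of the three cases.
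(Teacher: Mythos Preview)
Your approach is essentially the same as the paper's: both test the weak equation against the truncation $(\phi-M)_+$ (respectively $(m-\phi)_+$), use $\langle\nabla\phi,\nabla(\phi-M)_+\rangle=\|\nabla(\phi-M)_+\|_2^2\geqs0$, and bound the nonlinear terms pointwise on the exceptional set to force a contradiction. The only cosmetic difference is that the paper phrases the contradiction via Poincar\'e (concluding $(\phi-M)_+$ is constant, hence $\phi$ constant, contradicting the hypothesis), whereas you argue strict negativity of the integrand directly; your technical caveats about the legitimacy of the truncated test function are exactly the right concerns, and the paper handles them implicitly by working in $W^{1,2}$.
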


\begin{proof}
We will only prove (a) since the other cases can be proven similarly.

Let $\chi\in W^{1,2}$ be any function with $\chi\geqs1$.
Then for $\varphi\in C^{\infty}_{+}$ we have
\begin{align*}
\langle f_{\biw}(\chi),\varphi\rangle
&\geqs (\chi^{\tivee})^{5} \langle a_{\tau},\varphi\rangle
+ \chi^{\tivee} \langle a_{\tiR},\varphi\rangle
- (\chi^{\tivee})^{-3} ( a_{\rho},\varphi)
- (\chi^{\tivee})^{-7} ( a_{\biw},\varphi)\\
&\geqs \bigl( a_{\tau\tiR}^{\tivee} \, \chi^{\tivee}
- (\chi^{\tivee})^{-3} [a_{\rho}^{\tiwedge}+a_{\biw}^{\tiwedge}]
\bigr)
\,\|\varphi\|_{1}.
\end{align*}
So we conclude that
\[
\langle f_{\biw}(\chi),\varphi\rangle \geqs 0
\qquad \forall \chi\geqs\phi^{\wedge},
\, \chi\in W^{1,2},
\qquad  \forall \varphi\in C^{\infty}_{+},
\]
where $(\phi^{\tiwedge})^{4}=\max\{1,\frac{a_{\rho}^{\tiwedge}+a_{\biw}^{\tiwedge}}{a_{\tau\tiR}^{\tivee}}\}$.

Now, suppose that $\phi\in W^{1,2}$ is a solution of the Hamiltonian constraint equation,
such that $\phi\not\leqs\phi^{\tiwedge}$.
Denoting by $(\phi-\phi^{\tiwedge})^{+}$ the positive part of $\phi-\phi^{\tiwedge}$ (cf. Appendix \ref{sec:maxprinciple}),
then we have
\begin{align*}
0 &\geqs -\langle f_{\biw}(\phi),(\phi-\phi^{\tiwedge})^{+}\rangle
=(\nabla\phi,\nabla(\phi-\phi^{\tiwedge})^{+})
=(\nabla(\phi-\phi^{\tiwedge})^{+},\nabla(\phi-\phi^{\tiwedge})^{+})\\
&\geqs c\|(\phi-\phi^{\tiwedge})^{+}-\overline{(\phi-\phi^{\tiwedge})^{+}}\|_2^2,
\end{align*}
where $c>0$,
and $\overline{(\phi-\phi^{\tiwedge})^{+}}$ is the integral average of $(\phi-\phi^{\tiwedge})^{+}$.
This implies that $\phi$ is constant, leading to a contradiction.
\end{proof}

\section{Proof of the main results}
   \label{sec:proof}

It is convenient to prove Theorem~\ref{T:main2} first, which is
the most general of the three; the proofs of Theorem~\ref{T:main1}
and Theorem~\ref{T:main3} involve minor modifications of the
proof of Theorem~\ref{T:main2}.

\subsection{Proof of Theorem~\ref{T:main2}}
\label{sec:proof1}
Our strategy will be to prove the theorem first for the case $s\leqs2$, and then to bootstrap to include the higher regularity cases.

{\em Step 1: The choice of function spaces.}
We have the (reflexive) Banach spaces
$X=W^{s,p}$
and
$Y=\biW^{e,q}$,
where $p,q \in (3,\infty)$, $s=s(p) \in (1+\frac{3}{p},2]$,
and $e=e(p,s,q)\in(1,s]\cap(1+\frac3q,s-\frac3p+\frac3q]$.
We have the ordered Banach space
$Z=W^{\tilde{s},p}$
with the compact embedding $X=W^{s,p}\hookrightarrow W^{\tilde{s},p}=Z$,
for $\tilde{s} \in (\frac{3}{p}, s)$.
The interval $[\phi_{-},\phi_{+}]_{\tilde{s},p}$ is
nonempty (by compatibility of the barriers we will choose below),
and by Lemma~\ref{L:wsp-interval} on page \pageref{L:wsp-interval}
it is also convex with respect to the vector space structure of
$W^{\tilde{s},p}$ and closed with respect to the norm topology
of $W^{\tilde{s},p}$.
We then take $U=[\phi_-,\phi_+]_{\tilde{s},p} \cap \overline{B}_M$ 
for sufficiently
large $M$ (to be determined below), where
$\overline{B}_M$ is the closed ball in $Z=W^{\tilde{s},p}$
of radius $M$ about the origin, ensuring that $U$ is
non-empty, convex, closed, and bounded as a subset of $Z=W^{\tilde{s},p}$.

{\em Step 2: Construction of the mapping $S$.}
We have $\bib_j\in\biW^{e-2,q}$, and $\bib_\tau\in\biL^{z}$ with $z=\frac{3q}{3+(2-e)q}$ so that $\biL^{z}\hookrightarrow\biW^{e-2,q}$.
Moreover, since the metric admits no conformal Killing field, by Lemma~\ref{T:w-MC} the momentum constraint equation is uniquely solvable 
for any ``source'' $\phi\in [\phi_-,\phi_+]_{\tilde{s},p}$.
The ranges for the exponents ensure that Lemma~\ref{T:MC-E-Lip1} holds, so that the momentum
constraint solution map 
$$S : [\phi_{-},\phi_{+}]_{\tilde{s},p} \to\biW^{e,q}=Y,$$
is continuous.

{\em Step 3: Construction of the mapping $T$.}
Define $r=\frac{3p}{3+(2-s)p}$, so that the
continuous embedding $L^{r}\hookrightarrow W^{s-2,p}$ holds.
Since the pointwise multiplication is bounded on $L^{2r}\otimes L^{2r}\to L^{r}$, and $\biw\in\biW^{e,q}\hookrightarrow\biW^{1,2r}$, we have $a_{\biw}\in W^{s-2,p}$ by $\sigma\in L^{2r}$.
The embeddings $W^{1,z}\hookrightarrow W^{e-1,q}\hookrightarrow L^{2r}$ also guarantee that $a_{\tau}=\frac1{12}\tau^2\in W^{s-2,p}$.
We have the scalar curvature $R\in W^{s-2,p}$, and these considerations show that the Hamiltonian constraint equation is well defined
with $[\phi_-,\phi_+]_{s,p}$ as the space of solutions.

Suppose for the moment that the scalar curvature $R$ of the background metric $h$ is continuous, 
and by using the map $T^s$ introduced in Lemma \ref{l:shift}, define the map $T$ by $T(\phi,\biw)=T^s(\phi,a_{\biw})$,
where $a_{\biw}$ is now considered as an expression depending on $\biw$.
Then Lemma~\ref{l:shift} implies that the map $T:[\phi_{-},\phi_{+}]_{\tilde{s},p}\times\biW^{e,q}\to W^{s,p}$ is continuous for any reasonable shift $a_s$,
which, by Lemma~\ref{l:shift1}, can be chosen so that $T$ is monotone in the first variable.
Combining the monotonicity with Lemma \ref{l:shiftsubsup}, we infer that the interval $[\phi_{-},\phi_{+}]_{\tilde{s},p}$
is invariant under $T(\cdot,a_{\biw})$ if $\biw\in S([\phi_{-},\phi_{+}]_{\tilde{s},p})$.
Since $\biL^z\hookrightarrow\biW^{e-2,q}$, from Theorem \ref{T:w-MC} we have 
$$
\|\biw\|_{e,q}\leqs C\, \|\bib_{\tau}\phi^6+\bib_{j}\|_{e-2,q}\leqs C\, \|\phi_{+}\|_{\infty}^6\|\bib_{\tau}\|_{z}+C\,\|\bib_{j}\|_{e-2,q}
$$ for any $\biw\in S([\phi_-,\phi_+]_{\tilde{s},p})$.
In view of Lemma~\ref{T:HC-ball-gen}, this shows that
there exists a closed ball $\overline{B}_M\subset W^{\tilde{s},p}$ such that
\begin{equation*}
\phi \in [\phi_-,\phi_+]_{\tilde{s},p}\cap \overline{B}_M,
\quad \biw \in S([\phi_-,\phi_+]_{\tilde{s},p}\cap \overline{B}_M)
\quad\Rightarrow\quad
T(\phi,\biw)\in\overline{B}_M.
\end{equation*}
Under the conditions in the above displayed formula, from the invariance of the interval $[\phi_{-},\phi_{+}]_{\tilde{s},p}$, we indeed have $T(\phi,\biw)\in U=[\phi_-,\phi_+]_{\tilde{s},p}\cap \overline{B}_M$.

However, the scalar curvature of $h$ may be not continuous, and in general it is not clear how to introduce a shift so that the resulting operator is monotone.
Nevertheless, we can conformally transform the metric into a metric with continuous scalar curvature, cf. Theorem \ref{t:yclass}, and by using the conformal 
covariance of the 
Hamiltonian constraint, we will be able to construct an appropriate mapping $T$.
Let $\tilde{h}=\theta^4h$ be a metric with continuous scalar curvature, where $\theta\in W^{s,p}$ is the (positive) conformal factor of the scaling.
Let $\tilde{T}^s$ be the mapping introduced in Lemma \ref{l:shift}, 
corresponding to the Hamiltonian constraint equation with the background metric $\tilde{h}$, 
and the coefficients $\tilde{a}_{\tau}=a_{\tau}$, and $\tilde{a}_{\rho}=\theta^{-8}a_{\rho}$.
With $\tilde{a}_{\biw}=\theta^{-12}a_{\biw}$, this {\em scaled} Hamiltonian constraint equation has sub- and super-solutions $\theta^{-1}\phi_{-}$ and $\theta^{-1}\phi_{+}$,
respectively, as long as $\phi_{-}$ and $\phi_{+}$ are sub- and super-solutions respectively of the original Hamiltonian constraint equation, cf. Appendix \ref{sec:conf-inv}.
We choose the shift in $\tilde{T}^s$ so that it is monotone in $[\theta^{-1}\phi_{-},\theta^{-1}\phi_{+}]_{\tilde{s},p}$. 
Then by the monotonicity and the above mentioned sub- and super-solution property under conformal scaling, for $\biw\in S([\phi_{-},\phi_{+}]_{\tilde{s},p})$, $\tilde{T}^s(\cdot,\theta^{-12}a_{\biw})$ is invariant on $[\theta^{-1}\phi_{-},\theta^{-1}\phi_{+}]_{\tilde{s},p}$.
Finally, we define
$$
T(\phi,\biw)=\theta\tilde{T}^s(\theta^{-1}\phi,\theta^{-12}a_{\biw}),
$$
where, as before, $a_{\biw}$ is considered as an expression depending on $\biw$.
From the pointwise multiplication properties of $\theta$ and $\theta^{-1}$,
the map $T:[\phi_{-},\phi_{+}]_{\tilde{s},p}\times\biW^{e,q}\to W^{s,p}$ is continuous,
and from the monotonicity and Lemma \ref{T:HC-ball-gen} , $T(\cdot,\biw)$ is invariant on $U=[\phi_-,\phi_+]_{\tilde{s},p}\cap \overline{B}_M$ for $\biw\in S(U)$,
where $M$ is taken to be sufficiently large.
Moreover, if the fixed point equation
$$
\phi=\theta\tilde{T}^s(\theta^{-1}\phi,\theta^{-12}a_{\biw}),
$$
is satisfied, then $\theta^{-1}\phi$ is a solution to the scaled Hamiltonian constraint equation with $\tilde{a}_{\biw}=\theta^{-12}a_{\biw}$,
and so by conformal covariance,
$\phi$ is a solution to the original Hamiltonian constraint equation, cf. Appendix \ref{sec:conf-inv}.

{\em Step 4: Barrier choices and application of the fixed point theorem.}
At this point, Theorem \ref{T:FIXPT2} implies the Main Theorem \ref{T:main2},
provided that we have an admissible pair of barriers for the Hamiltonian constraint. 
The ranges for the exponents ensure through
Corollary~\ref{T:MC-E-aw} that we can use the
estimate \eqref{CS-aLw-bound};
see the discussion following the estimate on page \pageref{CS-aLw-bound}.
We will separate into the two cases in the theorem,
depending on which Yamabe class we are in:
\begin{itemize}
\item[(a)]
  $h_{ab}$ is in $\cY^{-}(\cM)$:
  We use the global constant super-solution from Lemma \ref{L:HC-Sp}(a)
  or the non-constant super-solution from Lemma \ref{L:global-super-near}
  depending on whether $\rho$ and $\sigma$ are both in $L^{\infty}$,
  and the global sub-solution from Lemma \ref{L:global-sub-Y-}.
\item[(b)]
  $h_{ab}$ is in $\cY^{0}(\cM)$ or in $\cY^{+}$:
  We use the global constant super-solution from Lemma \ref{L:HC-Sp}(a)
  or the non-constant super-solution from Lemma \ref{L:global-super-near}
  depending on whether $\rho$ and $\sigma$ are both in $L^{\infty}$,
  and the global sub-solution from Lemma \ref{L:global-sub} 
  or Lemma \ref{L:HC-GSb}(c).
\end{itemize}
This concludes the proof for the case $s\leqs2$.

{\em Step 5: Bootstrap.}
Now suppose that $s>2$.
First of all we need to show that the equations are well defined in the sense that the involved operators are bounded in appropriate spaces.
All other conditions being obviously satisfied, we will show that $a_{\tau}\in W^{s-2,p}$, and $a_{\biw}\in W^{s-2,p}$ for any $\biw\in\biW^{e,q}$.
Since $\tau$, $\sigma$ and $\cL\biw$ belong to $W^{e-1,q}$, it suffices to show that the pointwise multiplication is bounded on $W^{e-1,q}\otimes W^{e-1,q}\to W^{s-2,p}$,
and by employing Corollary \ref{c:alg}(b) in Appendix, we are done as long as $s-2\leqs e-1\geqs0$, $s-2-\frac3p<2(e-1-\frac3q)$, and $s-2-\frac3p\leqs e-1-\frac3q$.
After a rearrangement these conditions read as $e\geqs1$, $e\geqs s-1$, $e>\frac3q+\frac{d}2$, and $e\geqs\frac3q+d-1$, with the shorthand $d=s-\frac3p>1$, the latter inequality by the hypothesis of the theorem.
We have $d-1>\frac{d}2$ for $d>2$, and $1\geqs\frac{d}2$ for $d\leqs2$,
meaning that the condition $e>\frac3q+\frac{d}2$ is implied by the hypotheses $e\geqs\frac3q+d-1$ and $e>1+\frac3q$.
So we conclude that the constraint equations are well defined.

Next, we will treat the equations as equations defined with $s=e=2$ and with $p$ and $q$ appropriately chosen.
This is possible, since if the quadruple $(p,s,q,e)$ satisfies the hypotheses of the theorem,
then $(\tilde p,\tilde s=2,\tilde q,\tilde e=2)$ satisfies the hypotheses too, provided that 
$2-\frac3{\tilde p}\leqs s-\frac3p$, and 
$1<2-\frac3{\tilde q}\leqs e-\frac3q$.
Since the latter conditions reflect the Sobolev embeddings $W^{s,p}\hookrightarrow W^{2,\tilde p}$ and $W^{e,q}\hookrightarrow W^{2,\tilde q}\hookrightarrow W^{1,\infty}$,
the coefficients of the equations can also be shown to satisfy sufficient conditions for posing the problem for $(\tilde p,2,\tilde q,2)$.
Finally, we have $\tau\in W^{e-1,q}\hookrightarrow W^{1,\tilde{q}}=W^{1,z}$ since $z=\tilde{q}$ by $\tilde{e}=2$ for this new formulation.
Now, by the special case $s\leqs2$ of this theorem that is proven in the above steps, 
under the remaining hypotheses including the conditions on the metric and the near-CMC condition,
we have $\phi\in W^{2,\tilde{p}}$ with $\phi>0$ and $\biw\in\biW^{2,\tilde{q}}$ solution to the coupled system.

To complete the proof we only need to show that these solutions indeed satisfy $\phi\in W^{s,p}$ and $\biw\in\biW^{e,q}$.
Suppose that $\phi\in W^{s_1,p_1}$ and $\biw\in\biW^{e_1,q_1}$, with 
$1<s_1-\frac3{p_1}\leqs s-\frac3p$,
$1<e_1-\frac3{q_1}\leqs e-\frac3q$,
$\max\{2,s-2\}\leqs s_1\leqs s$, and
$\max\{2,e-2\}\leqs e_1\leqs\min\{e,s\}$.
Then we have $\bib_\tau\phi^6+\bib_j\in\biW^{e-2,q}$, 
and so Corollary \ref{C:ell-est} from Appendix \ref{sec:killing} implies that $\biw\in\biW^{e,q}$.
This implies that $a_{\biw}\in W^{s-2,p}$, and by employing Corollary \ref{C:ell-est} once again, we get $\phi\in W^{s,p}$.
The proof is completed by induction.
\qed

\subsection{Proof of Theorem~\ref{T:main1}}
The proof is identical to the proof of Theorem~\ref{T:main2},
except for the particular barriers used.
In the proof of Theorem~\ref{T:main2}, the near-CMC condition
is used to construct global barriers satisfying
$$
0 < \phi_{-} \leqs \phi_{+} < \infty,
$$
for all three Yamabe classes, and then the supporting
results for the operators $S$ and $T$ established 
in~\S\ref{sec:momentum} and~\S\ref{sec:Hamiltonian}
are used to reduce the proof to invoking Theorem~\ref{T:FIXPT2}.
The construction of $\phi_{+}$ is in fact the only place in the
proof of Theorem~\ref{T:main2} that requires the near-CMC condition.
Here, the proof is identical, except that the additional conditions 
made on the background metric $h_{ab}$ (that it be in $\cY^{+}(\cM)$),
and on the data (the smallness conditions on $\sigma$, $\rho$, and $j$)
allow us to make use of the alternative construction 
of a global super-solution given 
in Lemma~\ref{L:global-super}, together with compatible global 
sub-solution given in Lemma~\ref{L:global-sub}, properly scaled
for compatibility with the super-solution.
Theorem \ref{T:main1} now follows from Theorem \ref{T:FIXPT2},
without the use of near-CMC conditions.
\qed

\subsection{Proof of Theorem~\ref{T:main3}}
The CMC result in this theorem can be proved using
the same analysis framework used for the proofs of the two non-CMC
results in Theorem~\ref{T:main1} and Theorem~\ref{T:main2} above.
Therefore, the proof follows the same general outline of the 
proof of Theorem~\ref{T:main2}, with slightly different
spaces and supporting results.
The main difference is that we can avoid having to construct 
``global'' barriers and getting uniform bounds on the solution to the 
momentum constraint, since it is solved only once {\em a priori}
and then is input as data into the nonlinearity of the Hamiltonian constraint.

The case (d) follows from the Yamabe classification, cf. Appendix \ref{sec:yamabe}.

Since otherwise we can use the conformal covariance 
of the Hamiltonian constraint as in Section \ref{sec:proof1},
for simplicity, assume that the scalar curvature of the background metric is continuous.
Also assume that $s\leqs2$, and let us look at the hypotheses of Theorem~\ref{T:FIXPT2}.
We have the (reflexive) Banach spaces
$X=W^{s,p}$
and
$Y=\biW^{1,2r}$,
where $p \in (\frac32,\infty)$, $s=s(p) \in (\frac{3}{p},\infty) \cap [1,2]$,
and $r=r(s,p) = \frac{3p}{3 + (2-s)p}$.
On the diagram in Figure \ref{f:cmc},  for $s\leqs2$
the space $\biW^{1,2r}$ corresponds to the lower right corner of the shaded parallelogram,
and so $\biW^{1,2r}$ contains all the spaces $\biW^{e,q}$ which are represented by the points in the shaded parallelogram.
In fact, $\biW^{1,2r}$ is outside of this parallelogram, because of the strict inequality relating $e$ and $q$
in order to have the boundedness of the pointwise multiplication on $W^{e-1,q}\otimes W^{e-1,q}\to W^{s-2,p}$
by using Corollary \ref{c:alg}(b).
However, the conditions of Corollary \ref{c:alg}(b) are not necessary conditions when some of the smoothness indices are integers,
for example, in our case the pointwise multiplication is bounded on $L^{2r}\otimes L^{2r}\to L^{r}$, even though these spaces do not satisfy the conditions of the corollary.
As a consequence, as we have seen e.g. in Section \ref{sub:weak}, the constraint equations are well defined for these spaces.

We have the ordered Banach space
$Z=W^{\tilde{s},p}$
with the compact embedding $X=W^{s,p}\hookrightarrow W^{\tilde{s},p}=Z$,
for $\tilde{s} \in (\frac{3}{p}, s)$.
The interval $[\phi_{-},\phi_{+}]_{\tilde{s},p}$ is
nonempty (by compatibility of the barriers we will choose below),
and by Lemma~\ref{L:wsp-interval} on page \pageref{L:wsp-interval}
it is also convex with respect to the vector space structure of
$W^{\tilde{s},p}$ and closed with respect to the norm topology
of $W^{\tilde{s},p}$.
We then take $U=[\phi_-,\phi_+]_{\tilde{s},p} \cap \overline{B}_M$ 
for sufficiently large $M$ (to be determined below), where
$\overline{B}_M$ is the closed ball in $Z=W^{\tilde{s},p}$
of radius $M$ about the origin, ensuring that $U$ is
non-empty, convex, closed, and bounded as a subset of $Z=W^{\tilde{s},p}$.

We take as $T$ the shifted Picard mapping $T^s$
having as its fixed-point a solution to the Hamiltonian constraint,
and we take $S(\phi)=\biw=-A_{\IL}^{-1}\bib_j \in \biW^{1,2r}$
which is independent of $\phi$, since the momentum equation
decouples from the Hamiltonian constraint in this case.
The map $S$,
which is constant as a function of $\phi$ due to the CMC de-coupling,
is trivially continuous as a 
map $S : U \to \biW^{1,2r}=Y$.
We now consider properties we have for $T$.
By Lemma~\ref{l:shift}, $T:U \times \mathcal{R}(S) \to W^{s,p} = X$ 
is a continuous map.
By Lemma~\ref{l:shift1}, $T$ is invariant on the closed interval 
$[\phi_-,\phi_+]_{\tilde{s},p}$, and by Lemma~\ref{T:HC-ball-gen},
$T$ is invariant on $U=[\phi_-,\phi_+]_{\tilde{s},p} \cap \overline{B}_M$.
To summarize, $T$ is invariant on the non-empty, closed, convex, bounded set 
$U$.

Finally, Theorem \ref{T:FIXPT2} implies the Main Theorem \ref{T:main3},
as long as we have an admissible pair of barriers for the 
Hamiltonian constraint. 
That is when we need to separate into the three remaining
cases in the theorem, depending on which Yamabe class we are in:
\begin{itemize}
\item[(a)]
 $h_{ab}$ is in $\cY^{-}(\cM)$; $\tau\neq0$:
  We take the super-solution from Lemma~\ref{L:local-super-near}(c),
  and we take the sub-solution from Lemma~\ref{L:global-sub-Y-}.
  These lemmata require that the metric $h_{ab}$ is conformally equivalent 
  to a metric with scalar curvature $(-a_{\tau})$, and we shall verify this condition.
  By conformal invariance, it suffices to verify the condition for metrics with
  continuous and negative scalar curvature,
  meaning that we have to solve the equation \eqref{e:taucurv} with $R<0$ continuous and $a_{\tau}>0$ constant.
  Indeed, this equation has a positive solution $\psi\in W^{s,p}$
  as the constants $\psi_{-}=(\frac{\min |R|}{8a_{\tau}})^{1/4}$
  and $\psi_{+}=(\frac{\max |R|}{8a_{\tau}})^{1/4}$
  are respectively sub- and super-solutions of \eqref{e:taucurv}.
\item[(b)]
 $h_{ab}$ is in $\cY^{+}(\cM)$; $\rho\neq0$ or $\sigma\neq0$:
  We take the super-solution from Lemma~\ref{L:local-super-near}(b),
  and we take the sub-solution from Lemma~\ref{L:global-sub}.
  For the case $\rho=0$ and $\sigma\neq0$, a local sub-solution can easily be constructed 
  following the approach in the proof of Lemma~\ref{L:global-sub}.
\item[(c)]
 $h_{ab}$ is in $\cY^{0}(\cM)$; $\tau\neq0$; $\rho\neq0$ or $\sigma\neq0$:
  We take the super-solution from Lemma~\ref{L:local-super-near}(a),
  and we take the sub-solution from Lemma~\ref{L:global-sub}.
  The case $\rho=0$ and $\sigma\neq0$ is treated as above.
\end{itemize}
To complete the proof one can bootstrap as in Section \ref{sec:proof1}.
\qed

\section{Summary}
   \label{sec:summary}

We began in~\S\ref{sec:constraints} by
summarizing the conformal decomposition of Einstein's constraint
equations introduced by Lichnerowicz and York, on a closed manifold.
After this setting up of the notation,
we gave an overview of our main results in~\S\ref{sec:main},
represented by three new weak solution existence results for the
Einstein constraint equations in the far-from-CMC, near-CMC, and CMC cases.
In~\S\ref{sec:individual} we then developed the necessary results
we need for the individual constraint equations in order to
analyze the coupled system.
In particular, in~\S\ref{sec:momentum},
we first developed some basic technical results for the momentum constraint
operator under weak assumptions on the problem data.
We also established the properties we
need for the momentum constraint solution mapping $S$ appearing 
in the analysis of the coupled system.
In~\S\ref{sec:Hamiltonian}, we assumed the existence of barriers
$\phi_-$ and $\phi_+$
(weak sub- and super-solutions) to the Hamiltonian constraint equation,
forming a nonempty positive bounded interval, and then established the 
properties we need for the Hamiltonian constraint Picard mapping $T$ 
appearing in the analysis of the coupled system.
We then derived
several weak global sub- and super-solutions in~\S\ref{sec:barriers},
based both on constants and on more complex non-constant constructions.
While the sub-solutions are similar to those found previously in
the literature, some of the super-solutions were new.
In particular, we gave two super-solution constructions that do
not require the near-CMC condition.
The first was constant, and requires that the scalar curvature be
strictly globally positive.
The second was based on a scaled solution to a Yamabe-type problem,
and is valid for any background metric in the positive Yamabe class.

In~\S\ref{sec:proof}, we proved the main results.
In particular, using topological fixed-point arguments and global barrier
constructions, we combined the results for the individual constraints and
the global barriers to establish existence of coupled non-CMC
weak solutions with (positive) conformal factor
$\phi \in W^{s,p}$
where $p\in(1,\infty)$
and $s(p) \in (1+\frac{3}{p},\infty)$.
In the CMC case, the regularity can be reduced to
$p \in (1,\infty)$ and $s(p) \in (\frac{3}{p}, \infty) \cap [1,\infty)$.
In the case of $s=2$, we reproduce the CMC existence results of 
Choquet-Bruhat~\cite{yCB04}, and
in the case $p=2$, we reproduce the CMC existence results of
Maxwell~\cite{dM05}, but with a different proof;
our CMC proof goes through the same analysis framework that we use to 
obtain the non-CMC results (Theorems~\ref{T:FIXPT1} and~\ref{T:FIXPT2}).
We also assembled a number of new supporting technical results in the
body of the paper and in several appendices, including: 
topological fixed-point arguments designed for the Einstein
constraints;
construction and properties of general Sobolev classes $W^{s,p}$
and elliptic operators on closed manifolds with weak metrics;
the development of a very weak solution theory for the momentum 
constraint; {\em a priori} $L^{\infty}$-estimates for 
weak $W^{1,2}$-solutions to the Hamiltonian constraint; 
Yamabe classification of non-smooth metrics in general 
Sobolev classes $W^{s,p}$; and a discussion and analysis of conformal 
covariance 
and the connection between conformal rescaling and the near-CMC condition.

An important feature of the results we presented here is the absence of
the near-CMC assumption in the case of the rescaled background metric
in the positive Yamabe class, as long as
the freely specifiable part of the data given by the matter fields
(if present) and the traceless-transverse part of the rescaled extrinsic
curvature are taken to be sufficiently small.
In this case, the mean extrinsic curvature can be taken to be an
arbitrary smooth function without restrictions on the size of its
spatial derivatives, so that it can be arbitrarily far from constant.
Under these conditions, we have the first existence result for
non-CMC solutions without the near-CMC condition.
The two advances in the analysis of the Einstein constraint equations
make these results possible were: A topological fixed-point theorem
based on compactness arguments that is free of the near-CMC condition
(Theorems~\ref{T:FIXPT1} and~\ref{T:FIXPT2} and in~\cite{mHgNgT08b}),
and a new construction of global super-solutions for the
Hamiltonian constraint that is similarly free of the near-CMC
condition (Lemma~\ref{L:HC-Sp} and Lemma~\ref{L:global-super}).
We note that the near-CMC-free constructions based on
scaled solutions to a Yamabe-like problem also work for compact manifolds 
with boundary and other cases; see e.g.~\cite{mHgNgT08b}.

Finally, we point out that
our results here and in~\cite{mHgNgT08b,mHgNgT07a} can be viewed as
reducing the remaining open questions of existence of 
non-CMC (weak and strong) solutions without near-CMC conditions to 
two more basic and clearly stated open problems:
(1) Existence of near-CMC-free global {\em super}-solutions for the
Hamiltonian constraint equation when the background metric 
is in the non-positive Yamabe classes and for large data;
and 
(2) existence of near-CMC-free global {\em sub}-solutions for the
Hamiltonian constraint equation when the background metric 
is in the positive Yamabe class in vacuum (without matter).
However, an important new development, which occurred a few months after 
the first draft of this article was made available, is that Maxwell 
has now shown~\cite{dM08} how a related topological fixed-point argument 
can be constructed so that a global sub-solution is not needed, as long 
as the global super-solution is available; this allows for the extension 
of the far-CMC results in this article to the vacuum case without 
having to solve problem (2).

\section{Acknowledgments}
The authors would like to thank Jim Isenberg, David Maxwell,
and Daniel Pollack for many very insightful comments and 
suggestions about this work.
We would like to thank David Maxwell in particular for his careful reading 
of earlier drafts of this work and for pointing out various errors.
MH was supported in part by NSF Awards~0715146, 0411723, 
and 0511766, and DOE Awards DE-FG02-05ER25707 and DE-FG02-04ER25620.
GN and GT were were supported in part by NSF Awards~0715146 and 0411723.

\appendix
\section{Some key technical tools and some supporting results}
\label{sec:app}

\subsection{Topological fixed-point theorems}
\label{sec:fixpt}

In this appendix, we give a brief review of some standard topological
fixed-point theorems in Banach spaces that provide the framework
for our analysis of the coupled constraint equations.
The analysis framework that was developed earlier in~\cite{jIvM96} for 
analyzing the coupled constraints was based on $k$-contractive mappings, 
and as a result required the near-CMC condition in order to establish
$k$-contractivity.
All subsequent non-CMC results (see e.g.~\cite{pAaCjI07})
are based on the framework from~\cite{jIvM96}, and as a result
remain limited to the near-CMC case.
Our interest here is on more general topological fixed-point arguments
that will allow us to avoid the near-CMC condition.

\medskip
\noindent
{\bf Brouwer, Schauder, and Leray-Schauder Fixed-Point Theorems.}
To establish the main abstract results we will need,
we first give a brief overview of some standard results
on topological fixed-point arguments involving compactness.
\begin{theorem}
{\bf (Brouwer Theorem)}
\label{T:BROUWER}
Let $U \subset \mathbb{R}^n$ be a non-empty, convex, compact subset,
with $n \geqs 1$.
If $T:U\to U$ is a continuous mapping,
then there exists a fixed-point $u \in U$ such that $u=T(u)$.
\end{theorem}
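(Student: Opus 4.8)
The plan is to follow the classical route that reduces Brouwer's theorem to the \emph{no-retraction theorem}: there is no continuous map $r:\overline{B}{}^n\to S^{n-1}$ that restricts to the identity on $S^{n-1}$, where $\overline{B}{}^n$ is the closed unit ball in $\mathbb{R}^n$ with boundary sphere $S^{n-1}$. First I would reduce the general statement — $U\subset\mathbb{R}^n$ non-empty, convex, compact — to the case $U=\overline{B}{}^n$. Passing to the affine hull of $U$ one may assume $U$ has non-empty interior, and a standard fact is that a compact convex body with non-empty interior is homeomorphic to a Euclidean ball of the same dimension (radial parametrization from an interior point via the Minkowski gauge). Conjugating $T$ by this homeomorphism reduces everything to continuous self-maps of a ball; the degenerate cases ($U$ a point, or $n=0$) are trivial.

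Second, I would establish the equivalence with non-existence of a retraction. If $T:\overline{B}{}^n\to\overline{B}{}^n$ were continuous with no fixed point, then $x\mapsto x-T(x)$ never vanishes, and following the ray from $T(x)$ through $x$ to its first exit point on $S^{n-1}$ defines a continuous retraction $\overline{B}{}^n\to S^{n-1}$, which the no-retraction theorem forbids. Conversely a retraction would contradict Brouwer applied to $-r$, so the two statements are equivalent; it therefore suffices to prove no-retraction.

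Third — the analytic core — I would prove no continuous retraction exists by a smoothing plus Jacobian-integral argument (in the spirit of the Milnor/Rogers proof). A hypothetical continuous retraction is approximated uniformly by mollification, and after a small radial correction one obtains a $C^1$ map $g:\overline{B}{}^n\to S^{n-1}$ equal to the identity near the boundary. For $g_t(x)=(1-t)x+t\,g(x)$ one considers $p(t)=\int_{\overline{B}{}^n}\det Dg_t(x)\,dx$. Since $g_t\equiv\mathrm{id}$ on $\partial\overline{B}{}^n$, the divergence theorem (applied componentwise to the cofactor expansion of the Jacobian) shows $p(t)$ is independent of $t$ for small $t$, hence for all $t\in[0,1]$ by polynomiality in $t$; so $p(t)\equiv p(0)=\mathrm{vol}(\overline{B}{}^n)>0$. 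But $|g_1(x)|\equiv1$ forces $Dg_1(x)$ to have a nontrivial kernel (the radial direction), so $\det Dg_1\equiv0$ and $p(1)=0$ — a contradiction.

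\textbf{Main obstacle.} The delicate point is entirely in the third step: one must arrange the smooth approximation so that it is a \emph{genuine} retraction (exact agreement with the identity near the boundary, and image exactly on $S^{n-1}$, not merely close to it), and one must justify that $\int\det Dg_t\,dx$ is constant in $t$, which rests on the identity $\operatorname{div}\big(\mathrm{cof}(Dg_t)_{i\cdot}\big)=0$ together with the boundary behavior. As a purely combinatorial alternative avoiding analysis, I would note that Brouwer also follows from Sperner's lemma: triangulate the standard simplex, use the lemma to produce, at each mesh scale, a fully-labeled sub-simplex, and pass to a convergent subsequence of its barycenters to obtain a fixed point. I would mention this as a backup in case a non-analytic proof is preferred for this appendix.
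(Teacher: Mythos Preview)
Your proposal is correct and gives a complete, classical proof via the no-retraction theorem and the Milnor--Rogers Jacobian-integral argument, with Sperner's lemma noted as an alternative. The paper, however, does not actually prove this statement: it simply cites Proposition~2.6 in Zeidler and remarks that ``a short proof can be based on homotopy-invariance of topological degree.'' So your write-up is far more than the paper supplies. The degree-theory route the paper alludes to is related but distinct from yours: one shows $\deg(\mathrm{id}-T,U,0)=\deg(\mathrm{id},U,0)=1$ via a straight-line homotopy and concludes a zero of $\mathrm{id}-T$ exists. Your no-retraction approach is more self-contained analytically (no degree machinery needed), while the degree approach is shorter once that machinery is in hand; either is acceptable, but for an appendix in this paper the one-line citation is all that is expected.
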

\begin{proof}
See Proposition~2.6 in~\cite{Zeidler-I}; a short proof can be
based on homotopy-invariance of topological degree.
\end{proof}
\begin{theorem}
{\bf (Schauder Theorem)}
\label{T:SCHAUDER}
Let $X$ be a Banach space,
and let $U \subset X$ be a non-empty, convex, compact subset.
If $T:U\to U$ is a continuous operator,
then there exists a fixed-point $u \in U$ such that $u=T(u)$.
\end{theorem}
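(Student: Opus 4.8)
The plan is to reduce the infinite-dimensional statement to the finite-dimensional Brouwer Theorem~\ref{T:BROUWER} by approximating $T$ with maps whose ranges lie in finite-dimensional subspaces, using the classical Schauder projection. The only input beyond Theorem~\ref{T:BROUWER} that is really needed is the compactness of $U$, which supplies finite $\varepsilon$-nets and a convergent subsequence at the end.

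First I would exploit compactness of $U$: for each $\varepsilon>0$ there is a finite set $\{u_1,\dots,u_n\}\subset U$, with $n=n(\varepsilon)$, such that the open balls $B(u_i,\varepsilon)$ cover $U$. Set $U_\varepsilon:=\mathrm{conv}\{u_1,\dots,u_n\}$, which is a non-empty, convex, compact subset of the finite-dimensional subspace $\mathrm{span}\{u_1,\dots,u_n\}$, and $U_\varepsilon\subset U$ by convexity of $U$. Define the Schauder projection $P_\varepsilon:U\to U_\varepsilon$ by
\[
P_\varepsilon(v):=\frac{\sum_{i=1}^n \mu_i(v)\,u_i}{\sum_{i=1}^n \mu_i(v)},\qquad
\mu_i(v):=\max\{0,\varepsilon-\|v-u_i\|\},
\]
where the denominator is strictly positive for every $v\in U$ precisely because the balls $B(u_i,\varepsilon)$ cover $U$. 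The map $P_\varepsilon$ is continuous, and since $\mu_i(v)=0$ whenever $\|v-u_i\|\geqs\varepsilon$, a short convexity estimate gives $\|P_\varepsilon(v)-v\|\leqs\varepsilon$ for all $v\in U$; this uniform approximation bound is the crux of the argument.

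Next I would invoke Brouwer. The composition $P_\varepsilon\circ T$ maps $U$ continuously into $U_\varepsilon$, and in particular its restriction to $U_\varepsilon$ is a continuous self-map of $U_\varepsilon$. Since $U_\varepsilon$ is a non-empty, convex, compact subset of a finite-dimensional normed space, hence affinely homeomorphic to such a subset of some $\mathbb{R}^m$, Theorem~\ref{T:BROUWER} provides a fixed point $v_\varepsilon\in U_\varepsilon$ with $v_\varepsilon=P_\varepsilon(T(v_\varepsilon))$. For this point the approximation bound yields
\[
\|v_\varepsilon-T(v_\varepsilon)\|=\|P_\varepsilon(T(v_\varepsilon))-T(v_\varepsilon)\|\leqs\varepsilon .
\]

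Finally I would pass to the limit. Choosing $\varepsilon=1/k$ produces a sequence $(v_{1/k})\subset U$ with $\|v_{1/k}-T(v_{1/k})\|\leqs 1/k$; by compactness of $U$ there is a subsequence $v_{1/k_j}\to u\in U$, and continuity of $T$ gives $T(v_{1/k_j})\to T(u)$, so that $\|v_{1/k_j}-T(v_{1/k_j})\|\to 0$ forces $u=T(u)$. The main thing to be careful about is the verification of the estimate $\|P_\varepsilon(v)-v\|\leqs\varepsilon$ and the (standard) reduction of Brouwer's theorem from $U_\varepsilon$ to a convex compact subset of Euclidean space; neither is a genuine obstacle, so once the Schauder projection is correctly set up the proof is essentially routine.
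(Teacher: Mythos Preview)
Your proposal is correct and follows exactly the approach the paper indicates: the paper does not give a detailed proof but refers to Zeidler and describes the argument as ``a simple finite-dimensional approximation algorithm and a limiting argument'' extending Theorem~\ref{T:BROUWER}, which is precisely your Schauder-projection construction. There is nothing to add.
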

\begin{proof}
This is a direct extension of the Brouwer Fixed-Point Theorem
from $\mathbb{R}^n$ to $X$; see Corollary 2.13 in~\cite{Zeidler-I}.
The short proof involves a simple finite-dimensional approximation 
algorithm and a limiting argument, extending the Brouwer Fixed-Point
Theorem (itself generally having a more complicated proof) 
from $\mathbb{R}^n$ to $X$.
\end{proof}
\begin{theorem}
{\bf (Schauder Theorem B)}
\label{T:SCHAUDER-B}
Let $X$ be a Banach space,
and let $U \subset X$ be a non-empty, convex, closed, bounded subset.
If $T:U\to U$ is a compact operator,
then there exists a fixed-point $u \in U$ such that $u=T(u)$.
\end{theorem}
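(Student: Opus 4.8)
The plan is to reduce the statement about a \emph{compact} operator $T$ on a closed bounded convex set to the already-stated Schauder Theorem (Theorem~\ref{T:SCHAUDER}) about a \emph{continuous} operator on a \emph{compact} convex set. The obstacle is that $U$ itself is not compact — only bounded, closed, and convex — so we cannot invoke Theorem~\ref{T:SCHAUDER} directly on $U$. The remedy is standard: pass to the closed convex hull of the image $T(U)$, which by compactness of $T$ is a compact convex set on which $T$ maps into itself.

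First I would set $K := \overline{\mathrm{conv}}\,(T(U))$, the closure of the convex hull of $T(U)$ in $X$. Since $T$ is a compact operator and $U$ is bounded, $T(U)$ is relatively compact; by Mazur's theorem the closed convex hull of a relatively compact subset of a Banach space is itself compact, so $K$ is compact. Next I would check $K \subseteq U$: indeed $T(U) \subseteq U$ by hypothesis, $U$ is convex so $\mathrm{conv}\,(T(U)) \subseteq U$, and $U$ is closed so $K = \overline{\mathrm{conv}\,(T(U))} \subseteq U$. Consequently $T$ restricts to a map $T|_K : K \to T(K) \subseteq T(U) \subseteq \mathrm{conv}\,(T(U)) \subseteq K$, i.e. $T$ maps the compact convex set $K$ into itself. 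Moreover $T|_K$ is continuous, being the restriction of the continuous (indeed compact) operator $T$.

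Now the hypotheses of Theorem~\ref{T:SCHAUDER} are met with the compact convex set $K$ in place of $U$: $K$ is a non-empty (since $U$ is non-empty, $T(U)\ne\emptyset$, hence $K\ne\emptyset$), convex, compact subset of the Banach space $X$, and $T|_K : K \to K$ is continuous. Therefore there exists $u \in K$ with $u = T(u)$. Since $K \subseteq U$, this $u$ lies in $U$ as well, which is exactly the asserted fixed point. The only nonroutine input beyond Theorem~\ref{T:SCHAUDER} is Mazur's theorem on compactness of closed convex hulls of relatively compact sets; everything else is elementary set-theoretic bookkeeping with convexity and closedness.
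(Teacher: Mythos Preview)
Your proof is correct and follows precisely the standard reduction the paper has in mind: the paper merely cites Zeidler and remarks that the result is ``a simple consequence of Theorem~\ref{T:SCHAUDER},'' and what you have written is exactly that consequence spelled out, via Mazur's theorem applied to $\overline{\mathrm{conv}}\,(T(U))$.
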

\begin{proof}
See Theorem 2.A in~\cite{Zeidler-I};
the proof is a simple consequence of Theorem~\ref{T:SCHAUDER} above.
\end{proof}

\subsection{Ordered Banach spaces}
\label{sec:OBS}

These notes follow the main ideas and definitions given Chapter~7.1,
page~275, in \cite{Zeidler-I}, while some examples were taken
from~\cite{hA76} and~\cite{Du06}. Let $X$ be a Banach space, $\R_{+}$
be the non-negative real numbers. A subset $C\subset X$ is a {\bf
cone} iff given any $x\in C$ and $a\in\R_{+}$ the element $ax\in C$. A
subset $X_{+}\subset X$ is an {\bf order cone} iff the following
properties hold:
\begin{enumerate}[{\it(i)}]
\item
The set $X_{+}$ is non-empty, closed, and $X_{+}\neq \{0\}$;
\item
Given any $a$, $b\in\R_{+}$ and $x$, $\un x\in X_{+}$ then 
$ax +b\un x \in X_{+}$;
\item 
If $x\in X_{+}$ and $-x\in X_{+}$, then $x=0$.
\end{enumerate}
The second property above says that every order cone is in fact a
cone, and that the set $X_{+}$ is convex. The space $X=\R^2$ is a
convenient Banach space to picture non-trivial examples of cones and
order cones, as can be seen in Fig.~\ref{F:cones}. A pair $X$,
$X_{+}$ is called an {\bf ordered Banach space} iff $X$ is a Banach
space and $X_{+}\subset X$ is an order cone. The reason for this name
is that the order cone $X_{+}$ defines several relations on elements
in $X$, called order relations, as follows:
\[
\begin{gathered}
u \geqs v \mbox{~~iff~~} u-v \in X_{+},\\
u \gg v \mbox{~~iff~~} u-v \in \mbox{int}(X_{+}),
\end{gathered}
\qquad
\begin{gathered}
u > v \mbox{~~iff~~} u\geqs v \mbox{~~and~~} u\neq v,\\
u \ngeqs v \mbox{~~iff~~} u\geqs v \mbox{~is false};
\end{gathered}
\]
finally it is also used the notation $u\leqs v$, $u< v$, and $u\ll v$
to mean $v\geqs u$, $v>u$, $v\gg u$, respectively. A simple example of
an ordered Banach space is $\R$ with the usual order. Another example
can be constructed when this order on $\R$ is transported into
$C^0(\cM)$, the set of scalar-valued functions on a set
$\cM\subset\R^n$, with $n\geqs 1$. An order on $C^0(\cM)$
is the following: the functions $u$, $v\in C^0(\cM)$
satisfy $u\geqs v$ iff $u(x) \geqs v(x)$ for all $x\in \cM$. The
following Lemmas summarize the main properties of order relations
in Banach spaces.
\begin{lemma}
\label{L:o-1}
Let $X$, $X_{+}$ be an ordered Banach space. Then, for all elements
$u$, $v$, $w \in X$, hold: (i) $u \geqs u$; (ii) If $u\geqs v$ and $v
\geqs u$, then $u=v$; (iii) If $u \geqs v$ and $v \geqs w$, then
$u\geqs w$.
\end{lemma}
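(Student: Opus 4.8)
The plan is to deduce all three assertions directly from the three defining properties of an order cone, simply by unwinding the definition $u\geqs v \Leftrightarrow u-v\in X_{+}$ in each case; no analytic input (closedness of $X_{+}$, the norm) is needed, only the algebraic cone axioms.

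First I would record the auxiliary fact that $0\in X_{+}$. Since $X_{+}$ is non-empty by the first property in the definition of an order cone, pick any $x\in X_{+}$; then taking $a=b=0$ in the second property gives $0\cdot x + 0\cdot x = 0 \in X_{+}$. With this in hand, part~(i) is immediate: for any $u\in X$ we have $u-u = 0 \in X_{+}$, hence $u\geqs u$.

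For part~(ii), suppose $u\geqs v$ and $v\geqs u$. By definition $u-v\in X_{+}$ and $v-u\in X_{+}$. Setting $x=u-v$, we have $x\in X_{+}$ and $-x = v-u \in X_{+}$, so the third property in the definition of an order cone forces $x=0$, i.e.\ $u=v$. For part~(iii), suppose $u\geqs v$ and $v\geqs w$, so $u-v\in X_{+}$ and $v-w\in X_{+}$; applying the second property with $a=b=1$ to these two elements of $X_{+}$ yields $(u-v)+(v-w) = u-w \in X_{+}$, which is precisely $u\geqs w$.

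There is no real obstacle here: the statement is pure bookkeeping about the cone axioms, and the Banach-space structure plays no role in this particular lemma. The only point worth flagging explicitly is that reflexivity requires $0\in X_{+}$, which is not literally among the listed axioms but follows from the convexity (second) property evaluated at zero coefficients.
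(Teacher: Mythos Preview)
Your proof is correct and follows essentially the same route as the paper's own argument: each part is obtained by unwinding the definition $u\geqs v \Leftrightarrow u-v\in X_{+}$ and invoking the corresponding cone axiom. The only (minor) difference is that you explicitly justify $0\in X_{+}$ from nonemptiness and the convexity property, whereas the paper simply asserts $u-u=0\in X_{+}$ without comment.
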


\begin{proof}
The property that $u-u=0\in X_{+}$ implies that $u\geqs u$. If
$u\geqs v$ and $v\geqs u$ then $u-v \in X_{+}$ and $-(u-v)\in X_{+}$,
therefore $u-v =0$. Finally, if $u\geqs v$ and $v\geqs w$, then $u-v
\in X_{+}$ and $v-w\in X_{+}$, which means that $u-w = (u-v) + (v-w)
\in X _{+}$.
\end{proof}

Furthermore, the order relation is compatible with the vector space
structure and with the limits of sequences.
\begin{lemma}
\label{L:o-2}
Let $X$, $X_{+}$ be an ordered Banach space. Then, for all $u$, $\hat
u$, $v$, $\hat v$, $w \in X$, and $a$, $b\in \R$, hold
\begin{enumerate}[(i)]
\item \label{L:o-2i}
If $u\geqs v$ and $a\geqs b \geqs 0$, then $au \geqs bv$;
\item \label{L:o-2ii}
If $u\geqs v$ and $\hat u \geqs \hat v$, then $u+\hat u \geqs v+\hat v$;
\item \label{L:o-2iii}
If $u_n \geqs v_n$ for all $n\in \N$, 
then $\lim_{n\to\infty}u_n \geqs \lim_{n\to\infty}v_n$.
\end{enumerate}
\end{lemma}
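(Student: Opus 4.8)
The plan is to obtain all three assertions directly from the three defining properties of the order cone $X_{+}$ recorded at the start of this subsection — that $X_{+}$ is closed, that it is stable under non-negative linear combinations (so in particular $ax\in X_{+}$ whenever $a\geqs0$ and $x\in X_{+}$, and $x+\un x\in X_{+}$ whenever $x,\un x\in X_{+}$), and that $X_{+}\cap(-X_{+})=\{0\}$ — supplemented by the elementary facts already proved in Lemma~\ref{L:o-1}. In every case the method is the same: rewrite the difference of the two sides as an explicit combination of vectors already known to lie in $X_{+}$ with non-negative scalar coefficients, and then invoke stability of $X_{+}$ under such combinations. Assertion~(ii) is the warm-up: from $u\geqs v$ and $\hu\geqs\hv$ one has $u-v\in X_{+}$ and $\hu-\hv\in X_{+}$, hence $(u+\hu)-(v+\hv)=(u-v)+(\hu-\hv)\in X_{+}$, which is exactly $u+\hu\geqs v+\hv$.

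For assertion~(i) I would use the algebraic identity $au-bv=a(u-v)+(a-b)v$. The first summand lies in $X_{+}$ because $a\geqs0$ and $u-v\in X_{+}$; the second lies in $X_{+}$ because $a-b\geqs0$ and $v\in X_{+}$, where one uses that $v\geqs0$ is in force (equivalently, one may instead split $au-bv=b(u-v)+(a-b)u$ and use $u\geqs0$, which follows from $u\geqs v\geqs0$ via transitivity, Lemma~\ref{L:o-1}(iii)). Adding the two summands gives $au-bv\in X_{+}$, i.e. $au\geqs bv$. For assertion~(iii), from $u_n\geqs v_n$ we have $u_n-v_n\in X_{+}$ for every $n$; assuming (as the statement presupposes) that $\lim u_n$ and $\lim v_n$ exist in $X$, continuity of subtraction in the Banach space shows $u_n-v_n\to\lim u_n-\lim v_n$, and closedness of $X_{+}$ forces this limit to lie in $X_{+}$, which is the claim $\lim u_n\geqs\lim v_n$.

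There is no substantive obstacle to overcome here; the proof is a short exercise in the cone axioms. The only points that need a moment's care are purely of a bookkeeping nature: in~(i), choosing the splitting of $au-bv$ so that all scalar coefficients are non-negative and every vector being scaled genuinely lies in the cone (which is why the positivity hypothesis on $v$, hence on $u$, is used); and in~(iii), recognizing that the conclusion concerns limits that are assumed to exist, so that closedness of $X_{+}$ alone suffices and no completeness or convergence argument is required.
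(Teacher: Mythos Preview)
Your proofs of (ii) and (iii) are correct and match the paper's approach exactly; the paper in fact declines to write out (i) and (ii) at all, calling them ``straightforward,'' and gives for (iii) precisely the closedness argument you do.

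For (i), however, you write that ``$v\geqs 0$ is in force,'' but nothing in the lemma's hypotheses says this: only $u\geqs v$ and $a\geqs b\geqs 0$ are assumed. Your splitting $au-bv=a(u-v)+(a-b)v$ genuinely needs $v\in X_{+}$ (and the alternative splitting needs $u\in X_{+}$), so the argument as written does not establish the statement as written. In fact the statement as written is false: in $X=\R$ with the usual order, take $u=-1$, $v=-2$, $a=3$, $b=1$; then $u\geqs v$ and $a\geqs b\geqs 0$, but $au=-3<-2=bv$. The paper does not notice this because it gives no proof. Your argument is correct once one adds the hypothesis $v\geqs 0$ (equivalently $u\geqs 0$), which is the version of (i) that is actually true and presumably intended; you should flag this as a needed correction to the statement rather than assert that the hypothesis is already present.
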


\begin{proof}
The first two properties are straightforward to prove, and we do not
do it here. The third property holds because the order cone is a
closed set. Indeed, $u_n \geqs v_n$ means that $u_n-v_n \in X_{+}$ for
all $n\in\N$, and then $\lim_{n\to\infty}(u_n-v_n)\in X_{+}$ because
$X_{+}$ is closed, then Property {\it(\ref{L:o-2iii})} follows.
\end{proof}

The remaining order relations have some other interesting properties.
\begin{lemma}
\label{L:o-3}
Let $X$, $X_{+}$ be an ordered Banach space. Then, for all $u$, $v$,
$w \in X$, and $a\in \R$, hold: (i) If $u\gg v$ and $v\gg w$, then $u
\gg w$; (ii) If $u\gg v$ and $v\geqs w$, then $u \gg w$; (iii) If
$u\geqs v$ and $v\gg w$, then $u \gg w$; (iv) If $u \gg v$ and $a > 0$,
then $au \gg av$.
\end{lemma}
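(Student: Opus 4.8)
The plan is to reduce everything to two elementary facts about the order cone $X_+$: that it is closed under addition (which is part {\it(ii)} of the definition of an order cone, applied with $a=b=1$) and under multiplication by positive scalars (the cone property), together with the topological observation that translations and positive dilations of $X$ are homeomorphisms.

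First I would record the key auxiliary claim: \emph{if $p\in\mathrm{int}(X_+)$ and $q\in X_+$, then $p+q\in\mathrm{int}(X_+)$.} To see this, pick $r>0$ with the open ball $B(p,r)\subset X_+$. For any $y$ with $\|y-(p+q)\|<r$ one has $y-q\in B(p,r)\subset X_+$, hence $y=(y-q)+q\in X_++X_+\subset X_+$; thus $B(p+q,r)\subset X_+$, so $p+q\in\mathrm{int}(X_+)$.

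With this in hand, parts (i)--(iii) are immediate from the definitions $u\gg v\Leftrightarrow u-v\in\mathrm{int}(X_+)$ and $u\geqs v\Leftrightarrow u-v\in X_+$: in each case $u-w=(u-v)+(v-w)$ is a sum of two elements of $X_+$, at least one of which lies in $\mathrm{int}(X_+)$, so the auxiliary claim (or the same claim after swapping the roles of the two summands, using commutativity of addition) gives $u-w\in\mathrm{int}(X_+)$, i.e. $u\gg w$.

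For part (iv), I would use that for $a>0$ the dilation $x\mapsto ax$ is a homeomorphism of $X$, and that $aX_+=X_+$: indeed $aX_+\subseteq X_+$ by the cone property, and $X_+=a(a^{-1}X_+)\subseteq aX_+$ likewise. A homeomorphism carries interiors to interiors, so $a\,\mathrm{int}(X_+)=\mathrm{int}(aX_+)=\mathrm{int}(X_+)$; therefore $u-v\in\mathrm{int}(X_+)$ forces $au-av=a(u-v)\in\mathrm{int}(X_+)$, which is $au\gg av$. I do not anticipate any genuine obstacle; the only point needing care is the auxiliary claim above, and even that is routine once one recalls that $X_+$ is closed under addition.
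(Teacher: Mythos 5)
Your proof is correct and complete. The paper itself omits the proof of this lemma (remarking only that it is ``similar to the previous Lemma''), so there is no argument of record to compare against; your auxiliary observation that $\mathrm{int}(X_+) + X_+ \subseteq \mathrm{int}(X_+)$, proved via a ball argument and the additive closure of the cone, is exactly the right key fact and cleanly disposes of (i)--(iii), while the dilation-is-a-homeomorphism argument for (iv) is likewise standard and correct.
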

The Proof of Lemma~\ref{L:o-3} is similar to the previous Lemma, and
is not reproduced here. Given an ordered Banach space $X$, $X_{+}$,
and two elements $u\geqs v$, introduce the intervals
\[
[v,u] := \{w\in X : v\leqs w \leqs u \},\qquad
(v,u) := \{w\in X : v \ll w \ll u \}.
\]
Analogously, introduce the intervals $[v,u)$ and $(v,u]$. See
Fig.~\ref{F:cones} for an example in $X=\R^2$.
\begin{figure}[htb]
\begin{center}
\includegraphics[height=3cm,width=4cm]{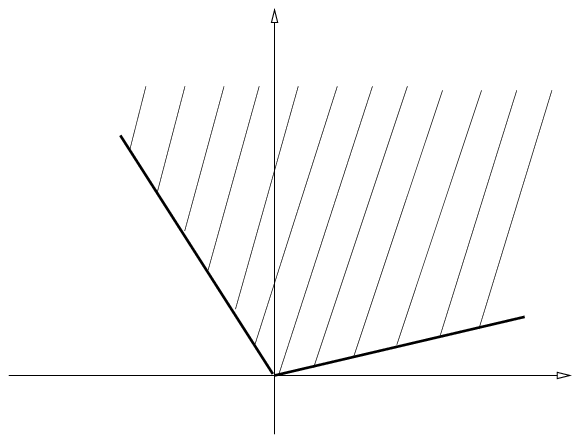}
\includegraphics[height=3cm,width=4cm]{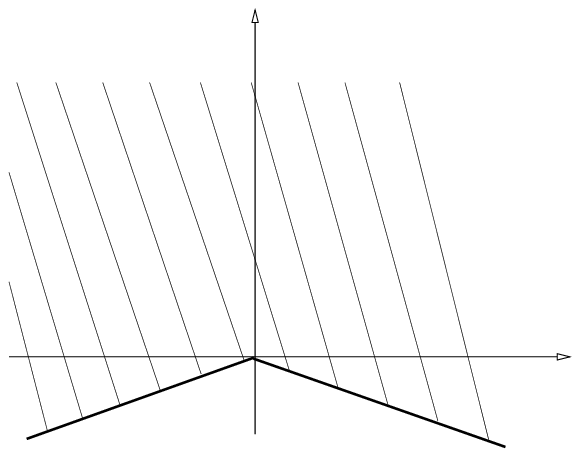}
\includegraphics[height=3cm,width=4cm]{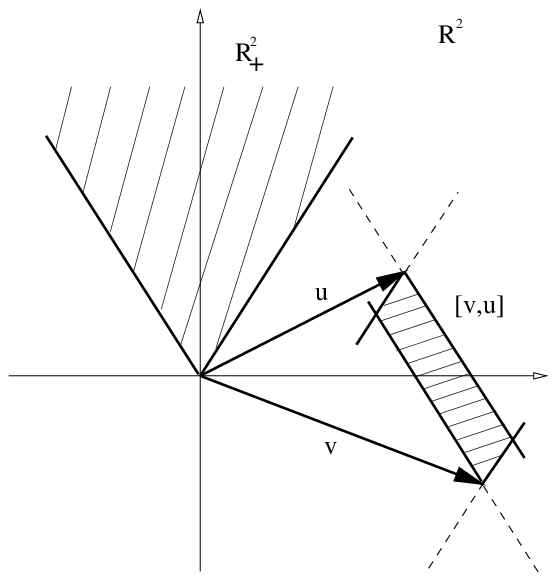}
\end{center}
\caption{The shaded regions in the first picture represents an
order cone, while the second picture represents a cone that is not an
order cone. The shaded region between $u$ and $v$ in the third
picture represents the closed interval $[v,u]$, constructed with the 
order cone $\R^2_{+}$, which is also represented by a shaded region.}
\label{F:cones}
\end{figure}
Useful order cones for solving PDE are those that define an order
structure in the Banach space which is related with the norm and the
notion of boundedness. These type of order cones are called normal. More
precisely, an order cone $X_{+}$ in a Banach space $X$ is called {\bf
normal order cone} iff there exists $0<a\in\R$ such that for all $u$,
$v\in X$ with $0 \leqs v\leqs u$ holds $\|v\| \leqs a\, \|u\|$.

\begin{lemma}
\label{L:normal}
If $X$, $X_{+}$ is an ordered Banach space with normal order cone
$X_{+}$, then every closed interval in $X$ is bounded.
\end{lemma}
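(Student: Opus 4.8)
The plan is to reduce the boundedness of an arbitrary closed interval to a single application of the normality estimate, after translating the interval so that its lower endpoint sits at the origin. Fix a closed interval $[v,u] = \{w \in X : v \leqs w \leqs u\}$ with $v \leqs u$; if the interval is empty there is nothing to prove, so assume it is non-empty.

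First I would observe that for any $w \in [v,u]$, the order relation is compatible with addition: combining $v \leqs w \leqs u$ with the trivial relation $-v \geqs -v$ from Lemma~\ref{L:o-1}(i), two applications of Lemma~\ref{L:o-2}(\ref{L:o-2ii}) give $0 = v - v \leqs w - v$ and $w - v \leqs u - v$, so that $0 \leqs w - v \leqs u - v$. Next, I would apply the definition of a normal order cone to the comparable pair $w - v$ and $u - v$: there is a constant $a > 0$, depending only on $X_{+}$ and not on $w$, such that $\|w - v\| \leqs a\,\|u - v\|$. Finally, by the triangle inequality,
\[
\|w\| \leqs \|w - v\| + \|v\| \leqs a\,\|u - v\| + \|v\|,
\]
and the right-hand side is a finite constant independent of $w \in [v,u]$. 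Hence $[v,u]$ is contained in the closed ball about the origin of that radius, i.e.\ it is bounded. The same argument applied verbatim handles the half-open intervals $[v,u)$ and $(v,u]$.

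There is essentially no obstacle here: the only point requiring any care is the translation step, which relies on Lemma~\ref{L:o-2}(\ref{L:o-2ii}) to guarantee that $w \mapsto w - v$ carries $[v,u]$ into $[0,\,u-v]$, together with the fact that the normality constant $a$ is uniform over \emph{all} comparable pairs in $X$ — in particular over the pairs $(w - v,\, u - v)$ as $w$ ranges over the interval — so that the resulting bound does not depend on the chosen point of the interval.
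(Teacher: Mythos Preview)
Your proof is correct and follows essentially the same approach as the paper's own proof: translate so that $0 \leqs w - v \leqs u - v$, apply the normality inequality to bound $\|w - v\|$, and finish with the triangle inequality. The paper's version is more terse, omitting the explicit citation of Lemma~\ref{L:o-2} for the translation step and the remark about half-open intervals, but the argument is identical.
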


\begin{proof}
Let $w\in [v,u]$, then $v\leqs w\leqs u$, and so $0\leqs w-v \leqs
u-v$. Since the cone $X_{+}$ is normal, this implies that there exists
$a>0$ such that $\|w-v\|\leqs a\,\|u-v\|$. Then, the inequalities
$\|w\|\leqs\|w-v\| +\|v\|\leqs a\,\|u-v\| +\|v\|$, which hold for all
$w\in [v,u]$, establish the Lemma.
\end{proof}

Not every order cone is normal. For example, consider the Sobolev
spaces $W^{k,p}$ of scalar-valued functions on an $n$-dimensional,
closed manifold $\cM$ 
(or a compact manifold with Lipschitz continuous boundary), 
where $k$ is a non-negative integer, and $p>1$ is a real number. An order cone
in $W^{k,p}$ is defined translating the order on the real numbers,
almost everywhere in $\cM$, that is,
\[
W^{k,p}_{+} :=\{ u\in W^{k,p} : u \geqs 0 \mbox{ a.e. in } \cM\}.
\]
In the case $k=0$, that is, we have $W^{0,p} = L^p$, the order cone
above is a normal cone \cite{hA76,Zeidler-I}. However, in the case
$k\geqs 1$ the cone above cannot be normal, since on the one hand, the
cone definition involves information only of the values of $u(x)$ and
not of its derivatives; on the other hand, the norm in $W^{k,p}$
contains information of both the values of $u(x)$ and its derivatives.
In the case of a compact manifold with boundary,
since there are no boundary conditions on $\partial\cM$ in the
definition of $W^{k,p}$, there is no way to relate the values of a
function in $\cM$ with the values of its derivatives. (In other words,
there is no Poincar\'e inequality for elements in $W^{k,p}$, with
$k\geqs 1$.)

An order cone $X_{+}\subset X$ is {\bf generating} iff $\Span (X_{+})
= X$. An order cone $X_{+}\subset X$ is called {\bf total} iff
$\Span(X_{+})$ is dense in $X$. Total order cones are important
because the order structure associated with them can be translated
from the space $X$ into its dual space $X^{*}$.
\begin{lemma}
\label{L:dual}
Let $X$, $X_{+}$ be an ordered Banach space. If $X_{+}$ is a total
order cone, then an order cone in $X^{*}$ is given by the set
$X^{*}_{+} \subset X^{*}$ defined as
\[
X^{*}_{+} := \{ u^{*} \in X^{*} : u^{*}(v) \geqs 0 \quad
\forall \, v\in X_{+} \}.
\]
\end{lemma}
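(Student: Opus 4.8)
The plan is to verify directly the three defining properties (i)--(iii) of an order cone for the candidate set $X^{*}_{+}$. Property (ii) is immediate: if $u^{*},\un x^{*}\in X^{*}_{+}$ and $a,b\in\R_{+}$, then for every $v\in X_{+}$ we have $(au^{*}+b\un x^{*})(v)=a\,u^{*}(v)+b\,\un x^{*}(v)\geqs0$, so $au^{*}+b\un x^{*}\in X^{*}_{+}$. For property (i), the zero functional clearly lies in $X^{*}_{+}$, so the set is non-empty; and since each evaluation map $u^{*}\mapsto u^{*}(v)$ is norm-continuous on $X^{*}$ (because $|u^{*}(v)|\leqs\|u^{*}\|\,\|v\|$), the half-space $\{u^{*}\in X^{*}:u^{*}(v)\geqs0\}$ is closed for each fixed $v$, and $X^{*}_{+}=\bigcap_{v\in X_{+}}\{u^{*}:u^{*}(v)\geqs0\}$ is therefore closed.

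The one point in property (i) that needs real work is $X^{*}_{+}\neq\{0\}$, and here I would invoke the Hahn--Banach separation theorem. First note $X\neq\{0\}$ (otherwise $X_{+}=\{0\}$, contradicting the hypothesis that $X_{+}$ is an order cone), and then $X_{+}\neq X$, since if $X_{+}=X$ any nonzero $x$ would satisfy $x,-x\in X_{+}$, forcing $x=0$ by property (iii) of the order cone $X_{+}$. Pick $x_{1}\in X\setminus X_{+}$; since $X_{+}$ is closed and convex, strict separation of the point $x_{1}$ from $X_{+}$ yields $u^{*}\in X^{*}$ and $\alpha\in\R$ with $u^{*}(x_{1})<\alpha\leqs u^{*}(v)$ for all $v\in X_{+}$. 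Taking $v=0\in X_{+}$ gives $\alpha\leqs0$, hence $u^{*}(x_{1})<0$ and in particular $u^{*}\neq0$; and replacing $v$ by $tv$ for $t>0$ and letting $t\to\infty$ gives $u^{*}(v)\geqs\alpha/t\to0$, so $u^{*}(v)\geqs0$ for all $v\in X_{+}$, i.e.\ $u^{*}\in X^{*}_{+}$.

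Finally, for property (iii), suppose $u^{*}\in X^{*}_{+}$ and $-u^{*}\in X^{*}_{+}$. Then $u^{*}(v)\geqs0$ and $-u^{*}(v)\geqs0$ for every $v\in X_{+}$, hence $u^{*}$ vanishes on $X_{+}$, and by linearity on $\Span(X_{+})$; by continuity of $u^{*}$ it then vanishes on $\overline{\Span(X_{+})}$, which equals $X$ precisely because $X_{+}$ is total, so $u^{*}=0$. The main obstacle is the non-triviality step $X^{*}_{+}\neq\{0\}$, which is where a genuine tool (Hahn--Banach separation, together with the scaling argument that upgrades a separating hyperplane into a positivity statement using that $X_{+}$ is a cone) is needed; the remaining verifications are routine, and the totality hypothesis on $X_{+}$ is used only in property (iii).
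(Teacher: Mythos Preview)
Your proof is correct and follows the same overall plan as the paper's: verify the three order-cone axioms directly, with property (ii) handled by linearity and property (iii) by the totality hypothesis (the paper phrases the latter via the annihilator $[\Span(X_{+})]^{\circ}=\{0\}$, which is exactly your continuity-plus-density argument).

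The one place where your route genuinely differs is the non-triviality step $X^{*}_{+}\neq\{0\}$. The paper's argument simply picks a nonzero $v\in X_{+}$ and a functional $u^{*}$ with $u^{*}(v)=1$, asserting this gives a nonzero element of $X^{*}_{+}$; but as written this does not explain why $u^{*}(w)\geqs0$ for \emph{all} $w\in X_{+}$. Your approach---separate a point $x_{1}\notin X_{+}$ from the closed convex set $X_{+}$ by Hahn--Banach, then use the cone scaling $v\mapsto tv$ to upgrade the separating inequality to $u^{*}\geqs0$ on $X_{+}$---is a different and complete argument. What your approach buys is rigor at this step; what the paper's approach buys is brevity, at the cost of leaving the reader to fill in essentially the argument you wrote.
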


\begin{proof}
We check the three properties in the definition of the order cone. The
first property is satisfied because $X_{+}$ is an order cone, so there
exists $v\neq 0$ in $X_{+}$, and then there exists $u^{*}\neq 0$ in
$X^{*}$ such that $u^{*}(v)=1 \geqs 0$, so $X^{*}_{+}$ is
non-empty. Trivially, $0\in X^{*}_{+}$. Finally, $X^{*}_{+}$ is closed
because the order relation $\geqs$ for real numbers is used in its
definition. The second property of an order cone is satisfied, because
given any $u^{*}$, $v^{*}\in X^{*}_{+}$ and any non-negative $a$,
$b\in\R$, then for all $\un u\in X_{+}$ holds
\[
(au^{*}+bv^{*})(\un u) = 
au^{*}(\un u) + bv^{*}(\un u) \geqs 0
\]
since each term is non-negative. This implies that $(au^{*}+bv^{*})\in
X^{*}_{+}$. The third property is satisfied because the order cone
$X_{+}$ is total. Suppose that the element $u^{*}\in X^{*}_{+}$ and
$-u^{*}\in X^{*}_{+}$, then for all $\un u\in X_{+}$ holds that
$u^{*}(\un u) \geqs 0$ and $-u^{*}(\un u) \geqs 0$, which implies that
$u^{*}(\un u) =0$ for all $\un u\in X_{+}$. Therefore, $u^{*}\in
X_{+}^{\circ}\subset X^{*}$, where the super-script $\circ$ in
$X^{\circ}_{+}$ means the Banach annihilator of the set $X_{+}$,
which is a subset of the space $X^{*}$. Therefore, we conclude that
$u^{*}\in\bigl[\Span (X_{+})\bigr]^{\circ}$. Since the order cone is
total, $\overline{\Span (X_{+})} = X$, that implies $\bigl[\Span
(X_{+})\bigr]^{\circ} = \{0\}$, so $u^{*}=0$. This establishes the Lemma.
\end{proof}

An order cone $X_{+}$ in a Banach space $X$ is called a {\bf solid
cone} iff $X_{+}$ has non-empty interior. The following result asserts
that solid order are generating. We remark that the converse is not
true. In the examples below we present function spaces frequently used
in solving PDE with order cones having empty interior which are indeed
generating.
\begin{lemma}
\label{L:int}
Let $X$, $X_{+}$ be an order Banach space. If $X_{+}$ is a solid cone,
then $X_{+}$ is generating.
\end{lemma}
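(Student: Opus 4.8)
The plan is to use nothing more than the hypothesis that $X_{+}$ has non-empty interior. Fix once and for all a point $x_{0}\in\mathrm{int}(X_{+})$ and a radius $r>0$ such that the open ball $B(x_{0},r)$ is contained in $X_{+}$. The geometric idea is that, after translating by $x_{0}$, such a ball reaches a little way in \emph{every} direction of $X$ while staying inside $X_{+}$; combining this with closure of $X_{+}$ under non-negative scalar multiplication lets us write an arbitrary vector as a difference of two cone elements.

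Concretely, I would take an arbitrary $x\in X$ with $x\neq0$ and set
\[
\un x := x_{0} + \frac{r}{2\|x\|}\,x .
\]
Then $\|\un x - x_{0}\| = \tfrac{r}{2} < r$, so $\un x \in B(x_{0},r)\subset X_{+}$. Both $x_{0}$ and $\un x$ lie in $X_{+}$, and multiplying each by the positive scalar $\tfrac{2\|x\|}{r}$ keeps them in $X_{+}$ by property \emph{(ii)} in the definition of an order cone (take $b=0$). Hence
\[
x = \frac{2\|x\|}{r}\,\un x - \frac{2\|x\|}{r}\,x_{0},
\]
which exhibits $x$ as a linear combination of elements of $X_{+}$, so $x\in\Span(X_{+})$. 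Since $0\in X_{+}\subset\Span(X_{+})$ trivially, this shows $\Span(X_{+})=X$, i.e. $X_{+}$ is generating.

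There is essentially no obstacle in this argument; it is a one-line geometric observation once the interior point and its ball are fixed. The only points requiring a small amount of care are: using the order-cone axiom (closure under non-negative linear combinations) rather than the weaker bare "cone" property, and recalling that $\Span$ here means the set of finite linear combinations, so representing $x$ as $\alpha\un x - \beta x_{0}$ with $\alpha,\beta\geqs0$ is already enough. Note in particular that neither completeness of $X$ nor closedness of $X_{+}$ is needed for this direction, and (as the paragraph following the statement already notes) the converse implication fails, so no effort should be wasted trying to reverse the argument.
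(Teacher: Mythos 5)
Your proof is correct and takes essentially the same approach as the paper: both pick an interior point $x_{0}$ of $X_{+}$, translate by a small positive multiple of $x$ to stay inside the cone, and then solve for $x$ as a difference of two cone elements. The only (inessential) variation is that you use the pair $x_{0}$ and $x_{0}+ax$, whereas the paper uses $x_{0}+ax$ and $x_{0}-ax$.
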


\begin{proof}
The cone $X_{+}$ has a non-empty interior, so there exists
$x_0\in\mbox{int}(X_{+})$ and $x_0\neq 0$. This means that given any
$x\in X$ there exists $0<a\in\R$ small enough such that both $x_{+} :=
x_0 + a x$ and $x_{-} := x_0 - ax$ belong to $\mbox{int}(X_{+})$. But
then, $x = (x_{+} - x_{-})/(2a)$, so $x\in\Span(X_{+})$. This
establishes the Lemma. 
\end{proof}

Here is a list of examples of several order cones used in function
spaces. All these examples use order cones obtained from the usual
order in $\R$. In particular, they refer to scalar-valued functions on
an $n$- dimensional, closed manifold $\cM$ 
(or a compact manifold with Lipschitz boundary).
\begin{itemize}
\item 
Introduce on $C^k$ the cone
$
C^k_{+} := \{u\in C^k : u(x) \geqs 0 \quad \forall x \in \cM\}.
$
This is an order cone for all non-negative integer $k$. The cone is a
normal cone in the particular case $k=0$. The cone is solid for all
$k\geqs 0$, therefore it is a generating cone.

\item 
Introduce on $L^{\infty}$ the cone
$L^{\infty}_{+} := \{u\in L^{\infty} : u \geqs 0 
\quad \mbox{a.e. in } \cM\}$.
This is a normal, order cone. It is a solid cone, therefore is
generating.

\item 
Introduce on $W^{k,\infty}$ the cone
$W^{k,\infty}_{+} := \{u\in W^{k,\infty} : u \geqs 0 
\quad \mbox{a.e. in } \cM\}$.
This is an order cone. It is not normal for $k\geqs 1$. The cone is
solid, therefore it is generating.

\item 
Introduce on $L^p$ the cone
$L^p_{+} := \{u\in L^p : u \geqs 0 \quad \mbox{a.e. in } \cM\}$.
This is a normal, order cone every real numbers $p\geqs 1$. The cone
is not solid, however it is a generating cone.

\item 
Introduce on $W^{k,p}$ the cone
$W^{k,p}_{+} := \{u\in W^{k,p} : u \geqs 0 \quad \mbox{a.e. in } \cM\}$.
This is an order cone every real numbers $p\geqs 1$. The cone is not
normal for $k\geqs 1$. The cone is not solid for $kp \leqs n$, and it
is solid for $kp > n$. In both cases, the cone is generating.
\end{itemize}


A key concept that becomes possible in ordered Banach spaces is that of 
an operator satisfying a maximum principle.
We have not seen in the literature an approach to maximum principles on 
ordered Banach spaces in the generality we now present.
Let $X$, $X_{+}$ and $Y$, $Y_{+}$ be ordered Banach
spaces. An operator $A : D_A \subset X\to Y$ satisfies the {\bf
maximum principle} iff for every $u$, $v\in D_A$ such that $Au-Av\in
Y_{+}$ holds that $u-v\in X_{+}$. In the particular case that the
operator $A$ is linear, then it satisfies the maximum principle iff
for all $u\in X$ such that $Au\in Y_{+}$ holds that $u\in X_{+}$. The
main example is the Laplace operator acting on scalar-valued functions
defined on different domains. It is shown later on in this Appendix
that the inverse of an operator that satisfies the maximum principle
is monotone increasing. The following result gives a simple sufficient
condition for an operator to satisfy the maximum principle. This
result is useful on weak formulations of PDE.
\begin{lemma}
\label{L:OBS-MP-suff}
Let $X$, $X_{+}$ be an ordered Banach space, and $A: X\to X^{*}$ be a
linear and coercive map. Assume that $X_{+}$ is a generating order
cone, and that for all $u\in X$ such that $Au\in X_{+}^{*}$ there
exists a decomposition $u= u^{+} - u^{-}$ with $u^{+}$, $u^{-}\in
X_{+}$ that also satisfies $Au^{+}(u^{-}) =0$. Then, the operator $A$
satisfies the maximum principle.
\end{lemma}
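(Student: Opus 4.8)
The plan is to show that under the stated hypotheses, if $Au \in X_+^*$ then $u \in X_+$, which is exactly the linear maximum principle. Let $u \in X$ with $Au \in X_+^*$. By hypothesis there is a decomposition $u = u^+ - u^-$ with $u^+, u^- \in X_+$ satisfying $(Au^+)(u^-) = 0$. The key idea is to test the inequality $Au \geqs 0$ (meaning $(Au)(v) \geqs 0$ for all $v \in X_+$) against the element $u^- \in X_+$, and to exploit coercivity of $A$ to conclude $u^- = 0$.

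First I would write $(Au)(u^-) \geqs 0$, which holds since $u^- \in X_+$ and $Au \in X_+^*$. Expanding $u = u^+ - u^-$ gives
\[
0 \leqs (Au)(u^-) = (Au^+)(u^-) - (Au^-)(u^-) = -(Au^-)(u^-),
\]
where the last equality uses the compatibility condition $(Au^+)(u^-) = 0$. Hence $(Au^-)(u^-) \leqs 0$. Now invoke coercivity of $A$: there is a constant $c > 0$ with $(Av)(v) \geqs c\|v\|^2$ for all $v \in X$. Applying this with $v = u^-$ yields $c\|u^-\|^2 \leqs (Au^-)(u^-) \leqs 0$, forcing $u^- = 0$. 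Therefore $u = u^+ \in X_+$, which establishes the maximum principle for $A$.

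The main obstacle — really the only subtle point — is making sure the pairing manipulations are legitimate and that the coercivity hypothesis is being used in the right form; everything else is a short algebraic chain. One should note that the hypothesis that $X_+$ is generating is what makes the decomposition $u = u^+ - u^-$ a meaningful structural assumption to impose (and is implicitly needed for the weak-formulation applications where such a splitting into positive and negative parts is available), but in the abstract argument above it is the compatibility identity $(Au^+)(u^-) = 0$ together with coercivity that does the actual work. I would close by remarking that this lemma applies in particular to $A_{\tiL}$ and its shifts $A_{\tiL}^s$ appearing in the Hamiltonian constraint analysis, since there the relevant decomposition is $u = u^+ - u^-$ into positive and negative parts in $W^{1,2}$, and $\langle \nabla u^+, \nabla u^- \rangle = 0$ holds pointwise a.e.
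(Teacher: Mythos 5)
Your argument is the same as the paper's: test $Au \geqs 0$ against $u^-$, use the compatibility identity $(Au^+)(u^-)=0$ to collapse the expansion to $-(Au^-)(u^-) \geqs 0$, then invoke coercivity to force $u^-=0$. The proof is correct, and your side remark on the role of the generating hypothesis matches the paper's discussion (it guarantees $X^*_+$ is itself an order cone, via Lemma~\ref{L:dual}).
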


\begin{proof}
Since the order cone $X_{+}$ is generating, the space $X^{*}$ is also
an ordered Banach space. Denote its order cone by $X^{*}_{+}$.  The
assumption that the order cone $X_{+}$ is generating also implies that
for any element $u\in X$ there exists a decomposition $u = u^{+} -
u^{-}$ with $u^{+}$, $u^{-}\in X_{+}$. By hypothesis, there exists at
least one decomposition with the extra property that
$Au^{+}(u^{-})=0$.  Now, by definition of the order in the space
$X^{*}$ we have that
\[
Au\in X_{+}^{*} \LRI Au(\un u)\geqs 0 \quad \forall \, \un u \in X_{+}.
\]
Pick as test function $\un u = u^{-}$. Then,
\[
0 \leqs Au(u^{-}) = A(u^{+} - u^{-})(u^{-}) 
=Au^{+}(u^{-}) - Au^{-}(u^{-}) = -Au^{-}(u^{-}),
\]
where the last equality comes from the condition
$Au^{+}(u^{-})=0$. Therefore, we have
\[
Au^{-}(u^{-}) \leqs 0 \RI u^{-}=0,
\]
because $A$ is coercive. So we showed that $u=u^{+} \in X_{+}$. This
establish the Lemma.\end{proof}

An example is the weak form of the shifted Laplace-Beltrami operator
$\Delta + s$ on scalar functions on a closed manifold $\cM$, where $s>0$.
Consider the case $X = W^{1,2}$, with
$Y = X^{*} = W^{-1,2}$, and $X_{+}=W^{1,2}_{+}$, while $Y_{+} =
W^{-1,2}_{+}$. The Laplace operator in this case is given by $A:X \to
X^{*}$ with action $Au(v):=(\nabla u,\nabla v)$.  It is not difficult
to check that this operator satisfies the hypothesis in
Lemma~\ref{L:OBS-MP-suff}. Therefore, this operator satisfies the
maximum principle, that is, $Au\in W^{-1,2}_{+}$ implies $u\in
W^{1,2}_{+}$, that is, $u\geqs 0$ a.e. in the manifold $\cM$.

\subsection{Monotone increasing maps}
\label{sec:monotone}

Let $X$, $X_{+}$ and $Y$, $Y_{+}$ be two ordered Banach spaces. An
operator $F:X\to Y$ is {\bf monotone increasing} iff for all $x$, $\un
x\in X$ such that $x-\un x \in X_{+}$ holds that $F(x) -F(\un x)\in
Y_{+}$. An operator $F:X\to Y$ is {\bf monotone decreasing} iff for
all $x$, $\un x\in X$ such that $x-\un x \in X_{+}$ holds that
$-\bigl[F(x) -F(\un x)\bigr]\in Y_{+}$. 
The main result for these types of maps is the following;
it can be found as Theorem~7.A in \cite{Zeidler-I}, page
283, and Corollary~7.18 on page 284. 
We reproduce it here for completeness, without the proof.
\begin{theorem}{\bf (Fixed point for increasing operators)}
\label{T:FPI}
Let $X$ be an ordered Banach space, with a normal order cone $X_{+}$.
Let $T:[x_{-}, x_{+}]\subset X\to X$ be a monotone increasing, compact
map. If $-\bigl[x_{-} - T(x_{-})\bigr]\in X_{+}$ and $x_{+}-
T(x_{+})\in X_{+}$, then the iterations
\begin{align*}
x_{n+1} &:= T(x_n),\qquad x_0 = x_{-},\\
\hat x_{n+1} &:= T(\hat x_n),\qquad \hat x_0 = x_{+},
\end{align*}
converge to $x$ and $\hat x\in [x_{-},x_{+}]$, respectively, and the
following estimate holds,
\begin{equation}
\label{eqn:bounds}
x_{-} \leqs x_n \leqs x 
\leqs \hat x \leqs \hat x_n \leqs x_{+},
\qquad \forall n=\N.
\end{equation}
\end{theorem}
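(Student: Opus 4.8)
The final statement to prove is Theorem~\ref{T:FPI} (fixed point for increasing operators). The plan is to establish it as a consequence of the Schauder Theorem (Theorem~\ref{T:SCHAUDER-B}) together with the monotone iteration technique; since the statement says ``we reproduce it here for completeness, without the proof,'' I would give the standard argument but it is worth laying out carefully.

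First I would observe that the interval $U = [x_-,x_+]$ is nonempty (it contains $x_-$), convex (any order interval built from an order cone is convex, as noted in Appendix~\ref{sec:OBS}), closed (order cones are closed by definition, so $[x_-,x_+]$ is an intersection of two closed half-spaces $x_- + X_+$ and $x_+ - X_+$), and \emph{bounded}, the last point being exactly where the normality hypothesis on $X_+$ enters: by Lemma~\ref{L:normal}, every closed interval in an ordered Banach space with normal order cone is bounded. Next I would show $T$ maps $U$ into $U$: given $x \in [x_-,x_+]$, monotonicity gives $T(x_-) \leqs T(x) \leqs T(x_+)$, and combining with the hypotheses $x_- \leqs T(x_-)$ and $T(x_+) \leqs x_+$ yields $x_- \leqs T(x) \leqs x_+$, i.e. $T(x) \in U$. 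Since $T:U\to U$ is compact and $U$ is nonempty, convex, closed, bounded, Theorem~\ref{T:SCHAUDER-B} already furnishes a fixed point in $U$; but the theorem asserts more — convergence of the two monotone sequences and the sandwiching estimate~\eqref{eqn:bounds} — so I would continue.

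For the iterates, I would proceed by induction to show the sequence $x_0 = x_-$, $x_{n+1} = T(x_n)$ is monotone increasing and bounded above by $x_+$: the base step is $x_0 = x_- \leqs T(x_-) = x_1$; if $x_{n-1} \leqs x_n$ then applying the monotone increasing $T$ gives $x_n = T(x_{n-1}) \leqs T(x_n) = x_{n+1}$, and since $x_n \in [x_-,x_+]$ we get $x_{n+1} = T(x_n) \leqs T(x_+) \leqs x_+$. Thus $\{x_n\}$ is an increasing chain inside the bounded set $U$. Because $T$ is compact, $\{T(x_n)\} = \{x_{n+1}\}$ has a convergent subsequence $x_{n_k} \to x$; by monotonicity of the whole sequence together with Lemma~\ref{L:o-2}(iii) (order relations pass to limits, since $X_+$ is closed), the full sequence converges to this same limit $x \in [x_-,x_+]$, and continuity of $T$ (which follows from compactness plus the standard fact that a compact map defined on a closed bounded set in this context is continuous — or more simply, $T$ is assumed continuous as part of being ``compact'' in the sense of Theorem~\ref{T:SCHAUDER-B}) gives $x = \lim x_{n+1} = \lim T(x_n) = T(x) = T(\lim x_n) = x$, wait — more precisely $T(x) = T(\lim_n x_n) = \lim_n T(x_n) = \lim_n x_{n+1} = x$, so $x$ is a fixed point. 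The symmetric argument starting from $\hat x_0 = x_+$ produces a decreasing sequence $\hat x_n$ converging to a fixed point $\hat x \in [x_-,x_+]$.

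Finally, for the sandwiching estimate~\eqref{eqn:bounds}, I would note that $x_- \leqs x_n$ and $x_n \leqs x$ follow from monotonicity of the increasing sequence and its convergence to $x$ from below (again using Lemma~\ref{L:o-2}(iii)); similarly $\hat x \leqs \hat x_n \leqs x_+$. The one remaining inequality $x \leqs \hat x$ is obtained by a cross-induction: $x_0 = x_- \leqs x_+ = \hat x_0$, and if $x_n \leqs \hat x_n$ then $x_{n+1} = T(x_n) \leqs T(\hat x_n) = \hat x_{n+1}$ by monotonicity, so $x_n \leqs \hat x_n$ for all $n$, and passing to the limit gives $x \leqs \hat x$. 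Assembling these chains gives $x_- \leqs x_n \leqs x \leqs \hat x \leqs \hat x_n \leqs x_+$ for all $n \in \N$, which is exactly~\eqref{eqn:bounds}. I expect the only subtle point — and hence the ``main obstacle'' such as it is — to be the rigorous justification that the full monotone sequence converges (not merely a subsequence), which rests on the interplay of monotonicity with the closedness of the order cone; everything else is bookkeeping with the order-relation lemmas of Appendix~\ref{sec:OBS}.
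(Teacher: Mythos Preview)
The paper does not actually prove this theorem: immediately before the statement it says the result ``can be found as Theorem~7.A in \cite{Zeidler-I}, page 283, and Corollary~7.18 on page 284. We reproduce it here for completeness, without the proof.'' So there is nothing to compare your argument against in the paper itself.

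That said, your proof is correct and is essentially the standard argument one finds in Zeidler. The only place worth tightening is the passage from subsequential to full convergence of $\{x_n\}$: you correctly flag this as the subtle point, and the clean way to finish is to use normality explicitly. Once you know $x_{n_k}\to x$ and $x_n$ is increasing with $x_n\leqs x$ for all $n$ (the latter by closedness of the cone), then for $n\geqs n_k$ you have $0\leqs x-x_n\leqs x-x_{n_k}$, whence normality gives $\|x-x_n\|\leqs a\|x-x_{n_k}\|\to 0$. Also, your aside about continuity is harmless but unnecessary: in Zeidler's convention (which the paper adopts in Theorem~\ref{T:SCHAUDER-B}), a compact operator is by definition continuous.
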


We are interested in the following class of nonlinear problems:
Find an element $x\in X$ which solves the equation
\begin{equation}
\label{eqn:affine}
Ax + F(x)=0,
\end{equation}
where the principal part involves an
invertible linear operator $A:X \to Y$ satisfying the maximum principle, 
and the non-principal part involves a nonlinear operator
$F:X\to Y$ which has monotonicity properties.
We now establish some basic results for this class of problems.
The first two results relate linear, invertible operators that satisfy the
maximum principle with monotone increasing (decreasing) operators.
\begin{lemma}
\label{L:Linverse}
Let $X$, $X_{+}$ and $Y$, $Y_{+}$ be two ordered Banach spaces.  Let
$A:X\to Y$ be a linear, invertible operator satisfying the maximum
principle. Then, the inverse operator $A^{-1}:Y\to X$ is monotone
increasing.
\end{lemma}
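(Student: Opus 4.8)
The plan is to prove that if $A : X \to Y$ is a linear, invertible operator satisfying the maximum principle between ordered Banach spaces $(X, X_+)$ and $(Y, Y_+)$, then $A^{-1} : Y \to X$ is monotone increasing. Recall that for a linear operator, the maximum principle reads: for all $u \in X$ with $Au \in Y_+$ we have $u \in X_+$. Also recall that a linear operator $T$ between ordered Banach spaces is monotone increasing iff $Ty \in X_+$ whenever $y \in Y_+$ (since for linear maps, monotonicity on differences reduces to monotonicity on the cone). So the whole statement collapses to a short implication.

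First I would take an arbitrary $y \in Y_+$ and set $u := A^{-1} y \in X$, which is well-defined since $A$ is invertible. Then $Au = A(A^{-1}y) = y \in Y_+$. Applying the maximum principle to $u$, we conclude $u \in X_+$, that is, $A^{-1} y \in X_+$. Since $y \in Y_+$ was arbitrary, this shows $A^{-1}(Y_+) \subseteq X_+$, which is precisely the statement that the linear operator $A^{-1}$ is monotone increasing. To be completely explicit about the reduction to differences: given $y_1, y_2 \in Y$ with $y_1 - y_2 \in Y_+$, linearity gives $A^{-1}y_1 - A^{-1}y_2 = A^{-1}(y_1 - y_2) \in X_+$ by what was just shown, so $A^{-1}y_1 \geqs A^{-1}y_2$.

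There is essentially no obstacle here; the only thing to be careful about is making sure the definition of the maximum principle for a linear operator is invoked in the correct direction (the hypothesis $Au \in Y_+$ yields the conclusion $u \in X_+$), and that invertibility of $A$ guarantees every $y \in Y$, in particular every $y \in Y_+$, is of the form $Au$ for a unique $u$. The argument is a one-line application of the definitions once these are in place.

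\begin{proof}
Since $A$ is linear and invertible, for every $y \in Y$ there is a unique $u \in X$ with $Au = y$, namely $u = A^{-1}y$, and $A^{-1}$ is linear. Hence it suffices to show that $A^{-1}$ maps $Y_{+}$ into $X_{+}$. Let $y \in Y_{+}$ and put $u = A^{-1}y$. Then $Au = y \in Y_{+}$, so the maximum principle for the linear operator $A$ gives $u \in X_{+}$, i.e. $A^{-1}y \in X_{+}$. Now let $y$, $\un y \in Y$ with $y - \un y \in Y_{+}$. By linearity, $A^{-1}y - A^{-1}\un y = A^{-1}(y - \un y) \in X_{+}$, which says exactly that $A^{-1}$ is monotone increasing.
\end{proof}
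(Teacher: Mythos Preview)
Your proof is correct and follows essentially the same approach as the paper's: both arguments observe that for $y - \un y \in Y_{+}$, applying the maximum principle to $u = A^{-1}(y - \un y)$ (since $Au = y - \un y \in Y_{+}$) gives $u \in X_{+}$, and linearity of $A^{-1}$ finishes. The only cosmetic difference is that you first isolate the case $y \in Y_{+}$ before passing to differences, whereas the paper works directly with the difference.
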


\begin{proof}
Let $y$, $\un y\in Y$ be such that $y-\un y\in Y_{+}$. Then,
\[
A\bigl(A^{-1}(y-\un y)\bigr)\in Y_{+} \RI
A^{-1}(y-\un y)\in X_{+} \LRI
A^{-1}y -A^{-1}\un y \in X_{+}.
\]
This establishes that the operator $A^{-1}$ is monotone increasing.
\end{proof}

\begin{lemma}
\label{L:T-decreasing}
Let $X$, $X_{+}$ and $Y$, $Y_{+}$ be two ordered Banach spaces. Let
$A:X\to Y$ be a linear, invertible operator satisfying the maximum
principle. Let $F:X\to Y$ be a monotone decreasing (increasing)
operator. Then, the operator $T:X\to X$ given by $T:= -A^{-1}F$ is
monotone increasing (decreasing).
\end{lemma}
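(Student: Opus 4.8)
The plan is to chain together the two monotonicity facts established just above: Lemma~\ref{L:Linverse}, which says that if $A:X\to Y$ is linear, invertible, and satisfies the maximum principle then $A^{-1}:Y\to X$ is monotone increasing; and the elementary observation that $-F:X\to Y$ is monotone increasing (decreasing) whenever $F$ is monotone decreasing (increasing), which follows immediately from the definition since $Y_+$ is closed under sums and nonnegative scalar multiples. The composition $T = -A^{-1}F = A^{-1}\circ(-F)$ is then a composition of two monotone increasing maps in the first case, hence monotone increasing; and a composition of a monotone increasing map with a monotone decreasing map in the second case, hence monotone decreasing.

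First I would record the trivial lemma that the composition of two monotone increasing maps between ordered Banach spaces is monotone increasing, and that composing a monotone increasing with a monotone decreasing map yields a monotone decreasing map; these are immediate from unwinding the definitions. Then, in the case that $F$ is monotone decreasing, I would fix $x,\un x\in X$ with $x-\un x\in X_+$, note that $(-F)(x)-(-F)(\un x) = -[F(x)-F(\un x)]\in Y_+$ by the definition of monotone decreasing applied to $F$, and then apply Lemma~\ref{L:Linverse} to conclude $A^{-1}(-F)(x) - A^{-1}(-F)(\un x)\in X_+$, i.e.\ $T(x)-T(\un x)\in X_+$. The case that $F$ is monotone increasing is symmetric: then $-F$ is monotone decreasing, so $(-F)(x)-(-F)(\un x)\in -Y_+$, and applying $A^{-1}$ (monotone increasing) gives $T(x)-T(\un x)\in -X_+$, i.e.\ $-[T(x)-T(\un x)]\in X_+$, which is exactly monotone decreasing.

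There is no real obstacle here — the statement is essentially a bookkeeping exercise assembled from the two preceding lemmas and the definitions of monotone increasing/decreasing operators. The only point requiring the slightest care is making sure the order cones line up correctly when passing through the sign change in $-F$ and then through the (order-preserving) inverse $A^{-1}$; writing out the membership relations $\cdot\in Y_+$ versus $-(\cdot)\in Y_+$ explicitly at each stage avoids any confusion. Thus the proof is a short two-line argument invoking Lemma~\ref{L:Linverse} and the definition of monotonicity, handled in the two parenthetical cases in parallel.

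\begin{proof}
First observe that if $G_1:X\to Y$ and $G_2:Y\to W$ are monotone increasing maps of ordered Banach spaces, then $G_2\circ G_1$ is monotone increasing: for $x-\un x\in X_+$ we have $G_1(x)-G_1(\un x)\in Y_+$, hence $G_2(G_1(x))-G_2(G_1(\un x))\in W_+$. Similarly, if $G_1$ is monotone decreasing and $G_2$ is monotone increasing, then $G_2\circ G_1$ is monotone decreasing.

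Next, the map $-F:X\to Y$ is monotone increasing if $F$ is monotone decreasing, and monotone decreasing if $F$ is monotone increasing. Indeed, if $F$ is monotone decreasing and $x-\un x\in X_+$, then $-[F(x)-F(\un x)]\in Y_+$, that is, $(-F)(x)-(-F)(\un x)\in Y_+$; the other assertion is analogous.

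By Lemma~\ref{L:Linverse}, the operator $A^{-1}:Y\to X$ is monotone increasing. Therefore, if $F$ is monotone decreasing, then $-F$ is monotone increasing and $T=-A^{-1}F = A^{-1}\circ(-F)$ is a composition of two monotone increasing maps, hence monotone increasing. If $F$ is monotone increasing, then $-F$ is monotone decreasing and $T = A^{-1}\circ(-F)$ is a composition of a monotone decreasing map followed by a monotone increasing map, hence monotone decreasing. This establishes the Lemma.
\end{proof}
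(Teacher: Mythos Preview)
Your proof is correct and takes essentially the same approach as the paper's: both arguments reduce to the fact that $A^{-1}$ is monotone increasing and that $-F$ reverses the direction of monotonicity of $F$. The only cosmetic difference is that you invoke Lemma~\ref{L:Linverse} explicitly and package the argument as ``composition of monotone maps,'' whereas the paper writes out the chain of implications directly, re-deriving the monotonicity of $A^{-1}$ from the maximum principle inline.
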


\begin{proof}
Assume first that the operator $F$ is monotone decreasing.  So, given
any $x$, $\un x\in X$ such that $x-\un x\in X_{+}$, the following
inequalities hold,
\begin{align*}
x-\un x \in X_{+} &\RI -\bigl[F(x) - F(\un x)\bigr] \in Y_{+},\\
&\LRI  A\bigl(-A^{-1}\bigl[ F(x) - F(\un x) \bigr]\bigr)\in Y_{+},\\
&\RI -A^{-1}\bigl[ F(x) - F(\un x) \bigr]\in X_{+},\\
&\LRI -\bigl[ A^{-1}F(x) -A^{-1}F(\un x) \bigr]\in X_{+},\\
&\LRI T(x) - T(\un x) \in X_{+},
\end{align*}
which establishes that the operator $T$ is monotone increasing.  In
the case that the operator $F$ is monotone increasing, then the first
line in the proof above changes into $x -\un x\in X_{+}$ implies that
$F(x) - F(\un x)\in Y_{+}$, and then all the remaining inequalities in
the proof above are reverted. This establishes the Lemma.
\end{proof}

The next result translates the inequalities that satisfy sub- and
super-so\-lu\-tions to the equation $Ax+F(x)=0$, into inequalities for
the operator $T=-A^{-1}F$.
\begin{lemma}
\label{L:T-sub-super}
Assume the hypothesis in Lemma~\ref{L:T-decreasing}. 

If there exists an element $x_{+}\in X$ such that $Ax_{+}+F(x_{+})\in
Y_{+}$, then this element satisfies that
$x_{+}-T(x_{+})\in X_{+}$.

If there exists an element $x_{-}\in X$ such that $-\big[
Ax_{-}+F(x_{-})\bigr]\in Y_{+}$, then this element satisfies that
$-\bigl[x_{-}-T(x_{-})\bigr]\in X_{+}$.
\end{lemma}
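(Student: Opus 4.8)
The statement to prove is Lemma~\ref{L:T-sub-super}: under the hypotheses of Lemma~\ref{L:T-decreasing} (so $A:X\to Y$ linear, invertible, satisfying the maximum principle, and $F:X\to Y$ monotone decreasing), if $x_+$ satisfies $Ax_+ + F(x_+) \in Y_+$ then $x_+ - T(x_+) \in X_+$, and dually for $x_-$.

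The plan is to exploit the key identity $x_+ - T(x_+) = x_+ + A^{-1}F(x_+) = A^{-1}(Ax_+ + F(x_+))$, which holds because $A$ is linear and invertible, so $A^{-1}Ax_+ = x_+$. First I would write out this algebraic manipulation carefully. Then, by Lemma~\ref{L:Linverse}, since $A$ satisfies the maximum principle, the inverse $A^{-1}:Y\to X$ is monotone increasing; in particular $A^{-1}$ maps $Y_+$ into $X_+$ (apply the monotone-increasing property with the pair $y$ and $0$, using that $A^{-1}0 = 0$ by linearity). Since by hypothesis $Ax_+ + F(x_+) \in Y_+$, we conclude $A^{-1}(Ax_+ + F(x_+)) \in X_+$, i.e., $x_+ - T(x_+) \in X_+$, which is the first claim.

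The second claim is entirely analogous: from $-[Ax_- + F(x_-)] \in Y_+$ and the identity $-[x_- - T(x_-)] = -A^{-1}(Ax_- + F(x_-)) = A^{-1}(-[Ax_- + F(x_-)])$, again using linearity of $A^{-1}$, we apply the monotonicity of $A^{-1}$ (sending $Y_+$ to $X_+$) to conclude $-[x_- - T(x_-)] \in X_+$. I would present both cases in a few lines.

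There is no real obstacle here — the lemma is a direct bookkeeping consequence of the definitions of $T$, the maximum principle, and linearity, once one notices the identity relating the sub/super-solution residual $Ax_\pm + F(x_\pm)$ to the fixed-point residual $x_\pm - T(x_\pm)$ via $A^{-1}$. The only point requiring a word of care is justifying that a monotone increasing linear (or merely additive-on-the-cone) operator sends the order cone into the order cone: this follows since $A^{-1}$ is monotone increasing and $A^{-1}0 = 0$, so $y \in Y_+$ means $y - 0 \in Y_+$, hence $A^{-1}y - A^{-1}0 = A^{-1}y \in X_+$.

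\begin{proof}
Since $A$ is linear and invertible, for any $x \in X$ we have the identity
\[
x - T(x) = x + A^{-1}F(x) = A^{-1}\bigl(Ax + F(x)\bigr).
\]
By Lemma~\ref{L:Linverse}, the operator $A^{-1}:Y\to X$ is monotone increasing. In particular, since $A^{-1}$ is linear we have $A^{-1}0 = 0$, so for any $y \in Y_{+}$ the relation $y - 0 \in Y_{+}$ implies $A^{-1}y - A^{-1}0 = A^{-1}y \in X_{+}$; that is, $A^{-1}$ maps $Y_{+}$ into $X_{+}$.

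Now suppose $x_{+}\in X$ satisfies $Ax_{+}+F(x_{+})\in Y_{+}$. Then
\[
x_{+} - T(x_{+}) = A^{-1}\bigl(Ax_{+} + F(x_{+})\bigr) \in X_{+}.
\]
Similarly, suppose $x_{-}\in X$ satisfies $-\bigl[Ax_{-}+F(x_{-})\bigr]\in Y_{+}$. Using linearity of $A^{-1}$,
\[
-\bigl[x_{-} - T(x_{-})\bigr] = -A^{-1}\bigl(Ax_{-} + F(x_{-})\bigr) = A^{-1}\bigl(-\bigl[Ax_{-} + F(x_{-})\bigr]\bigr) \in X_{+}.
\]
This establishes the Lemma.
\end{proof}
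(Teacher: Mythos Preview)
Your proof is correct and takes essentially the same approach as the paper. The only cosmetic difference is that the paper invokes the maximum principle for $A$ directly (writing $Ax_+ + F(x_+) = A(x_+ + A^{-1}F(x_+)) \in Y_+ \Rightarrow x_+ + A^{-1}F(x_+) \in X_+$), whereas you route the same step through Lemma~\ref{L:Linverse}; these are equivalent formulations of the same observation.
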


\begin{proof}
The first statement in the Lemma can be shown as follows,
\begin{align*}
Ax_{+}+F(x_{+}) \in Y_{+} &\Leftrightarrow
A\bigl( x_{+} +A^{-1}F(x_{+})\bigr) \in Y_{+}\\
&\Rightarrow  x_{+} +A^{-1}F(x_{+}) \in X_{+},
\end{align*}
which then establishes that $x_{+}-T(x_{+})\in X_{+}$.  In a similar
way, the second statement in the Lemma can be shown as follows,
\begin{align*}
-\bigl[Ax_{-}+F(x_{-})\bigr] \in Y_{+} &\Leftrightarrow
A\bigl(- x_{-} -A^{-1}F(x_{-})\bigr) \in Y_{+}\\
&\Rightarrow  -x_{-} -A^{-1}F(x_{-}) \in X_{+},
\end{align*}
which then establishes that $-\bigl[x_{-}-T(x_{-})\bigr]\in X_{+}$.
This establishes the Lemma.
\end{proof}

For nonlinear problems of the form~(\ref{eqn:affine}), one can use
Theorem~\ref{T:FPI} for monotone nonlinearities to conclude the
following.

\begin{corollary}
{\bf (Semi-linear equations with sub-/super-solutions)}
\label{C:equiv2}
Let $X$, $X_{+}$ and $Y$, $Y_{+}$ be two ordered Banach spaces where
$X_{+}$ is a normal order cone. Let $A: X\to Y$ be a linear,
invertible operator satisfying the maximum principle. Let $x_{+}$,
$x_{-}\in X$ be elements such that $(x_{+}-x_{-})\in X_{+}$, and then
assume that the operator $F:[x_{-}, x_{+}]\subset X\to Y$ is monotone
decreasing and compact. If the elements $x_{-}$ and $x_{+}$ satisfy
the relations
\begin{equation}
\label{ss-sol2}
-\bigl[ Ax_{-} + F(x_{-})\bigr] \in Y_{+},\qquad
Ax_{+} + F(x_{+}) \in Y_{+},
\end{equation}
then there exists a solution $x\in [x_{-},x_{+}]\subset X$ of the
equation $Ax + F(x)=0$.
\end{corollary}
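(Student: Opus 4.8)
The plan is to reduce the semi-linear equation $Ax + F(x) = 0$ to a fixed-point equation for a monotone increasing, compact map on the order interval $[x_-,x_+]$, and then invoke Theorem~\ref{T:FPI}. First I would form the operator $T := -A^{-1}F : [x_-,x_+] \to X$. This requires knowing that $A^{-1}$ exists and is well-behaved, which is given by hypothesis ($A$ linear, invertible, satisfying the maximum principle). By Lemma~\ref{L:Linverse}, $A^{-1}$ is monotone increasing, and since $F$ is monotone decreasing on $[x_-,x_+]$, Lemma~\ref{L:T-decreasing} gives that $T = -A^{-1}F$ is monotone increasing on $[x_-,x_+]$. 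Compactness of $T$ follows from compactness of $F$ (by hypothesis) composed with the bounded linear operator $A^{-1}$, since the composition of a compact map with a continuous (here bounded linear) map is compact.

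Next I would translate the sub- and super-solution inequalities~\eqref{ss-sol2} into the corresponding inequalities for $T$. This is exactly the content of Lemma~\ref{L:T-sub-super}: from $-[Ax_- + F(x_-)] \in Y_+$ we obtain $-[x_- - T(x_-)] \in X_+$, i.e. $x_- \leqs T(x_-)$; and from $Ax_+ + F(x_+) \in Y_+$ we obtain $x_+ - T(x_+) \in X_+$, i.e. $T(x_+) \leqs x_+$. Together with the monotonicity of $T$, this shows in particular that $T$ maps $[x_-,x_+]$ into itself: for $x \in [x_-,x_+]$ we have $x_- \leqs T(x_-) \leqs T(x) \leqs T(x_+) \leqs x_+$, using that $X_+$ is an order cone (transitivity of $\leqs$, Lemma~\ref{L:o-1}). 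Since $X_+$ is by hypothesis a normal order cone, the closed interval $[x_-,x_+]$ is bounded (Lemma~\ref{L:normal}), so $T:[x_-,x_+] \to [x_-,x_+]$ is a monotone increasing, compact self-map of a set satisfying the structural hypotheses of Theorem~\ref{T:FPI}.

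Then I would apply Theorem~\ref{T:FPI} directly: with $x_+ - T(x_+) \in X_+$ and $-[x_- - T(x_-)] \in X_+$ holding as just verified, the theorem yields a fixed point $x \in [x_-,x_+]$ with $x = T(x) = -A^{-1}F(x)$. Applying $A$ to both sides gives $Ax = -F(x)$, i.e. $Ax + F(x) = 0$, which is the desired solution. I do not anticipate a serious obstacle here; the only points requiring minor care are (i) checking that $F$, hence $T$, is genuinely defined on all of $[x_-,x_+]$ and that the composition makes sense as a map into $X$ (clear from the statement, since $A^{-1}:Y\to X$), and (ii) ensuring that the hypotheses of Theorem~\ref{T:FPI} are matched verbatim — in particular that "compact" for $T$ means compact as a map on the interval, which follows since $F$ is compact on $[x_-,x_+]$ and $A^{-1}$ is bounded linear. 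Everything else is bookkeeping with the order-relation lemmas already established.
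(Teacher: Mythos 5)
Your proposal is correct and follows essentially the same route as the paper: form $T=-A^{-1}F$, invoke Lemmata~\ref{L:Linverse}, \ref{L:T-decreasing}, and~\ref{L:T-sub-super} to get monotone increase and the translated barrier inequalities, note compactness from the composition, and apply Theorem~\ref{T:FPI}. The extra bookkeeping you include (explicitly verifying invariance of $T$ on $[x_-,x_+]$ and citing Lemma~\ref{L:normal} for boundedness of the interval) is a small amount of additional detail that is already implicit in the paper's appeal to Theorem~\ref{T:FPI}.
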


\begin{proof}
The operator $A$ is invertible, then rewrite the equation $Ax+F(x)=0$
as a fixed-point equation,
\begin{equation}
\label{eqn:fixed-point2}
x = -A^{-1}F(x) =: T(x).
\end{equation}
By Lemma \ref{L:T-decreasing}, we know that the map $T :X\to X$ is
monotone increasing. Moreover, this operator $T$ it is compact, since
is the composition of the continuous mapping $-A^{-1}$ and the compact
map $F$. The elements $x_{-}$ and $x_{+}$ satisfy Eq.~(\ref{ss-sol2}),
therefore, by Lemma~\ref{L:T-sub-super}, they are also sub- and
super-solutions for the fixed-point equation involving the map $T$. It
follows from Theorem \ref{T:FPI} that there exists an element $x\in X$
solution to the fixed-point equation~(\ref{eqn:fixed-point2}), and
this solution satisfies the bounds $x_{-}\leqs x\leqs x_{+}$.
\end{proof}

\subsection{Sobolev spaces on closed manifolds}
\label{sec:Sobolev}

In this appendix we will recall some properties of Sobolev spaces of sections 
of vector bundles over closed manifolds.
The following definition makes precise what we mean by fractional order 
Sobolev spaces.
We expect that without much difficulty all the results in this paper can be 
modified to reflect other smoothness classes such as Bessel potential spaces 
or general Besov spaces.

\begin{definition}\label{d:sob}
For $s\geqs0$ and $1\leqs p\leqs\infty$, we denote by $W^{s,p}(\R^n)$ the space of all distributions $u$ defined in $\R^n$, such that
\begin{itemize}
\item[(a)] when $s=m$ is an integer,
\begin{equation*}
\|u\|_{m,p}=\sum_{|\nu|\leqs m}\|\partial^\nu u\|_{p}<\infty,
\end{equation*}
where $\|\cdot\|_{p}$ is the standard $L^p$-norm in $\R^n$;
\item[(b)] and when $s=m+\sigma$ with $m$ (nonnegative) integer and $\sigma\in(0,1)$,
\begin{equation*}
\|u\|_{s,p}=\|u\|_{m,p}+\sum_{|\nu|=m}\|\partial^\nu u\|_{\sigma,p}<\infty;
\end{equation*}
where 
\begin{equation*}
\|u\|_{\sigma,p}=\left(\iint_{\R^n\times\R^n}\frac{|u(x)-u(y)|^p}{|x-y|^{n+\sigma p}}dxdy\right)^{\frac1p},
\qquad\textrm{for }1\leqs p<\infty,
\end{equation*}
and
\begin{equation*}
\|u\|_{\sigma,\infty}=\mathrm{ess~sup}_{x,y\in\R^n}\,\frac{|u(x)-u(y)|}{|x-y|^{\sigma}}.
\end{equation*}
\end{itemize}
For $s<0$ and $1<p<\infty$,  $W^{s,p}(\R^n)$ denotes the topological dual of $W^{-s,p'}(\R^n)$, where $\frac1p+\frac1{p'}=1$.
\end{definition}

These well known spaces are Banach spaces with corresponding norms, and become Hilbert spaces when $p=2$.
We refer to \cite{Gris85,Trie83} and references therein for further properties.

Now we will define analogous spaces on closed manifolds.
Let $\cM$ be an $n$-dimensional smooth closed manifold, and let $\{(U_i,\varphi_i)\}$ be a collection of charts such that $\{U_i\}$ forms a finite cover of $\cM$.
Then for any distribution $u\in C^\infty_0(U_i)^*$, the pull-back $\varphi_{i}^*(u)\in C^\infty_0(\varphi_{i}(U_i))^*$ is defined by $\varphi_{i}^*(u)(v)=u(v\circ\varphi_i)$ for all $v\in C^\infty_0(\varphi_{i}(U_i))$.
Extending $\varphi_{i}^*(u)$ by zero outside $\varphi_{i}(U_i)$, in the following we treat it as a distribution on $\R^n$.
Let $\{\chi_i\}$ be a smooth partition of unity subordinate to $\{U_i\}$.

\begin{definition}\label{d:sobm}
For $s\in\R$ and $p\in(1,\infty)$,
we denote by $W^{s,p}(\cM)$ the space of all distributions $u$ defined in $\cM$, such that
\begin{equation}\label{e:sobnormpu}
\|u\|_{s,p}=\sum_{i}\|\varphi_{i}^*(\chi_iu)\|_{s,p}<\infty,
\end{equation}
where the norm under the sum is the $W^{s,p}(\R^n)$-norm.
In case $s\geqs0$, these Sobolev spaces can also be defined for $p=1$ and $p=\infty$.
\end{definition}

We collect the most basic properties of these spaces in the following lemma.
Recall that a Riemannian metric on $\cM$ induces a volume form on $\cM$, 
so that $L^p$ spaces can be defined on $\cM$
(cf.~\cite{Rosenberg97}).

\begin{lemma}\label{l:basob}
Either let $s\geqs0$ and $p\in[1,\infty]$ or let $s<0$ and $p\in(1,\infty)$. Then the space $W^{s,p}(\cM)$ is a Banach space. It is independent of the choice of the covering charts $\{(U_i,\varphi_i)\}$ and the partition of unity $\{\chi_i\}$. 
In particular, the different norms \eqref{e:sobnormpu} are equivalent.
Moreover, the followings are true when $\cM$ is equipped with a smooth Riemannian metric.

(a) Let $\nabla$ be the Levi-Civita connection associated to the Riemannian metric. Then for any nonnegative integer $m$,
\begin{equation*}
\|u\|_{m,p}'=\sum_{i=0}^m\|\nabla^iu\|_{p},
\end{equation*}
is an equivalent norm on $W^{m,p}(\cM)$.
In particular, we have $W^{0,p}(\cM)=L^p(\cM)$.

(b) Identifying $C^\infty(\cM)$ as a subspace of distributions via the $L^2$-inner product, $C^{\infty}(\cM)$ is densely embedded in $W^{s,p}(\cM)$ for any $s\in\R$ and $p\in(1,\infty)$.

(c) Let $s\in\R$ and $p\in(1,\infty)$.
Then the $L^2$-inner product on $C^\infty(\cM)$ extends uniquely to a continuous bilinear pairing $W^{s,p}(\cM)\otimes W^{-s,p'}(\cM)\to\R$,
where $\frac1p+\frac1{p'}=1$.
Moreover, the pairing induces a topological isomorphism between $W^{-s,p'}(\cM)$ and the topological dual space of $W^{s,p}(\cM)$.
\end{lemma}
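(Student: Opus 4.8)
\textbf{Proof proposal for Lemma~\ref{l:basob}.}

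The plan is to reduce every assertion about sections over $\cM$ to the corresponding, already-known, assertion for $W^{s,p}(\R^n)$ via the localization machinery (charts $\{(U_i,\varphi_i)\}$ together with the subordinate partition of unity $\{\chi_i\}$) encoded in Definition~\ref{d:sobm}, and then to handle the few genuinely global points (density of $C^\infty(\cM)$, the duality pairing) by a standard gluing/argument-by-adjoint. First I would verify that $W^{s,p}(\cM)$ is a Banach space: completeness follows because a Cauchy sequence $\{u_k\}$ produces, for each $i$, a Cauchy sequence $\{\varphi_i^*(\chi_i u_k)\}$ in $W^{s,p}(\R^n)$ (supported in the fixed compact set $\overline{\varphi_i(U_i)}$), whose limits $v_i$ can be pulled back, multiplied by a cutoff that is $1$ on $\mathrm{supp}\,\chi_i$, and summed to recover a candidate limit $u$; one checks $\chi_i u = $ the corresponding local piece and that $\|u_k - u\|_{s,p}\to0$. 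Chart-independence and partition-of-unity-independence is the key preliminary: given two systems $(U_i,\varphi_i,\chi_i)$ and $(U'_j,\varphi'_j,\chi'_j)$, write $\chi'_j u = \sum_i \chi'_j\chi_i u$ and use that on $\varphi'_j(U'_j\cap U_i)$ the transition map $\varphi'_j\circ\varphi_i^{-1}$ is a smooth diffeomorphism with all derivatives bounded on the relevant compact set, hence induces a bounded linear map on $W^{s,p}$ of functions supported there (for $s\geqs0$ this is the chain rule plus the composition lemma for Besov/Sobolev spaces under smooth diffeomorphisms; for $s<0$ one passes to the adjoint, using $\frac1p+\frac1{p'}=1$); multiplication by the smooth compactly supported function $\chi'_j\chi_i$ is likewise bounded on $W^{s,p}(\R^n)$. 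Summing over the finite index sets gives the two-sided norm equivalence.

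For part~(a), fix a smooth Riemannian metric and its Levi-Civita connection $\nabla$. On each chart, $\nabla^l u$ is expressed through ordinary partial derivatives of the components of $u$ together with the Christoffel symbols, which are smooth and bounded on $\overline{\varphi_i(U_i)}$; hence $\|\nabla^l u\|_p$ is controlled, up to constants, by $\sum_i \|\varphi_i^*(\chi_i u)\|_{l,p}$ and conversely (the converse direction uses that ordinary derivatives of $\chi_i u$ can be recovered from covariant derivatives, again by the smoothness and boundedness of the connection coefficients and of $\chi_i$ on the chart). Summing over $l=0,\dots,m$ and over $i$ yields the equivalence of $\|\cdot\|_{m,p}'$ with $\|\cdot\|_{m,p}$; taking $m=0$ gives $W^{0,p}(\cM)=L^p(\cM)$ with equivalent norms, where the ambient $L^p$ is the one defined through the metric volume form (which on each chart differs from Lebesgue measure by a smooth positive bounded density).

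For part~(b), given $u\in W^{s,p}(\cM)$, mollify each local piece: $\varphi_i^*(\chi_i u)\in W^{s,p}(\R^n)$ can be approximated in the $W^{s,p}(\R^n)$-norm by $C_0^\infty$ functions (this is the standard density statement for $W^{s,p}(\R^n)$, $1<p<\infty$), and after multiplying the approximants by a fixed cutoff equal to $1$ on $\mathrm{supp}\,\chi_i$ and pushing forward by $\varphi_i^{-1}$ we get smooth sections supported in $U_i$; summing over $i$ produces a smooth section on $\cM$ close to $u$, and the identification of $C^\infty(\cM)$ inside the distributions is via the $L^2$-pairing with the volume form. For part~(c), the $L^2$-inner product of two smooth sections, localized, becomes a finite sum of pairings $\langle \varphi_i^*(\chi_i u), \varphi_i^*(\psi_i v)\rangle$ on $\R^n$ (with suitable Jacobian factors absorbed into smooth cutoffs), each of which extends continuously to $W^{s,p}(\R^n)\otimes W^{-s,p'}(\R^n)\to\R$ by Definition~\ref{d:sob}; summing gives the bilinear pairing $W^{s,p}(\cM)\otimes W^{-s,p'}(\cM)\to\R$, unique by density from part~(b). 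That the pairing realizes $W^{-s,p'}(\cM)$ as the full dual of $W^{s,p}(\cM)$ follows by the same device: a bounded functional on $W^{s,p}(\cM)$ restricts, via the inclusions $\iota_i\colon w\mapsto \varphi_i^{-1}$-pushforward of (cutoff)$\,\cdot w$, to bounded functionals on $W^{s,p}(\R^n)$, hence is represented by elements of $W^{-s,p'}(\R^n)$, which one pulls back, cuts off, and sums to obtain an element of $W^{-s,p'}(\cM)$ inducing the given functional; injectivity of the resulting identification comes from density of $C^\infty(\cM)$.

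The main obstacle I expect is the rigorous invariance under smooth diffeomorphisms of the chart transitions for \emph{negative} and \emph{fractional} $s$: for integer $s\geqs0$ it is the chain rule, but for fractional or negative orders one must invoke the (nontrivial but standard) fact that composition with a smooth diffeomorphism whose derivatives are all bounded, together with multiplication by a smooth compactly supported function, is a bounded operator on $W^{s,p}(\R^n)$, and in the negative case argue by duality using $\frac1p+\frac1{p'}=1$. Everything else is bookkeeping with the finite partition of unity and the compactness of chart closures.
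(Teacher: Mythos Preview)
Your proposal is a reasonable and essentially correct sketch of the standard arguments; the paper itself does not give a proof at all, but simply writes ``See for example~\cite{Aubin82,Hebey96,Rosenberg97,Schwarz95}.'' So your outline is in fact more detailed than what the paper provides, and it follows precisely the approach one finds in those references: reduce to $W^{s,p}(\R^n)$ via charts and a partition of unity, use boundedness of composition with smooth diffeomorphisms and multiplication by smooth cutoffs (by duality for negative $s$), and glue. The obstacle you flag---diffeomorphism invariance for fractional and negative $s$---is indeed the only nontrivial input, and it is exactly what one cites from the standard literature.
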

\begin{proof}
See for example~\cite{Aubin82,Hebey96,Rosenberg97,Schwarz95}.
\end{proof}

A main goal of this subsection is to extend the previous lemma to the case when the Riemannian metric is not smooth.
The following result will be of importance.

\begin{lemma}\label{l:sob-hol}
Let $s_i\geqs s$ with $s_1+s_2\geqs0$, and $1\leqs p,p_i\leqs\infty$ ($i=1,2$) be real numbers satisfying
\begin{equation*}
s_i-s\geqs n\left(\frac1{p_i}-\frac1p\right),
\qquad
s_1+s_2-s > n\left(\frac1{p_1}+\frac1{p_2}-\frac1p\right),
\end{equation*}
where the strictness of the inequalities can be interchanged if $s\in\N_0$.
In case $\min(s_1,s_2)<0$, in addition let $1<p,p_i<\infty$, and let
\begin{equation*}
s_1+s_2\geqs n\left(\frac1{p_1}+\frac1{p_2}-1\right).
\end{equation*}
Then, the pointwise multiplication of functions extends uniquely to a continuous bilinear map
\begin{equation*}
W^{s_1,p_1}(\cM)\otimes W^{s_2,p_2}(\cM)\rightarrow W^{s,p}(\cM).
\end{equation*}
\end{lemma}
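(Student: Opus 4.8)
The plan is to reduce the statement on the closed manifold $\cM$ to the corresponding known multiplication theorem on $\R^n$ for the spaces $W^{s,p}(\R^n)$, using the localization built into Definition~\ref{d:sobm}. First I would recall (or cite, e.g.\ from \cite{Trie83}) the Euclidean multiplication result: under the stated relations among $s, s_i, p, p_i, n$ — the ``sub-balance'' conditions $s_i - s \geqs n(\frac1{p_i}-\frac1p)$, the strict ``total-balance'' condition $s_1 + s_2 - s > n(\frac1{p_1}+\frac1{p_2}-\frac1p)$, and in the case $\min(s_1,s_2)<0$ the extra condition $s_1+s_2\geqs n(\frac1{p_1}+\frac1{p_2}-1)$ with all exponents in $(1,\infty)$ — pointwise multiplication extends to a bounded bilinear map $W^{s_1,p_1}(\R^n)\times W^{s_2,p_2}(\R^n)\to W^{s,p}(\R^n)$. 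The borderline interchange of strictness when $s\in\N_0$ is also classical (it comes from the fact that integer-order spaces are ordinary Sobolev spaces, for which the Hölder/Leibniz estimates close at the endpoint).

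Next I would transfer this to $\cM$. Fix the finite atlas $\{(U_i,\varphi_i)\}$ and partition of unity $\{\chi_i\}$ from Definition~\ref{d:sobm}, and also fix a second partition of unity $\{\tilde\chi_i\}$ subordinate to $\{U_i\}$ with $\tilde\chi_i \equiv 1$ on a neighborhood of $\operatorname{supp}\chi_i$. Given $u\in W^{s_1,p_1}(\cM)$ and $v\in W^{s_2,p_2}(\cM)$, write for each $i$
\begin{equation*}
\varphi_i^*(\chi_i uv) = \varphi_i^*(\chi_i)\,\bigl(\varphi_i^*(\tilde\chi_i u)\bigr)\bigl(\varphi_i^*(\tilde\chi_i v)\bigr),
\end{equation*}
which is an identity of distributions supported in $\varphi_i(U_i)\subset\R^n$ because $\chi_i = \chi_i\tilde\chi_i^2$ near $\operatorname{supp}\chi_i$. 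Since $\varphi_i^*(\chi_i)\in C^\infty_0(\R^n)$, multiplication by it is bounded on every $W^{s,p}(\R^n)$; applying the Euclidean product estimate to $\varphi_i^*(\tilde\chi_i u)$ and $\varphi_i^*(\tilde\chi_i v)$ gives
\begin{equation*}
\|\varphi_i^*(\chi_i uv)\|_{s,p} \leqs C_i\,\|\varphi_i^*(\tilde\chi_i u)\|_{s_1,p_1}\,\|\varphi_i^*(\tilde\chi_i v)\|_{s_2,p_2}.
\end{equation*}
Summing over the finitely many $i$ and using Lemma~\ref{l:basob} (equivalence of the norms defined by different partitions of unity, applied to $\{\tilde\chi_i\}$) yields $\|uv\|_{s,p}\leqs C\,\|u\|_{s_1,p_1}\,\|v\|_{s_2,p_2}$. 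Density of $C^\infty(\cM)$ in each of the three spaces (Lemma~\ref{l:basob}(b), valid since the relevant exponents lie in $(1,\infty)$, or is trivial for $p=\infty$ with the usual caveat that one works with the closure of smooth functions) then shows that the a priori smooth-function product extends uniquely to the claimed continuous bilinear map.

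The main obstacle I anticipate is purely bookkeeping about exponent ranges and the endpoint cases: one must check that the localization step does not degrade the hypotheses (it does not, since cutting off by a fixed $C^\infty_0$ function is harmless in every $W^{s,p}(\R^n)$), and one must be careful that when $s<0$ all of $p,p_1,p_2$ are genuinely in $(1,\infty)$ so that duality-defined negative-order spaces behave well and $C^\infty$ is dense — this is exactly why the hypothesis restricts to $1<p,p_i<\infty$ in the case $\min(s_1,s_2)<0$. A secondary subtlety is the strictness-interchange clause when $s\in\N_0$: here I would simply note that for integer $s$ the target $W^{s,p}$ is the classical Sobolev space, invoke the corresponding sharp Euclidean Leibniz estimate, and observe that the localization argument above is insensitive to which of the two inequalities is strict. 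No genuinely new analytic input beyond the standard Euclidean multiplication theorem is needed.
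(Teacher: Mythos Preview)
Your proposal is correct and follows the same spirit as the paper, which simply cites \cite{jZ77} for the case $s\geqs0$ and remarks that a duality argument handles $s<0$; your explicit partition-of-unity reduction to the Euclidean multiplication theorem is exactly the standard mechanism underlying such a citation. The only minor difference is that the paper isolates the negative-$s$ case via duality rather than packaging it into the cited Euclidean result, but this is a matter of presentation rather than substance.
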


\begin{proof}
A proof is given in \cite{jZ77} for the case $s\geqs0$, and by using a duality argument one can easily extend the proof to negative values of $s$.
\end{proof}

Some important special cases are considered in the following corollary.

\begin{corollary}\label{c:alg}
(a) If $p\in(1,\infty)$ and $s\in(\frac{n}p,\infty)$, then $W^{s,p}$ is a Banach algebra.
Moreover, if in addition $q\in(1,\infty)$ and $\sigma\in[-s,s]$ satisfy $\sigma-\frac{n}q\in[-n-s+\frac{n}p,s-\frac{n}p]$,
then the pointwise multiplication is bounded as a map $W^{s,p}\otimes W^{\sigma,q}\rightarrow W^{\sigma,q}$.

(b) Let $1<p,q<\infty$ and $\sigma\leqs s\geqs0$ satisfy $\sigma-\frac{n}q<2(s-\frac{n}p)$ and $\sigma-\frac{n}q\leqs s-\frac{n}p$.
Then the pointwise multiplication is bounded as a map $W^{s,p}\otimes W^{s,p}\rightarrow W^{\sigma,q}$. 
\end{corollary}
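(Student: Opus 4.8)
The plan is to obtain both parts of the corollary directly from Lemma~\ref{l:sob-hol}, by choosing the exponents in that lemma appropriately and verifying that the hypotheses stated here imply those of the lemma; no genuinely new analytic input is required.

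For the Banach algebra assertion in~(a), I would apply Lemma~\ref{l:sob-hol} with $s_1=s_2=s$, $p_1=p_2=p$ and target exponents $(s,p)$. Then $s_i\geqs s$ and $s_1+s_2=2s\geqs0$ hold trivially (recall $s>n/p>0$); the first, non-strict, inequality of the lemma reads $0\geqs0$; and the second, strict, inequality reads $s=s_1+s_2-s>n(2/p-1/p)=n/p$, which is exactly the hypothesis $s>n/p$. The case $\min(s_1,s_2)<0$ does not occur, so the other clauses of the lemma are vacuous, and we get a continuous bilinear extension of the pointwise product $W^{s,p}\otimes W^{s,p}\to W^{s,p}$, i.e. $W^{s,p}$ is a Banach algebra (after rescaling the norm). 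For the second statement in~(a), I would apply Lemma~\ref{l:sob-hol} with $(s_1,p_1)=(s,p)$, $(s_2,p_2)=(\sigma,q)$ and target $(\sigma,q)$. Here $s_1=s\geqs\sigma$, $s_2=\sigma\geqs\sigma$, and $s_1+s_2=s+\sigma\geqs0$ because $\sigma\geqs-s$. The first inequality becomes, on the one hand, $s-\sigma\geqs n(1/p-1/q)$, equivalently the endpoint bound $\sigma-n/q\leqs s-n/p$, and on the other hand $0=s_2-\sigma\geqs0$; the second inequality becomes $s=s_1+s_2-\sigma>n(1/p+1/q-1/q)=n/p$, which again holds. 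If $\sigma<0$ we are in the low-regularity case of the lemma, whose extra requirement $s+\sigma\geqs n(1/p+1/q-1)$ is precisely the lower endpoint bound $\sigma-n/q\geqs-n-s+n/p$, and the assumptions $p,q\in(1,\infty)$ are exactly what this case needs.

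Part~(b) is handled the same way, applying Lemma~\ref{l:sob-hol} with $s_1=s_2=s$, $p_1=p_2=p$ and target $(\sigma,q)$. One has $s_i=s\geqs\sigma$ and $s_1+s_2=2s\geqs0$, and since $s\geqs0$ the case $\min(s_1,s_2)<0$ never arises. The two inequalities of the lemma become $s-\sigma\geqs n(1/p-1/q)$ and $2s-\sigma>n(2/p-1/q)$, which after rearrangement are exactly the two hypotheses $\sigma-n/q\leqs s-n/p$ and $\sigma-n/q<2(s-n/p)$ — the non-strict one matching the non-strict inequality of the lemma and the strict one the strict inequality.

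The only point requiring any care is the bookkeeping of the index inequalities, and in particular keeping straight which of the two inequalities in Lemma~\ref{l:sob-hol} is permitted to be an equality: in all three applications the non-strict endpoint conditions we assume line up with the non-strict inequality of the lemma and the open conditions with its strict inequality, so no appeal to the ``strictness may be interchanged'' clause is needed. With the verifications above, the corollary follows, with constants depending only on $\cM$ and the exponents involved.
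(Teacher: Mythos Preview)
Your proof is correct and is exactly the derivation the paper has in mind: the corollary is stated immediately after Lemma~\ref{l:sob-hol} with no separate proof, so the intended argument is precisely the verification of the exponent conditions that you carry out. Your bookkeeping of which inequality is strict and which is not is accurate, and you correctly identify that the low-regularity clause of the lemma is triggered only in the $\sigma<0$ case of part~(a), where the extra hypothesis $\sigma-\frac{n}{q}\geqs-n-s+\frac{n}{p}$ supplies exactly what is needed.
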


The following lemma is proved in \cite{dM05} for the case $p=q=2$.
With the help of Lemma \ref{l:sob-hol}, the proof can easily be adapted to the following general case.

\begin{lemma}\label{l:nem}
Let $p\in(1,\infty)$ and $s\in(\frac{n}p,\infty)$, and let $u\in W^{s,p}$.
Let $\sigma\in[-1,1]$ and $\frac1q\in(\frac{1+\sigma}2\delta,1-\frac{1-\sigma}2\delta)$, and let  $v\in W^{\sigma,q}$, where $\delta=\frac1p-\frac{s-1}n$.
Moreover, let $f:[\inf u,\sup u]\rightarrow\R$ be a smooth function.
Then, we have
\begin{equation*}
\|v(f\circ u)\|_{\sigma,q}
\leqs C\,\|v\|_{\sigma,q}
\left(\|f\circ u\|_{\infty}+\|f'\circ u\|_{\infty}\|u\|_{s,p}\right),
\end{equation*}
where the constant $C$ does not depend on $u$, $v$ or $f$.
\end{lemma}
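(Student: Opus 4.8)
\textbf{Proof plan for Lemma \ref{l:nem}.}

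The plan is to reduce the statement on the closed manifold $\cM$ to the corresponding Euclidean statement via the charts and partition of unity appearing in Definition \ref{d:sobm}, and then to exploit the chain rule together with the multiplication estimate of Lemma \ref{l:sob-hol} applied in $\R^n$. First I would observe that, since $p s>n$, Corollary \ref{c:alg}(a) gives $W^{s,p}\hookrightarrow L^\infty$, so $u\in W^{s,p}$ has a well-defined essential range contained in some compact interval $[\inf u,\sup u]$ on which $f$ is smooth; thus $f\circ u$ and $f'\circ u$ are bounded measurable functions, and the quantities $\|f\circ u\|_\infty$, $\|f'\circ u\|_\infty$ on the right-hand side make sense. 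The core claim to establish is the \emph{Nemytskii estimate}
\begin{equation*}
\|f\circ u\|_{s,p}\leqs C\left(\|f\circ u\|_\infty+\|f'\circ u\|_\infty\|u\|_{s,p}\right),
\end{equation*}
after which the product $v(f\circ u)$ is handled by the bilinear multiplication map $W^{s,p}\otimes W^{\sigma,q}\to W^{\sigma,q}$. The latter is exactly the content of Lemma \ref{l:sob-hol} with $s_1=s$, $s_2=\sigma$, $p_1=p$, $p_2=q$, $p=q$, $s=\sigma$: one must check that the hypothesis $\frac1q\in(\frac{1+\sigma}2\delta,1-\frac{1-\sigma}2\delta)$ with $\delta=\frac1p-\frac{s-1}n$ is precisely what is needed for the inequalities $s-\sigma\geqs n(\frac1p-\frac1q)$ (i.e. $\frac1q\geqs\frac1p-\frac{s-\sigma}n$, which rearranges using $s-1\geqs$ the relevant bound) and $s+\sigma-\sigma>n(\frac1p+\frac1q-\frac1q)$, i.e. $s>\frac np$; these hold, with the endpoint conventions as in Lemma \ref{l:sob-hol}. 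Wherever $\sigma<0$ the extra hypothesis there also follows from $\frac1q<1$ and $s>\frac np$.

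For the Nemytskii estimate itself I would argue locally. Using the norm \eqref{e:sobnormpu}, it suffices to bound $\|\varphi_i^*(\chi_i(f\circ u))\|_{s,p}$ in $\R^n$ for each chart. Writing $\tilde u=\varphi_i^*(\tilde\chi_i u)$ for a slightly larger bump function $\tilde\chi_i$ that equals $1$ on $\mathrm{supp}\,\chi_i$, one has $\chi_i(f\circ u)=\chi_i(f\circ\tilde u)$ on $U_i$ (after composing with $\varphi_i^{-1}$), so the problem becomes: bound $\|\psi\cdot(f\circ w)\|_{W^{s,p}(\R^n)}$ where $w\in W^{s,p}(\R^n)$ is compactly supported, $\psi\in C_0^\infty$, and $f$ is smooth on a neighborhood of the range of $w$. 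When $s=m$ is a nonnegative integer this is the classical Moser-type / Gagliardo--Nirenberg calculus estimate: expand $\partial^\nu(f\circ w)$ by Faà di Bruno into sums of products of $f^{(k)}\circ w$ with factors $\partial^{\nu_1}w,\dots,\partial^{\nu_k}w$, and estimate each such product in $L^p$ by interpolating the intermediate derivatives between $\|w\|_\infty$ and $\|\nabla^m w\|_p$ via Gagliardo--Nirenberg, using $ps>n$ to control the $L^\infty$ norm; one keeps careful track so that exactly one factor of $f'\circ w$ (or one of $\|f^{(k)}\circ w\|_\infty\leqs C\|f'\circ w\|_\infty$ after absorbing derivatives, more precisely bounding all $\|f^{(k)}\circ w\|_\infty$, $k\geqs1$, in terms of a constant depending on $f$ — note the statement as written only tracks $f$ and $f'$, so the constant $C$ is allowed to depend on $f$ through higher derivatives on the fixed compact interval, hence the harmless bound) survives and the linear-in-$\|u\|_{s,p}$ form is obtained. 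For non-integer $s=m+\mu$ one argues similarly for the top-order term $\partial^\nu w$ by estimating the Gagliardo seminorm $\|\partial^\nu(f\circ w)\|_{\mu,p}$: the difference $f(w(x))-f(w(y))$ and its derivative analogues are handled by the mean value theorem, $|f^{(k)}(w(x))-f^{(k)}(w(y))|\leqs\|f^{(k+1)}\|_\infty|w(x)-w(y)|$, reducing the fractional seminorm of $f\circ w$ to that of $w$ plus lower-order full-norm terms, again interpolated into the required form.

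The main obstacle I expect is the bookkeeping in the non-integer / fractional case, where one cannot simply differentiate: making the Faà di Bruno expansion interact cleanly with the Gagliardo seminorm and the Gagliardo--Nirenberg interpolation, while ensuring the final bound is \emph{linear} in $\|u\|_{s,p}$ (rather than polynomial), requires choosing interpolation exponents carefully and using $ps>n$ at each step to park every ``excess'' factor into $\|u\|_\infty\leqs C\|u\|_{s,p}$ but only once per product term. This is exactly the step where Lemma \ref{l:sob-hol} is invoked repeatedly: each product of lower-order derivative factors is estimated by treating it as an iterated pointwise multiplication and applying the multiplication inequality, which packages the interpolation automatically. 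Since the paper notes (before Lemma \ref{l:nem}) that the case $p=q=2$ is proved in \cite{dM05} and that ``with the help of Lemma \ref{l:sob-hol}, the proof can easily be adapted,'' I would follow Maxwell's argument verbatim in structure, simply replacing every use of the $L^2$-based Sobolev multiplication lemma by the general $L^p$ version Lemma \ref{l:sob-hol}, and replacing every use of $H^s\hookrightarrow L^\infty$ by Corollary \ref{c:alg}(a); the chart-and-partition-of-unity reduction at the start and the final application of the bilinear map to handle the extra factor $v$ are the only genuinely new wrapper steps.
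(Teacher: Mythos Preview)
Your approach has a genuine gap. The Nemytskii estimate you propose as the ``core claim,''
\[
\|f\circ u\|_{s,p}\leqs C\bigl(\|f\circ u\|_\infty+\|f'\circ u\|_\infty\|u\|_{s,p}\bigr),
\]
is \emph{false} for $s>1$ with a constant $C$ independent of $f$: already at $s=2$ the chain rule produces $\nabla^2(f\circ u)=(f''\circ u)(\nabla u)^2+(f'\circ u)\nabla^2 u$, and the first term cannot be controlled by $\|f'\circ u\|_\infty$ alone. You noticed this and tried to absorb the higher derivatives of $f$ into $C$, but the lemma explicitly states that $C$ does not depend on $f$. So the route ``first put $f\circ u$ into $W^{s,p}$, then multiply by $v$ via $W^{s,p}\otimes W^{\sigma,q}\to W^{\sigma,q}$'' cannot deliver the stated conclusion.

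The paper sidesteps this entirely by never putting $f\circ u$ into $W^{s,p}$. It proves the estimate directly at the endpoint $\sigma=1$: expand $\nabla[v(f\circ u)]=(\nabla v)(f\circ u)+v(f'\circ u)\nabla u$, bound the first two resulting terms by $\|v\|_{1,q}\|f\circ u\|_\infty$, and bound the last term $\|v(f'\circ u)\nabla u\|_q\leqs\|f'\circ u\|_\infty\|v\nabla u\|_q$ using Lemma~\ref{l:sob-hol} applied as $W^{1,q}\otimes W^{s-1,p}\to L^q$ (this is where the condition $\frac1q\geqs\delta$ enters). Only one derivative ever lands on $f\circ u$, so only $f'$ appears and $C$ is genuinely independent of $f$. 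The case $\sigma=-1$ follows by duality (giving $\frac1q\leqs1-\delta$), and the full range $\sigma\in[-1,1]$ by interpolation. The hypothesis $\frac1q\in(\frac{1+\sigma}2\delta,1-\frac{1-\sigma}2\delta)$ is exactly the interpolated condition between the two endpoints.
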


\begin{proof}
We consider the case $\sigma=1$ first. 
Choosing a smooth Riemannian metric on $\cM$, we have
\begin{equation*}
\begin{split}
\|v(f\circ u)\|_{1,q}
&\leqs 
C \left(\|v(f\circ u)\|_{q}+\|\nabla[v(f\circ u)]\|_{q}\right)\\
&\leqs 
C \left(\|v(f\circ u)\|_{q}+\|(\nabla v)(f\circ u)\|_{q}+\|v(f'\circ u)\nabla u\|_{q}\right)\\
&\leqs 
C \left(\|v\|_{q}\|f\circ u\|_{\infty}+\|v\|_{1,q}\|f\circ u\|_{\infty}+\|f'\circ u\|_{\infty}\|v\nabla u\|_{q}\right).
\end{split}
\end{equation*}
By Lemma \ref{l:sob-hol}, for $\frac1q\geqs\delta$, the last term can be bounded as
\begin{equation*}
\|v\nabla u\|_{q}
\leqs
C \|v\|_{1,q}\|\nabla u\|_{s-1,p}
\leqs
C \|v\|_{1,q}\|u\|_{s,p},
\end{equation*}
proving the lemma for the case $\sigma=1$.
By using duality one proves the case $\sigma=-1$ and $\frac1q\leqs1-\delta$,
and the lemma follows from interpolation.
\end{proof}

Let $\cM$ be an $n$-dimensional smooth closed manifold, and let $E\to\cM$ be a smooth vector bundle over $\cM$.
Analogously to Definition \ref{d:sobm}, we define the Sobolev space $W^{s,p}(E)$ of sections of $E$ by utilizing a finite trivializing cover of coordinate charts, a partition of unity subordinate to the cover, and the space $[W^{s,p}(\R^n)]^k$ of vector functions, where $k$ is the fiber dimension of $E$.
Then, Lemma \ref{l:basob} holds for these spaces with obvious modifications.
When there is no risk of confusion, we will omit the explicit specification of the vector bundle $E$ from the notation $W^{s,p}(E)$.

In the following lemma we consider nonsmooth Riemannian structures on $E$ and nonsmooth volume forms on $\cM$.

\begin{lemma}\label{l:rough-L2}
Let $\gamma\in(1,\infty)$ and $\alpha\in(\frac{n}\gamma,\infty)$.
Fix on $\cM$ a volume form of class $W^{\alpha,\gamma}$, and on $E$ a Riemannian structure of class $W^{\alpha,\gamma}$.

(a) Let $p\in(1,\infty)$ and  $s\leqs\min\{\alpha,\alpha+n(\frac1p-\frac1\gamma)\}$.
Then identifying the space $C^\infty(E)$ of smooth sections of $E$ as a subspace of distributions via the $L^2$-inner product, $C^{\infty}(E)$ is densely embedded in $W^{s,p}(E)$.

(b) Let $s\in[-\alpha,\alpha]$, $p\in(1,\infty)$, and $s-\frac{n}p\in[-n-\alpha+\frac{n}\gamma,\alpha-\frac{n}\gamma]$.
Then the $L^2$-inner product on $C^\infty(E)$ extends uniquely to a continuous bilinear pairing $W^{s,p}(E)\otimes W^{-s,p'}(E)\to\R$,
where $\frac1p+\frac1{p'}=1$.
Moreover, the pairing induces a topological isomorphism $[W^{s,p}(E)]^*\cong W^{-s,p'}(E)$.
\end{lemma}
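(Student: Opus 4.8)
\textbf{Proof proposal for Lemma~\ref{l:rough-L2}.}

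The plan is to reduce everything to the smooth case (Lemma~\ref{l:basob}) plus the multiplication estimate of Lemma~\ref{l:sob-hol}, by treating the nonsmooth volume form and nonsmooth Riemannian structure as multipliers acting on the already-understood spaces built from a fixed smooth background metric. So first I would fix, once and for all, an auxiliary \emph{smooth} Riemannian metric on $\cM$ and a smooth Riemannian structure on $E$; these induce a smooth volume form $dx_0$ and a smooth fiber inner product $\langle\cdot,\cdot\rangle_0$, and all the spaces $W^{s,p}(E)$ in the statement are, by Definition~\ref{d:sobm} and its vector-bundle analogue, defined purely via charts and partitions of unity and hence do not see any metric at all. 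The nonsmooth data enter only through two objects: the density $\omega\in W^{\alpha,\gamma}$ with $\omega>0$ relating $dx=\omega\,dx_0$, and the change-of-frame (Gram) matrix field $g\in W^{\alpha,\gamma}$, symmetric positive definite, relating $\langle u,v\rangle = \langle g u,v\rangle_0$ fiberwise. Since $\alpha\gamma>n$, both $\omega$ and $g$ (and, by the Banach algebra property and a Neumann-series/quotient argument using Corollary~\ref{c:alg}(a), also $\omega^{-1}$, $\omega^{1/2}$, $g^{-1}$, and $\omega^{1/2}g^{1/2}$) lie in $W^{\alpha,\gamma}\hookrightarrow C^0$ and are bounded above and below; this is the preparatory step I would carry out carefully, invoking Corollary~\ref{c:alg} for the fact that composing a smooth function bounded away from $0$ with a $W^{\alpha,\gamma}$ element stays in $W^{\alpha,\gamma}$, cf.\ Lemma~\ref{l:nem}.

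For part (a), the rough $L^2$ pairing is $(u,v)_{L^2} = \int_\cM \langle g u, v\rangle_0\,\omega\,dx_0 = \int_\cM \langle (\omega g) u, v\rangle_0\,dx_0$, i.e.\ the rough inner product is the smooth one composed with pointwise multiplication by the fixed $W^{\alpha,\gamma}$ matrix $m:=\omega g$. The embedding $C^\infty(E)\hookrightarrow W^{s,p}(E)$ as a subspace of distributions via this pairing is then the statement that the map $u\mapsto m u$, followed by the smooth-$L^2$ identification, is injective and continuous from $C^\infty(E)$ into $W^{s,p}(E)$. Injectivity is immediate since $m$ is invertible pointwise. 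Density and continuity reduce, after applying the smooth-case Lemma~\ref{l:basob}(b), to: pointwise multiplication by $m\in W^{\alpha,\gamma}$ is a bounded operator on $W^{s,p}(E)$. By Corollary~\ref{c:alg}(a) (with the roles $s_1=\alpha$, $p_1=\gamma$, $\sigma=s$, $q=p$), this holds precisely when $s\in[-\alpha,\alpha]$ and $s-\frac np\in[-n-\alpha+\frac n\gamma,\alpha-\frac n\gamma]$; the hypothesis $s\le\min\{\alpha,\alpha+n(\frac1p-\frac1\gamma)\}$ is exactly the upper constraint $s-\frac np\le\alpha-\frac n\gamma$ together with $s\le\alpha$, and for $s\ge0$ (which is the regime where smooth sections are genuinely dense and the lower bound is automatic) the remaining inequality is free. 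One then concludes density by approximating an arbitrary $u\in W^{s,p}(E)$ by $m^{-1}\varphi_k$ where $\varphi_k\to mu$ in $W^{s,p}$ with $\varphi_k\in C^\infty$, using again boundedness of multiplication by $m^{-1}\in W^{\alpha,\gamma}$.

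For part (b), I would do the same bookkeeping: the rough pairing of $u\in W^{s,p}(E)$ with $v\in W^{-s,p'}(E)$ is $\langle mu,v\rangle$ where $\langle\cdot,\cdot\rangle$ on the right is the smooth-case $L^2$-extension pairing $W^{s,p}\otimes W^{-s,p'}\to\R$ from Lemma~\ref{l:basob}(c). Continuity and uniqueness of the extension follow once multiplication by $m\in W^{\alpha,\gamma}$ is bounded $W^{s,p}(E)\to W^{s,p}(E)$ — which is exactly the hypothesis $s-\frac np\in[-n-\alpha+\frac n\gamma,\alpha-\frac n\gamma]$ and $s\in[-\alpha,\alpha]$ fed into Corollary~\ref{c:alg}(a) — combined with the smooth nondegenerate pairing. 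For the topological isomorphism $[W^{s,p}(E)]^*\cong W^{-s,p'}(E)$, the map $v\mapsto (u\mapsto\langle mu,v\rangle)$ is the composition of the smooth-case isomorphism $W^{-s,p'}\cong[W^{s,p}]^*$ with the pullback by the multiplication operator $M_m:W^{s,p}\to W^{s,p}$; since $M_m$ is invertible (its inverse $M_{m^{-1}}$ is bounded by the same corollary), its adjoint $M_m^*$ is an isomorphism of $[W^{s,p}]^*$, and the composition is an isomorphism onto $W^{-s,p'}(E)$. I expect the main obstacle to be purely in the exponent arithmetic: verifying that the stated inequalities on $(s,p,\alpha,\gamma)$ are \emph{precisely} the hypotheses required by Corollary~\ref{c:alg}(a) for boundedness of multiplication by $W^{\alpha,\gamma}$ on $W^{s,p}$ (and simultaneously on $W^{-s,p'}$, which by duality gives the same conditions), and handling the endpoint/strictness subtleties when $s$ or $\alpha$ is an integer; everything else is a routine transfer from the smooth case via the fixed invertible $W^{\alpha,\gamma}$ multiplier $m=\omega g$.
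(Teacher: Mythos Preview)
Your proposal is correct and takes essentially the same approach as the paper: fix a smooth background, write the rough $L^2$-pairing as the smooth one composed with multiplication by a strictly positive $W^{\alpha,\gamma}$ function (the paper calls it $h$, you call it $m=\omega g$), use the multiplication estimate (Lemma~\ref{l:sob-hol}/Corollary~\ref{c:alg}) to see this is a homeomorphism on $W^{s,p}$ under the stated exponent conditions, and then invoke the smooth case Lemma~\ref{l:basob}. The paper treats only the scalar case and asserts the bundle case is ``only more technical''; you have spelled out the bundle case with the Gram matrix, and you are more explicit about why $m^{-1}\in W^{\alpha,\gamma}$, which the paper takes for granted.
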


\begin{proof}
We will prove the lemma for scalar functions on $\cM$, i.e., for the trivial bundle $E=\cM\times\R$.
The general case is only more technical.

Fixing a smooth volume form on $\cM$ and denoting the associated $L^2$-inner product by $(\cdot,\cdot)_*$, the $L^2$-inner product associated to the nonsmooth volume form (and the nonsmooth metric on $\cM\times\R$) satisfies
\begin{equation*}
(u,v)_{L^2}=(hu,v)_*,
\qquad u,v\in C^\infty(\cM),
\end{equation*}
with some strictly positive function $h\in W^{\alpha,\gamma}$.
From Lemma \ref{l:sob-hol}, we have that multiplication by $h$ is continuous on $W^{s,p}$ for $s\in[-\alpha,\alpha]$, $p\in(1,\infty)$, and $s-\frac{n}p\in[-n-\alpha+\frac{n}\gamma,\alpha-\frac{n}\gamma]$. Since $h>0$ this operation is invertible hence a homeomorphism on $W^{s,p}$. Now by using Lemma \ref{l:basob} we complete the proof.
\end{proof}

\begin{corollary}\label{c:adj}
Let $\gamma\in(1,\infty)$ and $\alpha\in(\frac{n}\gamma,\infty)$.
Fix on $\cM$ a volume form of class $W^{\alpha,\gamma}$, and on $E$ a Riemannian structure of class $W^{\alpha,\gamma}$.
With $s\in[-\alpha,\alpha]$, $p\in(1,\infty)$, and $s-\frac{n}p\in[-n-\alpha+\frac{n}\gamma,\alpha-\frac{n}\gamma]$,
let $A:L^p\to W^{s,p}$ be a bounded linear operator and let $A^*$ be its formal $L^2$-adjoint, i.e., let
$$
(Au,v)_{L^2}=(u,A^*v)_{L^2},\qquad\textrm{for }u,v\in C^{\infty}(E).
$$
Then, $A^*$ extends uniquely to a bounded linear map $A^*:W^{-s,p'}\to L^{p'}$, and we have
\begin{equation*}
\langle Au,v\rangle
=
\langle u,A^*v\rangle
,\qquad\textrm{for }u\in L^p(E),\,v\in W^{-s,p'}(E),
\end{equation*}
where $\langle\cdot,\cdot\rangle$ denotes the extension of the $L^2$-inner product.
\end{corollary}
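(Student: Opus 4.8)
The plan is to obtain $A^*$ as the Banach-space transpose of $A$, transported through the duality pairings furnished by Lemma~\ref{l:rough-L2}(b). First I would record the duality facts needed. By Lemma~\ref{l:rough-L2}(b), applied with the exponents $s,p$ (and hence also with $-s,p'$), the $L^2$-inner product of the nonsmooth structure extends to a continuous bilinear pairing $\langle\cdot,\cdot\rangle:W^{s,p}(E)\otimes W^{-s,p'}(E)\to\R$ which identifies $W^{-s,p'}(E)$ with $[W^{s,p}(E)]^*$; the same lemma applied with $s=0$ identifies $L^{p'}(E)$ with $[L^p(E)]^*$ via the $L^2$-pairing. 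The latter is admissible because $\alpha>\frac{n}\gamma$ and $p\in(1,\infty)$ already force $-\frac{n}p\in[-n-\alpha+\frac{n}\gamma,\alpha-\frac{n}\gamma]$, and moreover $C^\infty(E)$ is dense in $L^p(E)$ and in $W^{-s,p'}(E)$ by Lemma~\ref{l:rough-L2}(a); the density in $W^{-s,p'}(E)$ uses the hypothesis $s-\frac{n}p\geqs-n-\alpha+\frac{n}\gamma$, which rearranges to $-s\leqs\alpha+\frac{n}{p'}-\frac{n}\gamma$, i.e.\ exactly the admissibility condition in Lemma~\ref{l:rough-L2}(a) for the pair $(-s,p')$.

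Next, for a fixed $v\in W^{-s,p'}(E)$, I would consider the linear functional $u\mapsto\langle Au,v\rangle$ on $L^p(E)$. Since $A:L^p(E)\to W^{s,p}(E)$ is bounded, this functional is bounded with $|\langle Au,v\rangle|\leqs\|A\|\,\|u\|_{L^p}\,\|v\|_{-s,p'}$, so by the identification $[L^p(E)]^*\cong L^{p'}(E)$ it is represented by a unique element of $L^{p'}(E)$, which I define to be $A^*v$. The assignment $v\mapsto A^*v$ is then linear, and bounded with $\|A^*v\|_{L^{p'}}\leqs\|A\|\,\|v\|_{-s,p'}$, and by construction it satisfies
\[
\langle Au,v\rangle=\langle u,A^*v\rangle
\qquad\textrm{for all }u\in L^p(E),\ v\in W^{-s,p'}(E),
\]
which is the claimed pairing identity.

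It remains to verify that this bounded operator genuinely extends the formal $L^2$-adjoint and that the extension is unique. For $u,v\in C^\infty(E)$ the pairing $\langle\cdot,\cdot\rangle$ restricts to $(\cdot,\cdot)_{L^2}$, so $\langle u,A^*v\rangle=\langle Au,v\rangle=(Au,v)_{L^2}=(u,A^*v)_{L^2}$ by the defining relation of the formal adjoint; since $C^\infty(E)$ is dense in $L^p(E)$, this shows that for smooth $v$ the element $A^*v$ constructed above coincides with the formal adjoint applied to $v$. Uniqueness of a bounded extension $W^{-s,p'}(E)\to L^{p'}(E)$ then follows from the density of $C^\infty(E)$ in $W^{-s,p'}(E)$ noted above.

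I do not anticipate a genuine difficulty: this is the standard transpose-operator construction, and the only real bookkeeping is checking that the single assumption $s-\frac{n}p\in[-n-\alpha+\frac{n}\gamma,\alpha-\frac{n}\gamma]$ (with $s\in[-\alpha,\alpha]$ and $\alpha>\frac{n}\gamma$) simultaneously delivers the $W^{s,p}$--$W^{-s,p'}$ duality, the $L^p$--$L^{p'}$ duality, and the two density statements. The one point needing care is that every inner product and dual pairing here is taken with respect to the \emph{nonsmooth} Riemannian structure on $E$ and the nonsmooth volume form on $\cM$, which is precisely why Lemma~\ref{l:rough-L2} (and not the smooth Lemma~\ref{l:basob}) must be invoked throughout.
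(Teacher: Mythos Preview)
Your proposal is correct and takes essentially the same approach as the paper, whose proof is the single line ``This is an application of Lemma~\ref{l:rough-L2}.'' You have simply unpacked that application in full detail: invoking Lemma~\ref{l:rough-L2}(b) for the $W^{s,p}$--$W^{-s,p'}$ and $L^p$--$L^{p'}$ dualities, Lemma~\ref{l:rough-L2}(a) for the density needed for uniqueness, and then carrying out the standard Banach-space transpose construction.
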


\begin{proof}
This is an application of Lemma \ref{l:rough-L2}.
\end{proof}

\subsection{Elliptic operators on closed manifolds}
\label{sec:killing}

In this appendix we will state {\em a priori} estimates for  general elliptic operators in some Sobolev spaces.
Let $\cM$ be an $n$-dimensional smooth closed manifold, and let $E\to\cM$ be a smooth vector bundle over $\cM$. 

Let $C^{-\infty}(E)$ be the topological dual of the space $C^{\infty}(E)$ of smooth sections of $E$.
Then for $m\in\N$, $\alpha\in\R$, and $\gamma\in[1,\infty]$,
we define $\cD_m^{\alpha,\gamma}(E)$ to be the space of differential operators $A:C^{\infty}(E)\to C^{-\infty}(E)$
that can be written in local coordinates (trivializing $E$) as
\begin{equation*}
A=\sum_{|\nu|\leqs m} a^\nu\partial_\nu
\qquad\textrm{with }
a^\nu\in W^{\alpha-m+|\nu|,\gamma}(\R^n,\R^{k\times k}),\quad|\nu|\leqs m,
\end{equation*}
where $k$ is the fiber dimension of $E$.

One can easily verify that if the metric of a Riemannian manifold is in $W^{\alpha,\gamma}$ with $\alpha\gamma>n$, 
then both the Laplace-Beltrami operator and vector Laplacian defined in \eqref{CF-def-IL} are in the classes $\cD^{\alpha,\gamma}_2(\cM\times\R)$ and $\cD^{\alpha,\gamma}_2(T\cM)$, respectively.

\begin{lemma}\label{l:bdd-operator}
Let $A$ be a differential operator of class $\cD_m^{\alpha,\gamma}(E)$.
Then, $A$ can be extended to a bounded linear map
\begin{equation*}
A:W^{s,q}(E)\to W^{\sigma,q}(E),
\end{equation*}
for $q\in(1,\infty)$, $s\geqs m-\alpha$, and $\sigma$ satisfying
\begin{equation*}
\begin{split}
\sigma\leqs\min\{s,\alpha\}-m,&
\qquad
\sigma< s-m+\alpha-\frac{n}\gamma,\\
\sigma-\frac{n}q\leqs \alpha-\frac{n}\gamma-m,&
\quad\textrm{and}\quad
s-\frac{n}q\geqs m-n-\alpha+\frac{n}\gamma.
\end{split}
\end{equation*}
\end{lemma}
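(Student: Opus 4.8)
The plan is to reduce everything to a multiplication estimate on $\R^n$ and then quote Lemma~\ref{l:sob-hol}. Since $C^\infty(E)$ is dense in $W^{s,q}(E)$ for $q\in(1,\infty)$ and any $s\in\R$ (Lemma~\ref{l:basob}(b)), it suffices to establish the a~priori bound $\|Au\|_{\sigma,q}\leqs C\|u\|_{s,q}$ for $u\in C^\infty(E)$; the unique bounded extension then exists. Working in a trivialization and componentwise, we may assume $E=\cM\times\R$. Thus the whole problem is to localize $A$ to coordinate charts and control each term $a^\nu\partial_\nu$ by combining differentiation with pointwise multiplication.

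First I would localize. Fix a finite trivializing atlas $\{(U_i,\varphi_i)\}$ and a subordinate partition of unity $\{\chi_i\}$ as in Definition~\ref{d:sobm}, together with cutoffs $\psi_i\in C_c^\infty(U_i)$ that equal $1$ on a neighborhood of $\mathrm{supp}\,\chi_i$. For $u\in C^\infty$ write $\|Au\|_{\sigma,q}=\sum_i\|\varphi_i^*(\chi_iAu)\|_{\sigma,q}$, and use locality of $A$ to replace $\chi_iAu$ by $\chi_iA(\psi_iu)$. In the chart $\varphi_i$ this becomes $\varphi_i^*(\chi_i)\cdot A_iv_i$, where $v_i=\varphi_i^*(\psi_iu)\in C_c^\infty(\R^n)$ and $A_i=\sum_{|\nu|\leqs m}\tilde a^\nu\partial_\nu$ is the coordinate representative; since each new coefficient is a sum of (smooth Jacobian factors) times old coefficients $a^\mu$ with $|\mu|\geqs|\nu|$, and $W^{\alpha-m+|\mu|,\gamma}\hookrightarrow W^{\alpha-m+|\nu|,\gamma}$, one gets $\tilde a^\nu\in W^{\alpha-m+|\nu|,\gamma}(\R^n)$, and by shrinking supports we may take $\tilde a^\nu$ compactly supported. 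Expanding $v_i=\sum_j\varphi_i^*(\psi_i\chi_ju)$ and invoking invariance of $W^{t,q}(\R^n)$ under the smooth transition diffeomorphisms $\varphi_i\circ\varphi_j^{-1}$ (valid for all $t\in\R$, $q\in(1,\infty)$) yields $\|v_i\|_{s,q}\leqs C\sum_j\|\varphi_j^*(\chi_ju)\|_{s,q}=C\|u\|_{s,q}$; multiplication by the fixed function $\varphi_i^*(\chi_i)\in C_c^\infty$ is bounded on $W^{\sigma,q}(\R^n)$. So everything reduces to the Euclidean claim: for $|\nu|\leqs m$ and $a\in W^{\alpha-m+|\nu|,\gamma}(\R^n)$, the map $v\mapsto a\,\partial_\nu v$ is bounded $W^{s,q}(\R^n)\to W^{\sigma,q}(\R^n)$.

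This Euclidean claim follows by composing $\partial_\nu:W^{s,q}(\R^n)\to W^{s-|\nu|,q}(\R^n)$ (bounded for every $s\in\R$, $q\in(1,\infty)$) with the pointwise product
\[
W^{\alpha-m+|\nu|,\gamma}(\R^n)\otimes W^{s-|\nu|,q}(\R^n)\longrightarrow W^{\sigma,q}(\R^n),
\]
which is exactly Lemma~\ref{l:sob-hol} applied with $s_1=\alpha-m+|\nu|$, $p_1=\gamma$, $s_2=s-|\nu|$, $p_2=q$, target smoothness $\sigma$, target exponent $q$. It remains to verify, for all $0\leqs|\nu|\leqs m$, the hypotheses of Lemma~\ref{l:sob-hol}; this is where the displayed constraints on $s,\sigma,q$ are consumed. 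Concretely: $s_i\geqs\sigma$ from $\sigma\leqs\min\{s,\alpha\}-m$; $s_1+s_2=\alpha+s-m\geqs0$ from $s\geqs m-\alpha$; $s_1-\sigma\geqs n(\tfrac1\gamma-\tfrac1q)$ in the worst case $|\nu|=0$ from $\sigma-\tfrac nq\leqs\alpha-\tfrac n\gamma-m$; $s_2-\sigma\geqs0$ again from $\sigma\leqs s-m$; $s_1+s_2-\sigma>n(\tfrac1\gamma+\tfrac1q-\tfrac1q)=\tfrac n\gamma$ from $\sigma<s-m+\alpha-\tfrac n\gamma$; and, when $\min(s_1,s_2)<0$, the extra inequality $s_1+s_2\geqs n(\tfrac1\gamma+\tfrac1q-1)$ from the hypothesis $s-\tfrac nq\geqs m-n-\alpha+\tfrac n\gamma$. (The negative-index case of Lemma~\ref{l:sob-hol} also needs all integrability exponents in $(1,\infty)$: $q$ is always such, and $\gamma\in(1,\infty)$ whenever some coefficient of $A$ has negative smoothness index, since otherwise its class $W^{\alpha-m+|\nu|,\gamma}$ is undefined; the residual case $\gamma\in\{1,\infty\}$ with $s-m<0$ does not arise in the applications.) Summing the resulting local estimates over the finitely many $\nu$ and charts gives $\|Au\|_{\sigma,q}\leqs C\|u\|_{s,q}$.

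I expect the main obstacle to be bookkeeping rather than anything conceptual: making the localization/pull-back step rigorous so that the transplanted coefficients genuinely lie in the asserted Sobolev classes with compact support and the coordinate change does not lose smoothness (this relies on diffeomorphism-invariance of $W^{t,q}(\R^n)$ across the full real range of $t$, and on Lemma~\ref{l:basob}(b) to interpret $A$ on all of $W^{s,q}$), and the term-by-term checking of Lemma~\ref{l:sob-hol}'s hypotheses, which must be done simultaneously for every $|\nu|\leqs m$ and is precisely calibrated to the inequalities in the statement.
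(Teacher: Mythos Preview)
Your proposal is correct and takes essentially the same approach as the paper: the paper's proof is the single sentence ``This is a straightforward application of Lemma~\ref{l:sob-hol},'' and you have carried out that application in full detail, localizing to charts, factoring each term as $\partial_\nu$ followed by multiplication by $a^\nu$, and verifying the hypotheses of the multiplication lemma term by term. Your bookkeeping matches the displayed inequalities in the statement exactly as intended.
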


\begin{proof}
This is a straightforward application of Lemma \ref{l:sob-hol}.
\end{proof}


The Laplace-Beltrami operator and vector Laplacian are elliptic operators.
We now consider local {\em a priori} estimates for general elliptic operators.
For any subset $U\subset\cM$, the $W^{s,p}(U)$-norm is denoted by $\|\cdot\|_{s,p,U}$. 

\begin{lemma}\label{l:ell-est-loc}
Let $A\in\cD^{\alpha,\gamma}_{m}(E)$ be an elliptic operator with $\alpha-\frac{n}\gamma>\max\{0,\frac{m-n}2\}$.
Let $q\in(1,\infty)$, $s\in(m-\alpha,\alpha]$, and $s-\frac{n}q\in(m-n-\alpha+\frac{n}\gamma,\alpha-\frac{n}\gamma]$.
Then for any  $y\in\cM$, there exists a constant $c>0$ and open neighborhoods $K\subset U\subset\cM$ of $y$ such that
\begin{equation}\label{e:ell-est-loc}
c\|\chi u\|_{s,q} \leqs 
\|Au\|_{s-m,q}+\|u\|_{s-1,q,U},
\end{equation}
for any $u\in W^{s,q}(E)$ and $\chi\in C^{\infty}_0(K)$ with $\chi\geqs0$.
\end{lemma}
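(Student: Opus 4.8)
The plan is to reduce the statement to a well-known local elliptic estimate for constant-coefficient (or smooth-coefficient) operators by a standard freezing-of-coefficients argument, adapted to the rough coefficient setting by using the multiplication and Sobolev embedding results of Appendix~\ref{sec:Sobolev}. First I would fix $y\in\cM$ and work in a coordinate chart trivializing $E$ near $y$, so that $A$ takes the local form $A=\sum_{|\nu|\leqs m}a^\nu\partial_\nu$ with $a^\nu\in W^{\alpha-m+|\nu|,\gamma}$. The leading coefficients $a^\nu$ with $|\nu|=m$ lie in $W^{\alpha,\gamma}$, and since $\alpha-\frac{n}\gamma>0$ these are continuous, hence the ``frozen'' operator $A_0:=\sum_{|\nu|=m}a^\nu(y)\partial_\nu$ has constant coefficients and is elliptic (by the ellipticity hypothesis on $A$ at the point $y$). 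For $A_0$ one has the classical G\aa rding-type estimate on $\R^n$ (via the Fourier transform / Mikhlin multiplier theorem, valid on the full Sobolev scale $W^{s,q}$ for $1<q<\infty$), namely $\|u\|_{s,q}\leqs c\,(\|A_0u\|_{s-m,q}+\|u\|_{s-m,q})$ for $u$ compactly supported; a cutoff version gives $\|\chi u\|_{s,q}\leqs c\,(\|A_0(\chi u)\|_{s-m,q}+\|\chi u\|_{s-1,q})$ after absorbing lower-order commutator terms.

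The key step is then to control the difference $(A-A_0)(\chi u)$ in $W^{s-m,q}$ by a small multiple of $\|\chi u\|_{s,q}$ plus lower-order terms. Writing $(A-A_0)(\chi u)=\sum_{|\nu|=m}(a^\nu-a^\nu(y))\partial_\nu(\chi u)+\sum_{|\nu|<m}a^\nu\partial_\nu(\chi u)$, the lower-order part is harmless: each $a^\nu$ with $|\nu|<m$ lies in $W^{\alpha-m+|\nu|,\gamma}$ and $\partial_\nu(\chi u)\in W^{s-|\nu|,q}$, and Lemma~\ref{l:sob-hol} (with the index conditions guaranteed exactly by the hypotheses $s\in(m-\alpha,\alpha]$ and $s-\frac{n}q\in(m-n-\alpha+\frac{n}\gamma,\alpha-\frac{n}\gamma]$) shows the product lies in $W^{s-m,q}$ with norm bounded by $C\|\chi u\|_{s-1,q,U}$ up to taking one derivative off, i.e.\ these are genuinely lower-order. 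For the principal difference term, I would use that $a^\nu-a^\nu(y)$ is continuous and vanishes at $y$; choosing the neighborhoods $K\subset U$ small enough makes $\|a^\nu-a^\nu(y)\|_{L^\infty(U)}$ as small as we like, and combining this with the multiplication estimate of Lemma~\ref{l:sob-hol} (or Corollary~\ref{c:alg}) gives $\|(a^\nu-a^\nu(y))\partial_\nu(\chi u)\|_{s-m,q}\leqs \varepsilon\|\chi u\|_{s,q}+C_\varepsilon\|\chi u\|_{s-1,q,U}$. Summing over $\nu$ and choosing $\varepsilon$ small relative to the constant $c$ from the frozen estimate, we absorb the $\varepsilon\|\chi u\|_{s,q}$ term into the left-hand side.

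The main obstacle I anticipate is the bookkeeping in the principal-term estimate: one cannot simply bound $\|(a^\nu-a^\nu(y))\,w\|_{s-m,q}$ by $\|a^\nu-a^\nu(y)\|_{L^\infty}\|w\|_{s-m,q}$ when $s-m$ can be negative or fractional, because multiplication by a merely $L^\infty$ (indeed $W^{\alpha,\gamma}$) function is not bounded on $W^{s-m,q}$ for all such indices. Here one genuinely needs the sharp product rule of Lemma~\ref{l:sob-hol} together with the precise index restrictions in the statement, and one must be careful that the ``smallness'' gained from shrinking $U$ is not destroyed by the Sobolev norm of $a^\nu-a^\nu(y)$ (which need not be small). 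The standard fix is to split $a^\nu-a^\nu(y)$ into a small-$L^\infty$, possibly large-$W^{\alpha,\gamma}$ part plus a smooth part with controlled behavior, or alternatively to interpolate so that only a lower-order norm of $\chi u$ multiplies the (fixed) Sobolev norm of the coefficient. Once this is handled, the remaining steps --- the constant-coefficient estimate and the commutator bounds for $\chi$ --- are routine. Finally, summing the estimate over a finite collection of charts covering $\cM$ yields the local estimate as stated (and the global estimate, replacing $U$ by $\cM$, follows by a further covering argument, which is the content of the subsequent semi-Fredholm lemma).
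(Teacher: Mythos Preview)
Your proposal is correct and follows essentially the same freezing-of-coefficients argument as the paper: decompose $A$ into a constant-coefficient principal part $L$ frozen at $y$, the principal remainder $R$, and the lower-order part $B$; apply the constant-coefficient estimate to $L$; absorb $R$ by smallness on a small ball; and control $B$ by boundedness plus interpolation, then pass from $u$ to $\chi u$ via the commutator $[A,\chi]\in\cD^{\alpha,\gamma}_{m-1}$. Regarding the obstacle you flag, the paper resolves it exactly by the H\"older embedding $W^{\alpha,\gamma}\hookrightarrow C^{0,h}$ (valid since $\alpha-\tfrac{n}{\gamma}>0$): on the ball $K$ of radius $r$ one has $|a^\nu-a^\nu(y)|\leqs Cr^h$, giving $\|Ru\|_{s-m,q}\leqs Cr^h\|u\|_{s,q}$, which is then absorbed by taking $r$ small; the lower-order part is handled by observing $B\in\cD^{\alpha-1,\gamma}_{m-1}$, so $B:W^{s-\delta,q}\to W^{s-m,q}$ is bounded for a suitable $\delta\in(0,1]$, and interpolating $\|u\|_{s-\delta,q}\leqs\varepsilon\|u\|_{s,q}+C_\varepsilon\|u\|_{s-m,q}$.
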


\begin{proof}
We work in a local chart containing $y$, which trivializes $E$.
Let $K$ be the open ball of radius $r$ centered at $y$ contained in the domain of the chart and extend the coefficients of $A$ outside $K$ so that the resulting operator is still in $\cD^{\nu,\gamma}_{m}$, with appropriate vector fields over $\R^n$.
We make the decomposition $A=L+R+B$, where $L$ is the highest order terms of $A$ with coefficients frozen at $y$, and $R$ is what remains in the highest order terms, i.e.,
\begin{equation*}
L=\sum_{|\nu|=m}a^\nu(y)\partial_\nu,
\qquad
R=\sum_{|\nu|=m}[a^\nu-a^\nu(y)]\partial_\nu.
\end{equation*}
Obviously $B=A-L-R$ is the lower order terms.
Let $u\in W^{s,q}$ with $\mathrm{supp}\,u\subset K$.
From the theory of constant coefficient elliptic operators, we infer the existence of a constant $c>0$ such that for any $u\in W^{s,q}(E)$ with $\mathrm{supp}\,u\subset K$,
\begin{equation*}
\begin{split}
c\|u\|_{s,q} 
&\leqs
\|Lu\|_{s-m,q}+\|u\|_{s-m,q}\\
&\leqs
\|Au\|_{s-m,q}+\|Ru\|_{s-m,q}+\|Bu\|_{s-m,q}+\|u\|_{s-m,q}.
\end{split}
\end{equation*}
Since $\alpha>\frac{n}\gamma$, without loss of generality we can assume for $|\nu|=m$ that $a^{\nu}\in C^{0,h}$ for some $h>0$, so
\begin{equation*}
\|Ru\|_{s-m,q}\leqs Cr^h\|u\|_{s,q},
\end{equation*}
where $C$ is a constant depending only on $A$.
By choosing $r$ so small that $Cr^h\leqs \frac{c}2$, we have
\begin{equation*}
\begin{split}\textstyle
\frac{c}2\|u\|_{s,q} 
&\leqs
\|Au\|_{s-m,q}+\|Bu\|_{s-m,q}+\|u\|_{s-m,q}.
\end{split}
\end{equation*}

Now we will work with the lower order term.
Choose $\delta\in(0,\alpha-\frac{n}\gamma)$ such that $\delta\leqs\min\{1,s+\alpha-m,s-\frac{n}q+\alpha-\frac{n}\gamma+n-m\}$.
We have $B\in\cD^{\alpha-1,\gamma}_{m-1}$, so by Lemma \ref{l:bdd-operator}, $B:W^{s-\delta,\gamma}\to W^{s-m,\gamma}$ is bounded.
Then using a well known interpolation inequality, we get
\begin{equation*}
\|Bu\|_{s-m,q}
\leqs 
C\|u\|_{s-\delta,q}
\leqs 
C\varepsilon\|u\|_{s,q}
+
C'\varepsilon^{-(m-\delta)/\delta}\|u\|_{s-m,q},
\end{equation*}
for any $\varepsilon>0$.
Choosing $\varepsilon>0$ sufficiently small, we conclude that
\begin{equation*}
c\|u\|_{s,q} \leqs 
\|Au\|_{s-m,q}+\|u\|_{s-m,q},
\qquad
\forall u\in W^{s,q}(E),\quad\mathrm{supp}\,u\subset K.
\end{equation*}
We apply the this inequality to $\chi u$, and then observing that $[A,\chi]$ is in $\cD^{\alpha,\gamma}_{m-1}(\cM)$, we obtain \eqref{e:ell-est-loc}.
\end{proof}


We can easily globalize the above result as follows.

\begin{corollary}\label{C:ell-est}
Let the conditions of Lemma \ref{l:ell-est-loc} hold.
Then there exists a constant $c>0$ such that
\begin{equation}\label{e:ell-est}
c \|u\|_{s,q} \leqs
\|Au\|_{s-m,q}+\|u\|_{s-m,q},\qquad
\forall u\in W^{s,q}(E).
\end{equation}
\end{corollary}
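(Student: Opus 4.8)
\textbf{Proof plan for Corollary~\ref{C:ell-est}.}
The plan is to globalize the local estimate \eqref{e:ell-est-loc} of Lemma~\ref{l:ell-est-loc} by a standard partition-of-unity argument, using compactness of $\cM$. First, for each $y \in \cM$, Lemma~\ref{l:ell-est-loc} supplies open neighborhoods $K_y \subset U_y \subset \cM$ of $y$ and a constant $c_y > 0$ such that $c_y \|\chi u\|_{s,q} \leqs \|Au\|_{s-m,q} + \|u\|_{s-1,q,U_y}$ for all $u \in W^{s,q}(E)$ and all $\chi \in C^\infty_0(K_y)$ with $\chi \geqs 0$. The collection $\{K_y\}_{y\in\cM}$ is an open cover of the compact manifold $\cM$, so it admits a finite subcover $\{K_{y_i}\}_{i=1}^N$; write $K_i = K_{y_i}$, $U_i = U_{y_i}$, and $c_i = c_{y_i}$. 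Choose a smooth partition of unity $\{\chi_i\}_{i=1}^N$ subordinate to $\{K_i\}$, arranged so that each $\chi_i \geqs 0$ and $\sum_i \chi_i \equiv 1$ on $\cM$; then $\chi_i^{1/2}$ (or, more cleanly, a partition of unity of the form $\chi_i = \psi_i^2$ with $\psi_i \in C^\infty_0(K_i)$, $\psi_i \geqs 0$) is also admissible in the local estimate.

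The main step is then to combine these. Writing $u = \sum_i \chi_i u$ and using the triangle inequality together with the fact that pointwise multiplication by a fixed smooth function $\chi_i$ is bounded on $W^{s,q}(E)$ (cf.\ Corollary~\ref{c:alg}), we obtain
\begin{equation*}
\|u\|_{s,q} \leqs \sum_{i=1}^N \|\chi_i u\|_{s,q}
\leqs \sum_{i=1}^N \frac{1}{c_i}\left( \|Au\|_{s-m,q} + \|u\|_{s-1,q,U_i} \right)
\leqs C\left( \|Au\|_{s-m,q} + \|u\|_{s-1,q} \right),
\end{equation*}
with $C = \sum_i c_i^{-1}$ (absorbing the comparison of the local norms $\|\cdot\|_{s-1,q,U_i}$ by the global norm $\|\cdot\|_{s-1,q}$). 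This already yields the estimate with $\|u\|_{s-1,q}$ on the right-hand side rather than $\|u\|_{s-m,q}$.

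It remains to replace $\|u\|_{s-1,q}$ by $\|u\|_{s-m,q}$ when $m \geqs 2$; for $m = 1$ there is nothing to do. This is done by an interpolation argument identical to the one already used at the end of the proof of Lemma~\ref{l:ell-est-loc}: for any $\varepsilon > 0$ there is a constant $C_\varepsilon$ with $\|u\|_{s-1,q} \leqs \varepsilon \|u\|_{s,q} + C_\varepsilon \|u\|_{s-m,q}$, and choosing $\varepsilon$ small enough that $C\varepsilon \leqs \tfrac12$ allows the $\|u\|_{s,q}$ term on the right to be absorbed into the left-hand side. This produces \eqref{e:ell-est} with a new constant $c>0$. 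I do not expect any genuine obstacle here: the only points requiring care are checking that the multiplier norms of the $\chi_i$ are finite in the relevant Sobolev range (which follows from the hypotheses on $s$, $q$ via Corollary~\ref{c:alg}) and that the interpolation inequality between $W^{s-m,q}$, $W^{s-1,q}$, $W^{s,q}$ is available in this fractional setting, which it is by the standard theory of these spaces recalled in Appendix~\ref{sec:Sobolev}.
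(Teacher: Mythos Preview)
Your proposal is correct and follows essentially the same approach as the paper: cover $\cM$ by the local neighborhoods from Lemma~\ref{l:ell-est-loc}, pass to a finite subcover by compactness, use a partition of unity to obtain the global estimate with $\|u\|_{s-1,q}$ on the right, and then interpolate to replace $\|u\|_{s-1,q}$ by $\|u\|_{s-m,q}$. The paper's proof is more terse but the argument is the same.
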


\begin{proof}
We first cover $\cM$ by open neighborhoods $K$ by applying Lemma \ref{l:ell-est-loc} to every point $y\in\cM$,
and then choose a finite subcover of the resulting cover.
Then a partition of unity argument gives \eqref{e:ell-est} with the term $\|u\|_{s-m,q}$ replaced by $\|u\|_{s-1,q}$,
and finally one can use an interpolation inequality to get the conclusion.
\end{proof}

Let us recall the following well known results from functional analysis.

\begin{lemma}
\label{l:semi-fred}
Let $X$ and $Y$ be Banach spaces with continuous embedding $X\hookrightarrow Y$.
Let $A:X\rightarrow Y$ be a continuous linear map.
Then
\begin{itemize}
\item[(a)] 
A necessary and sufficient condition that the graph of $A$ be closed in $X\times Y$ is that there exists a constant $c>0$ such that
$c\|u\|_{\tiX}\leqs\|Au\|_{\tiY}+\|u\|_{\tiY}$ for all $u\in X$.
\item[(b)]
If in addition the embedding $X\hookrightarrow Y$ is compact then the range of $A$ is closed and the kernel of $A$ is finite-dimensional.
\end{itemize}
\end{lemma}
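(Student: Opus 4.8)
This is Lemma \ref{l:semi-fred}, a standard pair of functional-analytic facts about continuous linear maps between Banach spaces with a (possibly compact) embedding. Let me sketch how I'd prove each part.

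\medskip

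\noindent\textbf{Proof plan for Lemma \ref{l:semi-fred}.}

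The plan is to treat the two parts separately, deriving part (b) from part (a) together with the compactness hypothesis.

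For part (a), I would argue both directions. For the ``if'' direction, suppose the estimate $c\|u\|_X \leqs \|Au\|_Y + \|u\|_Y$ holds for all $u \in X$. To show the graph $\Gamma(A) = \{(u, Au) : u \in X\} \subset X \times Y$ is closed, take a sequence $(u_n, Au_n) \to (u, v)$ in $X \times Y$. Then $u_n \to u$ in $X$, and since $A$ is continuous, $Au_n \to Au$ in $Y$; but also $Au_n \to v$ in $Y$, so $v = Au$ and $(u,v) \in \Gamma(A)$. Wait --- this direction actually does not use the estimate at all; continuity of $A$ already gives a closed graph. So the content must be read the other way: the estimate is what one \emph{gets} from (or what is \emph{equivalent to} under the embedding hypothesis) --- the right reading is that the estimate holds \emph{if and only if} the graph of $A$, viewed appropriately, is closed, where the subtlety is that $X$ carries its own norm but sits inside $Y$. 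The cleanest route: equip $X$ with the graph norm $\|u\|_* := \|Au\|_Y + \|u\|_Y$ and observe $(X, \|\cdot\|_*)$ is complete precisely when $\Gamma(A)$ is closed in $X\times Y$ (standard). The estimate $c\|u\|_X \leqs \|u\|_*$ together with the trivial bound $\|u\|_* \leqs (\|A\|+1)\|u\|_X$ says $\|\cdot\|_*$ and $\|\cdot\|_X$ are equivalent; since $(X,\|\cdot\|_X)$ is complete by hypothesis, the two statements are equivalent via the open mapping theorem (the identity $(X,\|\cdot\|_X) \to (X,\|\cdot\|_*)$ is a continuous bijection, hence open, hence bicontinuous). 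I'd present it in this order: define the graph norm, note completeness of the graph-normed space $\Leftrightarrow$ closedness of $\Gamma(A)$, then invoke the open mapping / bounded inverse theorem to pass between norm equivalence and the stated inequality.

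For part (b), assume additionally that $X \hookrightarrow Y$ is compact, and that the estimate from (a) holds (equivalently $\Gamma(A)$ closed). First, the kernel: on $N := \ker A$, the estimate reduces to $c\|u\|_X \leqs \|u\|_Y$ for $u \in N$, i.e. the $X$-norm and $Y$-norm are equivalent on $N$; but the embedding $N \hookrightarrow Y$ is compact, so the closed unit ball of $(N, \|\cdot\|_X)$ is compact in $Y$, hence compact in $N$ in the $X$-norm by the equivalence, so $N$ is finite-dimensional by Riesz's lemma. For closedness of the range, the standard trick: pass to a complement. Since $N$ is finite-dimensional it is complemented, $X = N \oplus X_0$ with $X_0$ closed, and $A|_{X_0}$ is injective with the same range. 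It suffices to show $A|_{X_0}$ is bounded below, i.e. there is $c' > 0$ with $c'\|u\|_X \leqs \|Au\|_Y$ for $u \in X_0$. If not, there is a sequence $u_n \in X_0$ with $\|u_n\|_X = 1$ and $\|Au_n\|_Y \to 0$; by compactness of $X \hookrightarrow Y$, pass to a subsequence with $u_n \to u$ in $Y$; the estimate from (a) then gives $c\|u_n - u_m\|_X \leqs \|A(u_n-u_m)\|_Y + \|u_n - u_m\|_Y \to 0$, so $u_n$ is Cauchy in $X$, $u_n \to u$ in $X$ with $\|u\|_X = 1$, $u \in X_0$, and $Au = 0$, contradicting injectivity of $A|_{X_0}$. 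Hence $A|_{X_0}$ is bounded below, so its range is closed, and that range equals the range of $A$.

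\medskip

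I don't expect a genuine obstacle here --- these are textbook arguments --- but the one point requiring care is the precise statement and use of part (a): making sure the equivalence is read as ``estimate $\Leftrightarrow$ graph closed'' with the open mapping theorem supplying the nontrivial implication, rather than the vacuous ``continuity $\Rightarrow$ closed graph''. The rest (finite-dimensional kernel via Riesz, closed range via the bounded-below-on-a-complement argument) is routine and I would cite it as ``well known results from functional analysis,'' as the excerpt already does.
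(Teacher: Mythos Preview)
The paper does not actually prove this lemma: it is stated as a recall of ``well known results from functional analysis'' and left without proof, so there is no paper argument to compare against.

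Your approach is the standard one and is correct. You rightly flag the one genuine subtlety: as literally written, ``graph of $A$ closed in $X\times Y$'' is automatic for a continuous $A$, so the statement only has content if one reads it as closedness of the graph in $Y\times Y$ (equivalently, completeness of $X$ under the graph norm $\|u\|_* = \|Au\|_Y + \|u\|_Y$). Your resolution via the open mapping theorem --- the identity $(X,\|\cdot\|_X)\to(X,\|\cdot\|_*)$ is a continuous bijection of Banach spaces, hence bicontinuous, giving the estimate --- is exactly right. One small slip: you write ``$(X,\|\cdot\|_*)$ is complete precisely when $\Gamma(A)$ is closed in $X\times Y$''; it should be $Y\times Y$ there, since closedness in $X\times Y$ (with the $X$-norm on the first factor) is vacuous. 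Your argument for part~(b) --- Riesz for the kernel, then bounded-below on a closed complement via a contradiction using compactness and the estimate from (a) --- is the textbook route and is fine as written.
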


As an immediate consequence, we obtain the following result.

\begin{lemma}
\label{l:ell-semi-fred}
Let $A\in\cD^{\alpha,\gamma}_{m}(E)$ be an elliptic operator with $\alpha-\frac{n}\gamma>\max\{0,\frac{m-n}2\}$.
Let $q\in(1,\infty)$, $s\in(m-\alpha,\alpha]$, and $s-\frac{n}q\in(m-n-\alpha+\frac{n}\gamma,\alpha-\frac{n}\gamma]$.
Then, the operator $A:W^{s,q}(E)\to W^{s-m,q}(E)$ is semi-Fredholm, i.e., its range is closed and the kernel is finite-dimensional.
\end{lemma}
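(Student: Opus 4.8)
\textbf{Proof proposal for Lemma~\ref{l:ell-semi-fred}.}
The plan is to deduce the statement directly from the local elliptic estimate already proved in Lemma~\ref{l:ell-est-loc} (equivalently its globalized form Corollary~\ref{C:ell-est}) together with the abstract functional-analytic criterion of Lemma~\ref{l:semi-fred}. First I would check that the hypotheses on $\alpha,\gamma,s,q$ are exactly those required to invoke the earlier results: the condition $\alpha-\frac{n}{\gamma}>\max\{0,\frac{m-n}{2}\}$ and the ranges $s\in(m-\alpha,\alpha]$, $s-\frac{n}{q}\in(m-n-\alpha+\frac{n}{\gamma},\alpha-\frac{n}{\gamma}]$ are precisely the hypotheses of Lemma~\ref{l:ell-est-loc}, so Corollary~\ref{C:ell-est} applies and gives a constant $c>0$ with
\begin{equation*}
c\,\|u\|_{s,q}\leqs \|Au\|_{s-m,q}+\|u\|_{s-m,q},\qquad\forall\,u\in W^{s,q}(E).
\end{equation*}
Also, by Lemma~\ref{l:bdd-operator} the operator $A$ extends to a bounded linear map $W^{s,q}(E)\to W^{s-m,q}(E)$ under the same numerical constraints, so $A$ is genuinely a continuous operator between the stated spaces.

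Next I would set up the abstract framework of Lemma~\ref{l:semi-fred} by taking $X=W^{s,q}(E)$ and $Y=W^{s-m,q}(E)$. Since $m\geqs1$ and $\cM$ is a closed manifold, the embedding $W^{s,q}(E)\hookrightarrow W^{s-m,q}(E)$ is continuous, and in fact \emph{compact} by the Rellich--Kondrachov theorem on closed manifolds (the Sobolev index drops by $m\geqs1$). The displayed estimate above is exactly the inequality $c\|u\|_X\leqs\|Au\|_Y+\|u\|_Y$ appearing in Lemma~\ref{l:semi-fred}(a), so the graph of $A$ is closed in $X\times Y$; combined with the compactness of $X\hookrightarrow Y$, part (b) of Lemma~\ref{l:semi-fred} immediately yields that the range of $A$ is closed in $W^{s-m,q}(E)$ and the kernel of $A$ is finite-dimensional. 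This is precisely the assertion that $A:W^{s,q}(E)\to W^{s-m,q}(E)$ is semi-Fredholm, completing the proof.

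The only point requiring a small amount of care --- and the place I expect the main (minor) obstacle --- is verifying that the numerical side conditions in Lemma~\ref{l:ell-est-loc} and Lemma~\ref{l:bdd-operator} are simultaneously met by the hypotheses stated here, and that the $W^{s,q}\hookrightarrow W^{s-m,q}$ embedding is compact in the fractional, low-regularity regime under consideration (not merely continuous). For the compactness one uses that on a compact manifold $W^{t_1,q}\hookrightarrow W^{t_2,q}$ is compact whenever $t_1>t_2$, which applies here since $m\geqs1>0$; alternatively one factors through an intermediate space $W^{s-\varepsilon,q}$ and uses boundedness plus a single compact step. Everything else is a direct citation of the already-established local estimate and the standard closed-graph/Fredholm criterion, so no new analysis is needed.
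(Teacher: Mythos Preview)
Your proposal is correct and takes essentially the same approach as the paper: the paper states the lemma ``as an immediate consequence'' of Corollary~\ref{C:ell-est} and Lemma~\ref{l:semi-fred}, which is exactly the combination you use. Your additional verification that the numerical constraints match those of Lemma~\ref{l:bdd-operator} and that the embedding $W^{s,q}\hookrightarrow W^{s-m,q}$ is compact (using $m\in\N$, hence $m\geqs1$) fills in details the paper leaves implicit.
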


\subsection{Maximum principles on closed manifolds}
\label{sec:maxprinciple}

In this appendix, we present maximum principles for the operators of the 
form $-\nabla \cdot (u\nabla)$ with positive function $u$, followed by a simple application.
These types of results are well known, but nevertheless we state them
here for completeness.

It is convenient at times when working with barriers and maximum
principle arguments to split real valued functions into
positive and negative parts; we will use the following notation
for these concepts:
\[
\phi^{+}:= \mbox{max}\{\phi,0\},\qquad
\phi^{-}:= -\mbox{min}\{\phi,0\},
\]
whenever they make sense.
In the proof of the following lemma we will use the fact that for $\phi\in W^{1,p}$ it holds $\phi^{+}\in W^{1,p}$ and so $\phi^{-}\in W^{1,p}$, cf. \cite{dMdZ98}.

\begin{lemma}\label{l:max-princ}
Let $p\in(1,\infty)$ and $s\in(\frac{n}p,\infty)\cap[1,\infty)$, and
let $(\cM,h_{ab})$ be an $n$-dimensional, smooth, closed manifold with a Riemannian metric $h_{ab} \in W^{s,p}$.
Moreover, let $u\in W^{s,p}$ be a function with $u>0$ and let $f\in W^{s-2,p}$.
Let $\phi\in W^{s,p}$ be such that
\begin{equation}\label{e:max1}
\langle u\nabla\phi,\nabla\varphi\rangle+\langle f,\phi\varphi\rangle\geqs0,
\qquad\textrm{for all }
\varphi\in C^{\infty}_{+}.
\end{equation}
\begin{itemize}
\item[(a)]
If $f\neq0$ and $\langle f,\varphi\rangle\geqs0$ for all $\varphi\in C^{\infty}_{+}$,
then $\phi\geqs0$.
\item[(b)]
If $\cM$ is connected and $\phi\geqs0$,
then either $\phi\equiv0$ or $\phi>0$ everywhere.
\end{itemize}
\end{lemma}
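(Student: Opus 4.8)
The plan is to prove the two parts by testing the weak inequality \eqref{e:max1} against suitable nonnegative test functions built from $\phi$ itself, using that $C^\infty_+$ is dense among nonnegative elements of the relevant Sobolev space so that $\phi^-$ (resp.\ an appropriate cutoff of $\phi$) is an admissible test object. For part (a), I would first recall that since $s\geqs1$ and $sp>n$ we have $\phi\in W^{s,p}\hookrightarrow W^{1,p}\cap L^\infty$, hence $\phi^-\in W^{1,p}$ with $\nabla\phi^-=-\nabla\phi\cdot\chi_{\{\phi<0\}}$, and that the pairings in \eqref{e:max1} extend continuously so that we may take $\varphi=\phi^-$ (approximating by $C^\infty_+$ functions, using the density statement in Lemma~\ref{l:basob}(b) together with the mapping properties in Corollary~\ref{c:alg}(a) to control $\langle f,\phi\varphi\rangle$ and $\langle u\nabla\phi,\nabla\varphi\rangle$ under the approximation). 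Substituting $\varphi=\phi^-$ into \eqref{e:max1} gives
\[
0\leqs \langle u\nabla\phi,\nabla\phi^-\rangle + \langle f,\phi\phi^-\rangle
 = -\langle u\nabla\phi^-,\nabla\phi^-\rangle - \langle f,(\phi^-)^2\rangle,
\]
using $\phi\phi^-=-(\phi^-)^2$ and $\nabla\phi\cdot\nabla\phi^-=-|\nabla\phi^-|^2$ pointwise. Since $u>0$ and $\langle f,(\phi^-)^2\rangle\geqs0$ (as $(\phi^-)^2$ is a nonnegative function in $W^{1,p}$ and $f\in W^{s-2,p}_+$ in the sense of pairing against nonnegative test functions), both terms on the right are $\leqs0$, forcing $\langle u\nabla\phi^-,\nabla\phi^-\rangle=0$ and $\langle f,(\phi^-)^2\rangle=0$. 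Coercivity of $v\mapsto\langle u\nabla v,\nabla v\rangle$ modulo constants (here I use that $u$ is bounded below by a positive constant, so this quadratic form controls $\|\nabla\phi^-\|_2^2$) then shows $\phi^-$ is a.e.\ constant; call it $c\geqs0$. If $c>0$ then $\phi^-\equiv c$ is a strictly positive constant, and $\langle f,(\phi^-)^2\rangle=c^2\langle f,1\rangle=0$ together with $f\neq0$, $\langle f,\varphi\rangle\geqs0$ for all $\varphi\in C^\infty_+$ yields a contradiction (a nonzero nonnegative functional cannot annihilate the constant function $1$, by density of $C^\infty_+$ and $f\neq0$). Hence $c=0$, i.e.\ $\phi^-\equiv0$, which is $\phi\geqs0$.

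For part (b), with $\cM$ connected and $\phi\geqs0$ satisfying \eqref{e:max1}, I would argue by a Harnack-type/strong maximum principle. The cleanest route in this weak setting: suppose the zero set $Z=\{\phi=0\}$ is nonempty and $\phi\not\equiv0$; I want a contradiction. One option is to invoke a De Giorgi--Nash--Moser weak Harnack inequality for nonnegative supersolutions of the divergence-form operator $-\nabla\cdot(u\nabla\,\cdot\,) + f$ — valid here because $u$ is bounded above and below by positive constants and $f\in W^{s-2,p}\hookrightarrow L^{n/2+\epsilon}_{\mathrm{loc}}$ (using $sp>n$, $s\geqs1$) so the zeroth-order coefficient is in a subcritical Lebesgue class — which gives $\inf_{B_r}\phi\geqs c\,\fint_{B_{2r}}\phi$ on coordinate balls; if $\phi$ vanishes at a point then $\inf_{B_r}\phi=0$, so $\phi\equiv0$ on $B_{2r}$, and a connectedness/covering argument propagates this to all of $\cM$, contradicting $\phi\not\equiv0$. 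Alternatively, one can cite a strong maximum principle for weak solutions directly from the literature on elliptic operators with rough coefficients. I would state the precise reference (e.g.\ Gilbarg--Trudinger or a version adapted to manifolds) and check the hypotheses match our regularity $h\in W^{s,p}$, $u\in W^{s,p}$, $f\in W^{s-2,p}$.

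The main obstacle is part (b): verifying that the weak Harnack inequality (or strong maximum principle) applies with coefficients of exactly the regularity we have, namely $h\in W^{s,p}$ with $p\in(1,\infty)$, $s\in(n/p,\infty)\cap[1,\infty)$, and a zeroth-order coefficient only in $W^{s-2,p}$ rather than $L^\infty$. The key check is that $s-2$ and $p$ are such that $W^{s-2,p}$ embeds into a Lebesgue space $L^t$ with $t$ above the Serrin threshold ($t>n/2$ for the zeroth-order term), which should follow from Sobolev embedding since $sp>n$ forces $(s-2)p$ close enough to criticality — but this requires a careful case distinction (notably the borderline $s$ close to $n/p$) and may need the observation that near-criticality still suffices for the Moser iteration to close. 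Part (a) I expect to be routine modulo the density/approximation bookkeeping already set up in Appendix~\ref{sec:Sobolev}.
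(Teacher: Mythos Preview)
Your approach is essentially the paper's own: part (a) via testing with $\phi^-$, part (b) via the weak Harnack inequality and an open--closed argument on $\phi^{-1}(0)$. Part (a) is fine and matches the paper (which follows Maxwell); the contradiction for the constant case can be argued either your way (from $\langle f,1\rangle=0$) or directly from \eqref{e:max1}.

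For part (b) there is a genuine gap in your proposed resolution of the obstacle you correctly flag. The embedding $W^{s-2,p}\hookrightarrow L^{t}$ with $t>n/2$ that you want simply fails whenever $s<2$: for $s\in[1,2)$ the space $W^{s-2,p}$ has negative order and contains distributions that are not functions, so no Lebesgue embedding exists and no ``case distinction'' or ``near-criticality closes Moser'' will rescue this route. The fix the paper uses (citing Maxwell \cite{dM06}) is different: since $s\geqs1$ and $sp>n$, one has $W^{s-2,p}\hookrightarrow W^{-1,q}$ for some $q>n$, and hence $f$ can be written locally as $f=f_0+\partial_i g^i$ with $f_0,g^i\in L^{q}$. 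Expanding $f\phi$ accordingly moves part of the zeroth-order term into the divergence part of the operator, producing coefficients $a^i,b^j\in L^{2t}$ and $a\in L^t$ with $2t=q>n$, which is exactly the regularity Trudinger's weak Harnack inequality requires. So the Harnack argument does apply, but only after this decomposition --- not via the Lebesgue embedding you propose.
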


\begin{proof}
For (a), we will follow the proof of \cite[Lemma 2.9]{dM05}.
Since $\phi\in W^{1,n}$, we have $\phi^{-}\in W^{1,n}_{+}$ and $-\phi\phi^{-}\in W^{1,n}_{+}$.
Note that $W^{1,n}\hookrightarrow (W^{s-2,p})^*$ by $n\geqs2$.
Now, using the positivity of $f$ and the property \eqref{e:max1}, by density we get
\begin{equation*}
0
\geqs\langle f,\phi\phi^{-}\rangle
\geqs -\langle u\nabla\phi,\nabla\phi^{-}\rangle
=\langle u\nabla\phi^{-},\nabla\phi^{-}\rangle,
\end{equation*}
implying that $\phi^{-}=\mathrm{const}$.
So if $\phi<0$, it would have to be a negative constant.
But the property \eqref{e:max1} gives $\langle f,\varphi\rangle\leqs0$ for all $\varphi\in C^{\infty}_{+}$, which, in combination with the positivity, implies $f=0$. 
This contradicts with the hypothesis $f\neq0$ and proves (a).

Now we will prove (b).
Since $\phi$ is continuous, the level set $\phi^{-1}(0)\subset\cM$ is closed.
Following the proof of \cite[Lemma 5.3]{dM06}, we apply the weak Harnack inequality \cite[Theorem 5.2]{nT73} to show that $\phi^{-1}(0)$ is also open.
Then by connectedness of $\cM$ we will have the proof.

The weak Harnack inequality \cite[Theorem 5.2]{nT73} can be applied to second order elliptic operators of the form
\begin{equation*}
L\phi=\partial_i(a^{ij}\partial_j\phi+a^i\phi)+b^j\partial_j\phi+a\phi,
\end{equation*}
where $a^{ij}$ are continuous, and $a^i,b^j\in L^{2t}$, and $a\in L^{t}$ for some $t>\frac{n}2$.
The first term in \eqref{e:max1} satisfies these conditions, and the second term can be cast into a form satisfying the conditions (details can be found in the proof of \cite[Lemma 5.3]{dM06}).
Now suppose that $\phi(x)=0$ for some $x\in\cM$,
and let us work in local coordinates around $x$.
Then the weak Harnack inequality says that for sufficiently small $R>0$, and for some $p>t'$,
\begin{equation*}
\|\phi\|_{L^p(B(x,2R))}\leqs
C R^{\frac{n}p}\inf_{B(x,R)}\phi,
\end{equation*}
where $B(x,R)$ denotes the open ball of radius $R$ (in the background flat metric) centered at $x$,
and $C$ is a constant that depends only on $t$, $p$, and the differential operator.
Since $\phi(x)=0$ and $\phi$ is nonnegative, the infimum is zero and the inequality implies that $\phi\equiv0$ in a neighborhood of $x$.
Hence the set $\phi^{-1}(0)$ is open.
\end{proof}

\begin{lemma}\label{l:lapinv}
Let the hypotheses of Lemma \ref{l:max-princ}(b) hold,
and define the operator $L:W^{s,p}\to W^{s-2,p}$ by 
$$
\langle L\phi,\varphi\rangle=\langle u\nabla \phi,\nabla\varphi\rangle+\langle f,\phi\varphi\rangle,
\qquad\phi\in W^{s,p},\,\varphi\in C^{\infty}.
$$
Then, $L$ is bounded and invertible.
\end{lemma}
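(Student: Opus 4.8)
The plan is to show that $L$ is a bounded, elliptic, formally self-adjoint operator, conclude from this that it is Fredholm of index zero on the scale $W^{s,p}\to W^{s-2,p}$, and then kill its kernel with the maximum principle of Lemma~\ref{l:max-princ}. First I would record boundedness. Writing $L\phi=-\nabla\cdot(u\nabla\phi)+f\phi\in W^{s-2,p}$, the principal part $-\nabla\cdot(u\nabla\,\cdot\,)$ is a second order differential operator whose coefficients are assembled from $u$ and $h^{ab}$ (both in $W^{s,p}$) and their first derivatives; since $W^{s,p}$ is a Banach algebra (Corollary~\ref{c:alg}(a)), it is of class $\cD^{s,p}_2(\cM\times\R)$ and hence bounded $W^{s,p}\to W^{s-2,p}$ by Lemma~\ref{l:bdd-operator}. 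The zeroth order term $\phi\mapsto f\phi$ is bounded $W^{s,p}\to W^{s-2,p}$ by the multiplication estimate of Corollary~\ref{c:alg}(a), using $f\in W^{s-2,p}$, $s\geqs1$ and $sp>n$; writing $\langle L\phi,\varphi\rangle=\langle u\nabla\phi,\nabla\varphi\rangle+\langle f\phi,\varphi\rangle$ shows the definition is consistent with the stated weak form.

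Next I would check ellipticity and semi-Fredholmness. Because $u\in W^{s,p}\hookrightarrow C^0$ (as $sp>n$) and $u>0$, we have $u\geqs c>0$ on $\cM$, so the principal symbol $u(x)\lvert\xi\rvert_h^2$ is positive definite and $L$ is elliptic. With $m=2$, $\alpha=s$, $\gamma=p$ one has $\alpha-\tfrac{n}\gamma=s-\tfrac{n}p>0\geqs\tfrac{m-n}2$ for $n\geqs2$, so Lemma~\ref{l:ell-semi-fred} applies: $L:W^{s,p}\to W^{s-2,p}$ has closed range and finite-dimensional kernel.

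To upgrade ``semi-Fredholm'' to ``Fredholm of index zero'' I would argue exactly as in the proof of Theorem~\ref{T:w-MC}. The bilinear form $(\phi,\varphi)\mapsto\langle u\nabla\phi,\nabla\varphi\rangle+\langle f,\phi\varphi\rangle$ is symmetric, so $L$ is formally self-adjoint. For a smooth metric, smooth $u>0$ and smooth $f$, the corresponding operator is a genuine formally self-adjoint elliptic operator on the closed manifold $\cM$, so its cokernel is isomorphic to its kernel and its index is $0$. Approximating $h$, $u$, $f$ by smooth data, the associated operators converge to $L$ in $\mathcal{B}(W^{s,p},W^{s-2,p})$ by continuity of the multiplication maps invoked above; since the Fredholm operators of a fixed index form an open subset of $\mathcal{B}(W^{s,p},W^{s-2,p})$, $L$ itself is Fredholm of index $0$.

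Finally I would dispose of the kernel. If $L\phi=0$ with $\phi\in W^{s,p}$, then for every $\varphi\in C^\infty_+$,
\[
\langle u\nabla\phi,\nabla\varphi\rangle+\langle f,\phi\varphi\rangle=\langle L\phi,\varphi\rangle=0\geqs0,
\]
so $\phi$ satisfies the hypothesis~\eqref{e:max1} of Lemma~\ref{l:max-princ}; since the hypotheses in force are those of that lemma (in particular $f\neq0$, $\langle f,\varphi\rangle\geqs0$ for all $\varphi\in C^\infty_+$, and $\cM$ connected), part~(a) gives $\phi\geqs0$. Applying the same reasoning to $-\phi$, which also solves $L(-\phi)=0$, yields $-\phi\geqs0$, hence $\phi\equiv0$. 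Thus $L$ is an injective Fredholm operator of index $0$, i.e. a continuous bijection $W^{s,p}\to W^{s-2,p}$, and the open mapping theorem makes $L^{-1}$ bounded. The only genuinely nontrivial point is the index computation for the rough-coefficient operator; everything else is bookkeeping with the Sobolev multiplication estimates of Appendix~\ref{sec:Sobolev}, which is why I would reduce that point to the smooth case via approximation and openness of the index, precisely as in Theorem~\ref{T:w-MC}.
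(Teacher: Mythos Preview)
Your proof is correct and follows essentially the same route as the paper: semi-Fredholmness from Lemma~\ref{l:ell-semi-fred}, index zero via formal self-adjointness and approximation by smooth coefficients (exactly the argument in the proof of Theorem~\ref{T:w-MC}), and injectivity from Lemma~\ref{l:max-princ}(a) applied to both $\phi$ and $-\phi$. The paper's proof is terser---it omits the explicit boundedness verification you give and phrases injectivity as comparing two solutions of $L\phi=g$---but the substance is identical.
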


\begin{proof}
By Lemma \ref{l:ell-semi-fred}, the operator $L$ is semi-Fredholm,
and moreover since $L$ is formally self-adjoint, it is Fredholm.
It is well known that when the metric is smooth, index of $L$ is zero independent of $s$ and $p$.
We can approximate the metric $h$ by smooth metrics so that $L$ is arbitrarily close to a Fredholm operator with index zero.
Since the level sets of index as a function on Fredholm operators are open, we conclude that the index of $L$ is zero.
The injectivity of $L$ follows from Lemma \ref{l:max-princ}(a),
for if $\phi_1$ and $\phi_2$ are two solutions of $L\phi=g$, then the above lemma implies that $\phi_1-\phi_2\geqs0$ and $\phi_2-\phi_1\geqs0$.
\end{proof}

\subsection{The Yamabe classification of nonsmooth metrics}
\label{sec:yamabe}

Let $\cM$ be a smooth, closed, connected $n$-dimensional Riemannian manifold with a smooth metric $h$, where we assume throughout this section that $n\geqs3$.
With a positive scalar $\varphi$, let $\tilde{h}$ be related to $h$ by the conformal transformation
$\tilde{h}=\varphi^{2^{\star}-2}h$, where $2^{\star}=\frac{2n}{n-2}$.
We say that $\tilde{h}$ and $h$ are conformally equivalent, and this defines an equivalence relation on the space of metrics.
The equivalence class containing $h$ will be denoted by $[h]$; e.g., $\tilde{h}\in[h]$.
It is well known that any smooth Riemannian metric $h$ on a given closed connected manifold $\cM$ satisfies one and only one of the following three conditions:
\begin{itemize}
\item[$\cY^{+}$:] There is a metric in $[h]$ with strictly positive scalar curvature;
\item[$\cY^{0}$:] There is a metric in $[h]$ with vanishing scalar curvature;
\item[$\cY^{-}$:] There is a metric in $[h]$ with strictly negative scalar curvature.
\end{itemize}
These conditions define three disjoint classes in the space of metrics: they are referred to as the Yamabe classes.

We will extend the above classification to metrics in the Sobolev spaces $W^{s,p}$ under rather mild conditions on $s$ and $p$.
Since the case $p=2$ is treated in \cite{dM05} and the argument there easily extends to our slightly general setting,
we shall only sketch the proof here.
Given a Riemannian metric $h\in W^{s,p}$, let us consider the functional $E:W^{1,2}\to\R$ defined by
\begin{equation*}
E(\varphi)=(a\nabla\varphi,\nabla\varphi)+\langle R,\varphi^2\rangle,
\end{equation*}
where $a=4\frac{n-1}{n-2}$.
By Corollary \ref{c:alg}, the pointwise multiplication is bounded on $W^{1,2}\otimes W^{1,2}\to W^{\sigma,q}$
for $\sigma\leqs1$ and $\sigma-\frac{n}q<2-n$.
Putting $\sigma=2-s$ and $q=p'$, these conditions read as $2-s-\frac{n}{p'}=2-n-s+\frac{n}p<2-n$ or $s-\frac{n}p>0$, and $s\geqs1$.
So if  $sp>n$ and $s\geqs1$, $\varphi^2\in W^{2-s,p'}$ for $\varphi\in W^{1,2}$, meaning that the second term is bounded in $W^{1,2}$.

By using the functional $E$, we define the quantity
\begin{equation*}
\mu_q=\mu_q(h)=\inf_{\varphi\in B_{q}}E(\varphi),
\qquad\textrm{where }
B_q=\{\varphi\in W^{1,2}:\|\varphi\|_{q}=1\}.
\end{equation*}
Under the conditions $sp>n$ and $s\geqs1$, one can show that $\mu_q$ is finite for $q\geqs2$,
and moreover that $\mu_{2^{\star}}$ is a conformal invariant, i.e., $\mu_{2^{\star}}(h)=\mu_{2^{\star}}(\tilde{h})$ for any two metrics $\tilde{h}\in[h]$, now allowing $W^{s,p}$ functions for the conformal factor.
We refer to $\mu_{2^\star}(h)$ as the Yamabe invariant of the metric $h$, and we will see that the Yamabe classes correspond to the signs of the Yamabe invariant.

\begin{theorem}\label{t:subcrit}
Let $(\cM,h)$ be a smooth, closed, connected Riemannian manifold with dimension $n\geqs3$ and with a metric $h\in W^{s,p}$, 
where we assume $sp>n$ and $s\geqs1$.
Let $q\in[2,2^{\star})$.
Then, there exists $\phi\in W^{s,p}$, $\phi>0$ in $\cM$, such that
\begin{equation}\label{e:yamabe}
-a\Delta\phi+R\phi=\mu_q\phi^{q-1},
\qquad\textrm{and}\qquad
\|\phi\|_q=1,
\end{equation}
where $\mu_q=\mu_q(h)$ is as defined above.
\end{theorem}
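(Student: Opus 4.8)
\textbf{Proof plan for Theorem \ref{t:subcrit} (subcritical Yamabe problem for rough metrics).}
The plan is to follow the classical direct method of the calculus of variations, adapted to the low-regularity setting, exactly along the lines of Maxwell's argument for $p=2$ in \cite{dM05}. First I would fix $q\in[2,2^{\star})$ and work with the functional $E(\varphi)=(a\nabla\varphi,\nabla\varphi)+\langle R,\varphi^2\rangle$ on the sphere $B_q=\{\varphi\in W^{1,2}:\|\varphi\|_q=1\}$; the discussion preceding the statement already guarantees that $E$ is well-defined and finite on $W^{1,2}$ under the hypotheses $sp>n$, $s\geqs1$ (this is where $R\in W^{s-2,p}$ and the multiplication $W^{1,2}\otimes W^{1,2}\to W^{2-s,p'}$ from Corollary \ref{c:alg} enter), and that $\mu_q=\inf_{B_q}E$ is finite. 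The first substantive step is to show $\mu_q$ is attained. I would take a minimizing sequence $\{\varphi_k\}\subset B_q$; using $\mu_q$ finite and the definition of $E$, together with the compact embedding $W^{1,2}\hookrightarrow L^q$ (valid since $q<2^{\star}$) and a standard argument bounding $\|\nabla\varphi_k\|_2$ (here one uses coercivity modulo lower-order terms: add a large multiple of $\|\varphi\|_2^2$ to both sides and absorb, exploiting that $R\cdot\varphi^2$ pairs continuously against $W^{1,2}$ and can be controlled by $\varepsilon\|\nabla\varphi\|_2^2+C_\varepsilon\|\varphi\|_2^2$ via interpolation), one extracts a subsequence converging weakly in $W^{1,2}$ and strongly in $L^q$ to some $\phi$. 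Strong $L^q$ convergence keeps $\phi\in B_q$, weak lower semicontinuity of the Dirichlet part and continuity of the lower-order term give $E(\phi)\leqs\mu_q$, hence $E(\phi)=\mu_q$.

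The second step is to derive the Euler--Lagrange equation. Replacing $\phi$ by $|\phi|$ (which does not change $E$ or the $L^q$-norm, using $|\nabla|\phi||=|\nabla\phi|$ a.e.\ for $W^{1,2}$ functions, cf.\ \cite{dMdZ98}), I may assume $\phi\geqs0$. The Lagrange multiplier rule applied to the constrained minimization yields, for all $\varphi\in C^\infty$,
\begin{equation*}
(a\nabla\phi,\nabla\varphi)+\langle R,\phi\varphi\rangle=\mu_q\langle\phi^{q-1},\varphi\rangle,
\end{equation*}
i.e.\ $-a\Delta\phi+R\phi=\mu_q\phi^{q-1}$ in the weak sense. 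The third step is to upgrade the regularity of the weak solution from $W^{1,2}$ to $W^{s,p}$ by a bootstrap using the elliptic estimates of Appendix \ref{sec:killing}: one checks that $\phi^{q-1}\in W^{\sigma,r}$ for suitable $(\sigma,r)$ by Sobolev embedding and the multiplication lemmas, applies Corollary \ref{C:ell-est} (or Lemma \ref{l:lapinv}) to gain derivatives on $\phi$, and iterates finitely many times until $\phi\in W^{s,p}$; the scheme terminates because each step improves the differentiability index by a fixed positive amount, as in the bootstrap at the end of \S\ref{sec:proof1}. The fourth step is strict positivity: $\phi\in W^{s,p}\hookrightarrow C^0$ is nonnegative and $\not\equiv0$, and it satisfies $-a\Delta\phi+R\phi=\mu_q\phi^{q-1}\geqs0$ where we may move any negative part of $R\phi$ and treat the equation as $-a\Delta\phi+(R^-+\mu_q\phi^{q-2}\cdot0)\phi\geqs\ldots$; more cleanly, $\phi$ solves an inequality of the form $\langle u\nabla\phi,\nabla\varphi\rangle+\langle f,\phi\varphi\rangle\geqs0$ with $u=a>0$ and $f=R-\mu_q\phi^{q-2}\in W^{s-2,p}$, so Lemma \ref{l:max-princ}(b) (the weak Harnack inequality, using connectedness of $\cM$) forces $\phi>0$ everywhere. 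Finally $\|\phi\|_q=1$ holds since $\phi\in B_q$.

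\textbf{Main obstacle.} The delicate point is the coercivity/compactness step that produces the minimizer: because $R$ need only lie in $W^{s-2,p}$ and may change sign, $E$ is not bounded below without using the constraint, and controlling the minimizing sequence in $W^{1,2}$ requires care. The fix is the standard one—add $\lambda\|\varphi\|_2^2$ to $E$ for large $\lambda$, use the pairing bound $|\langle R,\varphi^2\rangle|\leqs \|R\|_{s-2,p}\|\varphi^2\|_{2-s,p'}\leqs C\|\varphi\|_{1,2}^{2\theta}\|\varphi\|_2^{2(1-\theta)}$ (a consequence of Corollary \ref{c:alg} together with interpolation), and Young's inequality to absorb the top-order part into $a\|\nabla\varphi\|_2^2$—but verifying the admissible exponent ranges against the hypotheses $sp>n,\ s\geqs1$ is the one place where the rough-metric setting genuinely differs from the smooth case, and it is where I would spend the most effort. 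A secondary subtlety is ensuring the replacement $\phi\mapsto|\phi|$ and the subsequent Euler--Lagrange derivation are justified at $W^{1,2}$ regularity; this is routine given \cite{dMdZ98} but must be stated. All remaining steps are mechanical applications of the appendix lemmas.
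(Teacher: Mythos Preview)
Your proposal is correct and follows essentially the same route as the paper: direct method on $E$ over $B_q$, coercivity estimate $E(\varphi)\geqs C_1\|\varphi\|_{1,2}^2-C_2\|\varphi\|_2^2$ (the paper cites \cite[Lemma~3.1]{dM05} for this, which is exactly the interpolation/absorption argument you outline), compactness of $W^{1,2}\hookrightarrow L^q$, weak lower semicontinuity, replacement by $|\phi|$, Euler--Lagrange, bootstrap via Corollary~\ref{C:ell-est}, and positivity via Lemma~\ref{l:max-princ}. The only cosmetic difference is ordering: the paper first bootstraps the minimizer to $W^{s,p}$, then replaces by $|\phi|$ (noting $|\phi|\in W^{1,n}$), then bootstraps again, whereas you take $|\phi|$ at the $W^{1,2}$ stage before deriving the Euler--Lagrange equation; both orderings work.
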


\begin{proof}
The above equation is the Euler-Lagrange equation for the functional $E$, so it suffices to show that $E$ attains its infimum $\mu_q$ over $B_q$ at a positive function $\phi\in W^{s,p}$.
Let $\{\phi_i\}\subset B_q$ be a sequence satisfying $E(\phi_i)\to\mu_q$.
From the continuity of the embedding $L^q\hookrightarrow L^2$, we have $\{\phi_i\}$ is bounded in $L^2$.
It is the content of \cite[Lemma 3.1]{dM05} that
$$
E(\varphi)\geqs C_1\|\varphi\|_{1,2}^2-C_2\|\varphi\|_{2}^2,
\qquad \varphi\in W^{1,2},
$$
for metrics in $W^{s,2}$ with $s>\frac{n}2$.
The proof works verbatim for our case, and since $\mu_q$ is finite, from this we conclude that $\{\phi_i\}$ is bounded in $W^{1,2}$.
By the reflexivity of $W^{1,2}$ and the compactness of $W^{1,2}\hookrightarrow L^{q}$,
there exist an element $\phi\in W^{1,2}$ and a subsequence $\{\phi'_i\}\subset\{\phi_i\}$ such that
$\phi'_i\rightharpoonup\phi$ in $W^{1,2}$ and $\phi'_i\rightarrow\phi$ in $L^{q}$.
The latter implies $\phi\in B_q$.
It is not difficult to show that $E$ is weakly lower semi-continuous, and it follows that $E(\phi)=\mu_q$, so $\phi$ satisfies \eqref{e:yamabe}.
Bootstrapping with Corollary \ref{C:ell-est} implies that $\phi\in W^{s,p}\hookrightarrow W^{1,n}$, so that $|\phi|\in W^{1,n}$.
Since $E(|\phi|)=E(\phi)$, after replacing $\phi$ by $|\phi|$, 
we can assume that $\phi\geqs0$.
Finally, bootstrapping again gives $\phi\in W^{s,p}$,
and since $\phi\neq0$ as $\phi\in B_q$, by Lemma \ref{l:max-princ} we have $\phi>0$.
\end{proof}

Under the conformal scaling $\tilde{h}=\varphi^{2^\star-2}h$, the scalar curvature transforms as
\begin{equation*}
\tilde{R}=\varphi^{1-2^{\star}}(-a\Delta\varphi+R\varphi),
\end{equation*}
so assuming the conditions of the above theorem we infer that any given metric $h\in W^{s,p}$ 
can be transformed to the metric $\tilde{h}=\phi^{2^\star-2}h$ with the continuous scalar curvature $\tilde{R}=\mu_q\phi^{q-2^\star}$,
where the conformal factor $\phi$ is as in the theorem.
In other words, given any metric $h_{ab}\in W^{s,p}$,
there exist continuous functions $\phi \in W^{s,p}$ with $\phi>0$ and $\tilde{R}\in W^{s,p}$ having {\em constant sign}, such that
\begin{equation}\label{e:conformal-transformation}
-a\Delta\phi+R\phi = \tilde{R}\phi^{2^\star-1}.
\end{equation}
We will prove below that the conformal class of the metric $h$ completely determines the sign of $\tilde{R}$,
giving rise to the Yamabe classification of metrics in $W^{s,p}$.

In the class of smooth metrics there is a stronger result known as the Yamabe theorem: each conformal class of smooth metrics contains a metric with constant scalar curvature.
The Yamabe theorem is a non-trivial extension of the above 
theorem to the critical case $q=2^\star$,
and we see that for smooth metrics the sign of the Yamabe invariant determines 
which Yamabe class the metric is in.
A proof of the Yamabe theorem requires more delicate techniques since we lose the compactness of the embedding $W^{1,2}\hookrightarrow L^q$,
see e.g. \cite{jLtP87} for a treatment of smooth metrics.
As far as we know there has not appeared in the literature an explicit proof of the Yamabe theorem for nonsmooth metrics 
such as the ones considered in this paper, although it is generally expected to be true.
We will not pursue this issue here; however, the following simpler result justifies the Yamabe classification of nonsmooth metrics.

\begin{theorem}\label{t:yclass}
Let $(\cM,h)$ be a smooth, closed, connected Riemannian manifold with dimension $n\geqs3$ and with a metric $h\in W^{s,p}$, 
where we assume $sp>n$ and $s\geqs1$.
Then, the followings hold:
\begin{itemize}
\item[$\bullet$] $\mu_{2^{\star}}>0$ iff there is a metric in $[h]$ with continuous positive scalar curvature.
\item[$\bullet$] $\mu_{2^{\star}}=0$ iff there is a metric in $[h]$ with vanishing scalar curvature.
\item[$\bullet$] $\mu_{2^{\star}}<0$ iff there is a metric in $[h]$ with continuous negative scalar curvature.
\end{itemize}
In particular, two conformally equivalent metrics cannot have scalar curvatures with distinct signs.
\end{theorem}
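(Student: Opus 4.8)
The plan is to derive the Yamabe classification of $W^{s,p}$ metrics from the two ingredients already established: the existence theorem for the subcritical minimizer (Theorem~\ref{t:subcrit}), which produces the conformal transformation \eqref{e:conformal-transformation} with $\tilde R = \mu_q \phi^{q-2^\star}$ having constant sign equal to $\operatorname{sgn}\mu_q$, together with the conformal invariance of $\mu_{2^\star}$ and the weak maximum principle (Lemma~\ref{l:max-princ}). First I would fix any $q \in [2,2^\star)$ and apply Theorem~\ref{t:subcrit} to get a positive $\phi\in W^{s,p}$ and the scaled metric $\tilde h = \phi^{2^\star-2}h \in [h]$ with continuous scalar curvature $\tilde R = \mu_q\phi^{q-2^\star}$; since $\phi>0$ is continuous, $\operatorname{sgn}\tilde R \equiv \operatorname{sgn}\mu_q(h)$ pointwise. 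So the ``if $\mu_{2^\star}(h)>0$ (resp.\ $=0$, $<0$) then $[h]$ contains a metric of continuous positive (resp.\ vanishing, negative) scalar curvature'' direction will follow once I know $\operatorname{sgn}\mu_q(h) = \operatorname{sgn}\mu_{2^\star}(h)$ — and in the borderline case $\mu_{2^\star}=0$ one must argue $\mu_q=0$ as well, or else produce the zero–scalar-curvature representative by a separate short argument (for $q=2$ with $\mu_2=0$ one can still minimize; or appeal directly to the conformal invariance of the sign).

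The core of the argument is the reverse implication: if $[h]$ contains a metric $\tilde h$ with continuous scalar curvature $\tilde R$ of a given sign, then $\mu_{2^\star}(h)$ has the matching sign. By conformal invariance, $\mu_{2^\star}(h) = \mu_{2^\star}(\tilde h)$, so it suffices to compute $\operatorname{sgn}\mu_{2^\star}(\tilde h)$ for a metric whose scalar curvature is continuous and sign-definite. If $\tilde R \geqs 0$ (and not identically zero, say strictly positive), then $E(\varphi) = (a\nabla\varphi,\nabla\varphi) + \langle \tilde R,\varphi^2\rangle \geqs 0$ for all $\varphi$, with equality forcing $\varphi$ constant and $\tilde R\varphi^2 = 0$, hence $\varphi = 0$; thus $\mu_{2^\star}(\tilde h) > 0$. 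If $\tilde R \equiv 0$, then $E(\varphi)\geqs 0$ with the constant functions giving $E=0$, so $\mu_{2^\star}(\tilde h) = 0$. If $\tilde R < 0$ somewhere (continuous and negative everywhere, in the $\cY^-$ representative), then testing $E$ with $\varphi \equiv \text{const}$ normalized in $L^{2^\star}$ gives $E(\varphi) = \langle \tilde R,\varphi^2\rangle < 0$, hence $\mu_{2^\star}(\tilde h) < 0$. These three computations pin down the sign of the Yamabe invariant from the sign of a continuous representative's scalar curvature, and since the three sign conditions are mutually exclusive and exhaustive, each metric lies in exactly one class. The final sentence — two conformally equivalent metrics cannot have scalar curvatures of distinct signs — is then immediate: both share the same $\mu_{2^\star}$, whose sign determines (via the reverse implications applied to each) the sign of any continuous-scalar-curvature representative in the common conformal class.

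The main obstacle I anticipate is handling the equivalence $\operatorname{sgn}\mu_q(h) = \operatorname{sgn}\mu_{2^\star}(h)$ cleanly for the subcritical exponents, and in particular making the $\mu_{2^\star}=0$ case rigorous without invoking the full Yamabe theorem (which the paper explicitly declines to prove in the nonsmooth setting). The safest route is to \emph{not} route the positive/zero/negative existence statements through $\mu_q$ at all for the ``only if'' direction, but rather: (i) use Theorem~\ref{t:subcrit} with $q=2$ to get a representative $\tilde h$ with continuous scalar curvature $\tilde R = \mu_2\phi^{2-2^\star}$ of sign $\operatorname{sgn}\mu_2$, (ii) show $\operatorname{sgn}\mu_2(h) = \operatorname{sgn}\mu_{2^\star}(h)$ by the elementary three-case computation above applied to $\tilde h$ (whose scalar curvature sign is now known and continuous, so $\mu_{2^\star}(\tilde h)$'s sign is determined, and $\mu_{2^\star}(\tilde h) = \mu_{2^\star}(h)$). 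That closes the loop. The regularity bookkeeping (that all the pointwise multiplications and the functional $E$ are well-defined and that $E$ is weakly lower semicontinuous on $W^{1,2}$ under $sp>n$, $s\geqs1$) is routine given Corollary~\ref{c:alg} and the discussion preceding Theorem~\ref{t:subcrit}, so I would state it briefly and move on.
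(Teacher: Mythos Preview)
Your overall architecture matches the paper's: establish the reverse implication (sign of a continuous scalar-curvature representative determines $\operatorname{sgn}\mu_{2^\star}$) by direct computation with the functional $E$, and then use Theorem~\ref{t:subcrit} to produce such a representative. There is, however, a genuine gap in your $\tilde R>0\Rightarrow\mu_{2^\star}>0$ step. You argue that $E(\varphi)=0$ forces $\varphi=0$, hence $E(\varphi)>0$ on $B_{2^\star}$; but this only gives $\mu_{2^\star}\geqs 0$, not strict positivity of the infimum. Nothing you wrote rules out a minimizing sequence with $E(\varphi_i)\to 0$. The paper closes this by observing that when $\tilde R>0$ is continuous, $E(\varphi)\geqs c\|\varphi\|_{1,2}^2$ (so $E$ is an equivalent $W^{1,2}$-norm squared), and then the Sobolev embedding $\|\varphi\|_{2^\star}\leqs C\|\varphi\|_{1,2}$ gives $E(\varphi)\geqs c/C^2>0$ on $B_{2^\star}$. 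Since your bootstrap route for the forward direction feeds the $\mu_2>0$ case back through exactly this implication, the gap propagates there as well; once you insert the equivalent-norm argument, your loop closes.

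On the forward direction itself, you take a slightly different path from the paper: rather than proving $\operatorname{sgn}\mu_q=\operatorname{sgn}\mu_{2^\star}$ directly by the homogeneity/scaling argument (rescaling between $B_q$ and $B_{2^\star}$ and bounding the scaling factor via $\|\cdot\|_q\leqs C\|\cdot\|_{2^\star}$), you deduce it by applying the already-proved reverse implication to the subcritical representative. Both are valid; your route is a bit more conceptual and avoids the explicit bound on the rescaling constant, while the paper's scaling argument is self-contained and does not re-invoke conformal invariance. Either way, the zero case $\mu_{2^\star}=0\Rightarrow\mu_q=0$ comes out cleanly (in your approach because $\tilde R\equiv 0$ forces $\mu_{2^\star}=0$ and the three cases are exhaustive; in the paper's because $\mu_q\geqs0$ from nonnegativity and $\mu_q\leqs0$ from the scaled $B_{2^\star}$-minimizing sequence).
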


\begin{proof}
We begin by proving that if there is a metric in $[h]$ with continuous scalar curvature of constant sign, then $\mu_{2^{\star}}$ has the corresponding sign.
Since $\mu_{2^\star}$ is a conformal invariant, we can assume that the scalar curvature $R$ of $h$ is continuous and has constant sign.
If $R<0$, then $E(\varphi)<0$ for constant test functions $\varphi=\mathrm{const}$ and there is a constant function in $B_{2^{\star}}$, so we have $\mu_{2^\star}<0$.
If $R\geqs0$, then $E(\varphi)\geqs0$ for any $\varphi\in W^{1,2}$, so $\mu_{2^{\star}}\geqs0$.
Taking constant test functions, we infer that $R=0$ implies $\mu_{2^\star}=0$.
Now, if $R>0$ then $E(\varphi)$ defines an equivalent norm on $W^{1,2}$, and we have $1=\|\varphi\|_{2^{\star}}\leqs C\|\varphi\|_{1,2}$ for $\varphi\in B_{2^\star}$,
so $\mu_{2^\star}>0$.

Next, we will prove that there is a metric in $[h]$ with continuous scalar curvature with the same sign as that of $\mu_{2^{\star}}$.
To this end, for any $q\in[2,2^\star)$, we shall show that the sign of $\mu_{2^\star}$ determines the sign of $\mu_{q}$, so that the proof is completed by Theorem \ref{t:subcrit}.
If $\mu_{2^\star}<0$, then $E(\varphi)<0$ for some $\varphi\in B_{2^{\star}}$, and since $E(k\varphi)=k^2E(\varphi)$ for $k\in\R$,
there is some $k\varphi\in B_{q}$ such that $E(k\varphi)<0$, so $\mu_q<0$.
If $\mu_{s^\star}\geqs0$, then $E(\varphi)\geqs0$ for all $\varphi\in B_{2^\star}$, and for any $\psi\in B_q$ there is $k$ such that $k\psi\in B_{2^\star}$,
so $\mu_q\geqs0$.
All such $k$ are uniformly bounded since $k=1/\|\psi\|_{2^\star}\leqs C/\|\psi\|_q=C$ by the continuity estimate $\|\cdot\|_{1}\leqs C\|\cdot\|_{2^{\star}}$.
From this we have for all $\psi\in B_q$, $E(\psi)=E(k\psi)/k^2\geqs\mu_{2^\star}/k^2\geqs\mu_{2^\star}/C^2$, meaning that $\mu_{2^\star}>0$ implies $\mu_q>0$.
A similar scaling argument gives that if $\mu_{2^\star}=0$ then $\mu_q=0$.
\end{proof}

\subsection{Conformal covariance of the Hamiltonian constraint}
\label{sec:conf-inv}

Let $\cM$ be a smooth, closed, connected $n$-dimensional manifold equipped with a Riemannian metric $h\in W^{s,p}$, where we assume throughout this section that $p\in(1,\infty)$, $s\in(\frac{n}p,\infty)\cap[1,\infty)$ and that $n\geqs3$.
We consider the Hamiltonian constraint
\begin{equation*}\textstyle
H(\phi):=-\Delta\phi+\frac{1}{r(n-1)}R\phi+a_{\tau}\phi^{r+1}-a_{\biw}\phi^{-r-3}-a_{\rho}\phi^{-t}=0,
\end{equation*}
where $r=\frac4{n-2}$, $t\in\R$ are constants, $R\in W^{s-2,p}$ is the scalar curvature of the metric $h$, and the other coefficients satisfy $a_{\tau},a_{\biw},a_{\rho}\in W^{s-2,p}_{+}$.
In this appendix, we will be interested in the transformation properties of $H$ under the conformal change $\tilde{h}=\theta^rh$ of the metric with the conformal factor $\theta\in W^{s,p}$ satisfying $\theta>0$.
To this end, we consider
\begin{equation*}\textstyle
\tilde{H}(\psi):=-\tilde{\Delta}\psi+\frac{1}{r(n-1)}\tilde{R}\psi+\tilde{a}_{\tau}\psi^{r+1}-\tilde{a}_{\biw}\psi^{-r-3}-\tilde{a}_{\rho}\psi^{-t}=0,
\end{equation*}
where $\tilde{\Delta}$ is the Laplace-Beltrami operator associated to the metric $\tilde{h}$,
$\tilde{R}\in W^{s-2,p}$ is the scalar curvature of $\tilde{h}$,
and at the moment we do not impose any conditions on the remaining coefficients other than that they satisfy $\tilde{a}_{\tau},\tilde{a}_{\biw},\tilde{a}_{\rho}\in W^{s-2,p}_{+}$.
One can derive the following relations
\begin{equation*}
\begin{split}
\tilde{R}=\theta^{-r}R-r(n-1)\theta^{-r-1}\Delta\theta,\\
\tilde{\Delta}\psi=\theta^{-r}\Delta\psi+2\theta^{-r-1}\nabla^a\theta\nabla_a\psi.\\
\end{split}
\end{equation*}
Combining these relations with
\begin{equation*}
\Delta(\theta\psi)=\theta\Delta\psi+\psi\Delta\theta+2\nabla^a\theta\nabla_a\psi,
\end{equation*}
we obtain
\begin{equation*}\textstyle
-\tilde{\Delta}\psi+\frac{1}{r(n-1)}\tilde{R}\psi
=\theta^{-r-1}
\left(-\Delta(\theta\psi)+\frac{1}{r(n-1)}R\theta\psi\right),
\end{equation*}
which in turn implies that
\begin{equation*}\textstyle
\tilde{H}(\psi)=\theta^{-r-1}H(\theta\psi),
\end{equation*}
provided in the definition of $\tilde{H}$ that 
$\tilde{a}_{\tau}=a_{\tau}$,
$\tilde{a}_{\biw}=\theta^{-2r-4}a_{\biw}$, and
$\tilde{a}_{\rho}=\theta^{-t-r-1}a_{\rho}$.
We have proved the following well known result.

\begin{lemma}\label{l:conf-inv}
Assume the above setting, so in particular,
$\tilde{a}_{\tau}=a_{\tau}$,
$\tilde{a}_{\tbw}=\theta^{-2r-4}a_{\tbw}$, and
$\tilde{a}_{\rho}=\theta^{-t-r-1}a_{\rho}$.
Then we have
\begin{equation*}
\begin{split}
\tilde{H}(\psi)=0
\quad\Leftrightarrow\quad
H(\theta\psi)=0,\\
\tilde{H}(\psi)\geqs0
\quad\Leftrightarrow\quad
H(\theta\psi)\geqs0,\\
\tilde{H}(\psi)\leqs0
\quad\Leftrightarrow\quad
H(\theta\psi)\leqs0.
\end{split}
\end{equation*}
\end{lemma}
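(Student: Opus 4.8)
The statement to be proved is Lemma \ref{l:conf-inv}, the conformal covariance of the Hamiltonian constraint: with $\tilde h=\theta^r h$ and the coefficients rescaled by $\tilde a_\tau=a_\tau$, $\tilde a_{\tbw}=\theta^{-2r-4}a_{\tbw}$, $\tilde a_\rho=\theta^{-t-r-1}a_\rho$, one has the equivalences $\tilde H(\psi)=0\Leftrightarrow H(\theta\psi)=0$, and likewise for $\geqs$ and $\leqs$. The key computational fact, already announced in the paragraph preceding the lemma statement, is the pointwise identity $\tilde H(\psi)=\theta^{-r-1}H(\theta\psi)$. Since $\theta>0$, multiplication by the positive function $\theta^{-r-1}$ preserves each of the relations $=0$, $\geqs0$, $\leqs0$ (in the weak/distributional sense, after pairing against nonnegative test functions), so all three equivalences follow at once from that single identity.

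\textbf{Steps.} First I would record the two standard conformal transformation rules for the scalar curvature and the Laplace--Beltrami operator under $\tilde h=\theta^r h$ with $r=\frac4{n-2}$, namely
\begin{equation*}
\tilde R=\theta^{-r}R-r(n-1)\theta^{-r-1}\Delta\theta,\qquad
\tilde\Delta\psi=\theta^{-r}\Delta\psi+2\theta^{-r-1}\nabla^a\theta\,\nabla_a\psi,
\end{equation*}
checking that each term makes sense in $W^{s-2,p}$ under the running hypotheses $sp>n$, $s\geqs1$ (the products involved are controlled by Corollary \ref{c:alg} and Lemma \ref{l:sob-hol} exactly as in Appendix \ref{sec:yamabe}), so that in particular $\tilde R\in W^{s-2,p}$. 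Second, I would combine these with the product rule $\Delta(\theta\psi)=\theta\Delta\psi+\psi\Delta\theta+2\nabla^a\theta\,\nabla_a\psi$ to obtain the principal-part identity
\begin{equation*}
-\tilde\Delta\psi+\textstyle\frac1{r(n-1)}\tilde R\psi=\theta^{-r-1}\left(-\Delta(\theta\psi)+\frac1{r(n-1)}R\,\theta\psi\right).
\end{equation*}
Third, I would handle the algebraic (zeroth-order, nonlinear) terms: with $\phi=\theta\psi$ one has $\tilde a_\tau\psi^{r+1}=a_\tau\theta^{-r-1}\phi^{r+1}$, $\tilde a_{\tbw}\psi^{-r-3}=\theta^{-2r-4}a_{\tbw}\,\theta^{r+3}\phi^{-r-3}=\theta^{-r-1}a_{\tbw}\phi^{-r-3}$, and $\tilde a_\rho\psi^{-t}=\theta^{-t-r-1}a_\rho\,\theta^{t}\phi^{-t}=\theta^{-r-1}a_\rho\phi^{-t}$, so every term of $\tilde H(\psi)$ equals $\theta^{-r-1}$ times the corresponding term of $H(\theta\psi)$; this yields $\tilde H(\psi)=\theta^{-r-1}H(\theta\psi)$. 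Finally, since $\theta\in W^{s,p}$ with $\theta>0$ and $W^{s,p}\hookrightarrow C^0$, the factor $\theta^{-r-1}$ is a strictly positive continuous function, and multiplication by it is a bounded, order-preserving, invertible operation on $W^{s-2,p}$; testing against $\varphi\in C^\infty_+$ shows $\tilde H(\psi)\geqs0 \Leftrightarrow \langle\theta^{-r-1}H(\theta\psi),\varphi\rangle\geqs0\Leftrightarrow\langle H(\theta\psi),\theta^{-r-1}\varphi\rangle\geqs0\Leftrightarrow H(\theta\psi)\geqs0$ (as $\theta^{-r-1}\varphi$ ranges over a dense subcone), and symmetrically for $\leqs0$ and $=0$. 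I would then assemble these into the displayed chain of equivalences.

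\textbf{Main obstacle.} The computation itself is routine once one is willing to manipulate the transformation laws formally; the only real care needed is the functional-analytic bookkeeping in low regularity --- verifying that each product $\theta^{-r-1}\Delta\theta$, $\theta^{-r-1}\nabla^a\theta\nabla_a\psi$, $\theta^{k}a_{\tbw}$, $\theta^{k}a_\rho$, etc., is well-defined in $W^{s-2,p}$ (and that $\psi\mapsto\psi^{\pm m}$, $\theta\mapsto\theta^{\pm m}$ preserve positivity and the relevant Sobolev class), so that the pointwise identity is legitimate as an identity of $W^{s-2,p}$-distributions rather than merely a formal one. This is exactly the kind of estimate furnished by Corollary \ref{c:alg} together with the embedding $W^{s,p}\hookrightarrow L^\infty\cap C^0$, and since the same devices were already used to establish \eqref{e:conformal-transformation} in Appendix \ref{sec:yamabe}, I expect no genuine difficulty --- I would simply cite those results and note that $\theta$ and $\theta^{-1}$ act as bounded multipliers on the spaces in question. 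The statement is flagged as "well known", so the write-up should be short: essentially the two transformation rules, the product rule, the bookkeeping remark, and the positivity argument for the three equivalences.
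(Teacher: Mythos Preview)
Your proposal is correct and follows essentially the same approach as the paper: the paper derives the identity $\tilde H(\psi)=\theta^{-r-1}H(\theta\psi)$ in the paragraph immediately preceding the lemma (via the transformation rules for $\tilde R$ and $\tilde\Delta$, the product rule for $\Delta(\theta\psi)$, and the specified rescalings of the coefficients), and then simply states the lemma as a consequence. Your additional remarks on the functional-analytic bookkeeping and the positivity argument for the three equivalences are more explicit than what the paper writes, but they are entirely in line with the paper's intent.
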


\subsection{General conformal rescaling and the near-CMC condition}
\label{sec:rescaling}

In this article we focused on the standard conformal method to produce 
the particular coupled elliptic PDE system that we analyzed.
Here we examine briefly other decompositions to see if it
is possible to remove the near-CMC obstacle for non-CMC existence
that still seems to remain for the non-positive Yamabe classes 
and for the positive Yamabe class with large data.

The key question here is whether or not the standard conformal method
essentially hard-wires the near-CMC assumption into the coupled system
in order to get a domain of attraction for fixed-point iterations.
If this is the case, then there remains the possibility that one can 
reverse-engineer a formulation, different from the conformal method, 
that gives a domain of attraction (preferably a contraction so that we 
also get uniqueness) without use of near-CMC conditions.
Unfortunately, the answer appears to be negative, as we demonstrate below.
In particular, it seems that the near-CMC obstacle is present in all
possible formulations based on conformal transformations, if the
estimate~\eqref{CS-aLw-bound} is used.

To begin, recall that
the objects $(\cM, \hat h_{ab}, \hat k_{ab}, \hat \rho,\hat j_a)$ form
an $n$-dimensional {\bf initial data set} for Einstein's equations iff
$\cM$ is a $n$-dimensional smooth manifold, the tensor $\hat h_{ab}$
is a Riemannian metric on $\cM$, the tensor $\hat k_{ab}$ is a
symmetric tensor field on $\cM$, the fields $\hat\rho$ and $\hat j_a$
are a non-negative scalar and a tensor field on $\cM$, respectively,
satisfying the condition $-\hat\rho^2+\hat j_a\hat j^a <0$, and the
following equations hold:
\begin{align}
\label{GCR-hc}
\hat R + \hat k^2 - \hat k_{ab}\hat k^{ab} -2\kappa \hat \rho &=0,\\
\label{GCR-mc}
-\hat \nabla_a \hat k^{ab} + \hat\nabla^b\hat k +\kappa \hat j^b &=0,
\end{align}
where $\hat\nabla_a$ is the Levi-Civita connection of the metric $\hat
h_{ab}$, the scalar field $\hat R$ is the Ricci scalar of the
connection $\hat\nabla_a$, the scalar $\hat k =\hat k_{ab}\hat h^{ab}$
is the trace of the tensor $\hat k_{ab}$, and the constant $\kappa =
8\pi$ in units where both the gravitation constant $G$ and the speed
of light $c$ have value one. The initial data set for Einstein's
equations describe an instant of time in the physical world if we
choose the number $n=3$. Nevertheless, In the calculations that follow
we keep the number $n$ as a general positive integer.

Introduce the decomposition of the two-index tensor $k_{ab}$ into
trace-free and trace parts, as follows,
\[
\textstyle
\hat k^{ab} = \hat s^{ab} + \frac{1}{n}\, \hat k \,\hat h^{ab},
\]
where $\hat s_{ab}\hat h^{ab}=0$. Introduce the following conformal
rescaling:
\begin{equation}
\label{GCR-r}
\hat h_{ab} = \phi^r \, h_{ab},\qquad
\hat s^{ab} = \phi^s \, s^{ab},\qquad
\hat k = \phi^t \, k,
\end{equation}
where the integers $r$, $s$, and $t$ are arbitrary, and we have
introduced the Riemannian metric $h_{ab}$, a symmetric tensor
$s^{ab}$, and a scalar field $k$. Introduce $\nabla_a$ the Levi-Civita
connection of the metric $h_{ab}$, which satisfies the equation
$\nabla_ah_{bc}=0$, and denote by $R$ the Ricci scalar of this
connection $\nabla_a$. The rescaling above induces the following
equations
\[
\hat h^{ab} = \phi^{-r} \, h^{ab},\qquad
\hat s_{ab} = \phi^{(2r+s)} \, s_{ab},
\]
where $\hat h^{ab}$ is the inverse tensor of $\hat h_{ab}$, and
$h^{ab}$ is the inverse tensor of $h_{ab}$. We use the convention that
indices in all other hatted tensors are raised and lowered with the
tensors $\hat h^{ab}$ and $\hat h_{ab}$, respectively, while indices
on unhatted tensors are raised and lowered with the tensors $h^{ab}$
and $h_{ab}$, respectively. For example:
\[
\hat s_{ab} = \hat h_{ac}\hat h_{bd}\hat s^{cd}
= \phi^r h_{ac}\,\phi^r h_{bd} \,\phi^s s^{cd} 
= \phi^{(2r+s)} s_{ab}.
\]
The rescaling introduced in Eq.~(\ref{GCR-r})
implies that the tensor field $\hat k^{ab}$ transforms as follows
\[
\textstyle
\hat k^{ab} = \phi^s\, s^{ab} + \frac{1}{n} \, \phi^{(t-r)} \, k h^{ab}
\LRI
\hat k_{ab} = \phi^{(2r+s)}s_{ab} +\frac{1}{n}\,\phi^{(t+r)}\,k h_{ab}.
\]
The connections $\hat\nabla_a$ and $\nabla_a$ differ in a tensor field
$C_{ab}{}^c$, in the sense that for any tensor field $v_a$ holds
\[
\hat\nabla_av_b = \nabla_av_b - C_{ab}{}^cv_c.
\]
The tensor field $C_{ab}{}^c$ depends on the scalar field $\phi$ and
the number $r$ as follows,
\begin{equation}
\label{GCR-c}
\textstyle
C_{ab}{}^c = r \,\delta_{(a}{}^c\nabla_{b)}\ln(\phi) 
- \frac{r}{2} \,h_{ab}h^{cd}\nabla_d\ln(\phi).
\end{equation}
This expression implies the contractions
\[
\textstyle
h^{ab}C_{ab}{}^c = -\frac{r}{2}(n-2) h^{cd}\nabla_d\ln(\phi),\qquad
C_{ab}{}^b = \frac{nr}{2}\nabla_a\ln(\phi).
\]
Given any two connections $\hat\nabla_a$ and $\nabla_a$ related by a
tensor field $C_{ab}{}^c$, the Riemann, Ricci, and Ricci scalar fields
associated with these two connections are related by the following
expressions
\begin{align*}
\hat R_{abc}{}^d &= R_{abc}{}^d - 2 \nabla_{[a}C_{b]c}{}^d
+ 2C_{c[a}{}^eC_{b]e}{}^d,\\
\hat R_{ac} &= R_{ac} - \nabla_aC_{cb}{}^b + \nabla_bC_{ac}{}^b
+ C_{ca}{}^e C_{eb}{}^b - C_{cb}{}^e C_{ae}{}^b,\\
\hat R &= \phi^{-r} \bigl[ R  - \nabla^aC_{ab}{}^b 
+\nabla_b(h^{ac}C_{ac}{}^b)  + h^{ac}C_{ca}{}^e C_{eb}{}^b 
- h^{ac} C_{cb}{}^e C_{ae}{}^b\bigr],
\end{align*}
where indices between square brackets mean anti-symmetrization, that
is, given any tensor $u_{ab}$ we define $u_{[ab]} := (u_{ab}
-u_{ba})/2$. In the case that the tensor $C_{ab}{}^c$ is given by
Eq.~(\ref{GCR-c}), the Ricci scalars $\hat R$ and $R$ satisfy the
equation
\[
\textstyle
\hat R = \phi^{-(r+1)} \Bigl[ \phi R -r(n-1) \Delta \phi 
-\frac{r}{4\phi}(n-1)[r(n-2)-4](\nabla_a\phi)(\nabla^a\phi)
\Bigr].
\]
Introduce the Hamiltonian and momentum fields
\begin{align*}
\hat H &:= \hat R+ \hat k^2-\hat k_{ab} \hat k^{ab},\\
\hat M^b &:=  -\hat \nabla_a\hat k^{ab} + \hat\nabla^b \hat k,
\end{align*}
then the conformal rescaling given in Eq.~(\ref{GCR-r}) implies the
following equations
\begin{align*}
\hat H &= 
\textstyle
\phi^{-(r+1)} \Bigl[ \phi R -r(n-1) \Delta \phi 
-\frac{r}{4\phi}(n-1)[r(n-2)-4](\nabla_a\phi)(\nabla^a\phi)
\Bigr] \\
&
\textstyle
\quad + \frac{n-1}{n} \phi^{2t} \, k^2 - \phi^{2(r+s)} s_{ab}s^{ab},\\
\hat M_b &= 
\textstyle
- \phi^{(r+s)}\nabla_as_b{}^a 
+ \frac{n-1}{n} \phi^t\, \nabla_b k 
-\Bigl(\frac{rn}{2}+r+s\Bigr) \phi^{(r+s)} s_b{}^a\nabla_a \ln(\phi)\\
\textstyle
&
\textstyle
\quad + \frac{n-1}{n} \,t\,\phi^t\,k\nabla_b\ln(\phi).
\end{align*}
It is convenient to reorder the terms in these equations in such a way
that the equation for the Hamiltonian field is given by
\begin{gather*}
\textstyle
-r(n-1)\Delta\phi 
-\frac{r}{4\phi}(n-1)[r(n-2)-4](\nabla_a\phi)(\nabla^a\phi)\\
\textstyle
+R\phi +  \frac{(n-1)}{n}  \, k^2\,\phi^{(2t+r+1)} 
-  s_{ab}s^{ab}\, \phi^{(3r+2s+1)}
= \phi^{(r+1)}\hat H,
\end{gather*}
and the equation for the momentum field is given by
\begin{gather*}
\textstyle
-\nabla_as_b{}^a
-\Bigl(\frac{(n+2)}{2}\,r+s\Bigr) s_b{}^a\nabla_a \ln(\phi) \\
\textstyle
= \phi^{-(r+s)}\hat M_b -\frac{(n-1)}{n} \phi^{(t-r-s)} \nabla_b k
-\frac{(n-1)}{n}\,t\phi^{(t-r-s-1)}k\nabla_b \phi.
\end{gather*}

There are many interesting particular cases of the equations
above. The first case is to keep the dimension $n\geqs 3$ arbitrary,
and choose:
\[
\textstyle
r = \frac{4}{n-2},\qquad
s = - \frac{(n+2)}{2}\, r,\qquad
t=0,
\] 
then, introducing the number $2^{*}:= 2n/(n-2)$, we conclude
that the $n$-dimensional vacuum Einstein constraint equations ($H=0$,
$M_b=0$) can be written as follows,
\[
\begin{gathered}
\textstyle
-\frac{4(n-1)}{(n-2)}\Delta\phi 
+R\phi + \frac{(n-1)}{n} \, k^2\,\phi^{(2^{*}-1)}  
-  s_{ab}s^{ab}\,\phi^{-(2^{*}+1)} = 0,\\
\textstyle
-\nabla_as_b{}^a
+\frac{(n-1)}{n}\, \phi^{2^{*}} \nabla_b k=0.
\end{gathered}
\]

In the case that the manifold $\cM$ is 3-dimensional, we have the
number $2^{*}=6$, and the equation for the Hamiltonian field is given
by
\begin{gather}
\textstyle
\nonumber
-2r\Delta\phi 
-\frac{r}{2\phi}(r-4)(\nabla_a\phi)(\nabla^a\phi)\\
\textstyle
\label{GCR-rh}
+R\phi + \frac{2}{3} \, k^2\, \phi^{(2t+r+1)}  
-  s_{ab}s^{ab}\,\phi^{(3r+2s+1)}
= \phi^{(r+1)}\hat H,
\end{gather}
and the equation for the momentum field is given by
\begin{gather}
\textstyle
\nonumber
- \nabla_as_b{}^a
-\Bigl(\frac{3r}{2}+r+s\Bigr) s_b{}^a\nabla_a \ln(\phi) \\
\textstyle
= \phi^{-(r+s)}\hat M_b -\frac{2}{3} \phi^{(t-r-s)} \nabla_b k
-\frac{2}{3}\,t\,\phi^{(t-r-s-1)}k\nabla_b \phi.
\label{GCR-rm}
\end{gather}
The {\bf semi-decoupling decomposition} in the case of the vacuum
Einstein constraint equations ($H=0$, $M_b=0$) is obtained from
Eqs.~(\ref{GCR-rh})-(\ref{GCR-rm}) in the particular case of $r=4$,
$s=-10$, and $t=0$, that is,
\[
\begin{gathered}
\textstyle
-8\Delta \phi +R\phi +\frac{2}{3}k^2\phi^{(2^{*}-1)} 
- s_{ab}s^{ab}\phi^{-(2^{*}+1)}  = 0,\\
\textstyle
-\nabla_as_b{}^a +\frac{2}{3}\phi^{2^{*}}\nabla_b k=0.
\end{gathered}
\]
The {\bf conformally covariant decomposition}, in the case of the vacuum
Einstein constraint equations ($H=0$, $M_b=0$) and in the case that
the transverse, traceless part of the tensor $k_{ab}$ vanishes, is
obtained from Eqs.~(\ref{GCR-rh})-(\ref{GCR-rm}) with the particular
choice of $r=4$, $s=-4$, and $t=0$, that is,
\[
\begin{gathered}
\textstyle
-8\Delta \phi +R\phi 
+ \Bigl(\frac{2}{3}k^2 - s_{ab}s^{ab}\Bigr) \,\phi^{(2^{*}-1)} 
= 0,\\
\textstyle
-\nabla_as_b{}^a -6\,s_b{}^a\nabla_a\ln(\phi) 
+\frac{2}{3} \nabla_b k=0.
\end{gathered}
\]
As a final example, it is interesting to write down the rescaled
equations above in the case $r=4$, $s=-10$, $t$ arbitrary:
\begin{gather*}
\textstyle
-8\Delta\phi + R\phi + \frac{2}{3} \phi^{(2t+5)} \, k^2 
- \phi^{-7} s_{ab}s^{ab} = \phi^{5}\hat H,\\
\textstyle
- \nabla_as_b{}^a = \phi^{6}\hat M_b 
-\frac{2}{3} \phi^{(t+6)} \nabla_b k
-\frac{2}{3}\,t\,\phi^{(t+5)}k\,\nabla_b \phi.
\end{gather*}

Since the leading power in each equation scales exactly as the
conformal method, the same argument leading to the negative result for 
the conformal method in Lemma~\ref{L:global-super-limit} will apply here.
Therefore, it appears that the different conformal rescalings produce
coupled systems leading to precisely the same form of the
near-CMC condition to establish non-CMC existence, in the case 
of both the non-positive Yamabe classes and the positive Yamabe 
class for large data.

\bibliographystyle{plain}
\bibliography{../bib/ref-gn}

\begin{thebibliography}{10}

\bibitem{pAaCjI07}
P.~Allen, A.~Clausen, and J.~Isenberg.
\newblock Near-constant mean curvature solutions of the {E}instein constraint
  equations with non-negative {Y}amabe metrics.
\newblock Available as arXiv:0710.0725 [gr-qc], 2007.

\bibitem{hA76}
H.~Amann.
\newblock Fixed point equations and nonlinear eigenvalue problems in ordered
  {B}anach spaces.
\newblock {\em SIAM Review}, 18(4):620--709, 1976.

\bibitem{Aubin82}
T.~Aubin.
\newblock {\em Nonlinear Analysis on Manifolds. {Monge-Amp{\`e}re} Equations}.
\newblock Springer-Verlag, New York, 1982.

\bibitem{rBgF93}
R.~Bartnik and G.~Fodor.
\newblock On the restricted validity of the thin sandwich conjecture.
\newblock {\em Phys. Rev. D}, 48(8):3596--3599, 1993.

\bibitem{rBjI04}
R.~Bartnik and J.~Isenberg.
\newblock The constraint equations.
\newblock In P.~Chru\'sciel and H.~Friedrich, editors, {\em The {E}instein
  equations and large scale behavior of gravitational fields}, pages 1--38.
  Birh\"auser, Berlin, 2004.

\bibitem{rB96}
R.~Beig.
\newblock {TT}-tensors and conformally flat structures on 3-manifolds.
\newblock In P.T. Chru{\'s}ciel, editor, {\em Mathematics of Gravitation, Part
  1}, volume~41. Banach Center Publications, Polish Academy of Sciences,
  Institute of Mathematics, Warszawa, 1997.
\newblock Available as gr-qc/9606055.

\bibitem{rB00}
R.~Beig.
\newblock Generalized {B}owen-{Y}ork initial data.
\newblock In Spiros Cotsakis and Gary Gibbons, editors, {\em Mathematical and
  Quantum Aspects of Relativity and Cosmology}, volume 537, pages 55--69.
  Springer Lecture Note in Physics, Berlin, 2000.
\newblock Available as gr-qc/0005043.

\bibitem{rBnOM96}
R.~Beig and N.~\'{O} Murchadha.
\newblock The momentum constraints of general relativity and spatial conformal
  isometries.
\newblock {\em Commun. Math. Phys.}, 176(3):723--738, 1996.

\bibitem{jBjY80}
J.~Bowen and J.~York.
\newblock Time-asymmetric initial data for black holes and black-hole
  collisions.
\newblock {\em Phys. Rev. D}, 21(8):2047--2055, 1980.

\bibitem{yCB04}
Y.~Choquet-Bruhat.
\newblock {E}instein constraints on compact n-dimensional manifolds.
\newblock {\em Class. Quantum Grav.}, 21:S127--S151, 2004.

\bibitem{yCBjIjY00}
Y.~Choquet-Bruhat, J.~Isenberg, and J.~York.
\newblock {E}instein constraint on asymptotically {E}uclidean manifolds.
\newblock {\em Phys. Rev. D}, 61:084034, 2000.

\bibitem{jC00}
J.~Corvino.
\newblock Scalar curvature deformation and a gluing construction for the
  {E}instein constraint equations.
\newblock {\em Commun. Math. Phys.}, 214:137--189, 2000.

\bibitem{sD99}
S.~Dain.
\newblock Initial data for a head on collision of two {K}err-like black holes
  with close limit.
\newblock {\em Phys. Rev. D}, 64(15):124002, 2001.

\bibitem{sD00b}
S.~Dain.
\newblock Initial data for two {K}err-like black holes.
\newblock {\em Phys. Rev. Lett.}, 87(12):121102, 2001.

\bibitem{sD04}
S.~Dain.
\newblock Trapped surfaces as boundaries for the constraint equations.
\newblock {\em Classical Quantum Gravity}, 21(2):555--573, 2004.

\bibitem{Du06}
Y.~Du.
\newblock {\em Order structure and topological methods in nonlinear partial
  differential equations, Vol I}.
\newblock World Scientific, New Jersey, London, Singapore, 2006.

\bibitem{rGjT87}
R.~Geroch and J.~Traschen.
\newblock Strings and other distributional sources in general relativity.
\newblock {\em Phys. Rev. D}, 36(4):1017--1031, 1987.

\bibitem{Gris85}
P.~Grisvard.
\newblock {\em Elliptic Problems in Nonsmooth Domains}.
\newblock Pitman Publishing, Marshfield, MA, 1985.

\bibitem{Hebey96}
E.~Hebey.
\newblock {\em Sobolev spaces on {R}iemannian manifolds}, volume 1635 of {\em
  Lecture notes in mathematics}.
\newblock Springer, Berlin, New York, 1996.

\bibitem{mH01a}
M.~Holst.
\newblock Adaptive numerical treatment of elliptic systems on manifolds.
\newblock {\em Adv. Comp. Math.}, 15:139--191, 2001.

\bibitem{mHgNgT08b}
M.~Holst, G.~Nagy, and G.~Tsogtgerel.
\newblock Rough solutions of the {Einstein} constraints on manifolds with
  boundary.
\newblock Preprint.

\bibitem{mHgNgT07a}
M.~Holst, G.~Nagy, and G.~Tsogtgerel.
\newblock Far-from-constant mean curvature solutions of {Einstein's} constraint
  equations with positive {Yamabe} metrics.
\newblock {\em Phys. Rev. Let.}, 2008.
\newblock To appear.

\bibitem{HoTs07a}
M.~Holst and G.~Tsogtgerel.
\newblock Adaptive finite element approximation of nonlinear geometric {PDE}.
\newblock Preprint.

\bibitem{HoTs07b}
M.~Holst and G.~Tsogtgerel.
\newblock Convergent adaptive finite element approximation of the {Einstein}
  constraints.
\newblock Preprint.

\bibitem{jI95}
J.~Isenberg.
\newblock Constant mean curvature solution of the {E}instein constraint
  equations on closed manifold.
\newblock {\em Class. Quantum Grav.}, 12:2249--2274, 1995.

\bibitem{jIvM96}
J.~Isenberg and V.~Moncrief.
\newblock A set of nonconstant mean curvature solution of the {E}instein
  constraint equations on closed manifolds.
\newblock {\em Class. Quantum Grav.}, 13:1819--1847, 1996.

\bibitem{jInOM04}
J.~Isenberg and N.~\'{O} Murchadha.
\newblock Non {CMC} conformal data sets which do not produce solutions of the
  {E}instein constraint equations.
\newblock {\em Class. Quantum Grav.}, 21:S233--S242, 2004.

\bibitem{jIjP97}
J.~Isenberg and J.~Park.
\newblock Asymptotically hyperbolic non-constant mean curvature solutions of
  the {E}instein constraint equations.
\newblock {\em Class. Quantum Grav.}, 14:A189--A201, 1997.

\bibitem{jJ85}
J.~Jerome.
\newblock Consistency of semiconductor modeling: an existence/stability
  analysis for the stationary van {R}oosbroeck system.
\newblock {\em SIAM J. Appl. Math.}, 45(4):565--590, 1985.

\bibitem{sKiR01}
S.~Klainerman and I.~Rodnianski.
\newblock Improved local well posedness for quasilinear wave equations in
  dimension three.
\newblock {\em Duke Math. J.}, 117(1):1--124, 2003.

\bibitem{jLtP87}
J.~Lee and T.~Parker.
\newblock The {Y}amabe problem.
\newblock {\em Bull. Amer. Math. Soc.}, 17(1):37--91, 1987.

\bibitem{aL44}
A.~Lichnerowicz.
\newblock L'integration des \'equations de la gravitation relativiste et le
  probl\`eme des n corps.
\newblock {\em J. Math. Pures Appl.}, 23:37--63, 1944.

\bibitem{dM05}
D.~Maxwell.
\newblock Rough solutions of the {E}instein constraint equations on compact
  manifolds.
\newblock {\em J. Hyp. Diff. Eqs.}, 2(2):521--546, 2005.

\bibitem{dM05b}
D.~Maxwell.
\newblock Solutions of the {E}instein constraint equations with apparent
  horizon boundaries.
\newblock {\em Comm. Math. Phys.}, 253(3):561--583, 2005.

\bibitem{dM06}
D.~Maxwell.
\newblock Rough solutions of the {E}instein constraint equations.
\newblock {\em J. Reine Angew. Math.}, 590:1--29, 2006.

\bibitem{dM08}
D.~Maxwell.
\newblock A class of solutions of the vacuum einstein constraint equations with
  freely specified mean curvature.
\newblock Available as arXiv:0804.0874 [gr-qc], 2008.

\bibitem{cMkTjW70}
C.~Misner, K.~Thorne, and J.~Wheeler.
\newblock {\em Gravitation}.
\newblock W. H. Freeman and Company, San Francisco, CA, 1970.

\bibitem{dMdZ98}
D.~Mitrovi{\'{c}} and D.~{\v{Z}}ubrini{\'{c}}.
\newblock {\em Fundamentals of applied functional analysis}, volume~91 of {\em
  Pitman monographs and surveys in pure and applied mathematics}.
\newblock Addison Wesley Longman, Essex, England, 1998.

\bibitem{nOMjY73}
N.~{\'{O}} Murchadha and J.~York.
\newblock Existence and uniqueness of solutions of the {H}amiltonian constraint
  of general relativity on compact manifolds.
\newblock {\em J. Math. Phys.}, 14(11):1551--1557, 1973.

\bibitem{nOMjY74}
N.~{\'{O}} Murchadha and J.~York.
\newblock Initial-value problem of general relativity {I}. {G}eneral
  formulation and physical interpretation.
\newblock {\em Phys. Rev. D}, 10(2):428--436, 1974.

\bibitem{nOMjY74a}
N.~{\'{O}} Murchadha and J.~York.
\newblock Initial-value problem of general relativity {II}. {S}tability of
  solution of the initial-value equations.
\newblock {\em Phys. Rev. D}, 10(2):437--446, 1974.

\bibitem{Palais65}
R.~Palais.
\newblock {\em Seminar on the {A}tiyah-{S}inger index theorem}.
\newblock Princeton University Press, Princeton, 1965.

\bibitem{Rosenberg97}
S.~Rosenberg.
\newblock {\em The {Laplacian} on a {Riemannian} Manifold}.
\newblock Cambridge University Press, Cambridge, MA, 1997.

\bibitem{Rudi87}
W.~Rudin.
\newblock {\em Real \& Complex Analysis}.
\newblock McGraw-Hill, New York, NY, 1987.

\bibitem{Schwarz95}
G.~Schwarz.
\newblock {H}odge decomposition -- a method for solving boundary value
  problems.
\newblock In {\em Lecture Notes in Mathematics}, volume 1607. Springer Verlag,
  Berlin, Heidelberg, New York, 1995.

\bibitem{Trie83}
H.~Triebel.
\newblock {\em Theory of function spaces}, volume~78 of {\em Monographs in
  Mathematics}.
\newblock Birkh\"auser Verlag, Basel, 1983.

\bibitem{nT73}
N.~Trudinger.
\newblock Linear elliptic operators with measurable coefficients.
\newblock {\em Ann. Scuola Norm. Sup. Pisa}, 27(3):265--308, 1973.

\bibitem{Wald84}
R.~Wald.
\newblock {\em General {R}elativity}.
\newblock The University of Chicago Press, Chicago, 1984.

\bibitem{jY71}
J.~York.
\newblock Gravitational degrees of freedom and the initial-value problem.
\newblock {\em Phys. Rev. Lett.}, 26(26):1656--1658, 1971.

\bibitem{jY72}
J.~York.
\newblock Role of conformal three-geometry in the dynamics of gravitation.
\newblock {\em Phys. Rev. Lett.}, 28(16):1082--1085, 1972.

\bibitem{jY73}
J.~York.
\newblock Conformally invariant orthogonal decomposition of symmetric tensor on
  {R}iemannian manifolds and the initial-value problem of general relativity.
\newblock {\em J. Math. Phys.}, 14(4):456--464, 1973.

\bibitem{jY74}
J.~York.
\newblock Covariant decompositions of symmetric tensors in the theory of
  gravitation.
\newblock {\em Ann. Inst. Henri Poincare A}, 21(4):319--332, 1974.

\bibitem{jY99}
J.~York.
\newblock Conformal ``thin-sandwich'' data for the initial-value problem of
  general relativity.
\newblock {\em Phys. Rev. Lett.}, 82:1350--1353, 1999.

\bibitem{Zeidler-I}
E.~Zeidler.
\newblock {\em Nonlinear Functional Analysis and its Applications {I},
  Fixed-Point Theorems}.
\newblock Springer, New York, 1986.

\bibitem{jZ77}
J.L. Zolesio.
\newblock Multiplication dans les espaces de {B}esov.
\newblock {\em Proc. Royal Soc. Edinburgh (A)}, 78(2):113--117, 1977.

\end{thebibliography}

\end{document}